\documentclass[11pt]{article}
\usepackage{etoolbox}
\usepackage{inputenc}
\usepackage[T1]{fontenc}
\providetoggle{hidestuff}
\settoggle{hidestuff}{true}

\listfiles 


\usepackage[svgnames]{xcolor}
\usepackage{amsmath, amssymb, amsthm}
\usepackage{mathtools}
\usepackage{stix}
\usepackage{microtype}
\usepackage{xfrac}
\usepackage[margin=1in,left=1in,right=1in,marginpar=0.75in]{geometry}
\usepackage{graphicx}
\usepackage[colorlinks=true,urlcolor=blue,citecolor=blue,linkcolor=blue]{hyperref}
\usepackage{natbib}
\usepackage{setspace}
\iftoggle{hidestuff}{\usepackage[disable]{todonotes}}{\usepackage{todonotes}}
\usepackage{cleveref}
\usepackage{soul}
\usepackage{accents}
\newcommand{\bl}[1]{\underaccent{\bar}{#1}}
\newcommand{\GPV}{\mathrm{gpv}}
\newcommand{\older}[1]{}
\usepackage[inline]{enumitem}

\allowdisplaybreaks
\raggedbottom
\newcommand{\diam}[1]{\accentset{\diamond}{#1}}
\newcommand{\Var}{\mathbb{V}}

\newtheoremstyle{jorisstyle}
{3pt}
{3pt}
{}
{}
{\bfseries}
{}
{ }
{\thmname{#1}\thmnumber{ #2'}\thmnote{\bfseries~#3}.}

\newcommand{\cbT}{\sqrt[3]{T}}
\newcommand{\mle}{\text{mle}}
\newcommand{\C}{\mathbb{C}}
\renewcommand{\S}{\mathbb{S}}
\DeclareRobustCommand\<{\begin{equation}}
\DeclareRobustCommand\>{\end{equation}}

\newcommand{\pd}[2]{\frac{\partial #1}{\partial #2}} 
\newcommand{\pdd}[2]{\frac{\partial^2 #1}{\partial #2^2}}

\newcommand\ot\leftarrow
\renewcommand{\Re}{\mathbb{R}}								
\newcommand{\mathb}[1]{\underbar{$#1$}}						
\DeclareMathOperator*{\argmax}{arg\,max}						
\newcommand{\convd}{\stackrel{d}{\to}}
\theoremstyle{definition}
\newtheorem{defn}{Definition}
\theoremstyle{jorisstyle}

\newtheorem*{just*}{Justification}
\newcommand{\oo}{o\parens{1}}
\newcommand{\sI}{\mathscr{I}}
\theoremstyle{plain}
\newtheorem{thm}{Theorem}
\newcommand{\sB}{\mathscr{B}}
\newcommand{\sH}{\mathscr{H}}

\theoremstyle{plain}
\newtheorem{lem}{Lemma}
\newtheorem*{thm*}{Theorem}

\newtheorem{ass}{Assumption}
\newtheorem{ex}{Example}

\definecolor{dkgreen}{rgb}{0,0.6,0}
\definecolor{gray}{rgb}{0.5,0.5,0.5}
\definecolor{mauve}{rgb}{0.58,0,0.82}

\newcommand{\beT}{\breve e_T}
\newcommand{\alT}{\alpha_T}

\newcommand{\G}{\mathbb{G}}
\newcommand{\beTml}{\breve e_T^{\text{MLE}}}
\newcommand{\balTml}{\breve \alpha_T^{\text{MLE}}}

\DeclarePairedDelimiter{\parens}{(}{)}
\DeclarePairedDelimiter\cparens\{\}
\DeclarePairedDelimiter\sparens[]

\DeclarePairedDelimiter{\halfopen}[)

\DeclarePairedDelimiter{\abs}\vert\vert


\DeclarePairedDelimiterX\condr[1](){#1}
\DeclarePairedDelimiterX\condc[1]\{\}{#1}
\DeclarePairedDelimiterX\conds[1][]{#1}
\DeclarePairedDelimiterX\condn[1]{}{}{#1}

\newcommand{\Exp}{\mathbb{E}}
\newcommand{\Expr}[2][]{\Exp\parens[#1]{#2}}

\newcommand{\Expc}[2][]{\Exp\cparens[#1]{#2}}

\newcommand{\sA}{\mathscr{A}}
\newcommand{\dif}{\:\mathrm{d}}
\let\Set\condc

\usepackage{xspace}
\newcommand{\convw}{\rightsquigarrow}
\newcommand{\joris}[1]{\normalmarginpar\todo[color=Lime,fancyline,size=\tiny]{#1}\xspace}
\newcommand{\karl}[1]{\todo[color=Pink,fancyline,size=\tiny]{#1}\xspace}
\newcommand{\thought}[1]{\todo[color=Magenta,fancyline,size=\tiny]{#1}\xspace}
\crefname{equation}{}{}
\Crefname{equation}{Equation}{Equations}
\crefname{proposition}{proposition}{propositions}
\creflabelformat{proposition}{#2#1#3}
\crefname{figure}{figure}{figures}
\creflabelformat{figure}{#2#1#3}
\crefname{table}{table}{tables}
\creflabelformat{table}{#2#1#3}
\crefname{lem}{lemma}{lemmas}
\creflabelformat{lemma}{#2#1#3}
\crefname{thm}{theorem}{theorems}
\creflabelformat{theorem}{#2#1#3}
\crefname{corollary}{corollary}{corollaries}
\creflabelformat{lemma}{#2#1#3}
\crefname{ass}{assumption}{assumptions}
\crefname{enumi}{part}{parts}
\creflabelformat{enumi}{#2#1#3}

\newcommand{\maligned}[1]{\left\{\begin{aligned}#1\end{aligned}\right.}
\newcommand{\ehat}{\hat e_T}
\newcommand{\ehatpsi}{\hat e_{T\psi}}

\newcommand{\lazyint}[2][s]{\int_{-\infty}^\infty #2 \dif #1}
\newcommand{\busyint}[2][s]{\int_{\bl\upsilon_\psi}^{\bar\upsilon_\psi} #2 \dif #1}
\newcommand{\sV}{\mathscr{V}}
\newcommand{\sP}{\mathscr{P}}
\newcommand{\opone}{\ensuremath{o_p\parens{1}}}
\newcommand{\comment}[1]{}
\setstretch{1.5}
\usepackage{marginnote}
\newcommand{\jorisr}[1]{\reversemarginpar\todo[color=Lime,size=\tiny]{#1}}

\newcommand{\one}{\mathbb{1}}
\DeclareMathOperator*{\argmin}{argmin}
\newcommand{\jorisinline}[1]{\todo[inline,color=Lime]{#1}}
\newcommand{\karlinline}[1]{\todo[inline,color=Pink]{#1}}
\newcommand{\thoughtinline}[1]{\todo[inline,color=Magenta]{#1}}

\DeclareMathOperator{\Med}{Med}
\newcommand{\ahT}{\hat\alpha_T}
\newcommand{\ahTpsi}{\hat\alpha_{T\psi}}
\newcommand{\ahTmt}{\hat\alpha_{T,-t}}
\newcommand{\ahJ}{\breve\alpha_{TJ}}
\newcommand{\ahJs}{\hat\alpha_{TJ}}
\newcommand{\beTmt}{\breve e_{T,-t}}
\newcommand{\BS}{\mathrm{BS}}
\newcommand{\BShat}{\widehat{\mathrm{BS}}}
\newcommand{\MV}{\mathrm{MV}}
\newcommand{\MVhat}{\widehat{\mathrm{MV}}}
\newcommand{\PR}{\mathrm{PR}}

\newcommand{\uint}[2][p]{\int_0^1 #2 \dif #1}
\newcommand{\infint}[2][z]{\int_{-\infty}^\infty #2 \dif #1}
\newcommand{\meanT}[1][]{\frac{1}{T#1}\sum_{t=1}^T}
\newcommand{\fpp}{f_p\parens{p}}
\DeclareFontFamily{U}{mathx}{}
\DeclareFontSubstitution{U}{mathx}{m}{n}
\DeclareFontShape{U}{mathx}{m}{n}{ <-> *[.4]mathx10}{}
\DeclareSymbolFont{mathx}{U}{mathx}{m}{n}
\DeclareFontFamily{T1}{mathfrak}{}
\DeclareFontSubstitution{T1}{mathfrak}{m}{n}
\DeclareFontShape{T1}{mathfrak}{m}{n}{ <-> *stix-mathfrak}{}
\DeclareSymbolFont{mathfrak}{T1}{mathfrak}{m}{n}
\DeclareMathDelimiter{\bkdelim}{4}{mathfrak}{'071}{mathx}{"37}

\newcommand{\sL}{\mathscr{L}}
\newcommand{\grey}[1]{{\color{Grey}#1}}
\allowdisplaybreaks

\DeclareMathOperator{\bs}{BS}
\DeclareMathOperator{\bshat}{\widehat{\bs}}

\newcommand{\symm}{\mathrm{symm}}

\renewcommand{\pd}[2]{\partial_{#2}#1}

\newcommand{\drop}[1]{\todo[inline,color=LightGrey]{stuff at least one of us thinks can be dropped has been omitted here.}}

\newcommand{\ahTR}{\bar {\alpha}_{T{\psi}}^R}

\usepackage{xpatch}

\xpatchcmd{\proof}
{\itshape}
{\bfseries}
{}
{}
\usepackage{afterpage}

\title{Estimation of Auction Models\\with Shape Restrictions}
\date{\today}
\author{Joris Pinkse and Karl Schurter\thanks{We thank Yanqin Fan, Paul Grieco, Brent Hickman, Kei Hirano, Sung Jae Jun, Laurent Lamy, Ruixuan Liu, Orville Mondal, Peter Newberry, and participants of the Penn State brown bag workshop, the Lancaster auctions conference, the EARIE conference in Barcelona, and the PennCorn conference on econometrics and industrial organization in Ithaca for their helpful comments.}\\ 
}

\usepackage{tikzsymbols}

\usetikzlibrary{calc}

\iftoggle{hidestuff}{
\renewcommand{\grey}[1]{}
}{}
\renewcommand{\drop}[1]{}
\renewcommand{\grey}[1]{}

\begin{document}
\maketitle


\begin{abstract}\noindent
We introduce several new estimation methods that leverage shape constraints in auction models to estimate various objects of interest, including the distribution of a bidder's valuations, the bidder's ex ante expected surplus, and the seller's counterfactual revenue.  The basic approach applies broadly in that (unlike most of the literature) it works for a wide range of auction formats and allows for asymmetric bidders.  Though our approach is not restrictive, we focus our analysis on first--price, sealed--bid auctions with independent private valuations. We highlight two nonparametric estimation strategies, one based on a least squares criterion and the other on a maximum likelihood criterion.  We also provide the first direct estimator of the strategy function.  We establish several theoretical properties of our methods to guide empirical analysis and inference. In addition to providing the asymptotic distributions of our estimators, we identify ways in which methodological choices should be tailored to the objects of their interest. For objects like the bidders' ex ante surplus and the seller's counterfactual expected revenue with an additional symmetric bidder, we show that our input--parameter--free estimators achieve the semiparametric efficiency bound. For objects like the bidders' inverse strategy function, we provide an easily implementable boundary--corrected kernel smoothing and transformation method in order to ensure the squared error is integrable over the entire support of the valuations. An extensive simulation study illustrates our analytical results and demonstrates the respective advantages of our least--squares and maximum likelihood estimators in finite samples. Compared to estimation strategies based on kernel density estimation, the simulations indicate that the smoothed versions of our estimators enjoy a relatively large degree of robustness to the choice of an input parameter.
\end{abstract}

\clearpage

\section{Introduction}
%
%

We develop several new estimators that leverage shape constraints implied by the bidder's incentive--compatibility condition in auction models.  Unlike most existing methods, the basic approach applies broadly in that it works both for a wide range of auction formats and allows for asymmetric bidders. For the case of first price auctions, we establish asymptotic results for multiple unsmoothed (piecewise--constant) and smoothed estimators of the inverse strategy function, the first direct estimator of the bid function, and estimators of a variety of other objects, including a bidder's value density function, her expected surplus, and the mean of her value distribution.  We consider each of these objects separately in order to provide guidance to applied researchers on ways in which our estimators can be optimized for the specific objects of their interest. For objects like the expected surplus, our approach (unlike most existing methods) does not require the researcher to choose an input parameter (e.g.~a kernel bandwidth) and achieves the semiparametric efficiency bound. For objects like the valuation distribution, we use a boundary--corrected kernel smoothing method so that our estimators converge at the same optimal nonparametric rate as popular alternatives.  We purposely propose a relatively large number of estimation options because different choices work better depending on the nature of a given problem, as is borne out by our simulation study. We further provide simulation evidence to confirm our theoretical finding that, because our approach imposes shape restrictions a priori, it is more robust to the choice of inputs compared with alternative estimation strategies and also appears robust to the choice of design.  


%
%
The key insight behind our approach is that, despite the great diversity of auction formats we might consider, the fundamental nature of a bidder's decision problem is the same. Specifically, given the strategies of a bidder's competitors and the distribution of their private values, the bidder chooses her bid to optimally trade off the probability of winning with her expected payment to the seller. Though the details of this trade--off as a function of the bid are complicated and depend on the specifics of the auction rules, an envelope theorem argument demonstrates that the equilibrium payment function $e$ must be convex in the probability with which the bidder expects to win. Moreover, the first--order condition of the bidder's problem requires that the slope of the equilibrium expected payment function at the optimally chosen win probability is equal to her private valuation. Thus, the derivative $\alpha$ of the convex payment function is equivalently viewed as the inverse strategy function, which maps optimally chosen win--probabilities to values. These facts have been used to establish the revenue equivalence theorem \citep{myerson1981optimal,Milgrom1982} and were subsequently invoked as a generic nonparametric identification strategy \citep{Larsen2018}. Our paper exploits this change of variables from bids to win--probabilities further in order to relate the literature on nonparametric estimation in auctions to the large literature on nonparametric estimation under shape constraints.

%
%
The main benefits of this change of variables are threefold. First, by reformulating the target of estimation as the slope of a convex function, we open the door to a variety of well--known estimation strategies such as (shape--)constrained (nonparametric) least squares and (a new version of) nonparametric maximum likelihood estimation (MLE),\footnote{See e.g.\ \citet{brunk1955maximum} for an early example of nonparametric estimation subject to shape constraints.} as well as some more adventurous estimators like a jackknife estimator.
Second, this framework generalizes the large and growing toolkit for nonparametric estimation and testing in first--price auctions to generic auction mechanisms. And, finally, it allows the econometrician to easily impose the structure of symmetric equilibria, namely that the marginal distribution of a bidder's optimally chosen win--probability is a known function that only depends on the number of bidders.  Importantly, the distribution function does \emph{not} depend on the unknown distribution of the bidders' private valuations. This a priori knowledge of the win--probability distribution yields sizable improvements in the asymptotic distribution of our estimators.

%
%
The latter observation is especially useful when we consider the estimation of objects that are less primitive than the value density $f_v$ but may be of more direct interest to the researcher. For example, the bidder's ex ante expected surplus can be expressed as an integral of $\alpha(p) p - e(p)$ with respect to the  win--probability distribution. The win--probability distribution can be precisely estimated in symmetric or asymmetric equilibria. In the symmetric case, however, we show that one can significantly reduce the asymptotic variance by substituting the win--probability distribution which is known to prevail in any symmetric equilibrium rather than an estimate thereof.

%
%
Similar to the bidder's surplus, the mean of the bidder's valuations and the seller's expected profit as a function of the number of bidders can also be expressed as a (weighted) integral of the inverse strategy function. We show that such objects can be estimated at a square--root rate of convergence when ${\alpha}$ is replaced with an unsmoothed estimate. The reason is that although the first step estimator of ${\alpha}$ converges at a cube--root rate, it has little bias.\footnote{Indeed, the asymptotic distribution is centered at zero.} The act of integration in the second stage acts as an average and hence reduces the asymptotic variance. Moreover, the resultant estimators achieve the semiparametric efficiency bound when one fully exploits the symmetric structure of the equilibrium. If the researcher does not assume bidders are symmetric or only observes one competitor bid per auction, the semiparametric efficiency bound on the asymptotic variance is larger, but we again show that the unsmoothed plug--in estimators for the mean valuation and bidder's surplus attain the efficiency bound.

%
%
Thus, smoothing the estimate of ${\alpha}$ does not necessarily improve the asymptotic performance of the desired object. Indeed, it can be detrimental. For example, in order to achieve square--root consistency of the mean valuation using a smoothed estimate of the inverse strategy function, one would need to ``undersmooth'' by choosing an input parameter to slow down the pointwise rate of convergence of the inverse strategy function and reduce its bias. However, one must avoid too much undersmoothing using methods that do not impose monotonicity a priori (e.g.\ \citet{marmer2012quantile,ma2019monotonicity}, and \citet[GPV]{Guerre2000}), because letting the bandwidth go to zero for a fixed sample size would yield an inconsistent estimator for the inverse bid function and also produce an inconsistent estimator of the mean valuation. In contrast, there is no risk of too much undersmoothing using our approach, because our undersmoothed and unsmoothed estimators of objects like the mean valuation are asymptotically equivalent and attain the efficiency bound. This asymptotic efficiency result for both our unsmoothed and undersmoothed estimators therefore provides a large degree of robustness in the choice of bandwidths relative to existing methods.

%
%
More generally, we identify several ways in which the estimator can be tailored to the ultimate target of the empirical analysis. Though it would be possible to first estimate the value density and then obtain, e.g.\ bidder one's expected surplus, there is no benefit of taking this intermediate step.  Indeed, as noted in the previous paragraph, making choices that optimize accuracy of an estimator of ${\alpha}$ or $f_v$ is usually harmful in terms of estimation of the eventual object of interest. As another example, the researcher might select inputs to minimize the integrated mean square error of the estimator for the quantile function of a bidder's valuations, which may be written as ${\alpha}$ evaluated at the quantiles of the win--probability distribution. Because the density of the win--probabilities is often unbounded at the left boundary, the researcher might have to smooth ${\alpha}$ less near the left boundary than away from it in order to ensure integrability of the mean square error. We implement this idea by applying a transformation to the data in conjunction with a kernel--based smoothing method. 

%
%
Our paper relates to recent work on identification in trading mechanisms and estimation of monotone bidding strategies in first--price auctions.
\cite{Larsen2018} operationalize a similar change of variables to prove generic nonparametric identification results in settings where the researcher does not observes the rules of the mechanism and may not directly observe the agent's actions, either, but is willing to assume the data are generated in a Bayes--Nash equilibrium. Thus, their analysis begins one step behind ours in the sense that they estimate the mapping from actions (e.g.~bids) to outcomes (payments and allocations) in a first stage. Not surprisingly, their simulations demonstrate their approach suffers from a large loss of precision compared to estimation strategies that take advantage of prior knowledge of the auction mechanism. We therefore view our respective contributions as complementary advances in identification and estimation of auctions and auction--like mechanisms under shape constraints.

Three recent papers have also considered shape--constrained estimation in first--price auctions. \citet{henderson2012empirical} impose monotonicity on a nonparametric estimator of the inverse bidding strategy---which is equivalent to convexity of the expected payment function---by `tilting' the empirical distribution of the bids, \citet{luo2018integrated} consider an alternative approach that imposes convexity of the integrated quantile function of the bidders' valuations, and \citet{ma2019monotonicity} apply a rearrangement technique to the first step estimator in GPV. The constrained least squares estimator in this paper may be viewed as an extension of \cite{luo2018integrated} to more general auction models with possibly asymmetric bidders, which we achieve by considering the equilibrium expected payment instead of the integrated quantile function. Indeed, both our constrained least squares estimator and the one in \cite{luo2018integrated} can be characterized as the (slope of) greatest convex minorants (GCM), albeit of different functions.  In the case of first--price auctions with two symmetric bidders, the integrated quantile function coincides with the equilibrium expected payment function, and our constrained least squares estimator is numerically equivalent to Luo and Wan's estimator.  More generally, however, the integrated quantile function differs from the expected payment function when there are more than two bidders, and it need not be convex when bidders are asymmetric, because a bidder with a valuation equal to the $\tau$--quantile of its distribution will generally not submit a bid equal to the $\tau$--quantile of its highest competing bid. Thus, the Luo and Wan approach does not apply to asymmetric auctions.  The approach pursued in \citet{ma2019monotonicity} uses the bids instead of the probabilities and hence does not readily extend to other auction mechanisms.

Like the estimator proposed in \citet{luo2018integrated}, neither our constrained least squares estimator nor our nonparametric MLE requires the choice of an input parameter.  Both of our unsmoothed estimators of the equilibrium expenditure function $e$ converge as a process to the same (tight) Gaussian limit process.  The inverse strategy function ${\alpha}$, if the choice variable is the probability of winning, is the derivative of $e$.  Both of our unsmoothed estimators of ${\alpha}$ converge at a $\sqrt[3]{T}$ rate, where $T$ is the number of auctions, and both have a Chernoff limit distribution; this is also true for the estimator in \citet{luo2018integrated}. 
Computation of both of our unsmoothed estimators is simple: the constrained least squares estimator can be computed using an off--the--shelf algorithm and we show that our nonparametric maximum likelihood estimator can be easily computed using a simple pooled adjacent violators algorithm, also.

Although the MLE is asymptotically equivalent to our least--squares alternative, the MLE exhibits finite--sample advantages over the least--squares estimator when the true expected payment function is more convex for large values of $p$, which tends to be the case when bidder one's valuations are relatively strong compared to the maximum of its competitors'. Loosely speaking, the least-squares estimator is biased upward near $p$ equal to one because it is the slope of the GCM of an unconstrained estimator for $e$, with the result that finite--sample noise in the unconstrained estimator for $e$ forces the GCM to ``bow'' outward.\footnote{Illustration of the bowing out issue:
\tikz[baseline=8ex,scale=0.3]{
\draw[thick] (0,5)--(0,0)--(5,0);
\fill[Blue] (0,0) circle(2mm) (1,0.6) circle(2mm) (2,0.75) circle(2mm) (3,2) circle(2mm) (4,2.25) circle(2mm) (5,5) circle(2mm);
\draw (0,0) -- (2,0.75) -- (4,2.25)--(5,5);
\draw[->,Red] (3,4)--(4.4,3.62);
\draw[->,Red] (1.48,2.14)--(2.48,1.14);
\draw (4.5,3.6) node[right]{slope is steep here};
}
} This finite--sample bias is greater when $e$ is more convex. On the other hand, the MLE is less negatively impacted because the MLE for $e$ is not forced to ``bow'' as much as the least--squares estimator. Hence, the MLE can be expected to outperform the least-squares estimator in finite samples in auctions with a small number of symmetric (or approximately symmetric) bidders. 

Our nonparametric maximum likelihood estimator can alternatively be characterized as a two--step estimator, in which the first step is an inverse isotonic regression function estimator that yields bidder one's bid function.  To our knowledge, this is the first direct estimator of the equilibrium bid function itself.\footnote{The estimator that comes closest is \citet{bierens2012semi}, which assumes symmetry and independence, parameterizes the density of valuations and then matches the bid distribution implied by candidate parameter values to the observed bid distribution.  The estimation method is (semi)nonparametric in that the dimension of the parameter vector increases with the sample size, like it is in sieve estimation.}$^,$\footnote{To avoid ambiguity, we refer to the mapping from (to) values to (from) bids as the ``(inverse) bid function'' and the mapping from (to) values to (from) win--probabilities as the ``(inverse) strategy function.''}

Although it is not our primary objective, in the interest of completeness and to facilitate comparison with earlier work, we provide estimators of the quantile function, the distribution function, and the density $f_v$ of valuations.  The quantile function and distribution functions can be estimated using routine operations (such as the delta method) on our estimates of $\alpha$.  As noted by GPV and others, estimating $f_v$ requires nonparametric derivative estimation and the optimal convergence rate is a leisurely $T^{2/7}$.\footnote{To obtain a fourfold improvement requires a data set that's 128 times as large, compared to 32 for typical nonparametric estimators of univariate objects and 16 for parametric estimators.}  We provide estimation results for the derivative ${\alpha}'$ of ${\alpha}$, which indeed converges at the $T^{2/7}$ rate.  There are two ways of estimating $f_v$ using our approach: a two--step procedure in the spirit of GPV and a one--step procedure like \citet{marmer2012quantile}.\footnote{\citet{luo2018integrated} raise the interesting possibility of using a first step unsmoothed estimator as an input to the second stage of GPV, but do not provide asymptotic results.  It is likely that consistency obtains, but the$f_v$ convergence rate and indeed the asymptotic distribution are unknown.}  We do not see any reason to prefer either the one--step or two--step procedure. We provide asymptotic linear expansions of our first step estimators to allow readers to make up their own mind.

Because first--price auction models can be identified when bidders are asymmetric and when only a subset of the bids is observed \citep{Athey2002, Campo2003}, we provide separate results depending on assumptions made about the bids observed by the econometrician. The reason for this flexibility is that, in a first--price auction, bidder one's equilibrium expenditure function $e$ only depends on the distribution of the maximum competitor bid.  We therefore assume that the data are sufficient to obtain an estimate of this distribution and use this (unconstrained) estimator as the starting point for our analysis. If bidders are symmetric and their bids are independent, such an estimator could be $G_T^{n-1}$, where $G_T$ is the empirical distribution of the bids. If bidders are asymmetric then the product of the competitors' marginal empirical bid distributions would be a natural choice. If there is possible dependence among the competitors' bids, either arising from dependence in the competitor's valuations or coordination in their bids, then the empirical distribution of the maximum of the competitors' bids can be used. We show how the asymptotic properties of our constrained estimator improve as we add independence and symmetry assumptions: such improvements can be substantial and depend on the object being estimated.

Our paper addresses many issues and is fairly exhaustive in several dimensions.  Nevertheless, there are several issues that we do not address in the paper.  First, we ignore the potential presence of a (binding) reserve price.  A binding reserve price would affect identification of certain objects,\footnote{For instance, the value distribution below the reserve price.} but for many other objects, allowing for a reserve price would pose a minor, not especially interesting (from an econometric perspective), nuisance.   Further, we do not allow for endogenous entry.  Although endogenous entry can be an important concern in empirical work and raises interesting modeling and identification questions \citep{Levin1994, Li2009, Marmer2013, Gentry2014}, there are many ways of modeling this and it would be beyond the scope of this paper. The same comment applies to possible risk aversion, albeit that risk aversion would likely pose a tougher problem because nonparametric identification of the bidders' utility functions requires an exclusion restriction \citep{Guerre2009}. Finally, there can be auction--level heterogeneity.  Correcting for observed heterogeneity is relatively routine; unobserved heterogeneity might be addressed using methods similar to \citet{Krasnokutskaya2011} or \citet{Roberts2013}.  We leave these questions for future work.

We analyze the performance of our estimators in a fairly extensive simulation study. In general, we find that our estimators perform well and exhibit considerable robustness, both with respect to the design of the simulation study and to the choice of input parameter. However, no clear winner emerges and our various methods differ in systematic ways that are consistent with our asymptotic theory and with intuition. Hence, we do not offer empirical researchers a specific recommendation; rather, we provide a collection of tools, asymptotic results, and general insights that can be applied on a case--by--case basis.

Our paper is organized as follows.  In \cref{sec:model} we describe our model.  \Cref{sec:e and alpha} contains the description of our unsmoothed estimators of the equilibrium expenditure function $e$ and its derivative ${\alpha}$, including a description of their computation.  \Cref{sec:e and alpha} also contains a description of the direct estimate of the bid function.  \Cref{sec:smoothing} introduces and provides results for the smoothed versions of our estimates of ${\alpha}$ and its first derivative ${\alpha}'$, including boundary correction and transformation schemes, plus a description of jackknife estimators.  Results for the estimation of the probability distribution of probabilities under various (a)symmetry and (in)dependence assumptions, can be found in \cref{sec:Fp}.  Then, \cref{sec:objects} contains results on the estimation of several objects of potential interest.  We present our simulation results in \cref{sec:sims}.  Finally, \cref{sec:conclusion} concludes.

\section{First--price auction model with independent private values}
\label{sec:model}

Let $i=1,\dots,n$ index the bidders competing for an object in a first--price, sealed--bid auction. A bidder's value $v_{i}$ is drawn from a distribution $F_{i}$, which takes support on a compact interval in the nonnegative reals.  We assume the seller sets a nonbinding reserve price of zero. We further assume each distribution is absolutely continuous with a density $f_{i}$ that is bounded away from zero on its support.

Each risk--neutral bidder chooses her bid to maximize her expected surplus taking her competitors' strategies as given.  We will take bidder one (1) to be the bidder whose value is to be recovered and use a subscript $c$ to denote her competitors.  Thus, bidder one solves
\begin{equation}
\max_b \cparens[\big]{G_c\parens{b}\parens{v_1 - b}},
\end{equation}
where $G_c\parens{b}$ denotes the probability that bidder one's competitors all bid no more than $b$.\footnote{In principle one can accommodate dependence among bidder one's competitors' bids by treating groups of bidders as individual bidders.  Doing so requires additional assumptions to ensure monotonicity of equilibrium strategies.}

We can equivalently formulate the bidder's problem as a choice of her equilibrium probability of winning:
\begin{equation} \label{eq:profit function of p}
\max_p \cparens{ pv_1 - e_1(p) },
\end{equation}
where $e_1(p) =  Q_c\parens{p} p$ is bidder $i$'s equilibrium expected payment to the seller with $Q_c=G_c^{-1}$ the function.  

The well--known fact that $e_1$ must be convex in $p$ can be seen as a consequence of monotonicity of the equilibrium strategies or incentive compatibility of the direct revelation selling mechanism that implements the Bayes--Nash equilibrium of the first--price auction \citep{Maskin2000}. In any case, the solution to bidder one's problem is illustrated in \cref{fig:menu}. As noted by \citet{Larsen2018} and \citet{Milgrom1982}, bidder one's indifference curves in $(p,e)$-space are represented by straight lines with a slope equal to $v_1$. The optimal expected surplus is therefore attained where $\alpha_1(p) = e'_1\parens{p}=v_1$. 
\begin{figure}[ht]
\begin{center}
\includegraphics{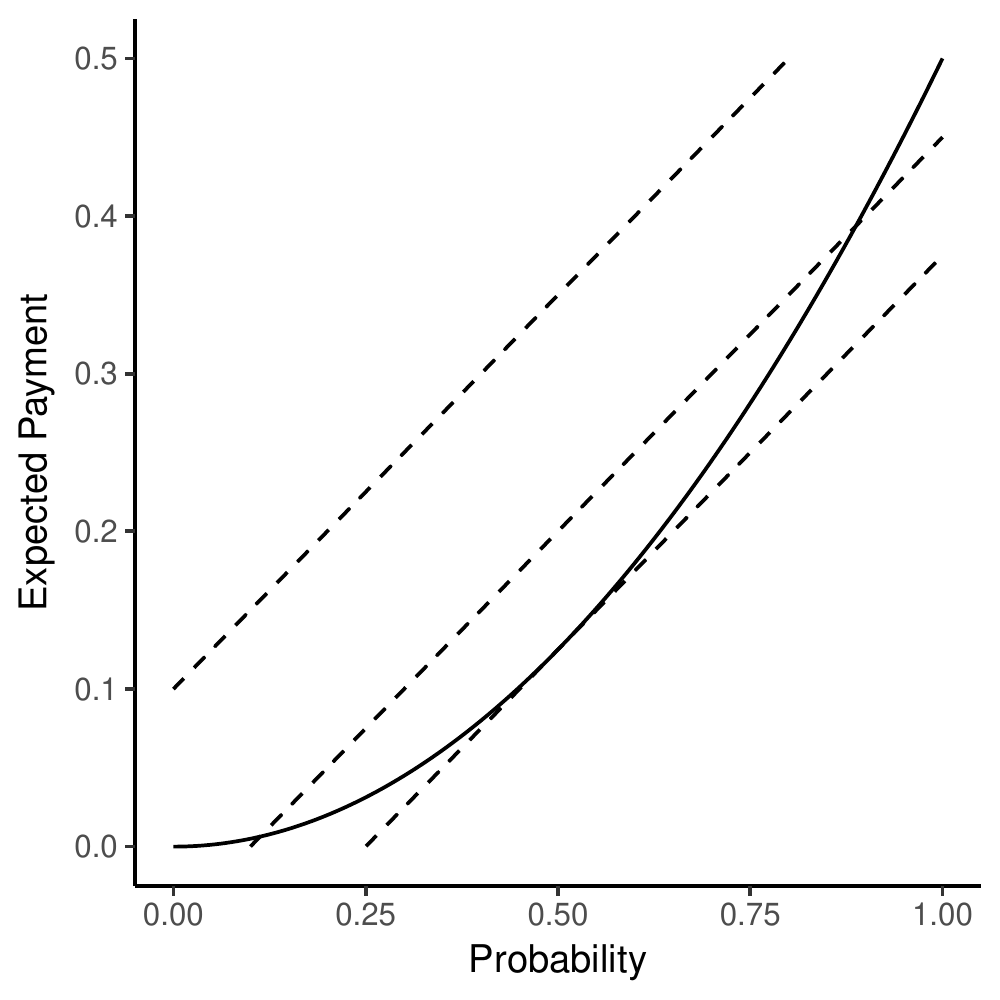}
\caption{A risk--neutral bidder with indifference curves represented by the dashed lines would optimally submit a bid that will win with a probability of 1/2 and expect to pay 1/8 to the seller (unconditional on winning the auction). \label{fig:menu}}
\end{center}
\end{figure}
From here on we drop the subscript from the function $e_1$ and write $e$ to mean $e_1$.

\section{Nonsmooth estimation of $e$ and ${\alpha}$}
\label{sec:e and alpha}
\drop{Let $b_{it}$ denote the bid of bidder $i\in\mathcal{N}$ in auction $t$ for $t=1,\dots,T$, where we have assumed the same set of bidders compete in each auction for ease of exposition.} In order to eliminate conditioning variables in our notation for the competing distribution of bids and equilibrium expected payment function, we assume valuations are independent across bidders, there is no auction--level heterogeneity, and the same set of bidders compete in each auction. 


Under the assumption that bidders' valuations are independent across auctions, each auction is an independent realization of the same game. Therefore, the probability of winning and the expected payment as a function of $b$ can be estimated by $G_{cT}(b)$ and $G_{cT}(b)\,b$, where $G_{cT}$ is a suitable estimate of the distribution function $G_c$ of the maximum of bidder one's competitors' bids.

Though a piecewise linear function $e_T$ whose graph contains
\begin{equation}
\Set[\big]{\parens[\big]{G_{cT}(b), \,b\,G_{cT}(b)}\colon b=b_{1},\dots,b_{T}} \label{eq:unconstrmenu}
\end{equation}
converges to $e$ at a $\sqrt{T}$--rate, it is generally non--convex in finite samples, and the slope of the menu between two nearby points is a poor approximation of its derivative.\footnote{$\sqrt{T} \cparens{e_T\parens{\cdot} - e\parens{\cdot}}$ converges weakly to a Gaussian limit process.} We will show that the greatest convex minorant of $e_T$ can be used to estimate the expected payment function and its derivative in a single step.\footnote{\cite{luo2018integrated} consider a greatest convex minorant estimator of a different function.} Moreover, we show that this estimator can be justified by a least--squares criterion and estimated by isotonic regression.

\subsection{Convexification}

To motivate the least--squares criterion, suppose that a differentiable estimate of the quantile function for bidder one's highest competing bid were available. Multiplying this hypothetical estimator by $p$ would yield a differentiable, though possibly non--convex, estimator, $e_T$. A shape--constrained estimate of the derivative of the expected payment function could then be obtained by solving the following problem
\[
\min_{{\alpha}\in\sA} \parens[\bigg]{ \frac{1}{2}\,\int_0^1 \parens[\big]{{\alpha}\parens{p} - e_T'\parens{p}}^2 \dif p},
\]
where $\sA$ is the set of nondecreasing nonnegative functions defined on $[0,1]$. This least--squares objective can be rewritten as
\[
\frac{1}{2}\,\int_0^1 \parens[\big]{{\alpha}\parens{p}  - e_T'\parens{p}}^2\dif p = \frac{1}{2} \int_0^1 {\alpha}^2\parens{p} \dif p - \int_0^1 {\alpha}(p)\,e'_T(p) \dif p + \frac{1}{2}\int_0^1 e'_T(p)^2 \dif p\,.
\]
The last term does not depend on ${\alpha}$ and may therefore be dropped from the criterion without affecting the shape--constrained estimator. The problem becomes
\begin{equation}
\label{eq:ls}
\min_{{\alpha}\in\sA} \parens[\bigg]{ \frac{1}{2}\int_0^1 {\alpha}^2\parens{p}\dif p - \int_0^1 {\alpha}\parens{p} \dif e_T\parens{p}},
\end{equation}
which can be solved for any $e_T$, differentiable or not, provided that the second integral in \cref{eq:ls} exists.  

In a first-price auction, we use an unconstrained estimate of the empirical quantile function for bidder one's highest competing bid, $Q_{cT}(p)$, and set $e_T(p) = Q_{cT}(p)\,p$ in \cref{eq:ls}. As we noted above, this $e_T$ will generally be non--convex in finite samples and piecewise linear. If $Q_{cT}(p)$ is the empirical quantile function of the highest rival bid, $e_T$ will be discontinuous at $t/T$ for $t = 1,\dots,T$, and the least--squares criterion can be rewritten as
\begin{equation}
\label{eq:lsQT}
\frac{1}{2} \int_0^1 {\alpha}^2(p)\dif p - \sum_{t = 1}^T\int_{\frac{t-1}{T}}^{\frac{t}{T}} {\alpha}(p)\,Q_{cT}(p)\dif p
- \sum_{t=1}^T {\alpha}\parens[\Big]{\frac{t-1}T} \frac{t-1}{T} \cparens[\Big]{Q_{cT}\parens[\Big]{\frac t T} - Q_{cT}\parens[\Big]{\frac{t-1}T}}. 
\end{equation}
where the second integral exists because ${\alpha}$ is bounded and increasing, and $e_T$ is left--continuous. 

Given this representation, we show that the minimizer of \cref{eq:lsQT} over all ${\alpha}\in\sA$ is a right--continuous step--function.
\begin{lem}
	\label{lem:lscharacterization}
	If $e_T$ is piecewise linear in $p$ then the minimizer of the least--squares criterion \eqref{eq:ls} among nondecreasing, nonnegative functions is a right--continuous step--function.  
	\proof
	All proofs can be found in \cref{app:proofs}.\qed
\end{lem}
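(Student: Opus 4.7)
The plan is to show that any candidate $\alpha \in \sA$ can be strictly improved in objective unless it is a right-continuous step function with jumps confined to the breakpoints of $e_T$. The argument splits into (i) an averaging step on each open interval where $e_T$ is affine, and (ii) a pointwise optimization at the (at most finitely many) discontinuities of $e_T$.

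First, I would set up notation to make the objective explicit. Since $e_T$ is piecewise linear, let $0 = p_0 < p_1 < \cdots < p_M = 1$ be its breakpoints. On each open interval $(p_{k-1}, p_k)$, $e_T$ is affine with slope $c_k$, and $e_T$ may have (necessarily nonnegative, given that $e_T$ is a cumulative expected payment) jumps $\Delta_k$ at $p_k$. Decomposing the Lebesgue–Stieltjes integral in \eqref{eq:ls} accordingly gives
\[
\int_0^1 \alpha(p)\dif e_T(p) \;=\; \sum_{k=1}^M c_k \int_{p_{k-1}}^{p_k} \alpha(p)\dif p \;+\; \sum_{k} \alpha(p_k)\Delta_k,
\]
which isolates the dependence of the criterion on values of $\alpha$ on the open intervals from its dependence on the point values at the breakpoints.

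Second, the averaging step. Fix an arbitrary $\alpha \in \sA$ and any $k$, and replace $\alpha$ on the open interval $(p_{k-1}, p_k)$ by the constant $\bar\alpha_k = (p_k - p_{k-1})^{-1}\int_{p_{k-1}}^{p_k}\alpha(p)\dif p$, leaving the endpoint values of $\alpha$ untouched. Because $\alpha$ is nondecreasing, $\alpha(p_{k-1}) \leq \bar\alpha_k \leq \alpha(p_k)$, so the modified function still lies in $\sA$. The continuous contribution $c_k \int_{p_{k-1}}^{p_k} \alpha\dif p$ and both jump contributions $\alpha(p_{k-1})\Delta_{k-1},\alpha(p_k)\Delta_k$ are unchanged, while Jensen's inequality gives $\int_{p_{k-1}}^{p_k}\alpha^2 \dif p \geq (p_k - p_{k-1})\bar\alpha_k^2$ with equality only if $\alpha$ is already constant on that interval. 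Hence the criterion strictly decreases unless $\alpha$ is constant on each open interval $(p_{k-1}, p_k)$, so the minimizer takes only finitely many values.

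Third, right-continuity at the breakpoints. At any $p_k$ with $\Delta_k > 0$, the value $\alpha(p_k)$ enters the criterion only through the term $-\alpha(p_k)\Delta_k$, which is minimized by choosing $\alpha(p_k)$ as large as monotonicity permits, namely the right-limit $\alpha(p_k^+)$. When $\Delta_k = 0$ the point value is irrelevant and we may adopt the same convention. Combined with constancy on each open interval, this shows that the minimizer is a right-continuous step function with breakpoints contained in $\{p_0,\dots,p_M\}$.

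The only real obstacle is the bookkeeping for the jump terms: one has to be careful about which value of $\alpha$ at a discontinuity of $e_T$ appears in the Lebesgue–Stieltjes integral and to verify that the averaging operation genuinely preserves membership in $\sA$. Once that is in order, Jensen handles the interior of each affine segment and a pointwise optimization handles the finitely many jumps, giving the characterization claimed in \cref{lem:lscharacterization}.
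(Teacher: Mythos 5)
Your proposal is correct and follows essentially the same route as the paper's (much terser) argument: constancy of the minimizer on each interval where $e_T$ is affine, and right--continuity obtained by ``timing'' the jumps of $\alpha$ to coincide with the (nonnegative) jumps of $e_T$ so as to maximize the negative contribution $-\alpha(p_k)\Delta_k$. Your version simply makes the paper's heuristic rigorous via the Lebesgue--Stieltjes decomposition and Jensen's inequality, and the bookkeeping (preservation of membership in $\sA$ under averaging, irrelevance of point values for the quadratic term) is handled correctly.
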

\begin{figure}[h]\centering
	\begin{tikzpicture}[scale=10]
	\draw[ultra thick] (0,0)--(1,0);
	\foreach \x in {0,0.5,1}
	\draw(\x,0.02)--(\x,-0.02) node[below]{\x};
	\draw(0,0.05) node{$($};
	\draw(0.504,0.05) node{$($};
	\draw(1,0.05) node{$]$};
	\draw(0.496,0.05) node{$]$};
	\draw(0,-0.1) node{$[$};
	\draw(0.504,-0.1) node{$[$};
	\draw(1,-0.1) node{$)$};
	\draw(0.496,-0.1) node{$)$};
	\draw(0.25,0.05) node{$Q_{T1}$};
	\draw(0.75,0.05) node{$Q_{T2}$};
	\draw(0.25,-0.1) node{${\alpha}_{T1}$};
	\draw(0.75,-0.1) node{${\alpha}_{T2}$};
	\end{tikzpicture}
	\caption{\label{fig:alpha}Illustration of the computation of the ${\alpha}_{Tt}$'s for $T=2$.}
\end{figure}
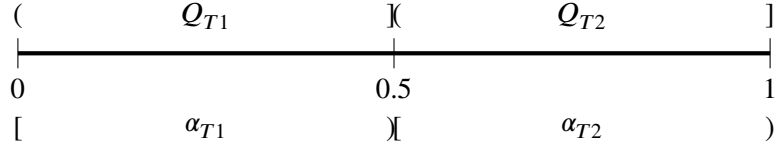
Using \cref{lem:lscharacterization}, and as illustrated in \cref{fig:alpha}, the least--squares problem can be further simplified to
\<\label{eq:ls char}
\min_{{\alpha}_{T1}\leq \dots \leq {\alpha}_{Tt}} \sum_{t=1}^T\parens[\Big]{\frac{1}{2} {\alpha}_{Tt}^2 - Q_{cTt}{\alpha}_{Tt}
- \parens[\big]{Q_{cTt}-Q_{cT,t-1}} \parens{t-1} {\alpha}_{Tt}}\,,
\>
where ${\alpha}_{Tt}= {\alpha}\cparens{\parens{t-1}/T}$ and $Q_{cTt}= Q_{cT}\parens{t/T}$.

 The first term in the summand in \cref{eq:ls char} comes from the integral of ${\alpha}^2$; the second term comes from the integral of ${\alpha}$ with respect to the linear portions of $e_T$; and the third term comes from the integral of ${\alpha}$ at the discontinuities of $Q_{cT}$.


\begin{figure}[h]\centering
\includegraphics{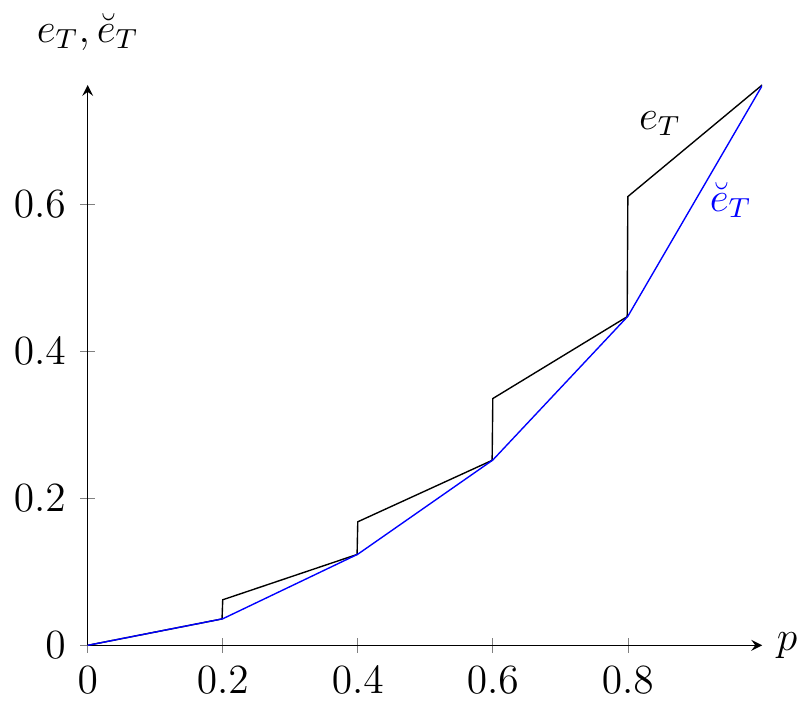}
\caption{\label{fig:convexification}Convexification algorithm illustrated for $T=5$}
\end{figure}

Define ${\alpha}_T\parens{p}= {\alpha}_{T\lceil Tp \rceil}$. We can integrate up ${\alpha}_T$ to obtain a convex estimator $\beT$ of $e$, 
\[
	\beT(p)= \int_0^p \alT(u)\dif u.
\]
As it turns out, $\beT$ is simply the greatest convex minorant of $e_T$, and ${\alpha}_T$ is its left--derivative: see \cref{fig:convexification}.  Note that $\beT$ is both piecewise linear and continuous by construction.

An alternative way to arrive at an estimator ${\alpha}_T^*$ and thence an estimator $\beT^*$ is by defining the problem as an inverse isotonic regression problem.  The idea is to define 
\< \label{eq:ThetaT}
{\Theta}_T\parens{\tilde {\alpha}} = \inf \argmin_p
\sum_{t=1}^T \cparens[\Big]{ e_T\parens[\Big]{\frac{t}{T}} - e_T\parens[\Big]{\frac{t-1}{T}} - \frac{\tilde {\alpha}}{T}}
\one\parens[\Big]{\frac{t-1}{T} \leq p},
\>
The rationale for \cref{eq:ThetaT} is that \cref{eq:ThetaT} essentially imposes monotonicity of the derivative of $e$: it is the natural analog to inverse isotonic regression estimators for the current context.\footnote{An inverse isotonic regression estimator can for a given $m$ be characterized as a minimizer $x$ of $\sum_{i=1}^n \parens{y_i-m} \one\parens{x_i\leq x}$.}  An alternative way of thinking about it is that the population objective function corresponding to \cref{eq:ThetaT} is
\[
\frac1T \sum_{t=1}^T \sparens[\Big]{
T \cparens[\Big]{e\parens[\Big]{\frac tT} - e\parens[\Big]{\frac{t-1}T}} - \tilde {\alpha}
} \one\parens[\Big]{\frac{t-1}T \leq p}
\simeq
\frac1T \sum_{t=1}^T \cparens[\Big]{
	{\alpha}\parens[\Big]{\frac {t-1}T}- \tilde {\alpha}
} \one\parens[\Big]{\frac{t-1}T \leq p},
\]
which is optimized at the value of $p=\parens{t-1}/T$ for which ${\alpha}\cparens{\parens{t-1}/T}$ is the largest value less than $\tilde {\alpha}$.

Returning to \cref{eq:ThetaT}, an estimator ${\alpha}_T^*$ can be defined as
\[
 {\alpha}_T^*\parens{p}= \sup \Set{\tilde {\alpha} \colon {\Theta}_T\parens{\tilde {\alpha}} \leq p}.
\]
There is no a priori reason to prefer ${\alpha}_T^*$ to ${\alpha}_T$ or vice versa, albeit that ${\alpha}_T$ may be easier to compute. In fact, they are numerically equivalent because ${\Theta}_{T}(\tilde \alpha) = \sup\{p:\alpha_{T}(p) < \tilde \alpha\}$.

\drop{The intuition behind this formulation is that for $\Upsilon\parens{\tilde p_1,\tilde p_2}=e\parens{\tilde p_2} - e\parens{\tilde p_1} - \tilde {\alpha} \parens{\tilde p_2-\tilde p_1}$ by \citet[equation (3.3)]{balakrishnan1998order},
\begin{multline} \label{eq:fugly}
 \sum_{t=2}^T \int_0^p\int_{\tilde p_1}^1 \Upsilon\parens{\tilde p_1,\tilde p_2}
f_{p_{t-1},p_t}\parens{\tilde p_1,\tilde p_2} \dif \tilde p_2 \dif \tilde p_1
\\
=
\sum_{t=2}^T\int_0^p\int_{\tilde p_1}^1 
	\Upsilon\parens{\tilde p_1,\tilde p_2}
\frac{T!}{\parens{t-2}!\parens{T-t}!} F_p^{t-2}\parens{\tilde p_1} \cparens{ 1-F_p\parens{\tilde p_2}}^{T-t}
f_p\parens{\tilde p_1}f_p\parens{\tilde p_2}
\dif \tilde p_2 \dif \tilde p_1
\\
= \parens{T-1} T\int_0^p\int_{\tilde p_1}^1 
\Upsilon\parens{\tilde p_1,\tilde p_2}
\cparens{ 1+ F_p\parens{\tilde p_1}-F_p\parens{\tilde p_2}}^{T-2}
f_p\parens{\tilde p_1}f_p\parens{\tilde p_2}
\dif \tilde p_2 \dif \tilde p_1.
\end{multline}
Now, differentiating with respect to $p$ and applying integration by parts yields
\begin{multline}
\parens{T-1}T \int_p^1 \cparens{ e\parens{t}-e\parens{p}-\tilde {\alpha} \parens{t-p}}
\cparens{1+F_p\parens{p}-F_p\parens{t}}^{T-2} f_p\parens{p}f_p\parens{t} \dif t
\\
=
- T  \cparens{ e\parens{1}-e\parens{p}-\tilde {\alpha} \parens{1-p}} f_p\parens{p} F_p^{T-1}\parens{p}\\
	+
	T \int_p^1 \cparens{ {\alpha}\parens{t}-\tilde {\alpha}} f_p\parens{p} \cparens{1+ F_p\parens{p}-F_p\parens{t}}^{T-1} \dif t,
\end{multline}
which for fixed $p<1$ is maximized by a number quickly approaching ${\alpha}\parens{p}$.\joris{Much of this should probably be in an appendix if in the paper at all: not exactly intuitive.  But I don't want to discard it yet.}\karl{I agree that it should not be discarded. An appendix seems appropriate.}
}

If bidders are symmetric then a more efficient unconstrained estimator for the expected payment function is given by $p Q_{T}\parens[\big]{p^{1/(n-1)}}$, where $Q_T$ is the empirical quantile function for the pooled sample of bids $\Set{b_\ell}$ for $\ell = 1,\dots, nT$. In this case, the solution to the least--squares problem in \cref{eq:ls} is found via a weighted isotonic regression of $\cparens[\big]{b_{\parens{\ell}} \ell^{n-1} - b_{\parens{\ell-1}} \parens[\big]{\ell-1}^{n-1}}/\cparens[\big]{\ell^{n-1} - \parens{\ell-1}^{n-1}}$ on $\cparens[\big]{\parens{\ell-1}/{nT}}^{n-1}$ with weights given by $\parens{\ell/nT}^{n-1} - \cparens[\big]{\parens{\ell-1}/nT}^{n-1}$, where $b_{(\ell)}$ denotes the $\ell$--th order statistic. The corresponding constrained estimator for $e$ is the GCM of $p Q_T\parens[\big]{p^{1/(n-1)}}$, as before.

\subsection{Asymptotics for the GCM estimator}

We now develop some asymptotic estimation results for our convex estimator $\beT$.  Before we do so, we will make several assumptions and discuss conditions under which they would hold.
\begin{ass} \label{ass:value distributions}
The private values $v_{t1},\dots,v_{tn}$ in each auction $t$ are independent and drawn from continuous distributions $F_1,\dots,F_n$, respectively.  The distributions have bounded convex supports $[\mathb{v}, \bar{v}]$ and their density functions $f_1,\dots,f_n$ are continuous and nonzero on $(\mathb{v},\bar{v}]$.\footnote{It is standard to model a binding reserve price as an atom at the low end of the value distribution. If there is a binding reserve price $r$, the expected payment function could be redefined as $e(p) = rp$ for $p$ less than $p^*$, the probability that no competitors submit a bid. Although bidder one cannot actually submit a bid so as to win with a positive probability $p<p^*$, a bidder optimizing against this $e$ would choose the corner solution $p=0$ if her valuation is less than $r$ and would be willing to submit a bid of $r$ if her valuation is $r$. Hence, this abuse of notation and terminology is inconsequential. A shape--constrained estimate for $e$ can still be defined as the GCM of $Q_{cT}(p) p$ where $Q_{cT}$ is an appropriate estimate of the quantile function of $\max\cparens{r, B_2,\dots, B_n}$. Essentially, we would treat the reserve price as another competing bid that has a degenerate distribution at $r$. We will not discuss the case of a binding reserve price further in this paper.} There is independence across auctions.\qed  
\end{ass}
\Cref{ass:value distributions} is a standard assumption in the auctions literature.  It is sufficient to ensure the existence of monotone bid functions \citep{Lebrun2006}.  The common support assumption embedded in \cref{ass:value distributions} is unnecessary, but is imposed to make our analysis more wieldy. We note that \cref{ass:value distributions} is stronger than we need: we do not use independence among the competitors' valuations in the proofs of any of our theorems. The assumption can be relaxed provided that bidder one's unique best reply to its competitors' bids is a monotone pure strategy.
\begin{ass} \label{ass:monotone bid strategies}
Bidders are risk neutral and bid according to the (strictly monotonic) Bayes--Nash equilibrium strategies. \qed
\end{ass}
Similarly, \cref{ass:monotone bid strategies} is slightly stronger than we need, because our results only require bidder one's decision problem to be of the form in \cref{eq:profit function of p}. Thus, \cref{ass:value distributions} and \cref{ass:monotone bid strategies} are merely one set of sufficient conditions on the primitives of the model under which our results may be proven.

A consequence of the assumptions made thus far is that $Q_c'$ is continuous and bounded on any closed interval that does not contain zero.  Indeed, the first order condition corresponding to \cref{eq:profit function of p} implies that
\[
 Q_c'\parens{p}= \frac{v-Q_c\parens{p}}{p}.
\]
We now make a high level assumption on the convergence of an estimator of the bid distribution functions and develop conditions under which it is known to hold.
\begin{ass} \label{ass:Gaussian process}
The maximum rival bid distribution can be estimated by $G_{cT}$, for which
$
 \sqrt{T} \cparens{G_{cT}\parens{\cdot}-G_c\parens{\cdot}} \convw \G^*,
$
where $\G^*$ is a Gaussian process with covariance kernel $H^*$ and $\convw$ denotes weak convergence.
\qed
\end{ass}
\Cref{ass:Gaussian process} is a relatively weak assumption, and as previously discussed, is the starting point for our analysis. It is for instance satisfied if we take $G_{cT}$ to be the empirical distribution function of the maximum competitor bid, in which case
\< \label{eq:Hstar simple}
H^*\cparens{Q_c\parens{p},Q_c\parens{p^*}}=\min\parens{p,p^*}-pp^*.
\>
  It would also be satisfied if, instead, we assumed symmetry and took $G_{cT}= G_T^{n-1}$, i.e.\ the empirical bid distribution estimated off all bids raised to the power $n-1$, in which case\footnote{
Note that $\sqrt{T}\parens{G_T - G}$ converges to a Gaussian limit process with covariance kernel
$\sparens{G\cparens{\min\parens{b,b^*}} - G\parens{b}G\parens{b^*}}/n$.  Hence  
$\sqrt{T}\parens{G_T^{n-1}-G^{n-1}}$ converges to a Gaussian limit process with covariance kernel
$G^{n-2}\parens{b} G^{n-2}\parens{b^*} \sparens{G\cparens{\min\parens{b,b^*}} - G\parens{b}G\parens{b^*}}/n$.  Insert $Q_c\parens{p}=Q\parens{p^{1/\parens{n-1}}}$ to get the stated result.
}
 \< \label{eq:Hstar symmetric}
H^*\cparens{Q_c\parens{p},Q_c\parens{p^*}} 
=
\frac{\parens{n-1}^2}{n} \cparens[\big]{
 \min\parens{ p,p^*}^{1/\parens{n-1}} - \parens{pp^*}^{1/\parens{n-1}}
}
\parens{pp^*}^{\parens{n-2}/\parens{n-1}}. 
 \>
 A final example is one in which there is asymmetry plus independence and all bids are observed in which case\footnote{Note that 
     \[ \tag{*}
     \sqrt{T}\parens{G_{cT} - G_c} = \sqrt{T} \parens[\Big]{\prod_{i=2}^n G_{Ti} - \prod_{i=2}^n G_i}
     \simeq \sum_{i=2}^n \sqrt{T}\parens{G_{Ti}-G_i } G_{-i1},  \label{eq:Hstar asymmetric derivation}
 \]
 where $G_{-i1}= \prod_{j\neq i,1}^n G_j$.  The right hand side in \cref{eq:Hstar asymmetric derivation}
 converges weakly to a Gaussian limit process with covariance kernel
 $\sum_{i=2}^n G_{-i1}\parens{b} G_{-i1}\parens{b^*} \sparens{G_i\cparens{\min\parens{b,b^*}}- G_i\parens{b}G_i\parens{b^*}}$, which produces \cref{eq:Hstar asymmetric}, after noting that $G_{-i1}\cparens{Q_c\parens{p}}= p/G_i\cparens{Q_c\parens{p}}$.
}
 \< \label{eq:Hstar asymmetric}
 H^*\cparens{Q_c\parens{p,Q_c\parens{p^*}}} 
 =
 \sum_{i=2}^n \parens[\bigg]{ \frac1{G_i\sparens{Q_c\cparens{\max\parens{p,p^*}}}} - 1} pp^*,
 \>
 where $G_i$ is the bid distribution of bidder $i$.  Note that for $n=2$ \cref{eq:Hstar asymmetric} collapses to \cref{eq:Hstar simple} divided by two since bidder one's bids can also be used in the case of symmetry. We make \cref{ass:Gaussian process} to avoid having to hard--wire a particular set of distributional assumptions into the problem.

\begin{thm}\label{thm:ebreve}
	Let \cref{ass:Gaussian process,ass:value distributions,ass:monotone bid strategies} hold.  Then $\beT$ 
	satisfies 
	\[
	\sqrt{T} \cparens{ \beT\parens{\cdot} - e\parens{\cdot}}\convw
	\G,
	\]
	on $\sparens{0,1}$, 
	where $\G$ is a Gaussian process with covariance kernel
	\[
H\parens{p,p^*}= \zeta\parens{p} \zeta\parens{p^*} H^*\cparens[\big]{ Q_c\parens{p}, Q_c\parens{p^*}}, \label{eq:Hdef}
	\]
	with $\zeta\parens{p} = Q_c'\parens{p} p$ for $p\neq 0$ and ${\zeta}\parens{0}=0$.  
	
    Moreover, for any fixed $0<p<1$, for any closed interval $\sP \subset \parens{0,1}$, if ${\alpha}$ is continuously differentiable then
	$
	\sqrt[3]{T}\max_{p\in \sP} \abs{{\alpha}_T\parens{p}-{\alpha}\parens{p}}= O_p\parens{1}.
	$ 
	\qed
\end{thm}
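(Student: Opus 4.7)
The strategy is to decompose the weak convergence claim into two steps: first establish weak convergence of the unconstrained estimator $e_T\parens{p} = Q_{cT}\parens{p}\,p$, then argue that the GCM operation, applied at the convex function $e$, preserves the weak limit. For the $\sqrt[3]{T}$-rate on $\alT$, I would invoke standard cube-root asymptotics for GCM/isotonic-type estimators via a localization argument.

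For the first step, \cref{ass:Gaussian process} gives $\sqrt{T}\parens{G_{cT} - G_c} \convw \G^*$, and under \cref{ass:value distributions,ass:monotone bid strategies} the density $g_c = 1/Q_c'$ is continuous and bounded away from zero on the interior of its support. Hadamard differentiability of the quantile map combined with the functional delta method then yields $\sqrt{T}\parens{Q_{cT}\parens{p} - Q_c\parens{p}} \convw -Q_c'\parens{p}\,\G^*\parens{Q_c\parens{p}}$ on compact subsets of $\parens{0,1}$. Multiplication by $p$ is a continuous linear operation and gives $\sqrt{T}\parens{e_T - e} \convw -\zeta\parens{\cdot}\,\G^*\parens{Q_c\parens{\cdot}}$, a mean-zero Gaussian process with the stated covariance kernel $H$. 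Since $\zeta\parens{0} = 0$, the limit extends continuously to $p = 0$.

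For the GCM step, the key observation is that $e$ is strictly convex on $\parens{0,1}$: differentiating $e\parens{p} = Q_c\parens{p}\,p$ once and using the FOC $v = Q_c\parens{p} + p\,Q_c'\parens{p}$ gives $e'\parens{p} = \alpha\parens{p}$, so $e''\parens{p} = \alpha'\parens{p} > 0$ on any compact subset of $\parens{0,1}$ (strictly positive because $\alpha$ is strictly increasing under the monotone Bayes--Nash assumption). Strict convexity implies that the GCM operator on $\ell^\infty\sparens{0,1}$ is Hadamard directionally differentiable at $e$ with derivative the identity map on continuous perturbations, so the functional delta method yields $\sqrt{T}\parens{\beT - e_T} = o_p\parens{1}$ uniformly on compact subsets of $\parens{0,1}$; combined with the previous step, this gives $\sqrt{T}\parens{\beT - e} \convw \G$ on $\sparens{0,1}$. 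The boundary at $p = 1$ is handled by the fact that $\beT\parens{1} = e_T\parens{1}$ automatically, and the boundary at $p = 0$ by $\zeta\parens{0} = 0$.

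For the $\sqrt[3]{T}$-rate, I would use the switch relation ${\Theta}_T\parens{\tilde {\alpha}} = \sup\Set{p \colon \alT\parens{p} < \tilde {\alpha}}$ noted in the text, translating $T^{1/3}\parens{\alT\parens{p_0} - \alpha\parens{p_0}} > x$ into a statement about an argmin. Setting $\tilde {\alpha} = \alpha\parens{p_0} + T^{-1/3} x$ and expanding $e$ to second order around $p_0$ produces a parabolic drift of order $\alpha'\parens{p_0}\parens{q - p_0}^2/2$, while $\sqrt{T}\parens{e_T - e}$ contributes a Brownian-like fluctuation. Rescaling $q - p_0 = T^{-1/3} u$ leads to a Brownian-motion-plus-parabola minimization on compact sets whose argmin is $O_p\parens{1}$ by the Groeneboom--Kim--Pollard theory, yielding the pointwise cube-root rate. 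Uniformity over a closed $\sP \subset \parens{0,1}$ then follows by a chaining argument using continuity of $\alpha'$ together with the monotonicity of $\alT$. The main technical hurdle is the GCM step: Hadamard differentiability at $e$ has to be formulated carefully to cover the full interval $\sparens{0,1}$, particularly near $p = 1$ where the limit has nonzero local variance; the cube-root asymptotics are more routine but require attention to the uniformity.
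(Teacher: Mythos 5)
Your decomposition of the weak--convergence claim matches the paper's: functional delta method applied to the quantile process to get $\sqrt{T}\parens{e_T-e}\convw\G$ on the interior, followed by the observation that the GCM correction is negligible at a strictly convex $e$ (the paper cites \citet{carolan2001marginal} for exactly the fact you phrase as Hadamard directional differentiability of the GCM map with identity derivative). Two points deserve attention, one where you diverge from the paper and one where you leave a hole.

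First, the uniform $\sqrt[3]{T}$ rate. You propose the switch relation plus a Groeneboom--Kim--Pollard localization, then wave at ``a chaining argument'' for uniformity over $\sP$. The pointwise localization is fine (the paper uses it, but only in the appendix to derive the Chernoff \emph{limit distribution}); chaining argmax functionals to get a uniform rate is genuinely delicate and you have not supplied it. The paper's route for the uniform rate is far more elementary and worth knowing: by monotonicity of $\alT$ and $\alpha$, for $t_T=T^{-1/3}$ one has the sandwich
\[
\alT\parens{p}-\alpha\parens{p}\leq \frac{\beT\parens{p+t_T}-\beT\parens{p}-e\parens{p+t_T}+e\parens{p}}{t_T}+\parens[\Big]{\frac{e\parens{p+t_T}-e\parens{p}}{t_T}-\alpha\parens{p}},
\]
uniformly on $\sP$. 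The first term is $O_p\parens{1/\sqrt{t_TT}}=O_p\parens{T^{-1/3}}$ by the modulus of continuity of the (tight) limit process of $\sqrt{T}\parens{\beT-e}$, and the second is at most $\alpha\parens{p+t_T}-\alpha\parens{p}=O\parens{t_T}$ by continuous differentiability of $\alpha$; the lower bound is symmetric. This delivers uniformity in one stroke with no chaining.

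Second, the extension of weak convergence from $\parens{0,1}$ to $\sparens{0,1}$. You correctly flag that Hadamard differentiability of the quantile map fails at the endpoints, but your proposed fixes do not close the gap: $\zeta\parens{0}=0$ is a statement about the limit kernel, not about tightness of the pre--limit process near $0$, and $\beT\parens{1}=e_T\parens{1}$ says nothing about $\sqrt{T}\cparens{e_T\parens{1}-e\parens{1}}$. The missing ingredient is that the top order statistic of the rival bids converges at rate $T$ (the density is bounded away from zero at $\bar b$), so $\sqrt{T}\cparens{e_T\parens{1}-e\parens{1}}=\opone$; tightness on $\sparens{0,1}$ then follows from tightness on $\parens{0,1}$ via a triangle inequality comparing $p$ near $1$ with $p_T=1-1/T$. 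Without this your argument establishes the process result only on compact subsets of the open interval, which is weaker than the theorem claims.
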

%
%

It should be noted that weak convergence of quantile processes for distributions with compact support is usually stated on $\parens{0,1}$; see e.g.\ \citet[p308]{vandervaart2000asymptotic}. The reason is that Hadamard--differentiability only obtains on the interior of $\sparens{0,1}$. We prove that that weak convergence of our quantile--related process in fact obtains on $\sparens{0,1}$.  

Cube--root--$T$ convergence of ${\alpha}_T$ is not surprising in view of e.g.\ \citet{kim1990cube}.  Indeed, $\sqrt[3]{T}\cparens{{\alpha}_T\parens{p}-{\alpha}\parens{p}}$ has a Chernoff limit distribution at each fixed $p$.  Further, equations (15) and (16) in \citet{jun2015classical} suggest that 
\< \label{eq:cube root limit result}
 \sqrt[3]{T} \cparens{{\alpha}_T\parens{p}-{\alpha}\parens{p}} \convd
 {\alpha}'\parens{p} \argmax_{t\in \mathbb{R}} \cparens{ \G^\circ\parens{t} - {\alpha}'\parens{p}t^2/2}.
\>
where $\G^\circ$ is a Gaussian process.
A justification and description of the properties of $\G^\circ$ can be found in \cref{app:gcm}.  If \cref{eq:Hstar simple} holds then \cref{eq:cube root limit result} simplifies to
\<\label{eq:cube root limit result simple}
\sqrt[3]{T} \cparens{{\alpha}_T\parens{p}-{\alpha}\parens{p}} \convd \sqrt[3]{4 \zeta^{2}\parens{p} \alpha'\parens{p}}\,\mathbb{C}\,,
\>
where $\mathbb{C}$ is a standard Chernoff--distributed random variable.  \Cref{eq:cube root limit result simple} is also justified in \cref{app:gcm}. 


 Our result for ${\alpha}_T$ in \cref{thm:ebreve} extends the convergence rate result to uniform convergence. Note that this is in contrast to e.g.\ nonparametric kernel regression or density estimation where uniform convergence only obtains at a slower rate.

\subsection{Nonparametric maximum likelihood}
\label{sec:npmle}

\subsubsection{Estimator}

To this point we have relied on the least--squares criterion used to motivate the GCM estimator for $e$ and the isotonic regression estimator for its derivative.  The GCM has the feature that it may be broadly applied in any auction or auction--like setting as long as an appropriate unconstrained estimate $e_T$ is available. In this section, we develop an estimator based upon a nonparametric likelihood criterion that specifically exploits the structure of a first--price auction.  For ease of exposition and notation, we assume that only the maximum competitor bid is used to construct the likelihood, though we note how more data may be used to produce a more efficient estimate in \cref{sec:inv reg mle}. 

We rearrange the familiar formula for a bidder's inverse strategy function
\[
{\alpha}\cparens{G_{c}\parens{b}} = b + \frac{G_c\parens{b}}{g_c\parens{b}},
\]
in order to relate the density of a bidder's highest competing bid to her expected payment function:
\[
g_c\parens{b} = \frac{e\cparens{G_c\parens{b}} / b }{{\alpha}\cparens{G_c\parens{b}} - b}\,.
\]
The loglikelihood of an independent sample of highest competing bids may then be written as
\begin{equation}
\label{eq:npmle}
\sL\parens{\tilde {\alpha},\tilde e}=
\sum_{t=1}^T \cparens[\big]{\log \tilde e_{t} - \log b_{t} - \log \parens{\tilde {\alpha}_{t}-b_{t}}}\,,
\end{equation}
where the $b_t$'s are the maximum competitor bid and the shorthand forms $\tilde e_t$ and $\tilde {\alpha}_t$ are candidate values for $e\cparens{G_c\parens{b_t}}$ and the left--derivative of $e$ evaluated at $G_c\parens{b_t}$, respectively. Before \cref{eq:npmle} may be used as the basis for a nonparametric maximum likelihood estimator, a few remarks are in order.  First, nondecreasing convex real--valued functions defined on $[0,1]$ are continuous on $[0,1)$,\footnote{A nondecreasing convex function defined on a compact interval can jump discontinuously at the right boundary.} which implies that $e$ is uniquely determined on $[0,1)$ by its left-derivative $\tilde {\alpha}$. We will therefore replace $\tilde e$ with a function of $\tilde {\alpha}$ in what follows. Second, for a given $\tilde e$, the implied $G_{c}$ will be a proper distribution function if $\tilde e$ is convex and $\tilde e\parens{1}$ equals the highest order statistic among the rivals' observed bids. Third, because the loglikelihood contribution of $b_{t}$ is increasing in $\tilde e_{t}$ and decreasing in $\tilde {\alpha}_{t}$, the shape--constrained MLE should be piecewise linear in order to minimize the density at values of $b$ between realizations of the competitors' bids while maximizing the density at the observed bids. In particular, kinks in the MLE $\beTml$ occur precisely where $\beTml\parens{p}/p=b_{t}$ for some observed bid $b_{t}$. We can therefore maximize \eqref{eq:npmle} by searching over the space of left--continuous, nondecreasing step functions $\tilde {\alpha}$ defined on the unit interval.

To facilitate the numerical optimization of the maximum likelihood objective in \eqref{eq:npmle}, we introduce the notation $b_{(t)}$ for the $t$-th lowest order statistic and use the fact that $\beTml$ is linear on the interval $[e_{(t-1)}/b_{(t-1)}, e_{(t)}/b_{(t)}]$ to express $e_{\parens{t}}$ in terms of its derivative and $e_{\parens{t-1}}$
\[
e_{(t)} = e_{(t-1)} + {\alpha}_{(t)}\parens[\Big]{\frac{e_{(t)}}{b_{(t)}} - \frac{e_{(t-1)}}{b_{(t-1)}}}
 = e_{(t-1)}\frac{{\alpha}_{(t)}/b_{(t-1)}-1}{{\alpha}_{(t)}/b_{(t)}-1}\,.
 \]
 Combining this recursive relationship with the constraint that $e_{(T)} = b_{(T)}$, we may write $e_{(t)}$ as
 
\begin{equation} \label{eq:e in terms of alpha}
e_{(t)} = b_{(T)}\,\prod_{s = t+1}^{T} \frac{{\alpha}_{(s)}/b_{(s)}-1}{{\alpha}_{(s)}/b_{(s-1)}-1}\,,
\end{equation}
where the product $\prod_{s = t+1}^{T} a_s$ is defined equal to one for $t=T$ for any sequence $\{a_s\}$. Using this expression to replace $e_{(t)}$ in \eqref{eq:npmle}, the loglikelihood becomes
\<
\label{eq:npmle2}
\sL(\tilde {\alpha};b) = 
\sum_{t=1}^{T}  \parens[\Bigg]{\log b_{(T)} + \sum_{s=t+1}^{T} \cparens[\Big]{\log \parens{\tilde {\alpha}_{(s)}/b_{(s)} - 1} - \log \parens{\tilde {\alpha}_{(s)}/b_{(s-1)}-1}} - \log b_{(t)} - \log \parens{\tilde {\alpha}_{(t)}-b_{(t)}}}\,.
\>

By inspection of the above display, the MLE must satisfy ${\alpha}_{\parens{1}} = b_{\parens{1}}$ and ${\alpha}_{\parens{t}}\geq b_{\parens{t}}$. Problematically, this implies an unbounded density at the lower end of the competitor bids' support, which in turn implies that the solution to the maximum likelihood problem is not unique, since the loglikelihood criterion is infinite for any ${\alpha}$ with ${\alpha}_{\parens{1}} = b_{\parens{1}}$ and ${\alpha}_{\parens{2}}>b_2$. Nonetheless, one maximizer of the likelihood distinguishes itself from the rest because, for a fixed ${\alpha}_{\parens{1}}$ and ${\alpha}_{\parens{2}}$ with $b_{\parens{1}} < {\alpha}_{\parens{1}},\, b_{\parens{2}} <{\alpha}_{\parens{2}}$ and ${\alpha}_{\parens{2}}< 2\,b_{\parens{3}} - b_{\parens{2}}$, the solution for $\{{\alpha}_{(t)}\}$ for $t=3,\dots,T$ is unique. Furthermore, this unique solution does not depend on the values of ${\alpha}_{\parens{1}}$ and $\alpha_{\parens{2}}$ because the loglikelihood is additively separable in ${\alpha}_{(t)}$ and the monotonicity constraints on $\alpha_{\parens{t}}$ do not bind for $t=1$, 2, and 3. Thus, we may first maximize $\sL\parens{\tilde {\alpha};b_1,\dots,b_T} - \log \parens{\tilde {\alpha}_{(1)}-b_{(1)}}$ over $\{\tilde{{\alpha}}_{\parens{t}} : t>2\}$. We may then separately define $\breve{{\alpha}}_{T,\parens{1}}^\text{MLE} = b_{\parens{1}}$ and choose any ${\alpha}_{\parens{2}}\in(b_{\parens{2}},\breve{{\alpha}}_{T,\parens{3}}^\text{MLE} ]$. Within this (shrinking) interval, the likelihood contribution of the second--lowest observed competitor bid is strictly decreasing in $\tilde {\alpha}_{\parens{2}}$. In practice, we suggest defining the MLE equal to the boundary value $\breve{{\alpha}}_{T,\parens{2}}^\text{MLE}=b_{\parens{2}}$. \drop{Although this choice results in an indeterminate loglikelihood contribution from the lowest observed competitor bid, one can generally choose an $\epsilon>0$ and define $\breve{{\alpha}}_{T,\parens{2}}^\text{MLE}=b_{\parens{2}}+\epsilon$ such that the estimates of the eventual objects of interest are indiscernible up to machine precision.}

\subsubsection{Pooled--adjacent--violator algorithm (PAVA) for MLE}
By adding and subtracting $\log b_{(s)}$ and $\log b_{(s-1)}$ and canceling terms in the inner summation of \cref{eq:npmle2}, we can rewrite the loglikelihood as
\begin{multline*}
\sum_{t=1}^{T}\sparens[\bigg]{\log b_{(T)} + \\
	\sum_{s={t+1}}^{T}\cparens[\Big]{\log \parens{\tilde {\alpha}_{(s)} - b_{(s)}} -\log b_{(s)}  - \log\parens{\tilde {\alpha}_{(s)}-b_{(s-1)}}+ \log b_{(s-1)}}
 - \log b_{(t)} -\log (\tilde {\alpha}_{(t)}-b_{(t)})}\\
= \sum_{t=1}^{T}\sparens[\bigg]{\sum_{s={t+1}}^{T}\cparens[\Big]{\log \parens{\tilde {\alpha}_{(s)} - b_{(s)}} - \log\parens{\tilde {\alpha}_{(s)}-b_{(s-1)}}} -\log \parens{\tilde {\alpha}_{(t)}-b_{(t)}}}\\
 = \sum_{t=1}^{T}\cparens[\big]{ \parens{t-2} \log\parens{\tilde {\alpha}_{(t)}-b_{(t)}}- \parens{t-1} \log\parens{\tilde {\alpha}_{(t)}-b_{(t-1)}}}\,.
\end{multline*}

The Lagrangian for the isotonic maximum likelihood problem is then\footnote{We omit the constraint $\tilde {\alpha}_{\parens{3}}>\tilde {\alpha}_{\parens{2}}$ from the Lagrangian because this constraint is always slack.}
\begin{equation}
\label{eq:isonpmle}
\max_{\Set{\tilde {\alpha}_{t}}_{t>2}} \sum_{t=1}^{T} \cparens[\big]{\parens{t-2} \log\parens{\tilde {\alpha}_{(t)}-b_{(t)}}-\parens{t-1}\log\parens{\tilde {\alpha}_{(t)}-b_{(t-1)}}} + \lambda_2\parens{\tilde {\alpha}_{\parens{2}}- b_{\parens{2}}} + \sum_{t=4}^T \lambda_{t}\parens{\tilde {\alpha}_{(t)}-\tilde {\alpha}_{(t-1)}}\,.
\end{equation}
This problem can be solved using a pool--adjacent--violators algorithm (PAVA), which divides the large optimization problem into a sequence of at most $T-3$ one-dimensional optimizations. To see this, we observe that the Karush--Kuhn--Tucker (KKT) conditions for this problem are
\begin{equation} \label{eq:npmle Lagrange foc}
\maligned{
\frac{t-2}{\tilde {\alpha}_{(t)} - b_{(t)}} - \frac{t-1}{\tilde {\alpha}_{(t)} - b_{(t-1)}}  + \lambda_{t} - \lambda_{t+1} &= 0,\\
\lambda_{t} &\geq 0,\\
 \tilde {\alpha}_{\parens{t}}-\tilde {\alpha}_{(t-1)} &\geq 0,\\
\lambda_{t}\parens{\tilde {\alpha}_{(t)}-\tilde {\alpha}_{(t-1)}} &= 0.
}
\end{equation}
Let $t_{j}$ be the subsequence of starting points of ``blocks'' for which the nondecreasing constraint binds. By construction, $\tilde {\alpha}_{(t_{j}-1)} < \tilde {\alpha}_{(t_{j})} = \dots = \tilde {\alpha}_{(t_{j+1} -1)} < \tilde {\alpha}_{(t_{j+1})}$. Complementary slackness then implies $\lambda_{t_{j}} = \lambda_{t_{j+1}} = 0$.  Within each block $j$, the value $\tilde {\alpha}$ that satisfies the KKT conditions can then be found by solving for $\tilde {\alpha}$ in 
\begin{multline*}
0 = \sum_{t = t_{j}}^{t_{j+1}-1} \parens[\Big]{\frac{t-2}{\tilde {\alpha}-b_{(t)}} - \frac{t-1}{\tilde {\alpha}-b_{(t-1)}} + \lambda_{t} - \lambda_{t+1}}
 = \sum_{t = t_{j}}^{t_{j+1}-1} \parens[\Big]{\frac{t-2}{\tilde {\alpha}-b_{(t)}} - \frac{t-1}{\tilde {\alpha} - b_{(t-1)}}} \\
= -\frac{t_{j}-1}{\tilde {\alpha}-b_{t_{j}-1}} - \frac{2}{\tilde {\alpha} - b_{(t_{j})}} - \frac{2}{\tilde {\alpha} - b_{(t_{j}+1)}} -\dots - \frac{2}{ \tilde {\alpha}- b_{(t_{j+1}-2)}} + \frac{t_{j+1}-3}{\tilde {\alpha}-b_{(t_{j+1}-1)}}
\end{multline*}

To find the solution to the constrained NPMLE, we initially assign each $\tilde {\alpha}_{t}$ to its own block and set $\Set{\tilde {\alpha}_{t}}$ equal to the unconstrained solution $\tilde {\alpha}_{(t)}= \parens{t-1} b_{(t)} - \parens{t-2}b_{(t-1)}$ for $t>1$ and $\tilde \alpha_{(1)} = b_{(1)}$. This initial guess satisfies the constraint $\alpha_{\parens{t}}\geq b_{\parens{t}}$ but might not produce a monotonic sequence. Beginning with $t = 4$, the PAVA proceeds sequentially by finding the smallest $t$ such that $\tilde {\alpha}_{t}<\tilde {\alpha}_{t-1}$. If such a $t$ exists, we pool $\tilde {\alpha}_{t}$ together with the left adjacent block and recalculate $\tilde {\alpha}$ in the above first--order condition for that block. We then set $\tilde {\alpha}_{(s)}=\tilde {\alpha}$ for all $s$ in the block and repeat until no further violations are found. This algorithm will converge in no more than $T-3$ steps, because exactly one more of the $T-3$ monotonicity constraints are made to bind with equality in each step and no constraints are ever made slack again.

Importantly, every iterate satisfies the dual feasibility KKT condition $\lambda_t\geq0$ because violations of the primal feasibility condition $\tilde{{\alpha}}_{\parens{t}}\geq \tilde {\alpha}_{\parens{t-1}}$ are resolved by imposing $\tilde{{\alpha}}_{\parens{t}}=\tilde{{\alpha}}_{\parens{t-1}}$. Though primal feasibility may also be satisfied, for instance, by setting $\tilde {\alpha}_{\parens{t}}=\tilde{{\alpha}}_{\parens{t+1}}$, these deviations from the PAVA algorithm typically lead to a violation of dual feasibility unless the PAVA algorithm would have pooled these values in a later iteration. Thus, the final iterate of $\Set{\tilde {\alpha}_t}$ will satisfy the KKT conditions.  \Cref{lem:dual active set} formally establishes this claim in an appendix. Moreover, \cref{lem:invexity} demonstrates that the KKT conditions are both necessary and sufficient for the constrained global maximum of the loglikelihood objective. Thus, the algorithm converges to the MLE for $\alpha$. 

\begin{thm} \label{thm:PAVA converges to MLE}
The final iterate of the PAVA described above is the nonparametric MLE for ${\alpha}$. \qed
\end{thm}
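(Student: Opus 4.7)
My plan is to reduce the proof to verifying the Karush--Kuhn--Tucker conditions in \cref{eq:npmle Lagrange foc} for the output of the PAVA, and then to invoke the two lemmas referenced in the paragraph preceding the theorem: \cref{lem:dual active set} (which says that every iterate, and hence the final one, satisfies dual feasibility) and \cref{lem:invexity} (which says that KKT is sufficient for a global maximum). So the work consists of checking \emph{primal feasibility}, \emph{stationarity}, and \emph{complementary slackness} for the PAVA output, together with the finite termination of the algorithm.

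First I would verify termination and primal feasibility jointly. The initial iterate takes $\tilde\alpha_{(t)}=(t-1)b_{(t)}-(t-2)b_{(t-1)}$, which satisfies $\tilde\alpha_{(t)}\geq b_{(t)}$ since $b_{(t)}\geq b_{(t-1)}$, and that inequality is preserved under pooling with a left neighbor because the pooled value lies between the two adjacent values by strict concavity of $\log$ on each coordinate. At each step exactly one additional monotonicity constraint is made to bind with equality; because the algorithm never splits an already--pooled block, the number of iterations is at most $T-3$, and the final iterate therefore satisfies $\tilde\alpha_{(t)}\geq\tilde\alpha_{(t-1)}$ for all $t\geq 4$ together with $\tilde\alpha_{(t)}\geq b_{(t)}$.

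Next I would verify stationarity and complementary slackness. Let $t_{j}$ index the left endpoints of the blocks on which $\tilde\alpha$ is constant. By the specification of the PAVA step, within each block the common value $\tilde\alpha$ is chosen to be the unique root of the summed first--order condition
\[
\sum_{t=t_{j}}^{t_{j+1}-1}\parens[\Big]{\frac{t-2}{\tilde\alpha-b_{(t)}}-\frac{t-1}{\tilde\alpha-b_{(t-1)}}}=0,
\]
which is exactly the sum of the per--$t$ stationarity equations in \cref{eq:npmle Lagrange foc} with $\lambda_{t_{j}}=\lambda_{t_{j+1}}=0$ at the block boundaries. Setting the remaining multipliers $\lambda_{t}$ inside the block equal to the telescoping partial sums of the per--$t$ residuals gives the stationarity condition for every $t$. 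Complementary slackness is then automatic: on the interior of a block the primal constraint is tight by construction, while between blocks the primal constraint is strictly slack and the corresponding multiplier vanishes.

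The remaining and delicate step is \emph{dual feasibility}, $\lambda_{t}\geq 0$; this is where I expect the main difficulty, and where I would appeal directly to \cref{lem:dual active set}. The intuition to be exploited is the one noted in the text: the PAVA resolves violations by pooling with the \emph{left} neighbor, producing an ``active--set'' sequence in which the multipliers generated by the telescoping construction above can be shown (by induction on the step number) to remain nonnegative, whereas alternative feasibility--restoring rules could drive some $\lambda_{t}$ negative. Once this lemma is in hand, the output of the PAVA satisfies all four KKT conditions. Finally, applying \cref{lem:invexity}, which establishes that the objective in \cref{eq:isonpmle} is invex on the feasible region so that any KKT point is a global maximum, concludes that the final iterate of the PAVA is the nonparametric MLE for $\alpha$. \qed
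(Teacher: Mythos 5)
Your proposal is correct and follows essentially the same route as the paper: the paper's proof simply notes that the final PAVA iterate satisfies the KKT conditions by \cref{lem:dual active set} and that these are sufficient for the global maximum by \cref{lem:invexity}. Your additional verification of termination, primal feasibility, stationarity, and complementary slackness just makes explicit what those lemmas and the surrounding text already contain (in particular, preservation of $\tilde\alpha_{(t)}\geq b_{(t)}$ under pooling is \cref{lem:pseudoconcave} together with \cref{lem:mle is an average}).
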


We next obtain the NPMLE of the equilibrium payment function by substituting $\balTml$ into equation \cref{eq:e in terms of alpha}. Figure \ref{fig:mle vs gcm} depicts the maximum likelihood estimator for $e$ in comparison with $\beT$ for a sample of five rival bids. In larger samples, the differences in the estimators for $e$ are not visually apparent.
 
As can be seen in \cref{fig:mle vs gcm}, the MLE is invariably above the GCM estimator.  This is no coincidence. The nodes of the GCM are positioned at integer multiples of $1/T$ by construction, whereas the MLE can move the position of the nodes as well as the values at the nodes. The MLE can therefore achieve both convexity and proximity to the original nonconvex estimator without having to duck below the original estimator everywhere.

\grey{This figure can also be used to provide a graphical interpretation of the MLE. In $(p,e)$--space, a bid $b$ is represented by a ray through the origin with a slope of $b$. In \cref{fig:mle vs gcm}, the rays corresponding to the observed highest competing bids are apparent because the graph of the unconstrained estimator for $e$ is composed of segments of these rays. One can show that the shape--constrained MLE for $e$ is the convex, piecewise--linear function that maximizes the sum of the log of the horizontal distances between the maximal point of intersection between $e^{\text{MLE}}_T$ and the rays with slopes equal to the observed bids. Intuitively, this follows because the probability of observing a bid less than or equal to $b$ is represented by the abscissa of the maximal point of intersection between $e$ and the ray with slope $b$. Thus, the log likelihood of the observed bids is maximized by spacing these intersections as evenly as possible (roughly speaking) between zero and one (inclusive) while maintaining convexity. On the other hand, by construction, the points of intersection between the bid rays and the GCM will be more widely spaced for the smallest bids and tightly spaced for the largest bids. One can therefore imagine arriving at the MLE by ``sliding'' the nodes of the GCM back toward the origin along the bid rays, possibly creating new nodes along the way. This explains why the graph of the MLE for $e$ always lies above the graph of the GCM.}


\begin{figure}[ht]
\begin{center}
\includegraphics[width=5in,height=5in]{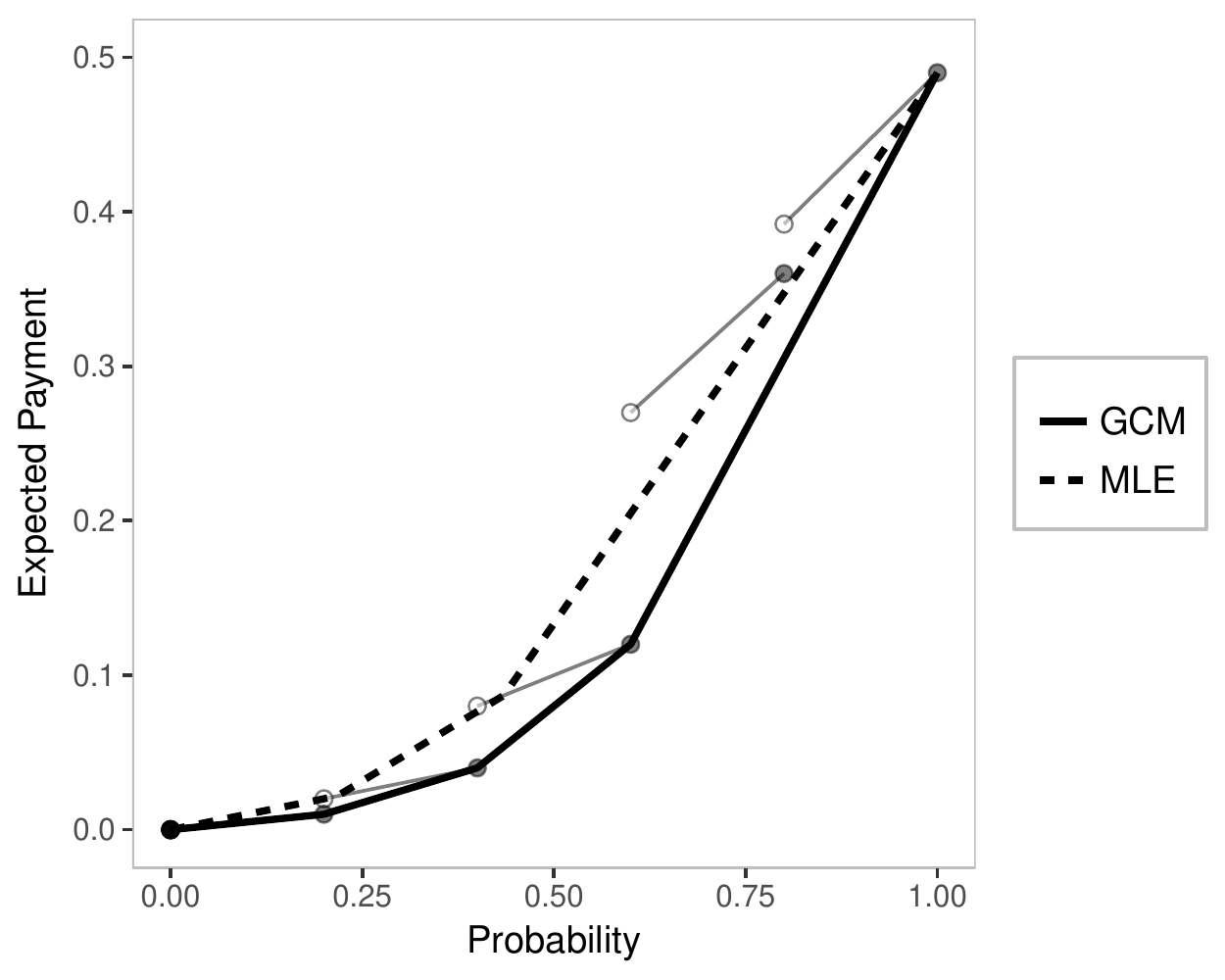}
\caption{An illustrative example of the maximum likelihood estimator (dashed) for the expenditure function compared to the greatest convex minorant $\beT$ (solid) of the unconstrained estimator $e_T$ (grey). \label{fig:mle vs gcm}}
\end{center}
\end{figure}

\subsubsection{Alternative characterization of the MLE}
\label{sec:inv reg mle}

An estimate ${\beta}_T\parens{v}$ of bidder one's bid function at $v$ can  be obtained as the minimizer of
\[
\S_{T}\parens{b,v} =  \sum_{t=1}^{T}\parens[\Big]{\frac{t-2}{v-b_{\parens{t}}} - \frac{t-1}{v-b_{\parens{t-1}}}}\one\parens{b_{(t)}\leq b}\,.
\]
The estimator ${\beta}_T$ is monotonic and it can be inverted to obtain an estimator ${\alpha}_T^\mle$ of ${\alpha}$ at $p=G_c\parens{b}$.  Indeed,
it turns out that both 
\(
 \sqrt[3]{T} \cparens{{\beta}_T\parens{v}-{\beta}\parens{v}} 
\)
and 
\(
 \sqrt[3]{T} \cparens{{\alpha}_T\parens{p}-{\alpha}\parens{p}}
\)
have limiting Chernoff distributions.  Indeed, we have 
\< \label{eq:npmle limit distribution}
 \forall 0<p<1: \sqrt[3]{T} \cparens{{\alpha}_T\parens{p}-{\alpha}\parens{p}} \convd \sqrt[3]{4 {\zeta}^2\parens{p} \cparens[\big]{2Q_c'\parens{p}+Q_c''\parens{p}p}} \C,
\>
where $\C$ is a standard Chernoff distribution.  A sketch of the proof and a derivation of the limit distribution can be found in \cref{app:npmle}.  The limit distribution in \cref{eq:npmle limit distribution} is the same as that in \cref{eq:cube root limit result simple} under \cref{eq:Hstar simple}.

Presumably the limit distribution of the least--squares and maximum likelihood estimators would also coincide when we use all bids, not just the maximum rival bid. In the case of $n$ symmetric bidders, the likelihood of the pooled sample of bids is obtained from $\parens{n-1} g(b) = G\parens{b}/\parens{v-b}$, which becomes $\cparens{e\parens{p}/p}^{1/\parens{n-1}}/\cparens{{\alpha}\parens{p} - Q\parens{p^{1/\parens{n-1}}}}$ after a change of variables. The MLE can then be computed by applying the PAVA to the objective
\[
\sum_{\ell=1}^{nT}\cparens[\Bigg]{ \frac{\ell-n}{n-1} \log\parens[\big]{\alpha_{\parens{\ell}} - b_{\parens{\ell}}} - \frac{\ell-1}{n-1}\log\parens[\big]{\alpha_{\parens{\ell}} - b_{\parens{\ell-1}}}}\,.
\]
The maximum likelihood estimator for the bid function at a fixed $v$ is then given by the minimizer over $b$ of
\[
\sum_{\ell=1}^{nT}\cparens[\Bigg]{ \frac{\ell-n}{\parens{n-1}\parens{v - b_{\parens{\ell}}}} - \frac{\ell-1}{\parens{n-1}\parens{v - b_{\parens{\ell-1}}}}}\one\parens{b_{\parens{\ell}}\leq b}\,.
\]
The maximum likelihood estimator using the full vector of bids in the asymmetric case is considerably more complicated.

\drop{
\subsubsection{Relation to empirical likelihood estimation}
Substituting the identities $p_t = e_t/b_t$ and into the likelihood criterion and constraints, we may express the problem in terms of $p_t$ as
\[
\max_{\Set{p_t}} \log \parens{p_{\parens{t}} - p_{\parens{t-1}}}
\]
subject to $p_{\parens{1}} = 0$ and $p_{\parens{T}}=1$ and an isotonicity constraint on
\[
\frac{p_{\parens{t}} b_{\parens{t}} - p_{\parens{t-1}} b_{\parens{t-1}}}{p_{\parens{t}} - p_{\parens{t-1}}}\,.
\]
Because the difference $p_{\parens{t}} - p_{\parens{t-1}}$ is an estimate of the probability that bidder 1's highest competing bid is in the interval $(b_{\parens{t-1}},b_{\parens{t}}]$, the NPMLE for ${\alpha}$ in a first-price auction is closely related to the distribution of highest competing bids that diverges least from the empirical measure and for which the collection of points \cref{eq:unconstrmenu} lie on the graph of a convex function. Specifically, the measure of divergence is a limit case of the power divergence considered in \cite{henderson2012empirical}, although the isotonicity constraints in our problem differ from the constraints in theirs.

Though standard asymptotic theory for empirical likelihood estimators assumes a finite dimensional parameter of interest and only admits a finite number of equality constraints, the weak convergence of $e_T^{\text{MLE}}$ in first-price auctions\karl{Still need to prove this!} gives hope for an empirical likelihood approach under alternative measures of divergence and/or in other auction formats.\karl{Or is weak convergence of this empirical likelihood problem just a coincidence?} We leave this direction of research for future work.

}

\drop{

\subsubsection{Convergence}

We conclude \cref{sec:npmle} by showing that the limit distribution of our NPMLE estimator of ${\alpha}$ coincides with that of our GCM estimator of ${\alpha}$.\joris{Do we need to say more here?}

\< \label{eq:npmle cube root limit result}
\sqrt[3]{T} \cparens{{\alpha}_T\parens{p}-{\alpha}\parens{p}} \convd
{\alpha}'\parens{p} \argmax_{t\in \mathbb{R}} \cparens{ \G^\circ\parens{t} - {\alpha}'\parens{p}t^2/2}.
\>
where $\G^\circ$ is a Gaussian process with the same properties as the one in \cref{eq:cube root limit result}. If \cref{eq:Hstar simple} holds then \cref{eq:npmle cube root limit result} simplifies to
\<\label{eq:npmle cube root limit result simple}
\sqrt[3]{T} \cparens{{\alpha}_T\parens{p}-{\alpha}\parens{p}} \convd \sqrt[3]{4 \zeta^{2}\parens{p} \alpha'\parens{p}}\,\mathbb{C}\,,
\>
which is the same limit as the one in \cref{eq:cube root limit result simple}.
}
\section{Smoothing, transformations, and boundary correction}
\label{sec:smoothing}
Though not specific to shape--constrained estimation, this paper would be incomplete if it did not also address smoothing because, for objects of interest such as the valuation distribution, smoothing improves the convergence rate of estimators of $\alpha$. Indeed, if $\alpha$ is twice continuously differentiable then the $\sqrt[3]{T}$ convergence rate of the unsmoothed estimators of $\alpha$ can be improved to the standard nonparametric $T^{2/5}$ rate.  In this section, we first introduce our basic smoothing method, which is similar to that in \citet{luo2018integrated}, then develop two important enhancements: boundary correction and transformation.

As noted by \citet{hickman2015replacing}, boundary correction can be important in the estimation of auction models, especially if the objective is to estimate the density of valuations.  The reason is that the bid distribution (in \citet{hickman2015replacing}) or the distribution of win probabilities (here) has compact support and it is well--known that, absent a boundary correction, most nonparametric density estimators are inconsistent at the boundaries.  The situation is more favorable in our case since we know that probabilities vary from zero to one whereas the top of the bid distribution must be estimated, albeit that this can be done super--consistently.  We provide two distinct boundary correction methods, one based on boundary kernels, and one on a boundary correction scheme in the spirit of \citet{hickman2015replacing}.  As expected, both methods yield vast improvements on the performance of our uncorrected estimators near the boundary.  In developing these methods, we have identified an improvement in the choice of the bandwidth sequence recommended in \citet{karunamuni2008some}, which improves the performance of \citet{hickman2015replacing}'s version of the \citet{Guerre2000} estimator substantially.  This improvement is described in a separate paper, \citet{pinkse2019actual}.

Our smoothed estimators for $\alpha$ can be further improved by applying a transformation $\psi$ to the win--probabilities as part of the smoothing method.  Indeed, we show that such transformations $\psi$ can improve the first order asymptotic mean square error, though not the convergence rate, of our smoothed estimators of $\alpha$.  The effect of such transformations on first order asymptotics help explain the feature noted in \citet{ma2019inference} that the asymptotic variance of the \citet{marmer2012quantile} quantile--based--estimator of $f_v$ is often greater than that of the corresponding GPV estimator. These transformation methods are complements, not substitutes, to our boundary correction methods.

\subsection{Smoothing}
The main limitation of our method above is that $\alpha_T$ converges at a $\sqrt[3]{T}$ rate.  This is due to the fact that $\alpha_T$ is discontinuous and hence that $\breve e_T$ is kinky.  To obtain convergence at the typical nonparametric $n^{2/5}$ rate, we can replace $\breve e_T$ with a smoothed version $\hat e_T$, defined by,
\[
\hat e_T\parens{p} =  \frac1h\int_{-\infty}^{\infty} \breve e_T\parens{s}\,
k\parens[\Big]{\frac{s-p}{h}} \dif s,
\]
where $k$ is a twice continuously differentiable kernel with compact support for which $\lazyint{k\parens{s}s^2}=1$,\footnote{$\int k\parens{s} s^2 \dif s=1$ is a normalization.} and $h=h_T$ is a bandwidth such that $\Xi=\lim_{T\to \infty} \sqrt{Th^5} <\infty$.  The restriction on the bandwidth sequence is not necessary for consistency of $\hat e_T$:  unlike kernel--estimators employed by others, $\hat e_T$ is a consistent estimator of $e$ for all bandwidth sequences that converge to zero and the same is true for $\ahT$ defined below.

\drop{This definition of $\hat e_T$ requires that we extend the definition of $\breve e_T$ outside $\sparens{0,1}$, which we do recursively.\joris{Stick in graph}  Later, we propose a another estimator of $e$ based on boundary kernels and a third alternative based on a combination of reflection and transformation. Our new definition of $e$ for any $t>0$ is
\[
\maligned{
	e\parens{1+t} & = 4 e\parens{1} - 6 e\parens{1-t} + 4 e\parens{1-2 t} -e\parens{1- 3 t},\\
	e\parens{-t} & =   4 e\parens{0}  - 6 e\parens{t} + 4 e \parens{2 t} - e\parens{3 t},
}\quad t>0.
\]
We similarly extend ${\alpha}_T$ using

\[
\maligned{
 \alpha_{T}(1+s)& = 6 \alpha_T\parens{1-s} - 8 \alpha_T\parens{1-2s} + 3\alpha_T\parens{1-3s} , \\
 \alpha_{T,1-j}& = 6 \alpha_{T}\parens{s} - 8 \alpha_{T}\parens{2 s} + 3 \alpha_T\parens{3s},
}
\quad j=1,2,\dots.
\]
Then, as before, $\alpha_T\parens{p} = \alpha_{T\lceil Tp\rceil}$  and
\[
 \beT\parens{p}=\int_0^p \alpha_T\parens{t} \dif t,
\]
except that the formula now works for $p\in \mathbb{R}$. Note that $\beT$ is convex and increasing on a compact interval containing $[0,1]$.
\karl{I favor putting all boundary correction discussion in one section. We can move this to that section or cut it from the paper.}
}

This definition of $\hat e_T$ requires modification near the boundaries because $\beT$ is not defined outside $\sparens{0,1}$. We address this issue in \cref{section:boundary correction}. Before providing results for smoothed estimates of $\alpha$ evaluated away from the boundary, we need one further assumption.
\begin{ass} \label{ass:Qc differentiable}
$Q_c$ is thrice continuously differentiable on any closed interval $\sP_Q \subset\halfopen{0,1}$. \qed
\end{ass}
\Cref{ass:Qc differentiable} is essentially equivalent to assuming that $g_c=G_c'$ is twice continuously differentiable, which is implied by continuous differentiability of the value densities \citep{Guerre2000}. Thus, assuming one more continuous derivative in \cref{ass:value distributions} is sufficient for \cref{ass:Qc differentiable}. Assuming that a density is twice continuously differentiable is standard in the nonparametric kernel estimation literature.  The compact subset requirement is needed since $g_c\parens{0}$ may be infinite.

\begin{thm} \label{thm:ehat} Under \cref{ass:Gaussian process,ass:monotone bid strategies,ass:value distributions,ass:Qc differentiable},
$\hat e_T$ is convex, has the same limit properties as $\breve e_T$ on a closed interval $\sP$ contained in $\parens{0,1}$,\joris{Why exclude $p=1$?} and\footnote{The function $\ahT$ converges as a process in an $h$--neighborhood of $p$, but not on any interval of positive length.  Regular kernel density and regression function estimates have the same lack--of--tightness property.  The local tightness result is not in this paper.  A consequence of the lack of tightness is that $\ahT\parens{p}$ and $\ahT\parens{p^*}$ are asymptotically independent for fixed distinct $p,p^*$.}
\item
\[
\forall p \in \sP:
 \sqrt{Th} \cparens[\big]{ \ahT\parens{p} - {\alpha}\parens{p}} \convd
 N\cparens{a''\parens{p}\Xi/2, \sV  },
\]
where
$\ahT\parens{p}=\ehat'\parens{p}$ and
\[
\sV\parens{p} = \lim_{h\to 0} \int_{-\infty}^\infty \int_{-\infty}^\infty 
	\sH_h\parens{p,s,\tilde s}
 k'\parens{s} k'\parens{\tilde s}\dif \tilde s \dif s,	 
\]
where
\<\label{eq:ugly H}
\sH_h\parens{p,s,\tilde s} = \cparens[big]{H\parens{p+sh,p+\tilde s h}
- H\parens{p+sh,p}
- H\parens{p,p+\tilde s h}
+ H\parens{p,p}} / h.
 \>
\qed
\end{thm}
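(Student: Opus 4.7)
The plan is to base everything on the integral representation
\[
\ahT(p) = \ehat'(p) = -\frac{1}{h}\int \beT(p+uh)\,k'(u)\dif u,
\]
obtained by differentiating the defining convolution and substituting $u=(s-p)/h$. Adding and subtracting $-h^{-1}\int e(p+uh)\,k'(u)\dif u$, I would decompose $\ahT(p)-{\alpha}(p)=A_T(p)+B_T(p)$ into a centered stochastic term $A_T$ and a deterministic smoothing bias $B_T$ and handle each separately. Convexity of $\ehat$ on $\sP$ follows at once: for $h$ small enough the shifted support of $k$ about any $p\in\sP$ remains in $[0,1]$, so $\ehat$ is a convolution of the convex function $\beT$ with the nonnegative kernel $k$, and such convolutions preserve convexity. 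The claim that $\ehat$ inherits the $\sqrt{T}$ limit properties of $\beT$ on $\sP$ then follows from \cref{thm:ebreve}: the smoothing operator applied to $\sqrt{T}(\beT-e)\convw\G$ gives the same limit $\G$, up to a deterministic $O(h^2)$ smoothing bias in $e$.

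For $B_T(p)$ I would Taylor expand $e(p+uh)$ to third order around $p$ and use the kernel moments $\int k'\dif u=0$, $\int uk'\dif u=-1$, $\int u^2k'\dif u=-2\int uk\dif u=0$, and $\int u^3k'\dif u=-3\int u^2k\dif u=-3$ (the last three by integration by parts combined with the assumptions on $k$). The $e(p)$, $e''(p)$ and higher-order contributions drop out, leaving $B_T(p)={\alpha}''(p)h^2/2+o(h^2)$; multiplying by $\sqrt{Th}$ gives $\sqrt{Th^5}\,{\alpha}''(p)/2\to\Xi{\alpha}''(p)/2$, the claimed bias term.

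For the stochastic part I would compute the variance directly, since the factor $h^{-1}$ rules out a naive continuous-mapping argument. Writing $X_T(s)=\sqrt{T}\,[\beT(s)-e(s)]$, one has $\sqrt{Th}\,A_T(p)=-h^{-1/2}\int X_T(p+uh)k'(u)\dif u$, so that
\[
\Var\parens[\big]{\sqrt{Th}\,A_T(p)}=\frac{1}{h}\int\int \cov{X_T(p+uh)}{X_T(p+\tilde uh)}\,k'(u)k'(\tilde u)\dif u\dif\tilde u,
\]
which by \cref{thm:ebreve} converges to $h^{-1}\iint H(p+uh,p+\tilde uh)\,k'(u)k'(\tilde u)\dif u\dif\tilde u$. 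Because $\int k'=0$, one may freely subtract $H(p+uh,p)$ and $H(p,p+\tilde uh)$ and add back $H(p,p)$ inside the double integral, producing $\iint\sH_h(p,u,\tilde u)\,k'(u)k'(\tilde u)\dif u\dif\tilde u$, whose $h\downarrow 0$ limit exists under \cref{ass:Qc differentiable} and equals $\sV(p)$. Asymptotic normality of $\sqrt{Th}\,A_T(p)$ is then inherited from the Gaussian limit $\G$ via a Lyapunov-type triangular-array CLT applied to the underlying sampling fluctuations in $G_{cT}$.

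The main obstacle is the interchange of limits implicit in $\sqrt{Th}\,A_T(p)$: because $h\downarrow0$ as $T\to\infty$, the rescaled weighting $h^{-1}k'(\cdot/h)$ is not a continuous functional in the uniform topology on $\G$'s sample paths, so continuous mapping does not apply. The argument must therefore rest on a direct variance calculation at fixed $(T,h)$ paired with a triangular-array CLT, together with a careful analysis of $\sH_h$ near the diagonal where $H$ is only Lipschitz (as in \cref{eq:Hstar simple}), so that even the existence of the $h\downarrow 0$ limit defining $\sV(p)$ requires verification.
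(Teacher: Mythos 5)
Your proof follows the paper's own decomposition essentially step for step: the same convolution argument for convexity, the same three--way split of $\sqrt{Th}\cparens{\ahT\parens{p}-{\alpha}\parens{p}}$ into a centered stochastic term, a term annihilated by $\int k'=0$, and a kernel--derivative bias expansion yielding $e'''\parens{p}\Xi/2={\alpha}''\parens{p}\Xi/2$, and the same use of $\int k'=0$ to insert the cross terms $H\parens{p+sh,p}$ and $H\parens{p,p+\tilde sh}$ and obtain $\sH_h$. The one genuine point of divergence is how asymptotic normality of the stochastic term is certified. The paper invokes the continuous mapping theorem to replace term I by $-h^{-1/2}\int\cparens{\G\parens{p+sh}-\G\parens{p}}k'\parens{s}\dif s$, which is exactly Gaussian for each $h$, and then sends $h\to0$ in its variance; you correctly flag that this iterated limit is delicate because the functional carries a factor $h^{-1/2}$ that is not continuous uniformly in $h$, and you propose instead a fixed--$(T,h)$ variance computation plus a Lyapunov triangular--array CLT on the fluctuations of $G_{cT}$. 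That is a defensible, arguably more careful route, but note that it has its own hidden prerequisite: $\beT$ is the greatest convex minorant of $e_T$, not a linear functional of $G_{cT}$, so before any CLT on $G_{cT}$ applies you must show the GCM operation is negligible at scale $o_p\parens[\big]{1/\sqrt{Th}}$ uniformly over the $h$--window around $p$ (a Kiefer--Wolfowitz/Carolan--Dykstra type linearization), which is precisely the ingredient the paper's continuous--mapping step is implicitly relying on via \cref{thm:ebreve}. With that linearization in hand, both routes deliver the stated bias ${\alpha}''\parens{p}\Xi/2$ and variance $\sV\parens{p}$; your closing caution about verifying existence of the $h\downarrow0$ limit of $\iint\sH_h k'k'$ near the diagonal is well placed and is exactly what \cref{lem:V} and its relatives resolve in the special cases.
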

The variance formula in \cref{thm:ebreve} is intimidating, but in many cases it simplifies substantially.  First, as noted following \cref{ass:Gaussian process}, if $G_{cT}$ is taken to be the empirical distribution function of the maximum rival bid then \cref{eq:Hstar simple} holds.%
\begin{lem} \label{lem:V}
	If \cref{eq:Hstar simple} holds  then 
    \<
    H\parens{p,p^*} = \zeta\parens{p}\zeta\parens{p^*} \cparens{ \min\parens{p,p^*}-pp^*},
    \>
     and
    $\sV\parens{p}$ in \cref{thm:ehat} simplifies to
	$
	\sV\parens{p} = \zeta^2\parens{p} \kappa_2, 
	$ 
	where $\kappa_2=\int_{-\infty}^\infty k^2\parens{s} \dif s$.\qed
\end{lem}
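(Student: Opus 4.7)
The claim has two parts. The first, that $H(p, p^*) = \zeta(p)\zeta(p^*)\{\min(p,p^*) - p p^*\}$, is immediate upon substituting \cref{eq:Hstar simple} into the formula $H(p,p^*) = \zeta(p)\zeta(p^*) H^*\{Q_c(p), Q_c(p^*)\}$ from \cref{thm:ebreve}.

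For the variance simplification, I first compute the pointwise limit of $\sH_h(p, s, \tilde s)$. Factor $H(p+uh, p+v h) = \zeta(p+uh)\zeta(p+vh)\{\min(p+uh, p+vh) - (p+uh)(p+vh)\}$. Under \cref{ass:Qc differentiable}, $\zeta$ is smooth, so the second difference of $\zeta(p+uh)\zeta(p+vh)$ taken over $(u,v) \in \{(s,\tilde s), (s,0), (0,\tilde s), (0,0)\}$ factorizes as $\{\zeta(p+sh)-\zeta(p)\}\{\zeta(p+\tilde s h)-\zeta(p)\} = O(h^2)$ times the base value. Inside the brackets, the second difference of $pp^*$ equals $s\tilde s h^2 = O(h^2)$, while the second difference of $\min(p+uh, p+vh)$ equals exactly $h\{\min(s,\tilde s) - \min(s,0) - \min(0,\tilde s)\}$ because $\min(p + uh, p + vh) = p + h\min(u,v)$. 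Dividing by $h$ and sending $h \to 0$ yields $\sH_h(p, s, \tilde s) \to \zeta^2(p)\, \phi(s, \tilde s)$, where $\phi(s,\tilde s) := \min(s,\tilde s) - \min(s,0) - \min(0,\tilde s)$.

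It remains to evaluate $J := \int\!\!\int \phi(s, \tilde s) k'(s) k'(\tilde s) \dif s \dif \tilde s$ and exchange limit with integral in the definition of $\sV$. Since $\partial_s \phi(s, \tilde s) = \one{s \leq \tilde s} - \one{s \leq 0}$ and $k$ has compact support, integration by parts in $s$ gives $\int \phi(s, \tilde s) k'(s) \dif s = -\int k(s)\{\one{s\leq \tilde s} - \one{s\leq 0}\} \dif s$. Integrating by parts in $\tilde s$, and using $\int k'(\tilde s) \dif\tilde s = 0$ together with $\partial_{\tilde s}\one{s \leq \tilde s} = \delta(\tilde s - s)$, I obtain $\int k'(\tilde s)\one{s \leq \tilde s}\dif\tilde s = -k(s)$ and $\int k'(\tilde s)\one{s\leq 0}\dif\tilde s = 0$, hence $J = \int k^2(s) \dif s = \kappa_2$ and $\sV(p) = \zeta^2(p)\kappa_2$.

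The main subtlety is the nonsmoothness of $\phi$ along the rays $s = 0$, $\tilde s = 0$, and $s = \tilde s$; the integration by parts above is made rigorous either by splitting the plane at these kinks and tracking the jump contributions (which produce the required delta mass at $\tilde s = s$), or by interpreting $\partial_{\tilde s}\one{s \leq \tilde s}$ distributionally, as the Radon--Nikodym derivative of a Dirac mass. Exchanging the limit in $h$ with the integral defining $\sV(p)$ is routine by dominated convergence, since $\sH_h(p, s, \tilde s)$ is uniformly bounded on the compact support of $(s,\tilde s) \mapsto k'(s)k'(\tilde s)$.
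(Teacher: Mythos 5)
Your proposal is correct and follows essentially the same route as the paper's proof: both compute the pointwise limit of $\sH_h$ (your $\min(s,\tilde s)-\min(s,0)-\min(0,\tilde s)$ is exactly the paper's $\abs{\Med\parens{0,s,\tilde s}}$, just obtained via the exact identity $\min(p+uh,p+vh)=p+h\min(u,v)$ rather than case--by--case expansion) and then evaluate the double integral against $k'(s)k'(\tilde s)$ by integration by parts, your iterated one--variable version replacing the paper's quadrant splitting but yielding the same $\kappa_2$. The only looseness is that the rectangular second difference of the product $\zeta(p+uh)\zeta(p+vh)\cparens{\cdots}$ also generates cross terms (products of first differences of the two factors); these are $O(h^2)$ and hence vanish after dividing by $h$, so the conclusion stands.
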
%
\noindent
Since $k$ is chosen, $\sV$ is easy to estimate.

A second simplification obtains under full symmetry, i.e.\ when \cref{eq:Hstar symmetric} holds.  
\begin{lem} \label{lem:V symmetric}
    If \cref{eq:Hstar symmetric} holds then
    \< \label{eq:H symmetric simplification}
    H\parens{p,p^*} = \frac1n\cparens[\big]{
\min\parens{p,p^*}^{1/\parens{n-1}} - \parens{pp^*}^{1/\parens{n-1}}    
}
pp^* Q'\parens{p^{1/\parens{n-1}}} Q'\parens{p^*\,^{1/\parens{n-1}}}.
    \>
    Further,
    \[
    \lim_{h\to 0}\sH_h\parens{p,s,\tilde s}=
    \frac{p^{n/\parens{n-1}} Q'^2\parens{p^{1/\parens{n-1}}} \abs{\Med\parens{s,\tilde s,0}|}}{n\parens{n-1}},
   \]
    and $\sV$ simplifies to $ p^{n/\parens{n-1}} Q'^2\parens{p^{1/\parens{n-1}}} \kappa_2 / \cparens{n\parens{n-1}}$. \qed
\end{lem}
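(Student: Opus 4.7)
The plan proceeds in three stages, one per assertion of the lemma.

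\emph{Stage 1: the closed form \eqref{eq:H symmetric simplification}.} Starting from $H\parens{p,p^*}=\zeta\parens{p}\zeta\parens{p^*}H^*\cparens{Q_c\parens{p},Q_c\parens{p^*}}$ in \cref{thm:ebreve}, I would substitute $Q_c\parens{p}=Q\parens{p^{1/\parens{n-1}}}$. Then $Q_c'\parens{p}=Q'\parens{p^{1/\parens{n-1}}}p^{\parens{2-n}/\parens{n-1}}/\parens{n-1}$, so $\zeta\parens{p}=Q_c'\parens{p}p=Q'\parens{p^{1/\parens{n-1}}}p^{1/\parens{n-1}}/\parens{n-1}$. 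Multiplying $\zeta\parens{p}\zeta\parens{p^*}$ by \eqref{eq:Hstar symmetric}, the factors of $\parens{n-1}^2$ cancel and the exponents of $pp^*$ combine via $1/\parens{n-1}+\parens{n-2}/\parens{n-1}=1$, producing \eqref{eq:H symmetric simplification}.

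\emph{Stage 2: pointwise limit of $\sH_h$.} Decompose $H=A\parens{C+D}$ with smooth multiplier $A\parens{p,p^*}=pp^*Q'\parens{p^{1/\parens{n-1}}}Q'\parens{\parens{p^*}^{1/\parens{n-1}}}/n$, kinked factor $C\parens{p,p^*}=\min\parens{p,p^*}^{1/\parens{n-1}}$, and smooth factor $D\parens{p,p^*}=-\parens{pp^*}^{1/\parens{n-1}}$. Since $AD$ is $C^2$ near the diagonal, $\Delta_h\parens{AD}=O\parens{h^2}$ and contributes $o\parens{1}$ to $\sH_h$. For $AC$, the product rule for second mixed differences gives $\Delta_h\parens{AC}=A\parens{p,p}\Delta_h C+R$, and $R=O\parens{h^2}$ because $A\in C^1$ and $C$ is Lipschitz near the diagonal. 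Taylor expanding $m\mapsto m^{1/\parens{n-1}}$ about $m=p$ yields $\Delta_h C=\frac{1}{n-1}p^{\parens{2-n}/\parens{n-1}}\Delta_h\min\parens{u_1,u_2}+O\parens{h^2}$ for $u_i=p+s_ih$, and the identity $\min\parens{a,b}=\parens{a+b-\abs{a-b}}/2$ gives
\[
\Delta_h\min\parens{u_1,u_2}=h\cparens{\min\parens{s,\tilde s}-\min\parens{s,0}-\min\parens{\tilde s,0}}=\frac{h}{2}\cparens{\abs{s}+\abs{\tilde s}-\abs{s-\tilde s}}.
\]
A sign--pattern check on $\parens{s,\tilde s}$ verifies that this equals $h\abs{\Med\parens{s,\tilde s,0}}$. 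Substituting $A\parens{p,p}=p^2Q'^2\parens{p^{1/\parens{n-1}}}/n$ and using $p^2\cdot p^{\parens{2-n}/\parens{n-1}}=p^{n/\parens{n-1}}$ delivers the claimed limit.

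\emph{Stage 3: evaluation of the integral defining $\sV$.} Substituting the Stage 2 limit into the definition of $\sV$ gives
\[
\sV\parens{p}=\frac{p^{n/\parens{n-1}}Q'^2\parens{p^{1/\parens{n-1}}}}{n\parens{n-1}}\cdot I,\qquad I\defeq\int_{-\infty}^\infty\int_{-\infty}^\infty\abs{\Med\parens{s,\tilde s,0}}k'\parens{s}k'\parens{\tilde s}\dif\tilde s\dif s.
\]
Using $\abs{\Med\parens{s,\tilde s,0}}=\parens{\abs{s}+\abs{\tilde s}-\abs{s-\tilde s}}/2$ and $\int k'=0$ (compact support of $k$), the ``marginal'' terms vanish and $I=-\frac{1}{2}\int\int\abs{s-\tilde s}k'\parens{s}k'\parens{\tilde s}\dif\tilde s\dif s$. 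Two integrations by parts (boundary terms vanish by compact support of $k$) shift both derivatives onto $\abs{s-\tilde s}$, whose distributional mixed partial equals $-2\delta\parens{s-\tilde s}$; this collapses $I$ to $\kappa_2$, giving the stated simplification of $\sV$.

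The main obstacle is the bookkeeping in Stage 2: arguing rigorously that only $A\parens{p,p}$ enters the limit, so that the kinked factor $C$ controls the leading order. Once this is settled, the remaining work is algebra plus the standard distributional identity $\parens{\abs{\cdot}}''=2\delta$.
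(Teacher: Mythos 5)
Your proposal is correct, and while Stage 1 coincides with the paper (same substitution $Q_c\parens{p}=Q\parens{p^{1/\parens{n-1}}}$ and the same cancellation of the $\parens{n-1}^2$ factors), Stages 2 and 3 take a genuinely different route. For the limit of $\sH_h$ the paper expands the four terms $H\parens{p+sh,p+\tilde sh}$, $H\parens{p+sh,p}$, $H\parens{p,p+\tilde sh}$, $H\parens{p,p}$ explicitly for the ordering $\tilde s\geq s\geq 0$ and then disposes of the other sign patterns by saying the argument repeats; your decomposition $H=A\parens{C+D}$, isolating the kink in $C=\min\parens{p,p^*}^{1/\parens{n-1}}$ and using a product rule for second mixed differences together with the identity $\abs{\Med\parens{s,\tilde s,0}}=\parens{\abs{s}+\abs{\tilde s}-\abs{s-\tilde s}}/2$, handles all sign patterns at once and makes transparent why only $A\parens{p,p}$ survives in the limit — a cleaner and more systematic version of the same calculation. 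For the double integral, the paper (in the proof of \cref{lem:V}, which this lemma reuses) splits into quadrants and integrates by parts once to reach $-2\int k'\parens{s}sk\parens{s}\dif s=\kappa_2$, whereas you use the median identity plus $\int k'=0$ to reduce to $-\tfrac12\int\int\abs{s-\tilde s}k'\parens{s}k'\parens{\tilde s}$ and then invoke $\partial_s\partial_{\tilde s}\abs{s-\tilde s}=-2\delta\parens{s-\tilde s}$; both yield $\kappa_2$. One small point to tighten: in $\Delta_h\parens{AC}=A\parens{p,p}\Delta_h C+R$ the remainder contains, besides the $O\parens{h}\times O\parens{h}$ cross terms, the term $C\parens{p,p}\Delta_h A$, which is only $o\parens{h}$ under $A\in C^1$ rather than $O\parens{h^2}$; that still suffices after dividing by $h$, and $A$ is in fact $C^2$ for interior $p$ under \cref{ass:Qc differentiable}, so nothing breaks.
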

Finally, we provide a result for the asymmetric IPV case with $n$ bidders.
\begin{lem} \label{lem:V asymmetric}
If \cref{eq:Hstar asymmetric} holds then
\< \label{eq:H asymmetric simplification}    
H\parens{p,p^*} 
=
\zeta\parens{p}\zeta\parens{p^*} p p^* \sum_{i=2}^n \parens[\bigg]{ \frac{1}{G_i\sparens{Q_c\cparens{\max\parens{p,p^*}}}} -1}.
\>    
Further,
\< \label{eq:Hh asymmetric}
 \lim_{h\downarrow 0} \sH_h\parens{p,s,\tilde s} =
 \zeta^2\parens{p} Q_c'\parens{p} \sum_{i=2}^n G_{-i1}^2\cparens{Q_c\parens{p}}  g_i\cparens{Q_c\parens{p}} 
\abs{\Med\parens{s,\tilde s,0}},
\>
where $G_{-i1}$ means the distribution of the maximum bid of all bidders other than $i$ and $1$ with $G_{-i1}=1$ if there are only two bidders.  Finally, $\sV$ equals 
${\kappa}_2 {\zeta}^2\parens{p} Q_c'\parens{p} \sum_{i=2}^n G_{-i1}^2\cparens{Q_c\parens{p}}  g_i\cparens{Q_c\parens{p}}$. \qed
\end{lem}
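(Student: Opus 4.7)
The plan splits into three claims corresponding to the three display formulas.

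The first formula follows immediately from \cref{eq:Hdef}: substituting the right-hand side of \cref{eq:Hstar asymmetric} for $H^*\sparens{Q_c(p),Q_c(p^*)}$ into $H(p,p^*)=\zeta(p)\zeta(p^*)H^*\sparens{Q_c(p),Q_c(p^*)}$ gives the stated product form. No work beyond copying is required here.

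For the limit of $\sH_h$, I would factor $H(p,p^*)=A(p)A(p^*)M(\max(p,p^*))$ with $A(p)=\zeta(p)p$ and $M(q)=\sum_{i=2}^n\cparens{1/G_i\sparens{Q_c(q)}-1}$, both $C^1$ away from $p=0$ under \cref{ass:Qc differentiable}. Write out the four-term cross-difference defining $\sH_h$: because $A$ is smooth, $A(p+sh)A(p+\tilde s h)-A(p+sh)A(p)-A(p)A(p+\tilde s h)+A(p)^2=O(h^2)$, so replacing all four $A$-products by $A(p)^2$ only changes $\sH_h$ by $O(h)\cdot M(p+O(h))$ vanishing in the limit. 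The leading contribution comes from the $\max$ arguments. Taylor-expanding $M(p+hy)=M(p)+M'(p)hy+O(h^2)$ for $y\in\cparens{\max(s,\tilde s),\max(s,0),\max(0,\tilde s),0}$, the cross-difference reduces, modulo $O(h^2)$, to $A(p)^2 M'(p)h\sparens{\max(s,\tilde s)-\max(s,0)-\max(0,\tilde s)}$. A case analysis on the signs of $s,\tilde s$ (four cases) verifies the pointwise identity
\[
\max(s,\tilde s)-\max(s,0)-\max(0,\tilde s)=-\abs{\Med(s,\tilde s,0)},
\]
so dividing by $h$ and sending $h\downarrow 0$ gives $\lim_{h\downarrow 0}\sH_h(p,s,\tilde s)=-A(p)^2 M'(p)\abs{\Med(s,\tilde s,0)}$. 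Computing $-M'(p)=\sum_i g_i\sparens{Q_c(p)}Q_c'(p)/G_i^2\sparens{Q_c(p)}$ and using the identity $G_{-i1}\sparens{Q_c(p)}=p/G_i\sparens{Q_c(p)}$ (from $G_c=G_i G_{-i1}$, evaluated at $Q_c(p)$) turns $A(p)^2/G_i^2\sparens{Q_c(p)}=\zeta^2(p)p^2/G_i^2\sparens{Q_c(p)}$ into $\zeta^2(p)G_{-i1}^2\sparens{Q_c(p)}$, which, together with $Q_c'(p)$ pulled out of the sum, produces \cref{eq:Hh asymmetric}.

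For $\sV$, plug the limit of $\sH_h$ into the definition of $\sV(p)$; the $(s,\tilde s)$-independent prefactor comes out, leaving the task of evaluating the scalar integral $I=\lazyint{\lazyint[\tilde s]{\abs{\Med(s,\tilde s,0)}k'(s)k'(\tilde s)}}$. Since $\abs{\Med(s,\tilde s,0)}$ equals $\min(s,\tilde s)$ on the positive quadrant, $\min(\abs s,\abs{\tilde s})$ on the negative quadrant, and zero off-diagonal in sign, write $\min(s,\tilde s)=\int_0^\infty\one\cparens{s>u}\one\cparens{\tilde s>u}\dif u$ on the positive quadrant and use Fubini with $\int_u^\infty k'(r)\dif r=-k(u)$ to get $\int_0^\infty k^2(u)\dif u$; the symmetric computation on the negative quadrant adds $\int_{-\infty}^0 k^2(u)\dif u$, totaling $\kappa_2$. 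Combining, $\sV(p)=\kappa_2\zeta^2(p)Q_c'(p)\sum_{i=2}^n G_{-i1}^2\sparens{Q_c(p)}g_i\sparens{Q_c(p)}$, as claimed.

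The main obstacle is the bookkeeping in the second step: verifying the $\max$-identity in all four sign regimes and checking that the smooth prefactor contributions really cancel to $O(h)$ in $\sH_h$ (so that only the non-smooth kink in $M\circ\max$ survives). Once that algebraic identity is in hand, the integral computation in the third step is a short Fubini argument and the first step is immediate.
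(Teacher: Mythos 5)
Your proposal is correct and follows essentially the same path as the paper's (omitted) proof, which proceeds as in \cref{lem:V symmetric}: expand the four-term cross-difference, identify the surviving kink along the diagonal as $\abs{\Med(s,\tilde s,0)}$, and evaluate the resulting double integral to get $\kappa_2$. Your factorization $H(p,p^*)=A(p)A(p^*)M\cparens{\max(p,p^*)}$ together with the identity $\max(s,\tilde s)-\max(s,0)-\max(0,\tilde s)=-\abs{\Med(s,\tilde s,0)}$, and the layer-cake/Fubini evaluation of the integral, are just cleaner packagings of the paper's term-by-term expansion and integration by parts.
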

Note that for $n=2$, the result in \cref{lem:V asymmetric} reduces to that in \cref{lem:V}.  For $n>2$, the function $H$ in \cref{lem:V asymmetric} is generally more favorable, i.e.\ it is more efficient to estimate each rival bid distribution separately than to estimate the distribution of the maximum rival bid using only the maximum rival bids.\footnote{For the case in which rival distributions happen to coincide but this fact is not used in the estimation, $\sV$ in \cref{lem:V asymmetric} reduces to ${\kappa}_2 {\zeta}^2\parens{p} p^{\parens{n-2}/\parens{n-1}}$ which equals $\sV$ in \cref{lem:V} if $n=2$ or $p\in \Set{0,1}$ but is otherwise less.
More generally, note that $\sV$ in \cref{lem:V asymmetric} is
${\kappa}_2{\zeta}^2 \sum_{i=2}^n G_{-i1}^2 g_i / \sum_{i=2}^n G_{-i1} g_i$ which is equal to ${\kappa}_2 {\zeta}^2$ and hence to $\sV$ in \cref{lem:V} if $G_{-i1}=1$, i.e.\ if $n=2$ or $p=1$.
}

A more interesting comparison is that of the formulas for $\sV$ in \cref{lem:V symmetric,lem:V asymmetric} if there is symmetry.  Indeed, the ratio of variances is $\parens{n-1}/n$ in favor of exploiting symmetry.  This result is intuitive since exploiting symmetry means that one can also use the bids of bidder one to estimate $G_c$: one then uses data on $n$ bids per auction instead of $n-1$.

One limitation of \cref{thm:ehat} compared to \cref{thm:ebreve} is that \cref{thm:ehat} does not extend to all of $\sparens{0,1}$.  A second issue is that the bias of $\ahT$ can be large for small values of $p$ as the following example illustrates.
\begin{ex} \label{ex:ehat}
Consider the symmetric case with $F_v$ a standard uniform and $n=3$.  Then $e\parens{p}=Q_c\parens{p} p = 2p^{3/2}/3$ and $e'''\parens{p}=-p^{-3/2}/4 \to -\infty$ as $p\downarrow 0$. \qed
\end{ex}
\drop{As will become apparent in \cref{ex:extensive comparison of {\alpha}} below, the} The GPV estimator also has the unbounded bias at zero problem.

Below, we address each of these limitations.

\subsection{Transformations}
\label{sec:psi}
  Let ${\psi}$ be an increasing function such that for $j=1,2,3$, $e^{\parens{j}}\parens{p} / {\psi}'^j\parens{p}$ and ${\psi}^{\parens{j}}\parens{p} /{\psi}'^j\parens{p}$ are uniformly bounded on $(0,1]$ and for which $\lim_{p \downarrow 0}$ of each of these functions is finite, also.   Then define\joris{Have to figure out what to do for $s<0$}
\begin{equation} \label{eq:ehatpsi}
 \hat {\alpha}_{T{\psi}}\parens{p} 
  = \frac{{\psi}'\parens{p}}h \lazyint{  {\alpha}_T\parens{s} k\parens[\Big]{\frac{{\psi}\parens{p}-{\psi}\parens{s}}h}}. 
\end{equation}
To see how \cref{eq:ehatpsi} solves the exploding bias near zero problem, consider the following.
The reason we needed $e$ to be three times boundedly differentiable in \cref{thm:ebreve} is that its proof contains a second order (bias) expansion of both $e\parens{p+sh}-e\parens{p}$ and ${\alpha}\parens{p+sh}-{\alpha}\parens{p}$: the former for $\beT$, the latter for $\ahT$.  If one uses $\ehatpsi$ then the corresponding expansions become
$e\sparens{{\psi}^{-1}\cparens{{\psi}\parens{p}+sh}} - e\parens{p}$ and 
${\psi}'\parens{p}\parens[\big]{{\alpha}\sparens{{\psi}^{-1}\cparens{{\psi}\parens{p}+sh}} - {\alpha}\parens{p}}$.  This makes all the difference since the second derivative of the first difference with respect to $s$ evaluated at $s=0$ is
\begin{equation} \label{eq:fugly1}
 \frac{{\alpha}'\parens{p}}{{\psi}'^2\parens{p}} - \frac{{\alpha}\parens{p}}{{\psi}'\parens{p}} \frac{{\psi}''\parens{p}}{{\psi}'^2\parens{p}}
\end{equation}
The corresponding expression for the second difference in  the preceding paragraph is
\begin{equation} \label{eq:fugly2}
 \frac{{\alpha}''\parens{p}}{{\psi}'^2\parens{p}}
 -
 3 \frac{{\alpha}'\parens{p}}{{\psi}'\parens{p}} \frac{{\psi}''\parens{p}}{{\psi}'^2\parens{p}}
 -
  {\alpha}\parens{p}\frac{{\psi}'''\parens{p}}{{\psi}'^3\parens{p}}
 +
 3 {\alpha}\parens{p} \frac{{\psi}''^2\parens{p}}{{\psi}'^4\parens{p}}.
\end{equation}
Note that the asymptotic bias in \cref{eq:fugly2} can be made to equal zero by choosing ${\psi}=e$.  Unfortunately, we do not know $e$, so making that choice is infeasible.

Consider \cref{ex:ehatpsi}.
\begin{ex}\label{ex:ehatpsi}
Recall \cref{ex:ehat}.	 If one uses ${\psi}\parens{p}=\log p$ then ${\psi}'\parens{p}=1/p$, ${\psi}''\parens{p}=-1/p^2$, and ${\psi}'''\parens{p}=2/p^3$.  This yields for instance $e'''\parens{p}/{\psi}'^2\parens{p} = - \sqrt{p}$, which is well--behaved near zero. One can verify that the other ratios in \cref{eq:fugly1,eq:fugly2} are equally well--behaved.

Note that our solution does not only work for $n=3$.  Indeed, in the symmetric case with arbitrary $n$, $e\parens{p}= \parens{1-1/n}p^{n / \parens{n-1}}$, such that $a''\parens{p}  / \psi'^2\parens{p} \sim p^{1/\parens{n-1}}$.  

The same goes for the situation in which there are $n-1$ stronger rivals.  It really does not matter since each derivative of $e$ removes a power of $p$ and each negative power of ${\psi}'$ restores one.\joris{Perhaps it may go wrong at $p=1$ if I'm stronger than all my rivals.}
\qed
\end{ex}

The bias formula in \cref{eq:fugly2} is somewhat complicated and a downside of the formula for $\hat {\alpha}_{T{\psi}}$ in \cref{eq:ehatpsi} is that, depending on the choices of $k,{\psi}$, it may require numerical integration.  This is an inconvenience more than a serious problem since ${\alpha}_T$ is piecewise constant.  However, both issues can be addressed by replacing $\hat {\alpha}_{T{\psi}}$ in \cref{eq:ehatpsi} with
\begin{equation} \label{eq:ebarpsi}
    \bar {\alpha}_{T{\psi}}\parens{p}  = \frac1h \lazyint{ {\psi}'\parens{s} {\alpha}_T\parens{s} k\parens[\Big]{\frac{{\psi}\parens{p}-{\psi}\parens{s}}h}},
\end{equation}
which produces the simpler form
\begin{equation} \label{eq:fugly3}
 \frac{{\alpha}''\parens{p}}{{\psi}'^2\parens{p}} - \frac{{\alpha}'\parens{p}{\psi}''\parens{p}}{{\psi}'^3\parens{p}}
 \end{equation}
 in lieu of \cref{eq:fugly2}. The asymptotic bias is zero if one chooses ${\psi}={\alpha}$, which is again infeasible.\footnote{Recall that choosing ${\psi}=e$ was infeasible for $\hat {\alpha}_{T{\psi}}$.}

\subsection{Boundary correction}
\label{section:boundary correction}
When computing $\hat e_T$ at values of $p$ near the boundary or using a kernel with infinite support, the locally weighted average of $\beT(s)$ attempts to put positive weight on values of $\beT$ for which $\beT$ is undefined. If one does not make adjustments to the kernel $k$ or the definition of $\beT$ outside of $\sparens{0,1}$ then ${\alpha}\parens{1}$ will not be consistently estimated, as the following example illustrates for $\bar {\alpha}_{T{\psi}}$.

\newcommand{\bad}{\mathrm{bad}}
\begin{ex} \label{ex:inconsistency at the boundary}
The estimator 
\(
 \bar {\alpha}_{T{\psi}}^\bad\parens{p} = h^{-1} \uint[s]{ {\psi}'\parens{s} {\alpha}_T\parens{s} k\parens[\big]{\parens{{\psi}\parens{p}-{\psi}\parens{s}}/h}}
\)    
is inconsistent at $p=1$.  To see this, note that
\begin{multline*}
\bar {\alpha}_{T{\psi}}^\bad\parens{1}= \frac1h \uint[s]{ {\psi}'\parens{s} {\alpha}_T\parens{s} k\parens[\Big]{\frac{{\psi}\parens{1}-{\psi}\parens{s}}h}}
=\\
\frac1h \uint[s]{ {\psi}'\parens{s} {\alpha}\parens{s} k\parens[\Big]{\frac{{\psi}\parens{1}-{\psi}\parens{s}}h}} + \opone
= {\alpha}\parens{1} \int_{-\infty}^0 k\parens{-s} \dif s+\opone = \frac{{\alpha}\parens{1}}{2} + \opone,
\end{multline*}
by consistency of ${\alpha}_T$ and substitution of $s \ot \cparens{ {\psi}\parens{s}-{\psi}\parens{p}}/h$.  This is the well--known boundary bias problem of nonparametric kernel density estimators. \qed
\end{ex}
If the lower end of the valuation's support is zero then ${\alpha}\parens{0}=0$ and the estimator $\bar {\alpha}_{T{\psi}}^\bad\parens{0}$ \emph{is} a consistent estimator of ${\alpha}\parens{0}=0$.  Note that the problem is true whether ${\psi}\parens{p}=p$ or not.

There are many solutions to this problem.  The traditional approach is to use a `boundary kernel,' i.e.\ a kernel that scales the kernel to make up for the lost mass if a function is estimated near a boundary.  We discuss this possibility in \cref{section:boundary kernels}.  A second possibility is to make use of techniques similar to those espoused in \citet[KZ]{karunamuni2008some} in order to ``make up'' values of $e$ and $\alpha$ beyond $\sparens{0,1}$.  This approach is investigated in \cref{section:kz}.\footnote{\citet{gimenes2019quantile} smooth the quantile function using a local polynomial approach.  The problem studied therein is otherwise unrelated.}

\subsubsection{Boundary kernels}
\label{section:boundary kernels}

The boundary bias problem can be addressed by the use of boundary kernels.  We replace \cref{eq:ehatpsi} with
\begin{equation} \label{eq:ehatpsi2}
 \hat {\alpha}_{T{\psi}}\parens{p} 
 =\frac{{\psi}'\parens{p}}h \uint[s]{
    {\alpha}_T\parens{s} k_{{\psi}h}\parens[\Big]{\frac{{\psi}\parens{p}-{\psi}\parens{s}}h \Bigm\bkdelim p}
}.
\end{equation}
where, for each $p\in[0,1]$, the function $k_{{\psi} h}(\cdot \bkdelim p)$ is a boundary kernel defined now.  Let $\bar {\upsilon}_{\psi} =\cparens{ {\psi}\parens{1}-{\psi}\parens{p}} / h $ and
$\bl {\upsilon}_{\psi}=\cparens{ {\psi}\parens{0}-{\psi}\parens{p}} / h $.  Then we require $k_{{\psi}h}$ to be such that for all $p\in \sparens{0,1}$, $j=0,1,2$, 
\begin{equation} \label{eq:boundary kernel requirements}
 \lim_{h\downarrow 0}\busyint{ s^j  k_{{\psi}h}\parens{-s \bkdelim p} } =|1-j|,  
\end{equation}
where the requirement for $j=2,p\in \Set{0,1}$ is replaced with boundedness.  
 The requirement that the kernel integrate to one is to ensure consistency in view of \cref{ex:inconsistency at the boundary}.  We also want it to integrate to zero if multiplied by $s$ to kill the `$h$ term' in a bias expansion.  

Boundary kernels are easy to construct as \cref{lem:boundary kernel} demonstrates. 
\begin{lem} \label{lem:boundary kernel} Let ${\phi},{\Phi}$ be the standard normal density and distribution functions. Then
$
 k_{{\psi}h}\parens{s \bkdelim p} =\parens{{\omega}_{{\psi}1} - {\omega}_{{\psi}2} s} {\phi}\parens{s}
$
satisfies the requirements in \cref{eq:boundary kernel requirements} for
\[
 {\omega}_{{\psi}2} = \frac{{\Omega}_{{\psi}1}}{{\Omega}_{{\psi}0}^2 + {\Omega}_{{\psi}0} {\Omega}_{{\psi}2}- {\Omega}_{{\psi}1}^2}, 
\qquad
{\omega}_{{\psi}1} = \frac{{\Omega}_{{\psi}0}+{\Omega}_{{\psi}2}}{{\Omega}_{{\psi}0}^2 + {\Omega}_{{\psi}0} {\Omega}_{{\psi}2}- {\Omega}_{{\psi}1}^2}, 
\]
where
${\Omega}_{{\psi}j} = {\Phi}^{\parens{j}}\parens{\bar {\upsilon}_{\psi}} - {\Phi}^{\parens{j}}\parens{\bl {\upsilon}_{\psi}}$. \qed
\end{lem}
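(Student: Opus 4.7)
The plan is to verify \cref{eq:boundary kernel requirements} by choosing $\omega_{\psi 1}$ and $\omega_{\psi 2}$ so that the $j=0$ and $j=1$ conditions hold \emph{exactly}, not merely in the limit, for every admissible $h$. Since $\phi$ is even,
\[
k_{{\psi}h}\parens{-s \bkdelim p} = \parens{\omega_{\psi 1} + \omega_{\psi 2} s}\,\phi\parens{s},
\]
so the three conditions reduce to statements about the truncated normal moments $M_j\parens{h} = \int_{\bl{\upsilon}_\psi}^{\bar {\upsilon}_\psi} s^j \phi\parens{s}\dif s$.

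First, I would compute those moments in closed form. From $\phi = \Phi'$, $s\phi\parens{s} = -\phi'\parens{s} = -\Phi''\parens{s}$, plus one integration by parts for $\int s^2\phi$, one obtains
\[
M_0 = \Omega_{\psi 0}, \qquad M_1 = -\Omega_{\psi 1}, \qquad M_2 = \Omega_{\psi 0} + \Omega_{\psi 2}.
\]
Substituting these into the integrals in \cref{eq:boundary kernel requirements} for $j=0$ and $j=1$ yields the linear system
\[
\omega_{\psi 1}\,\Omega_{\psi 0} - \omega_{\psi 2}\,\Omega_{\psi 1} = 1, \qquad -\omega_{\psi 1}\,\Omega_{\psi 1} + \omega_{\psi 2}\parens{\Omega_{\psi 0} + \Omega_{\psi 2}} = 0.
\]
Cramer's rule delivers the stated closed forms with common denominator $D = \Omega_{\psi 0}^2 + \Omega_{\psi 0}\Omega_{\psi 2} - \Omega_{\psi 1}^2$. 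The only step requiring any care is showing $D > 0$: this is Cauchy--Schwarz applied to the pair $\parens{1,s}$ under the measure $\phi\parens{s}\dif s$ restricted to $\parens{\bl{\upsilon}_\psi,\bar{\upsilon}_\psi}$, which gives $\Omega_{\psi 1}^2 = M_1^2 \leq M_0 M_2 = \Omega_{\psi 0}\parens{\Omega_{\psi 0}+\Omega_{\psi 2}}$, with strict inequality whenever the truncation interval is nondegenerate.

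Finally, I would check the $j=2$ requirement. The integral equals $\omega_{\psi 1}\parens{\Omega_{\psi 0} + \Omega_{\psi 2}} + \omega_{\psi 2} M_3\parens{h}$, where $M_3$ is a further truncated moment that remains bounded in $h$. For $p \in \parens{0,1}$ we have $\bar{\upsilon}_\psi \uparrow \infty$ and $\bl{\upsilon}_\psi \downarrow -\infty$ as $h\downarrow 0$, so $\Omega_{\psi 0}\to 1$ and $\Omega_{\psi 1},\Omega_{\psi 2}\to 0$; hence $\omega_{\psi 1}\to 1$, $\omega_{\psi 2}\to 0$, and the integral tends to $\int_{-\infty}^\infty s^2\phi\parens{s}\dif s = 1$. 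At the endpoints $p\in\Set{0,1}$ the limits of $\bar{\upsilon}_\psi$ and $\bl{\upsilon}_\psi$ are finite and the denominator $D$ tends to a strictly positive constant (e.g.\ $1/4 - 1/\parens{2\pi}$ at $p=1$), so $\omega_{\psi 1}$ and $\omega_{\psi 2}$ stay bounded and the $j=2$ integral is dominated by a finite combination of truncated normal moments, giving the required boundedness. I expect nonvanishing of $D$ through the Cauchy--Schwarz step to be the only substantive point; everything else is direct calculation.
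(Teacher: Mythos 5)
Your proposal is correct and follows essentially the same route as the paper's proof: compute the truncated normal moments $\sI_0=\Omega_{\psi 0}$, $\sI_1=-\Omega_{\psi 1}$, $\sI_2=\Omega_{\psi 0}+\Omega_{\psi 2}$, reduce the $j=0,1$ requirements to the same $2\times 2$ linear system, and solve by Cramer's rule. You go slightly further than the paper by verifying via Cauchy--Schwarz that the determinant is strictly positive and by explicitly checking the $j=2$ condition at interior and boundary $p$ --- both worthwhile additions, but the core argument is identical.
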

The cut--out for $j=2$ and $p\in \Set{0,1}$ in the requirements for the boundary kernel is there because the requirements on $\int s^2 k_{{\psi}h}\parens{-s}$ only affect the `bias' in the asymptotic distribution and because it simplifies the formula for the boundary kernel.  A formula for a boundary kernel that does not require this exception is provided in \cref{lem:complicated boundary kernel} in \cref{app:proofs}. 

\begin{ass}\label{ass:psi}
The transformation ${\psi}$ is thrice continuously differentiable on $\parens{0,1}$ with ${\psi}'$ positive. \qed
\end{ass}

\begin{thm}\label{thm:ehatpsi}
Suppose that $k_{{\psi}h}$ is constructed as in \cref{lem:boundary kernel} and that \cref{ass:Gaussian process,ass:monotone bid strategies,ass:value distributions,ass:Qc differentiable,ass:psi} are satisfied.  Then
\[
\forall p \in \sparens{0,1}:
 \sqrt{Th} \cparens{ \hat {\alpha}_{T{\psi}}\parens{p}-{\alpha}\parens{p}} \convd
 N\cparens[\big]{\sB_{\psi}\parens{p} , \sV_{\psi}\parens{p}},
\]
where for $0<p<1$,
\[
\sB_{\psi}\parens{p}= \text{\emph{expression} \cref{eq:fugly2}} \times \frac{\Xi}{2}
\]
and
\[
\sV_{\psi}\parens{p}=   {\psi}'^2\parens{p}\lim_{h\to 0}\lazyint{ \lazyint[\tilde s]{ {\phi}'\parens{s}{\phi}'\parens{\tilde s} 
%
\sH_h\cparens[\big]{p,s/{\psi}'\parens{p},\tilde s/{\psi}'\parens{p}}
}}.
\]
where $\sH_h$ is as defined in \cref{eq:ugly H}.
If \cref{eq:Hstar simple} holds then we obtain the simpler expression
$
 \sV_{\psi}\parens{p}= {\kappa}_2{\zeta}^2\parens{p} {\psi}'\parens{p}. 
$
For $p \in \Set{0,1}$, $\sB_{\psi},\sV_{\psi}$ are finite. \qed
\end{thm}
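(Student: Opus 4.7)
The plan is a bias-variance decomposition of $\hat\alpha_{T\psi}(p) - \alpha(p)$. Denote by $\tilde\alpha_{T\psi}(p)$ the expression in \cref{eq:ehatpsi2} with $\alpha_T$ replaced by the true $\alpha$, and split
\[
\sqrt{Th}\cparens{\hat\alpha_{T\psi}(p) - \alpha(p)} = \sqrt{Th}\cparens{\hat\alpha_{T\psi}(p) - \tilde\alpha_{T\psi}(p)} + \sqrt{Th}\cparens{\tilde\alpha_{T\psi}(p) - \alpha(p)}.
\]
The second term is the deterministic bias, and the first converges in distribution to a mean-zero Gaussian random variable; we handle the two contributions separately.

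For the bias, change variables to $u = (\psi(s)-\psi(p))/h$, giving
\[
\tilde\alpha_{T\psi}(p) = \psi'(p)\int_{\bl\upsilon_\psi}^{\bar\upsilon_\psi} \frac{\alpha(\psi^{-1}(\psi(p)+uh))}{\psi'(\psi^{-1}(\psi(p)+uh))}\, k_{\psi h}\parens[\Big]{-u\Bigm\bkdelim p}\,du,
\]
and Taylor expand the ratio in $u$ to second order around $u=0$. The boundary kernel moment conditions in \cref{eq:boundary kernel requirements} make the zeroth-order term reproduce $\alpha(p)$, kill the first-order term, and reduce the second-order term to $h^2/2$ times an elementary chain-rule computation that delivers exactly the expression in \cref{eq:fugly2}. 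Hence $\tilde\alpha_{T\psi}(p) - \alpha(p) = h^2/2\cdot(\text{expression from }\cref{eq:fugly2}) + o(h^2)$, and multiplying by $\sqrt{Th}$ while using $\sqrt{Th^5}\to\Xi$ yields $\sB_\psi(p)$. For $p\in\Set{0,1}$, the weakened $u^2$-moment condition (boundedness rather than normalization) still ensures finiteness of $\sB_\psi$ without the explicit interior formula.

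For the variance, integrate by parts in $s$: using $d\breve e_T = \alpha_T\,ds$ (as a Stieltjes measure) and the factor $-\psi'(s)/h$ from differentiating the kernel's first argument, together with $\breve e_T(0) - e(0) = \breve e_T(1) - e(1) = 0$ to kill the boundary contributions, we obtain
\[
\hat\alpha_{T\psi}(p) - \tilde\alpha_{T\psi}(p) = \frac{\psi'(p)}{h^2}\int_0^1 \cparens{\breve e_T(s)-e(s)}\,\psi'(s)\,k_{\psi h}'\parens[\Big]{\frac{\psi(p)-\psi(s)}{h}\Bigm\bkdelim p}\,ds.
\]
Scaling by $\sqrt{Th}$ and invoking $\sqrt{T}(\breve e_T - e) \convw \G$ from \cref{thm:ebreve} produces a centered Gaussian limit whose variance is computed from $\cov{\G(s_1)}{\G(s_2)} = H(s_1,s_2)$. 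The substitution $u_i = (\psi(s_i)-\psi(p))/h$ in each of the two arising integrals contributes Jacobians $h/\psi'(s_i)$ that cancel the $\psi'(s_i)$ factors and leave a net power of $h$; after centering the $H$-evaluation by $(p,p)$ (valid because the cross-terms are $o(1)$), one recognizes the resulting second-difference object---divided by $h$---as $\sH_h$ from \cref{eq:ugly H}, modulo the further rescaling $u\mapsto u/\psi'(p)$. By \cref{lem:boundary kernel}, at interior $p$ we have $\omega_{\psi 1}\to 1$ and $\omega_{\psi 2}\to 0$, so $k_{\psi h}'$ converges to $\phi'$, yielding the stated formula for $\sV_\psi(p)$. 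Under \cref{eq:Hstar simple}, $H(p,p^*) = \zeta(p)\zeta(p^*)\parens{\min(p,p^*) - pp^*}$ gives $\lim_h \sH_h(p,s,\tilde s) = \zeta^2(p)\abs{\Med(s,\tilde s,0)}$, and the identity $\iint\phi'(s)\phi'(\tilde s)\abs{\Med(s,\tilde s,0)}\,ds\,d\tilde s = \kappa_2$---verified by writing $\abs{\Med(s,\tilde s,0)}$ as the indicator-overlap $\int\one\parens{s\wedge 0\leq w\leq s\vee 0}\,\one\parens{\tilde s\wedge 0\leq w\leq \tilde s\vee 0}\,dw$ and applying Fubini followed by two integrations by parts---produces the simplified $\sV_\psi(p) = \kappa_2\zeta^2(p)\psi'(p)$.

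The main obstacle is the variance computation: three ingredients interact---the transformation $\psi$ (contributing Jacobians and rescalings), the $p$-dependent boundary kernel (whose $h\downarrow 0$ limit relies on \cref{lem:boundary kernel}), and the non-smoothness of $H$ on its diagonal inherited from the $\min(p,p^*)$ pieces in \cref{eq:Hstar simple}, which is the reason $\sH_h$ (a discrete second difference divided by $h$) must be used instead of a direct second partial derivative of $H$. Careful bookkeeping of the powers of $\psi'$ and $h$ is required to reach the clean simplified formula. For $p\in\Set{0,1}$ the same plan goes through with the boundary-adjusted moment conditions on $k_{\psi h}$; the effective region of integration for $u$ collapses to a half-line, but the $p$-dependent coefficients of \cref{lem:boundary kernel} adapt so that both $\sB_\psi$ and $\sV_\psi$ remain finite.
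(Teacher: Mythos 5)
Your proposal is correct and follows essentially the same route as the paper: the same bias--variance split, the same change of variables and second--order expansion of $\alpha\circ\psi^{-1}/\psi'\circ\psi^{-1}$ for the bias, integration by parts to pass to the $\sqrt{T}(\breve e_T-e)$ process for the variance, and the same $\abs{\Med(s,\tilde s,0)}$--to--$\kappa_2$ identity for the simplification. The only (inconsequential) imprecision is that $\breve e_T(1)-e(1)$ is not exactly zero but merely $O_p(1/T)$, which is still enough to kill that boundary term after scaling by $\sqrt{T/h}$.
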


We now turn to making $\bar {\alpha}_{T{\psi}}$ boundary--compliant, also.  We use
\begin{equation}\label{eq:alpha psi prime inside}
\bar {\alpha}_{T{\psi}}\parens{p}  =\frac1h \uint[s]{
    {\psi}'\parens{s} {\alpha}_T\parens{s} k_{{\psi}h}\parens[\Big]{\frac{{\psi}\parens{p}-{\psi}\parens{s}}h \Bigm\bkdelim p}},
\end{equation}
which produces the following theorem.
\begin{thm}\label{thm:ebarpsi}
    Suppose that $k_{{\psi}h}$ is constructed as in \cref{lem:boundary kernel}, that \cref{ass:Gaussian process,ass:monotone bid strategies,ass:value distributions,ass:Qc differentiable,ass:psi} are satisfied, and that ${\psi}'$ is bounded.  Then
    \[
    \forall p \in \sparens{0,1}:
    \sqrt{Th} \cparens{ \bar {\alpha}_{T{\psi}}\parens{p}-{\alpha}\parens{p}} \convd
    N\cparens[\big]{\bar\sB_{\psi}\parens{p} , \sV_{\psi}\parens{p}},
    \]
    where for $0<p<1$,
    \[
    \bar\sB_{\psi}\parens{p}= \text{\emph{expression} \cref{eq:fugly3}} \times \frac{\Xi}{2}
    \]
    and
    \[
    \sV_{\psi}\parens{p}= {\psi}'^2\parens{p}\lim_{h\to 0} \lazyint{ \lazyint[\tilde s]{  {\phi}'\parens{s}{\phi}'\parens{\tilde s} \sH_h\cparens[\big]{p,s/{\psi}'\parens{p},\tilde s/{\psi}'\parens{p}}
    }},
    \]
    where $\sH_h$ was defined in \cref{eq:ugly H}.
    For $p \in \Set{0,1}$, $\bar\sB_{\psi},\sV_{\psi}$ are finite. Under \cref{eq:Hstar simple}, $\sV_{\psi}$ simplifies to
     ${\kappa}_2 {\zeta}^2\parens{p}{\psi}'\parens{p}={\zeta}^2\parens{p}{\psi}'\parens{p}/\sqrt{{\pi}}$. \qed
\end{thm}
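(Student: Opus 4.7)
The plan is to parallel the proof of \cref{thm:ehatpsi}, working around the fact that in $\bar{\alpha}_{T\psi}$ the factor $\psi'(s)$ sits inside the integral rather than having $\psi'(p)$ outside. I would begin by writing
\[
\bar{\alpha}_{T\psi}(p) - \alpha(p) = B_T(p) + Z_T(p),
\]
where $B_T(p) = h^{-1}\int_0^1 \psi'(s)\alpha(s) k_{\psi h}((\psi(p)-\psi(s))/h \bkdelim p)\,ds - \alpha(p)$ is the deterministic bias and $Z_T(p)$ is the stochastic remainder obtained by replacing $\alpha(s)$ with $\alpha_T(s)-\alpha(s)$. The key convenience of the $\bar\alpha_{T\psi}$ form is that under the change of variables $u = (\psi(p)-\psi(s))/h$, with $S(u)=\psi^{-1}(\psi(p)-uh)$, the $\psi'(s)$ in the integrand cancels the Jacobian $h/\psi'(S(u))$, so $B_T(p) = \int \alpha(S(u)) k_{\psi h}(u \bkdelim p)\,du - \alpha(p)$ carries no $\psi'(p)$ prefactor.

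For the bias, I would Taylor expand $g(u):=\alpha(S(u))$ about $u=0$ using $S(0)=p$, $S'(0)=-h/\psi'(p)$, and $S''(0)=-h^2\psi''(p)/\psi'(p)^3$, which gives
\[
g''(0) = h^2\bigl[\alpha''(p)/\psi'(p)^2 - \alpha'(p)\psi''(p)/\psi'(p)^3\bigr],
\]
i.e., $h^2$ times expression \eqref{eq:fugly3}. The boundary-kernel moment conditions \eqref{eq:boundary kernel requirements} for $j=0,1,2$ then yield $B_T(p) = (h^2/2)\cdot(\text{expression~\eqref{eq:fugly3}}) + o(h^2)$, so $\sqrt{Th}\,B_T(p)\to \bar{\sB}_\psi(p)$ since $\sqrt{Th^5}\to\Xi$. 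The bias differs from \eqref{eq:fugly2} precisely because no $\psi'(p)$ prefactor is differentiated here.

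For the stochastic term, I would write $[\alpha_T(s)-\alpha(s)]\,ds = d[\beT(s)-e(s)]$ and integrate by parts. Only the $-h^{-1}\psi'(s)^2 k'_{\psi h}(\cdot\bkdelim p)$ piece of $\partial_s[\psi'(s) k_{\psi h}(\cdot\bkdelim p)]$ contributes at leading order, giving
\[
Z_T(p) = \frac{1}{h^2}\int_0^1 [\beT(s)-e(s)]\,\psi'(s)^2 k'_{\psi h}\!\parens[\Big]{\frac{\psi(p)-\psi(s)}{h}\Bigm\bkdelim p}\,ds + o_p\bigl(1/\sqrt{Th}\bigr),
\]
with integration-by-parts boundary terms vanishing for interior $p$ because $k_{\psi h}(\pm\infty\bkdelim p)=0$. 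Changing variables back to $u$ converts $1/h^2$ to $1/h$, and invoking \cref{ass:Gaussian process} together with the centering identity $\int k'_{\psi h}(u\bkdelim p)\,du = 0$ (which allows subtracting $H(S(u),p)+H(p,S(\tilde u))-H(p,p)$ from $H(S(u),S(\tilde u))$) converts the covariance kernel into $h\,\sH_h(p,-u/\psi'(p),-\tilde u/\psi'(p))$. Since $k_{\psi h}(\cdot\bkdelim p)\to\phi$ for interior $p$ (because $\Omega_{\psi 0}\to 1$ and $\Omega_{\psi 1},\Omega_{\psi 2}\to 0$ in \cref{lem:boundary kernel}) and $\phi'$ is odd, so the sign on $-u/\psi'(p)$ disappears, the stated $\sV_\psi(p)$ emerges. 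Under \eqref{eq:Hstar simple} the identity $|s|+|\tilde s|-|s-\tilde s|=2|\Med(s,\tilde s,0)|$ gives $\lim_{h\to 0}\sH_h(p,s,\tilde s)=\zeta^2(p)|\Med(s,\tilde s,0)|$, and carrying out the resulting Gaussian integral against $\phi'(s)\phi'(\tilde s)$ yields the closed form $\sV_\psi(p) = \kappa_2\zeta^2(p)\psi'(p)$.

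The main obstacle is handling $p\in\{0,1\}$, where $\bar\upsilon_\psi$ or $\bl\upsilon_\psi$ remains finite, so $k_{\psi h}(\cdot\bkdelim p)$ does not converge to $\phi$ and the integration-by-parts boundary terms are no longer automatically negligible. Establishing finiteness of $\bar{\sB}_\psi$ and $\sV_\psi$ at the endpoints relies on the explicit form of $k_{\psi h}$ in \cref{lem:boundary kernel}, the boundedness of $\psi'$ assumed in this theorem, and the uniform tightness of $\sqrt{T}(\beT-e)$ on closed subintervals from \cref{thm:ebreve}, which together guarantee that all boundary contributions remain $O(1)$ in the $\sqrt{Th}$ scaling. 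The analysis is otherwise identical to the corresponding step for $\hat\alpha_{T\psi}$ in \cref{thm:ehatpsi}.
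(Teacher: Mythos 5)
Your proposal follows essentially the same route as the paper's proof: the same bias/stochastic decomposition, the same kernel bias expansion of $\alpha\circ\psi^{-1}$ (your $g(u)=\alpha(S(u))$ is the paper's $\Psi(\psi(p)+sh)$, and your $g''(0)$ computation correctly recovers expression \eqref{eq:fugly3}), the same integration by parts reducing the stochastic term to an integral against $\beT-e$ with the $\psi''$ pieces and boundary evaluations shown negligible, and the same passage to the Gaussian limit and $\sH_h$ covariance computation. The only differences are cosmetic (you center via the near-identity $\int k_{\psi h}'\approx 0$ where the paper explicitly splits off the $\beT(p)-e(p)$ term), and your handling of $p\in\{0,1\}$ is sketched at about the same level of detail as the paper's.
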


In both \cref{thm:ehatpsi,thm:ebarpsi} the kernel used was taken to be the kernel constructed in \cref{lem:boundary kernel}.  This is inessential.  Indeed, the results go through with ${\phi}$ replaced with a second--order kernel $k$ if $k_{{\psi}h}$ is chosen as $k_{{\psi}h}\parens{s\bkdelim p} = (\omega_{{\psi}k1} - {\omega}_{{\psi}k2}s)k(s)$ where ${\omega}_{{\psi}k1}$ and ${\omega}_{{\psi}k2}$ 
\[
{\omega}_{{\psi}k1} = \frac{{\Omega}_{{\psi}k2}}{{\Omega}_{{\psi}k0}{\Omega}_{{\psi}k2}- {\Omega}_{{\psi}k1}^2}, \quad
{\omega}_{{\psi}k2} = \frac{-{\omega}_{{\psi}k1}{\Omega}_{{\psi}k1}}{{\Omega}_{{\psi}k2}}\,,
\]
with ${\Omega}_{{\psi}kj} = \int_{\bl {\upsilon}_{\psi}}^{\bar {\upsilon}_{\psi}} u^j k(-u) \dif u$.

One advantage of $\bar {\alpha}_{T{\psi}}$ over $\hat {\alpha}_{T{\psi}}$ is computation, as the following lemma demonstrates.
\begin{lem} \label{lem:computation alpha bar}
	The formula for $\bar {\alpha}_{T{\psi}}$ simplifies to
	\[
	\bar {\alpha}_{T{\psi}}\parens{p}=\sum_{j=1}^T {\alpha}_{Tj}  {\Lambda}_{{\psi}j}\parens{p},
	\]
	where ${\Lambda}_{{\psi}j}\parens{p}=K_{{\psi}h}\cparens{ - {\upsilon}_{j-1}\parens{p} \bkdelim p}
	-
	K_{{\psi}h}\cparens{ - {\upsilon}_j\parens{p}\bkdelim p}$,
	with $K_{{\psi}h}= \int k_{{\psi}h}$ and ${\upsilon}_j\parens{p}= \cparens{ {\psi}\parens{j/T}- {\psi}\parens{p}} / h$.      For $k_{{\psi}h}$ as constructed in \cref{lem:boundary kernel},
	\(
	K_{{\psi}h}\parens{s} = {\omega}_{{\psi}1} {\Phi}\parens{s} + {\omega}_{{\psi}2} {\phi}\parens{s}. 
	\) \qed
\end{lem}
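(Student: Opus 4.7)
The plan is to prove this lemma by a direct change of variables in the defining integral \cref{eq:alpha psi prime inside}, exploiting the fact that $\alpha_T$ is piecewise constant on the partition $\{((j-1)/T, j/T]: j=1,\dots,T\}$. First I would split the integral over $[0,1]$ as
\[
\bar {\alpha}_{T{\psi}}\parens{p} = \frac{1}{h}\sum_{j=1}^T {\alpha}_{Tj} \int_{(j-1)/T}^{j/T} {\psi}'\parens{s}\, k_{{\psi}h}\parens[\Big]{\frac{{\psi}\parens{p}-{\psi}\parens{s}}{h}\Bigm\bkdelim p}\dif s,
\]
where I use $\alpha_T(s) = \alpha_{Tj}$ on the $j$th subinterval (right-continuity at the left endpoint is immaterial since the integrand is bounded).

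Next I would apply the substitution $u = \parens{{\psi}\parens{s}-{\psi}\parens{p}}/h$, which is legitimate because $\psi'>0$ by \cref{ass:psi}. Under this substitution, ${\psi}'\parens{s}\dif s = h\dif u$, the endpoints transform to ${\upsilon}_{j-1}\parens{p}$ and ${\upsilon}_j\parens{p}$, and the argument of the kernel becomes $-u$. The factor $1/h$ cancels against the Jacobian, yielding
\[
\bar {\alpha}_{T{\psi}}\parens{p} = \sum_{j=1}^T {\alpha}_{Tj} \int_{{\upsilon}_{j-1}\parens{p}}^{{\upsilon}_j\parens{p}} k_{{\psi}h}\parens{-u \bkdelim p}\dif u.
\]
Defining $K_{{\psi}h}$ as any antiderivative of $k_{{\psi}h}$ (in its first argument) and using that the antiderivative of $u\mapsto k_{{\psi}h}\parens{-u \bkdelim p}$ is $-K_{{\psi}h}\parens{-u\bkdelim p}$, the inner integral evaluates to $K_{{\psi}h}\cparens{-{\upsilon}_{j-1}\parens{p}\bkdelim p} - K_{{\psi}h}\cparens{-{\upsilon}_j\parens{p}\bkdelim p} = {\Lambda}_{{\psi}j}\parens{p}$, which gives the claimed sum representation.

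For the explicit form of $K_{{\psi}h}$ with the boundary kernel from \cref{lem:boundary kernel}, I would integrate $k_{{\psi}h}\parens{s\bkdelim p}=\parens{{\omega}_{{\psi}1} - {\omega}_{{\psi}2} s}{\phi}\parens{s}$ termwise: the first term integrates to ${\omega}_{{\psi}1}{\Phi}\parens{s}$, and for the second term I would use the identity $\td{{\phi}\parens{s}}{s} = -s{\phi}\parens{s}$ to conclude that $\int s\,{\phi}\parens{s}\dif s = -{\phi}\parens{s}$, giving $K_{{\psi}h}\parens{s} = {\omega}_{{\psi}1}{\Phi}\parens{s} + {\omega}_{{\psi}2}{\phi}\parens{s}$.

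Honestly there is no real obstacle in this proof---it is a routine change of variables and an elementary antiderivative calculation. The only thing I would be careful about is the sign bookkeeping in the substitution (the minus sign inside the kernel makes it natural for the antiderivative to appear with the arguments in reverse order, which is exactly why ${\Lambda}_{{\psi}j}$ is defined as $K_{{\psi}h}\cparens{-{\upsilon}_{j-1}\bkdelim p} - K_{{\psi}h}\cparens{-{\upsilon}_j\bkdelim p}$ rather than the reverse), and noting that the $\omega$--weights in $k_{{\psi}h}$ depend on $p$ only through $\bl{\upsilon}_{\psi}, \bar{\upsilon}_{\psi}$ and may be pulled outside the $u$--integral.
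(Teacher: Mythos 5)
Your proof is correct and follows essentially the same route as the paper's: split the integral over the subintervals where $\alpha_T$ is constant, change variables to $u=\cparens{\psi\parens{s}-\psi\parens{p}}/h$, and read off the antiderivative, with the explicit form of $K_{\psi h}$ following from $\int s\,\phi\parens{s}\dif s=-\phi\parens{s}$. The paper merely compresses the last step into the remark that it is ``a natural property of the normal distribution,'' which you have spelled out correctly.
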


\subsubsection{Another boundary correction}
\label{section:kz}

A second way of implementing boundary corrections is to create artificial values of ${\alpha}_T\parens{p}$ for $p$ outside $\sparens{0,1}$.  Our approach is loosely motivated by the KZ method for kernel estimators, but it is a bit cleaner because of our specific circumstances: we are trying to smooth out an existing estimator which means that we already have values of ${\alpha}_T\parens{p}$ between zero and one.

Here, we restrict $k$ to be the Epanechnikov kernel $\tikz \draw[smooth,Red,domain=-1:1,scale=0.35] plot({\x},{(0.75-0.75*\x*\x)});$ which is a quadratic on $\sparens{-1,1}$; indeed it is $3\parens{1-x^2}/4$.\footnote{Earlier, we had taken $\int k\parens{s}s^2 \dif s$ to equal one, which is not true for an Epanechnikov kernel.  We adjust the asymptotic bias expression accordingly.} Consequently, any boundary correction procedure will be immaterial if the distance between ${\psi}\parens{p}$ and ${\psi}\parens{1},{\psi}\parens{0}$ exceeds $h$.  We focus on correcting estimates near the upper bound, $p=1$.  Impose the scale and location normalizations ${\psi}\parens{1}=0$ and ${\psi}'\parens{1}=1$.

Define 
\begin{equation} \label{eq:alpha extension}
 {\alpha}\parens{1+s} = {\alpha}\sparens[\big]{1 - {\rho}\cparens[\big]{{\psi}\parens{1+s}}} {\rho}'\cparens[\big]{{\psi}\parens{1+s}}, \quad s>0,
\end{equation}
where
\(
{\rho}\parens{s} = s + d s^2 + \cparens{d^2 - {\psi}''\parens{1} d/6}s^3, 
\)
with $d={\alpha}'\parens{1}/{\alpha}\parens{1}$.  Then it is straightforward but unpleasant to verify that the thus extended version of ${\alpha}$ is twice continuously differentiable at one.  We extend ${\alpha}_T$ analogously to \cref{eq:alpha extension} using a suitable estimator $\hat d$ in lieu of $d$, defining $\hat {\rho}$ to be like ${\rho}$ but with $\hat d$ replacing $d$.  We can then obtain a smoothed estimate of ${\alpha}$ by defining
\begin{equation}\label{eq:alpha R}
 \ahTR\parens{p} = \frac1h \int_{-\infty}^\infty {\alpha}_T\parens{s} {\psi}'\parens{s} k\parens[\Big]{\frac{{\psi}\parens{p}-{\psi}\parens{s}}h} \dif s,
\end{equation}
where the superscript $R$ stands for `reflection.'
As noted, away from the boundary, the behavior of $\ahTR$ is no different than that of the estimator without boundary bias correction.  So we only analyze its behavior in an $h$--neighborhood of the boundary, as formulated in \cref{thm:boundary correction reflection}.

\begin{thm} \label{thm:boundary correction reflection}
	Let 
\begin{enumerate*}[label=\itshape{(\roman*)}]
	\item \cref{ass:Gaussian process,ass:monotone bid strategies,ass:value distributions,ass:Qc differentiable} be satisfied;
	\item $k$ be the Epanechnikov kernel;
	\item ${\psi}$ be twice continuously differentiable at 1 with ${\psi}\parens{1}=0$ and ${\psi}'\parens{1}=1$;
	\item $\hat d$ converge to $d$ at a rate no slower than $\sqrt[5]{T}$.
\end{enumerate*}
Then,
\< \label{eq:boundary correction statement}
\sqrt{Th} \cparens{ \ahTR\parens{1-th}-{\alpha}\parens{1-th}} 
+
\frac{\sqrt{Th^3}}8 {\alpha}\parens{1} \parens{1-t}^3 \parens{t+3}\parens{\hat d - d}
\convd
N\parens[\big]{ \sB^R\parens{t}, \sV^R\parens{t}},
\>
where
\(
\sB^R\parens{t} = \cparens{{\alpha}''\parens{1} - {\alpha}'\parens{1}{\psi}''\parens{1}} {\Xi}/10
\)
and
\[
\sV^R\parens{t} = \lim_{h\to 0} \int_0^{1+t} \int_0^{1+t} 
\diam k_t\parens{s} \diam k_t\parens{\tilde s} 
\sH_h\parens{1,-s,-\tilde s}
\dif s \dif \tilde s,
\]
where $\diam k_t\parens{s} = \parens{3/2}\cparens{\parens{t-s} \one\parens{1-t \leq s\leq 1+t}-2s \one\parens{0\leq s\leq 1-t}}$ and $\sH_h$ was defined in \cref{eq:ugly H}.
If \cref{eq:Hstar simple} holds then we obtain the simpler expression
\(
\sV^R\parens{t} = 3 {\zeta}^2\parens{1}
\cparens[\big]{ 2-t^2 \parens{t^3-5t+5}}/5. 
\)\joris{I like the fact that the variance at 1 is twice that at $1-h$ but I'd expected it to be four times since it uses half the data.}\qed
\end{thm}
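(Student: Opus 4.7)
The plan is to decompose $\ahTR(p)-\alpha(p)$, at $p=1-th$, into three pieces: (i) the cost of using $\hat d$ in place of $d$ in \cref{eq:alpha extension}; (ii) the stochastic error from $\alpha_T-\alpha$; and (iii) the deterministic kernel--smoothing bias. Let $\ahTR^d$ denote the analogue of $\ahTR$ built with the true $d$ (and hence true $\rho,\rho'$) in the reflection. The integrands of $\ahTR$ and $\ahTR^d$ coincide on $s\leq 1$, so their difference is an integral over the extension region $s\in(1,\,1+h(1-t)]$, a set of length $O(h)$.

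For piece (i), substitute $s=1+uh$ in the extension integral and differentiate in $\hat d$ about $d$. Since $\partial_d\rho(uh)=(uh)^2+O(h^3)$ and $\partial_d\rho'(uh)=2uh+O(h^2)$, consistency of $\alpha_T$ at the shrinking evaluation point gives the leading pointwise contribution $\alpha_T^{\hat d}(1+uh)-\alpha_T^d(1+uh) = 2\alpha(1)\,uh\,(\hat d-d)+o_p(h)$. Integrating against the Epanechnikov weight $k(-(t+u))=\tfrac{3}{4}(1-(t+u)^2)$ on $u\in[0,1-t]$ uses the elementary identity $\int_0^{1-t} u(1-(t+u)^2)\,du = (1-t)^3(t+3)/12$, yielding $\ahTR-\ahTR^d = h\,\alpha(1)(1-t)^3(t+3)(\hat d-d)/8 + o_p(h)$. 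After multiplication by $\sqrt{Th}$ this produces exactly the $\sqrt{Th^3}$ centering term in the statement, kept $O_p(1)$ by $\hat d-d=O_p(T^{-1/5})$ together with $\sqrt{Th^5}\to\Xi$.

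For piece (iii), $\rho$ was constructed precisely so that the extended $\alpha$ is $C^2$ at $1$, and a second--order Taylor expansion of $\alpha$ inside the integrand of $\ahTR^d$---parallel to the proofs of \cref{thm:ebarpsi,thm:ehatpsi}---delivers $\sB^R(t) = \tfrac{\Xi}{10}[\alpha''(1)-\alpha'(1)\psi''(1)]$, where $1/10=\tfrac{1}{2}\cdot\tfrac{1}{5}$ reflects the Epanechnikov second moment. For piece (ii), the key manipulation is to write $\ahTR^d$ as a single weighted integral on $[0,1]$. On the extension, the substitution $w=1-\rho(\psi(s))$ has Jacobian $dw=-\rho'(\psi(s))\psi'(s)\,ds$, which cancels the $\rho'(\psi(s))\psi'(s)$ already in the integrand and folds the extension back to $w\in[1-h(1-t),1]$; to leading order the kernel argument collapses to $-(t+u)$ with $w=1-uh$. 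Combining with the $s\in[0,1]$ piece gives
\[
\ahTR^d(p) \;\approx\; \frac{1}{h}\int_0^1 \alpha_T(w)\,K_t^{\text{eff}}\parens[\Big]{\frac{1-w}{h}}\,dw,
\]
with $K_t^{\text{eff}}(v)=k(v-t)+k(v+t)\,\one(v\leq 1-t)$ on $v\in[0,1+t]$. Direct differentiation confirms $(K_t^{\text{eff}})'(v)=\diam k_t(v)$. Integrating by parts transfers the stochastic analysis to $\beT$, and after substitution $v=(1-w)/h$ and invocation of the weak convergence $\sqrt{T}(\beT-e)\convw\G$ from \cref{thm:ebreve}, this yields the stated $\sV^R(t)$. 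Under \cref{eq:Hstar simple}, $\sH_h(1,-s,-\tilde s)\to\zeta^2(1)\min(s,\tilde s)$, and evaluating the double integral of $\diam k_t(s)\diam k_t(\tilde s)\min(s,\tilde s)$ by splitting over the three regions $\{s,\tilde s\leq 1-t\}$, $\{\min\leq 1-t<\max\}$, $\{s,\tilde s>1-t\}$ produces the closed form $3\zeta^2(1)[2-t^2(t^3-5t+5)]/5$.

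The principal obstacle is the boundary bookkeeping. Verifying that the fold--back change of variables yields $K_t^{\text{eff}}$ with $o(h)$ uniform error on the $O(1)$ interval requires tracking both the $O(h)$ nonlinearity of $\rho$ and the $O(h)$ curvature of $\psi$ near $1$, and confirming they contribute only to the deterministic bias (absorbed by the $C^2$ extension) rather than to the leading--order variance. A secondary point is the negligibility of the integration--by--parts boundary terms: at $v=1+t$ the effective kernel vanishes by the Epanechnikov endpoint property, while at $v=0$ the superconsistency $\beT(1)-e(1)=O_p(1/T)$---since the GCM agrees with $e_T$ at the right endpoint and the maximum observed rival bid approaches its population maximum at the parametric rate under \cref{ass:value distributions}---produces a boundary contribution of order $\sqrt{Th}/(hT)=1/\sqrt{Th}=o_p(1)$.
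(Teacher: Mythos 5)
Your decomposition is essentially the paper's own: the paper isolates the $\hat d$--effect, the kernel bias, and the stochastic term (its Lemmas \ref{lem:nbh} and \ref{lem:boundary correction bias}), and obtains the stochastic term by folding the reflected region back onto $[0,1]$ so that the two pieces $\frac1h\int_t^1 k'(s)\upsilon_T\cparens{1-(s-t)h}\dif s$ and $-\frac1h\int_{-1}^t k'(s)\upsilon_T\cparens{1+(s-t)h}\dif s$ combine into $\frac1h\int_0^{1+t}\diam k_t(s)\upsilon_T(1-sh)\dif s$ with $\upsilon_T=\beT-e$; your ``effective kernel'' $K_t^{\mathrm{eff}}$ with $(K_t^{\mathrm{eff}})'=\diam k_t$ is the same object, and your evaluation of $\int_0^{1-t}u\cparens{1-(t+u)^2}\dif u=(1-t)^3(t+3)/12$ reproduces the paper's constant. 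The only presentational difference is that the paper integrates by parts first and then invokes the reflection identity (\cref{lem:hideous}), whereas you change variables on the extension directly.

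Two points deserve attention. First, the sign of the $\hat d$--term. Your direct computation gives $\ahTR-\ahTR^{d}=+\frac h8\,{\alpha}(1)(1-t)^3(t+3)(\hat d-d)+o_p(\cdot)$, yet you assert this ``produces exactly the centering term in the statement''; it does not, since the statement \emph{adds} $+\frac{\sqrt{Th^3}}8{\alpha}(1)(1-t)^3(t+3)(\hat d-d)$, which cancels only a \emph{negative} term of that magnitude. The paper's chain indeed produces $-\frac h8(\cdots)$ via \cref{lem:nbh}, but tracing \cref{lem:hideous} carefully, the subtraction of the $e$--recursion from \cref{eq:horrible} yields $\beT(1+th)-e(1+th)-2\cparens{\beT(1)-e(1)}+\cparens{\beT(1-th)-e(1-th)}=-\diam\sB^R_T(t)+o_p(\sqrt{h/T})$, i.e.\ the opposite sign to what \cref{lem:hideous} states; your $+$ sign appears to be the correct one, so you should either flag the discrepancy or carry a minus in the centering term rather than claim agreement. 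Second, in piece (i) you manipulate $\alpha_T$ pointwise: the step $\alpha_T\sparens{1-\hat{\rho}(x)}-\alpha_T\sparens{1-{\rho}(x)}=O_p(h^2\epsilon_{dT})+\text{(negligible)}$ cannot be justified by consistency alone, because $\alpha_T$ is a step function converging only at rate $\sqrt[3]{T}$ pointwise; the paper controls this at the level of $\beT$ via a modulus--of--continuity bound (\cref{lem:ahat minus a}) after integration by parts, and your argument needs the analogous device to make the ``$o_p(h)$'' uniform over the $O(h)$ extension region.
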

As noted, the conditions on ${\psi}$ are normalizations: without them ${\psi}\parens{1},{\psi}'\parens{1}$ would pop up in various places.  Our assumption of the Epanechnikov kernel is not essential but the proofs do make use of the fact that the kernel has bounded support.  Moreover, the polynomial portion of the second term in \cref{eq:boundary correction statement} would be more complicated. 

Since $\hat d$ is essentially a nonparametric kernel derivative estimator, achieving a $T^{1/5}$ rate is feasible under \cref{ass:Qc differentiable}.\footnote{If the function whose derivative is estimated is twice differentiable then it is well--known that the bias is $O\parens{h_d}$ and the variance $O\parens{1/Th_d^3}$, where $h_d$ is the bandwidth used for the estimation of $d$.  Here, ${\alpha}$ is the function whose derivative is to be estimated, which is twice differentiable under \cref{ass:Qc differentiable} since ${\alpha}''=Q_c'''p + 3 Q_c''$.}  If one assumes $Q_c$ to have one more derivative at $1$ then ${\alpha}$ is thrice differentiable at 1, which would imply that picking a bandwidth $h_d$ for $\hat d$ that converges faster than $T^{-1/10}$ and slower than $T^{-1/5}$ would make the second term in \cref{eq:boundary correction statement} disappear: $h_d \sim T^{-1/7}$ would be optimal.  

So, here we advocate picking a bandwidth for $\hat d$ which tends to zero more slowly than $T^{-1/5}$ whereas KZ advocates making the bandwidth go to zero faster than $T^{-1/5}$.  In a separate note \citep{pinkse2019actual} we show that there is a bug in both \citet{karunamuni2005boundary} and KZ and that there one needs to assume the existence of an extra derivative and choose a bandwidth that converges more slowly in order to obtain their claimed results.

Near the left boundary, we apply an analogous reflection method based upon
\[
{\alpha}\parens{s} = {\alpha}\sparens[\bigg]{{\rho}_0\parens[\bigg]{\frac{{\psi}\parens{0}-{\psi}\parens{s}}{{\psi}'\parens{0}}}}{\rho}_0'\parens[\bigg]{\frac{{\psi}\parens{0}-{\psi}\parens{s}}{{\psi}'\parens{0}}}\,,
\]
where ${\rho}_0(s) = s - d_0 s^2 + \cparens[\big]{d_0^2 - d_0 {\psi}''\parens{0}/\sparens{6{\psi}'\parens{0}}}s^3$ and $d_0 = {\alpha}'\parens{0}/{\alpha}\parens{0}$. The formula is messier simply because we had already normalized the location and scale of ${\psi}$ at $p=1$ to simplify the expressions near the right boundary.

\subsubsection{Preserving monotonicity}
One caveat to our boundary kernel estimators and `reflection' procedure is that they can undo monotonicity near the boundaries in finite samples, although for different reasons. The boundary kernels are nonpositive near the boundary and are therefore capable of producing nonmonotonic estimates when ${\alpha}_T$ is relatively flat near the boundary. On the other hand, the transformation--and--reflection procedure in \cref{eq:alpha R} continuously extends ${\alpha}$ and its first two derivatives such that $\alpha(1+s)$ is generally decreasing in $s$ for large enough $s>0$. Indeed, this is inevitable when ${\alpha}'$ is close to zero and ${\alpha}''$ is negative. In any case, we may easily remedy this by redefining the smoothed estimator for ${\alpha}$ as the ``cumulative maximum'' of the objects defined in \cref{eq:ehatpsi2}, \cref{eq:alpha psi prime inside}, and \cref{eq:alpha R}, for example $\bar \alpha_{T\psi}(p) = \max\cparens[\big]{\sparens{{\psi}'\parens{p}/h} \uint[s]{{\alpha}_T\parens{s} k_{{\psi}h}\cparens{\parens{{\psi}\parens{p}-{\psi}\parens{s}}/h \bkdelim p}},\ \sup_{q<p}\hat {\alpha}_{T{\psi}}(q)}$.

Alternatively, in the case of the transformation-and-reflection procedure, we may apply this monotonization device to the definition of the extended ${\alpha}_{T}$. The kernel--smoothed estimator of the resulting monotonic function will then be increasing on $[0,1]$ because $k$ is a nonnegative kernel. Such a procedure will continuously extend ${\alpha}'$ and ${\alpha}''$ at one, but may introduce a discontinuity in ${\alpha}''$ at a point $p>1$ for which ${\alpha}'(p)=0$. We tolerate this discontinuity, however, because $d>0$ and a finite ${\alpha}''$ imply that the discontinuity is at a location bounded away from one. As a result, \cref{thm:boundary correction reflection} does not require any modification.

\subsection{Derivative estimators}
\label{section:derivatives}

As we will see in \cref{sec:objects}, the density of the value distribution depends on ${\alpha}'$, not on ${\alpha}$ itself.  Although the primary objective in our paper concerns estimation of derived objects like the bidder surplus, we include results for the value density in the interest of completeness.  For that purpose, we derive some results for an estimator of ${\alpha}'$, both away from and near the boundary.

The first result, \cref{thm:derivatives away from the boundary}, is for the case in which we are trying to estimate ${\alpha}'$ away from the boundary, whereas \cref{thm:derivatives near the boundary} applies to a neighborhood of the (upper) boundary.  

\begin{thm} \label{thm:derivatives away from the boundary}
	Let 
\begin{enumerate*}[label=\itshape{(\roman*)}]
    \item \cref{ass:Gaussian process,ass:monotone bid strategies,ass:value distributions} be satisfied;
    \item $Q_c$ be four times continuously differentiable on any compact subset of $\parens{0,1}$;
    \item $k$ be the Epanechnikov kernel;
    \item $\lim_{T\to\infty} \sqrt{Th^7} ={\Xi}_d<\infty$.
\end{enumerate*}	
Then, at any fixed $0<p<1$,
\[
 \sqrt{Th^3} \cparens{ \bar {\alpha}_{T{\psi}}'\parens{p}- {\alpha}'\parens{p}}
  \convd
 N\parens[\big]{ \sB^R\parens{p}, \sV^R\parens{p}},
\]
with
\[
\sB^R\parens{p} = {\Xi}_d
\frac{
{\alpha}'''{\psi}'^2 - 3 {\alpha}''{\psi}''{\psi}' - {\alpha}'{\psi}'''{\psi}' + 3 {\alpha}'{\psi}''^2
}{10{\psi}'^4},
\]
where all ${\alpha},{\psi}$'s are evaluated at $p$, and
\[
 \sV^R\parens{p}= \frac94 {\psi}'^4\parens{p} \lim_{h\downarrow 0}\int_{-1}^1 \int_{-1}^1 \sH_h\cparens[\big]{p,s/{\psi}'\parens{p},\tilde s/{\psi}'\parens{p}} \dif s \dif \tilde s.
\]	
If \cref{eq:Hstar simple} holds then $\sV^R\parens{p}$ simplifies to
\(
\parens{3/2} {\psi}'^3\parens{p} {\zeta}^2\parens{p}.  
\)
If \cref{eq:Hstar symmetric} holds then $\sV^R\parens{p}$ simplifies to
\< \label{eq:variance of derivative estimator in symmetric case}
 \frac{3 {\psi}'^3\parens{p}}{2n\parens{n-1}} p^{n/\parens{n-1}} Q'^2\parens{p^{1/\parens{n-1}}}. 
\>
Simplifying expressions for $F_p,{\alpha}$ and their first three derivatives in the symmetric case can be found in \cref{lem:alpha symmetric} in \cref{app:derivatives}. \qed
\end{thm}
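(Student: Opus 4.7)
The plan is to mirror the proof of \cref{thm:ebarpsi}, differentiating the smoothed estimator once more with respect to $p$. Under the hypotheses and for $p$ in a compact subset of $\parens{0,1}$ with $h$ sufficiently small, the boundary kernel reduces to the ordinary Epanechnikov $k$, and differentiation under the integral sign in \cref{eq:alpha psi prime inside} gives
\[
\bar {\alpha}_{T{\psi}}'\parens{p} = -\frac{{\psi}'\parens{p}}{h^2}\int_0^1 {\psi}'\parens{s}\, {\alpha}_T\parens{s}\, k'\parens[\big]{\parens{{\psi}\parens{p}-{\psi}\parens{s}}/h} \dif s.
\]
I decompose $\bar {\alpha}_{T{\psi}}'\parens{p}-{\alpha}'\parens{p}=B_T\parens{p}+V_T\parens{p}$, where $B_T$ is the deterministic ``bias'' obtained by replacing ${\alpha}_T$ with ${\alpha}$ and $V_T$ is the mean--zero stochastic part in which ${\alpha}_T-{\alpha}$ appears.

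For the bias, the change of variables $u=\parens{{\psi}\parens{p}-{\psi}\parens{s}}/h$ yields $B_T\parens{p}+{\alpha}'\parens{p}=\parens[\big]{{\psi}'\parens{p}/h}\int A\parens{{\psi}\parens{p}-hu}\, k'\parens{u}\dif u$ with $A\defeq {\alpha}\circ {\psi}^{-1}$. Taylor--expanding $A$ to third order around ${\psi}\parens{p}$ and applying the moment identities $\int k' = 0$, $\int u\, k' = -\int k = -1$, $\int u^2 k' = -2\int u\,k = 0$, and $\int u^3 k' = -3\int u^2 k = -3/5$ (the last specific to the Epanechnikov kernel), the linear term recovers ${\alpha}'\parens{p}={\psi}'\parens{p}\,A'\parens{{\psi}\parens{p}}$ by the chain rule, the quadratic term vanishes, and the cubic term produces $B_T\parens{p}=\parens{h^2/10}\,{\psi}'\parens{p}\,A'''\parens{{\psi}\parens{p}}+o\parens{h^2}$. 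A direct chain--rule calculation for the third derivative of $A={\alpha}\circ {\psi}^{-1}$ gives ${\psi}'\parens{p}\,A'''\parens{{\psi}\parens{p}}=\cparens[\big]{{\alpha}'''{\psi}'^2-3{\alpha}''{\psi}''{\psi}'-{\alpha}'{\psi}'''{\psi}'+3{\alpha}'{\psi}''^2}/{\psi}'^4$ evaluated at $p$, and multiplying by $\sqrt{Th^3}$ yields $\sqrt{Th^3}\,B_T\parens{p}\to \Xi_d\,\sB^R\parens{p}$.

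For the variance, I write $V_T$ as a Stieltjes integral against $\dif\parens{\breve e_T-e}$ and integrate by parts; the boundary terms vanish because the support of $k'$ in $s$ lies inside $\parens{0,1}$ for interior $p$ and small $h$, producing a dominant contribution of order $h^{-3}\int \parens{\breve e_T-e}\parens{s}\,{\psi}'^2\parens{s}\, k''\parens{\cdot}\dif s$. Applying \cref{thm:ebreve}, which supplies $\sqrt{T}\parens{\breve e_T-e}\convw \G$ on $\sparens{0,1}$ with covariance $H$, and using that $\int k''=0$ in the distributional sense (the constant $-3/2$ on $\parens{-1,1}$ is exactly cancelled by jumps of magnitude $3/2$ in $k'$ at $\pm 1$), one may insert the $H\parens{\cdot,p}+H\parens{p,\cdot}-H\parens{p,p}$ centering freely to recover the $\sH_h$ form. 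Changing variables $u=\parens{{\psi}\parens{p}-{\psi}\parens{s}}/h$, using ${\psi}'\parens{s}\to {\psi}'\parens{p}$ on the kernel support, and substituting $k''\parens{u}=-3/2$ on $\parens{-1,1}$ gives the factor $9/4=\cparens{-3/2}^2$ and the stated $\sV^R\parens{p}$; joint asymptotic normality then follows from the Gaussianity of $\G$ and Slutsky. The simplifications under \cref{eq:Hstar simple,eq:Hstar symmetric} follow by substituting the limits from \cref{lem:V,lem:V symmetric}: in the simple case, $\lim_h \sH_h\parens{p,s,\tilde s}={\zeta}^2\parens{p}\parens{\abs{s}+\abs{\tilde s}-\abs{s-\tilde s}}/2$ and the direct evaluation $\int_{-1}^1\int_{-1}^1\parens{\abs{s}+\abs{\tilde s}-\abs{s-\tilde s}}\dif s \dif \tilde s=4/3$ gives $\sV^R\parens{p}=\parens{3/2}\,{\psi}'^3\parens{p}\,{\zeta}^2\parens{p}$; the symmetric case is analogous, using the median formula from \cref{lem:V symmetric}. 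The main obstacle is the careful handling of $k''$ at the edges of the Epanechnikov support: the delta--function contributions at $\pm 1$ must be retained so that the distributional identity $\int k''=0$ holds and the centering in $\sH_h$ is legitimate, while simultaneously being shown not to distort the clean $9/4$ constant in the limit.
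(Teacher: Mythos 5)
Your proof follows essentially the same route as the paper's: the same bias/stochastic decomposition, an equivalent bias computation (your third--order expansion of $A=\alpha\circ\psi^{-1}$ against the moments of $k'$ is algebraically identical to the paper's device of integrating by parts first and then expanding $A'=\parens{\alpha'/\psi'}\circ\psi^{-1}$ against the moments of $k$, since $A'''=\sparens[\big]{\parens{\alpha'/\psi'}\circ\psi^{-1}}''$), and the same passage to $k''$ against $\beT-e$ for the stochastic part, finished off with \cref{thm:ebreve}.

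The one substantive loose end is the obstacle you flag and do not resolve --- the delta contributions of the distributional $k''=-\tfrac32\one{-1<u<1}+\tfrac32\parens{\delta_{-1}+\delta_{1}}$ --- and it is worth closing, since the paper simply drops these terms. They are \emph{not} negligible: carrying the jumps of $k'$ through the integration by parts shows the limiting variable is proportional to $W\parens{-1}+W\parens{1}-\int_{-1}^{1}W\parens{u}\dif u$ rather than $-\int_{-1}^{1}W\parens{u}\dif u$, where $W\parens{u}$ denotes the limiting increment $\G\cparens{p+uh/\psi'\parens{p}}-\G\parens{p}$, and the two delta terms have exactly the same stochastic order as the absolutely continuous part. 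They nonetheless leave the asymptotic variance unchanged. Writing $L\parens{u,\tilde u}=\lim_{h}h^{-1}\Expr{W\parens{u}W\parens{\tilde u}}$, the variance of $W\parens{-1}+W\parens{1}-\int W$ equals that of $\int W$ exactly when $L\parens{1,1}+L\parens{-1,-1}+2L\parens{1,-1}=2\int_{-1}^{1}\cparens{L\parens{1,u}+L\parens{-1,u}}\dif u$, and this identity holds whenever $L\parens{u,\tilde u}\propto\abs{\Med\parens{u,\tilde u,0}}$, i.e.\ whenever the local covariance of $\G$ is Brownian --- which is the case in every instance covered by \cref{lem:V,lem:V symmetric,lem:V asymmetric} (both sides then equal twice the proportionality constant). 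So your $9/4$ does survive, but because of a cancellation in the variance, not because the deltas vanish from the limit. Two cosmetic slips that offset each other: differentiating $k\cparens{\parens{\psi\parens{p}-\psi\parens{s}}/h}$ in $p$ produces $+k'\parens{\cdot}\psi'\parens{p}/h$, so the leading minus sign in your formula for $\bar\alpha_{T\psi}'$ should not be there, and your subsequent change of variables silently drops a compensating sign, so the bias comes out correct.
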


Observe that the optimal convergence rate is the same as that for nonparametric kernel derivative estimators, namely $T^{2/7}$ for $h \sim T^{-1/7}$, as expected.  Note further that, like before, the scale of ${\psi}$ and the bandwidth $h$ are interchangeable.  Again, the optimal yet infeasible choice of ${\psi}$ in terms of the asymptotic bias is ${\psi} \propto {\alpha}$.

\begin{thm}\label{thm:derivatives near the boundary}
    Let
\begin{enumerate*}[label=\itshape{(\roman*)}]
    \item \cref{ass:Gaussian process,ass:monotone bid strategies,ass:value distributions} be satisfied;
    \item $Q_c$ be four times continuously differentiable on any compact subset of $\parens{0,1}$;
    \item $k$ be the Epanechnikov kernel;
    \item ${\psi}$ be thrice continuously differentiable at 1 with ${\psi}\parens{1}=0$ and ${\psi}'\parens{1}=1$;
    \item $\hat d - d = O_p\parens{T^{-2/5}}$;
    \item $\lim_{T\to\infty} \sqrt{Th^7} ={\Xi}_d<\infty$.
\end{enumerate*}	
Then for any $0\leq t\leq 1$,
\[
\sqrt{Th^3} \cparens{ \bar {\alpha}_{T{\psi}}^R\,\!'\parens{1-th}- {\alpha}'\parens{1-th}}
- 
\sqrt{\frac Th} \frac{{\alpha}\parens{1}}2 \parens{1-t}^3 \parens{\hat d-d}
\convd
N\parens[\big]{ \sB^{Rd}\parens{t}, \sV^{Rd}\parens{t}},
\]
where
\begin{multline*}
\sB^{Rd}\parens{t}
=
\frac{{\Xi}_d}{80}\parens[\Big]{
	8 \cparens[\big]{{\alpha}'''_\uparrow\parens{1}+3{\alpha}'\parens{1}{\psi}''^2\parens{1} - {\alpha}'\parens{1}{\psi}'''\parens{1} - 3{\alpha}''\parens{1}{\psi}''\parens{1}  }
	+
	\\
	\parens{4+t}\parens{1-t}^4 \cparens{{\alpha}'''_\uparrow\parens{1}-{\alpha}'''_\downarrow\parens{1}}}
\end{multline*}
with $	{\alpha}_\uparrow'''$, ${\alpha}_\downarrow'''$ denoting left and right derivatives,
and
\[
\sV^{Rd}\parens{t} =
\frac94\lim_{h\downarrow 0}
\int_{1-t}^{1+t} \int_{1-t}^{1+t}
\sH_h\parens{1,-s,-\tilde s}
 \dif \tilde s \dif s.
\]
If \cref{eq:Hstar simple} holds then the asymptotic variance simplifies to
\[\tag*{\qed}
\sV^{Rd}\parens{t} =
3 {\zeta}^2\parens{1} 
t^2\parens{3-t}. 
\]
\end{thm}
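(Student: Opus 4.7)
The plan is to adapt the machinery developed for \cref{thm:boundary correction reflection} and \cref{thm:derivatives away from the boundary} to the derivative of the reflection--corrected estimator evaluated near the upper boundary. Differentiating \cref{eq:alpha R} under the integral sign yields
\[
\bar {\alpha}_{T{\psi}}^R\,\!'\parens{p}
= \frac{{\psi}'\parens{p}}{h^2} \int_{-\infty}^\infty {\alpha}_T\parens{s}\, {\psi}'\parens{s}\,
k'\parens[\Big]{\frac{{\psi}\parens{p}-{\psi}\parens{s}}{h}} \dif s,
\]
and at $p=1-th$ the compact support of the Epanechnikov $k'$ localizes the integrand to an $h$--neighborhood of 1. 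I would then decompose $\bar {\alpha}_{T{\psi}}^R\,\!'\parens{1-th} - {\alpha}'\parens{1-th}$ into three pieces: a \emph{stochastic} piece obtained by replacing ${\alpha}_T$ with ${\alpha}_T - {\alpha}$ (using the population reflection computed with the true $d$); a deterministic \emph{bias} piece obtained by applying the $k'/h^2$--weighted integral to the population reflection alone; and a \emph{plug--in} piece, linear in $\hat d - d$, arising because ${\alpha}_T\parens{s}$ on $\parens{1,1+h}$ is built via \cref{eq:alpha extension} using $\hat d$ rather than $d$.

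For the stochastic piece I would transfer the weak convergence of $\sqrt T\parens{\beT-e}$ from \cref{thm:ebreve} through the change of variables $u \ot \cparens{{\psi}\parens{1}-{\psi}\parens{s}}/h$, with the compact support of $k'$ localizing everything to $\sparens{1-t,1+t}$. The resulting limit covariance is $\parens{9/4}{\psi}'^4\parens{1}\sH_h\parens{1,-u,-\tilde u}$ integrated over $\sparens{1-t,1+t}^2$, which is the stated $\sV^{Rd}\parens{t}$; under \cref{eq:Hstar simple} direct evaluation with the Epanechnikov $k'$ yields $3{\zeta}^2\parens{1}t^2\parens{3-t}$.

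The bias piece is the delicate part because the reflection \cref{eq:alpha extension} is engineered to be $C^2$ but not $C^3$ at $p=1$, so ${\alpha}'''$ jumps there. I would Taylor expand the population reflected ${\alpha}$ composed with ${\psi}^{-1}$ to third order around ${\psi}\parens{1}=0$, but separately for $u<0$ (where the chain--rule expansion analogous to \cref{eq:fugly2} feeds ${\alpha}'''_\uparrow\parens{1}$) and for $u>0$ (where it feeds ${\alpha}'''_\downarrow\parens{1}$). Splitting the $u^3 k'\parens{-u}$ integral at the origin and assembling the chain--rule terms in ${\psi}$ produces the symmetric contribution $8\cparens[\big]{{\alpha}'''_\uparrow\parens{1} + 3{\alpha}'\parens{1}{\psi}''^2\parens{1} - {\alpha}'\parens{1}{\psi}'''\parens{1} - 3{\alpha}''\parens{1}{\psi}''\parens{1}}$ together with the asymmetric correction $\parens{4+t}\parens{1-t}^4 \cparens[\big]{{\alpha}'''_\uparrow\parens{1} - {\alpha}'''_\downarrow\parens{1}}$, which combine into $\sB^{Rd}\parens{t}$ after division by $80$.

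For the plug--in piece I would linearize ${\alpha}_T\parens{s}$ on $\parens{1,1+h}$ in $\hat d$ around $d$ through \cref{eq:alpha extension}: since both $\rho$ and $\rho'$ depend smoothly on $d$, the leading $d$--sensitivity reduces to an explicit polynomial in $s-1$, and substituting this into the $k'/h^2$--weighted integral, changing variables, and integrating out yields the deterministic coefficient ${\alpha}\parens{1}\parens{1-t}^3/2$ at scale $\sqrt{T/h}$ subtracted on the left--hand side. The rate $\hat d - d = O_p\parens{T^{-2/5}}$ together with $\sqrt{Th^7}=O\parens{1}$ controls the quadratic--in--$\parens{\hat d - d}$ remainder. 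The main obstacle is the careful piecewise Taylor expansion of the bias: the jump in ${\alpha}'''$ forces the third--order analysis to be done separately on each side of $p=1$, and assembly with the ${\psi}$ chain--rule terms is what produces the asymmetric polynomial $\parens{4+t}\parens{1-t}^4$. Once that is in place, the stochastic and plug--in pieces follow routinely by the methods already employed in \cref{thm:boundary correction reflection} with $k$ replaced by $k'$.
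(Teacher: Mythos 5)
Your proposal follows essentially the same route as the paper's proof: the paper's \cref{lem:nbhd} extracts exactly your plug--in term $\frac{{\alpha}\parens{1}}2\parens{1-t}^3\parens{\hat d-d}$ and reduces the stochastic piece to integrals of $\beT-e$ that telescope onto $\sparens{1-t,1+t}$, while \cref{lem:alpha' bias near one} performs precisely the piecewise third--order Taylor expansion you describe, with the jump ${\alpha}'''_\downarrow-{\alpha}'''_\uparrow$ integrated against $\int_t^1 k\parens{s}\parens{s-t}^2\dif s=\parens{4+t}\parens{1-t}^4/40$ to give the asymmetric bias term. The decomposition, the treatment of the $C^2$--but--not--$C^3$ reflection, and the variance calculation all match; no gaps.
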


Although the asymptotic variances are formulated differently, the asymptotic distributions in the two theorems coincide if one takes $t=1$ in \cref{thm:derivatives near the boundary}.  Indeed, if $t=1$ then the correction via $\hat d$ becomes immaterial since there is no boundary bias concern then.  Note that if $h \sim T^{-1/7}$ then the convergence rate is still $T^{2/7}$ irrespective of the value of $t$. 

A perhaps puzzling finding is that the asymptotic variance is zero if $t=0$.  However, note that this is not the asymptotic variance of $\hat {\alpha}_T^{R}\,\!'\parens{1}$ itself.  Indeed, the (variation in the) asymptotic distribution of $\hat {\alpha}_T^{R}\,\!'\parens{1}$ is then determined by the estimation of $d$.  To get the asymptotic distribution of $\hat {\alpha}_T^{R}\,\!'\parens{1}$ itself requires us to commit to a specific estimator of $\hat d$ and derive the joint distribution.  This is neither difficult nor interesting.

\subsection{Jackknife estimators}

\Cref{thm:ebreve,lem:V} motivate still more estimators of ${\alpha}$.  Note that ${\alpha}\parens{p}= Q_c'\parens{p} p + Q_c\parens{p} = \zeta\parens{p}+Q_c\parens{p}$.  Since $Q_c$ can be estimated at a rate of $\sqrt{T}$, its estimation is of secondary concern.  But $\zeta\parens{p}$ enters the variance formulas in \cref{thm:ebreve,lem:V}.  

We will assume for the purpose of this discussion that $G_c$ is estimated using the empirical distribution function of the maximum rival bid, such that $H\parens{p,p^*}= \zeta\parens{p}\zeta\parens{p^*}\cparens{\min\parens{p,p^*}-pp^*}$ and the conditions of \cref{lem:V} are satisfied.

We present two versions, one based on \cref{thm:ebreve} and one on \cref{lem:V}:\joris{Don't know whether we need to undersmooth for the second formula.}
\[
\maligned{
\ahJ\parens{p} & = \sqrt{\frac{ \parens{T-1} \sum_{t=1}^T
	\cparens{\beT\parens{1} - \beT\parens{p}
		-\beTmt\parens{1} + \beTmt\parens{p}}^2}{p \parens{1-p}}}
	+ \hat Q_c\parens{p},
	\\
\ahJs\parens{p} & =
\sqrt{\frac{h\parens{T-1} \sum_{t=1}^T 
		\cparens{ \ahT\parens{p}-\ahTmt\parens{p}}^2}{\kappa_2}}
	+ \hat Q_c\parens{p},
}
\]
where the $-t$ subscripts denote leave--one--out estimators, i.e.\ the identical estimator without using observation $t$.  Note that $\ahJ$ is only defined on $0<p<1$ albeit that it can be defined to equal zero at zero and one.  This is precisely the reason for having
$\beT\parens{1}-\beTmt\parens{1}$ in the numerator even though it could be left out without affecting the result for fixed $0<p<1$.\footnote{$\sqrt{T}\cparens{\beT\parens{1}-e\parens{1}}=\opone$.}

We inserted a generic estimator $\hat Q_c$ into the definitions of $\ahJ,\ahJs$.  Its form is largely immaterial, but natural choices would be respectively $\beT\parens{p}/p$ and $\ehat\parens{p}/p$  for $p>0$ and zero for $p=0$.

There are three downsides to the use of these jackknife estimators.  The first issue is that in their current incarnation it is assumed that $H^*$ has a specific form.  But the formulas can be generalized or derived for other forms of $H^*$.
Second, the jackknife estimators are costlier to compute since each estimator has to be computed $T+1$ times.  This may be of little practical relevance since computation of $\ahT$ is fast.
Finally, the jackknife estimators are not guaranteed to be monotonic.  This is a property they share with other estimators, including GPV, and which can be addressed by the use of a monotonization procedure, which is not difficult but admittedly cumbersome.\footnote{See \cite{ma2019monotonicity} for a monotonization procedure of the GPV estimator.}  We do not study the asymptotic properties of jackknife estimators in this paper.


\section{Estimation of $F_{p}$}
\label{sec:Fp}
We now turn to the much simpler problem of estimating the distribution of a bidder's equilibrium win--probabilities.

\subsection{Symmetric Bidders}
In a symmetric equilibrium with $n$ bidders, the probability that a bidder with a valuation of $v$ wins is simply given by the probability that all other bidders have a valuation less than $v$. Accordingly, the distribution of a bidder's optimally chosen win--probabilities is 
\[
F_{p}(p) = p^{1/\parens{n-1}}\,,
\]
No estimation is necessary if $n$ is known because the distribution of equilibrium win--probabilities does not depend on the unknown distribution $F_{v}$.

We can accommodate endogenous entry as long as the screening value, i.e.~the lowest valuation $v^*$ for which a bidder is willing to participate, is observed. We would simply define $F_{p}(p)$ to equal zero for all $p < \alpha^{-1}(v^{*})$. For example, in a first--price auction with a reserve price $r>\mathb{v}$, $v^{*}=r$ and
\[
F_{p}(p) = \maligned{
p^{1/\parens{n-1}},\quad p\geq r^{n-1}\\
0, \quad p < r^{n-1}
}\,.
\]
We will use this fact when we discuss estimation of counterfactual expected revenues for the seller in \cref{sec:objects}.

\subsection{Asymmetric bidders}
In a high--bid auction\footnote{A high--bid auction is one in which the highest bidder wins with probability one.} with bidders whose valuation distributions are not identically distributed, the equilibrium distribution of win--probabilities for bidder is $F_{p}(p) = G\cparens{Q_{c}(p)}$. The distribution $F_{p}$ can then be estimated in a straightforward fashion as $F_{pT}(p) = G_{T}\cparens{Q_{cT}(p)}$, where $G_{T}$ and $Q_{cT}$ are the empirical distribution of bidder 1's bid and an estimate of the quantile function of its highest competing bid. The weak convergence of this process on $(0,1)$ is closely related to the extensively studied ``copula process'' and the fact that the marginal bid densities are strictly positive on their compact support.

\begin{thm}\label{thm:Fp}
	\(
	\sqrt{T}\parens{F_{pT}-F_p} \convw \G_p\,,
	\)
	where 
	$\G_p$ is a Gaussian process with covariance kernel\\
\( 
F_p\cparens{\min\parens{p,p^*}} - F_p\parens{p} F_p\parens{p^*}
+
f_p\parens{p} f_p\parens{p^*} H^*\cparens{Q_c\parens{p},Q_c\parens{p^*}}.
\) \qed
\end{thm}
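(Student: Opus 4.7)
The plan is to linearize $F_{pT}(p) = G_T(Q_{cT}(p))$ around $(G, Q_c)$ and reduce the statement to standard empirical process results. Writing
\[
F_{pT}(p) - F_p(p) = \bigl[G_T(Q_{cT}(p)) - G(Q_{cT}(p))\bigr] + \bigl[G(Q_{cT}(p)) - G(Q_c(p))\bigr],
\]
I first deal with the quantile term. \Cref{ass:value distributions} together with the first-order condition for $Q_c'$ implies that $g_c$ is continuous and bounded away from zero on compact subintervals of $(0,1)$, so the quantile map $G_c \mapsto G_c^{-1}$ is Hadamard differentiable there. Combined with \cref{ass:Gaussian process}, the functional delta method yields
\[
\sqrt{T}\bigl(Q_{cT}(p) - Q_c(p)\bigr) = -\frac{\sqrt{T}(G_{cT} - G_c)(Q_c(p))}{g_c(Q_c(p))} + o_p(1),
\]
uniformly on compact subsets of $(0,1)$, and in particular $Q_{cT}(p) - Q_c(p) = O_p(T^{-1/2})$. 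A first-order Taylor expansion of $G$ together with the identity $f_p(p) = g(Q_c(p))/g_c(Q_c(p))$ then gives $\sqrt{T}\bigl[G(Q_{cT}(p)) - G(Q_c(p))\bigr] = -f_p(p)\sqrt{T}(G_{cT}-G_c)(Q_c(p)) + o_p(1)$.

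For the first bracket, stochastic equicontinuity of the Brownian-bridge process $\sqrt{T}(G_T - G)$, combined with the consistency of $Q_{cT}$ just established, lets me replace $Q_{cT}(p)$ by $Q_c(p)$ at a cost of $o_p(1)$ after multiplication by $\sqrt{T}$. Collecting terms,
\[
\sqrt{T}\bigl(F_{pT}(p) - F_p(p)\bigr) = \sqrt{T}(G_T - G)(Q_c(p)) - f_p(p)\,\sqrt{T}(G_{cT}-G_c)(Q_c(p)) + o_p(1).
\]

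By the independence of valuations across bidders in \cref{ass:value distributions}, bidder one's bids are independent of her competitors' bids, so the two empirical processes on the right converge jointly to \emph{independent} Gaussian limits. The $G$-Brownian-bridge component contributes covariance
\[
G\bigl(\min(Q_c(p),Q_c(p^*))\bigr) - G(Q_c(p))G(Q_c(p^*)) = F_p\bigl(\min(p,p^*)\bigr) - F_p(p)F_p(p^*),
\]
using monotonicity of $Q_c$ and $F_p = G \circ Q_c$; the second component contributes $f_p(p)f_p(p^*)H^*(Q_c(p),Q_c(p^*))$ by \cref{ass:Gaussian process}. Summing gives the stated covariance kernel. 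The main technical obstacle is that $g_c$ can vanish or blow up at the endpoints of the support, so the linearization is uniform only on compact subintervals of $(0,1)$; extension to weak convergence of the full process on $(0,1)$ is handled in the standard fashion via the quantile-process framework of \citet[Chapter 18]{vandervaart2000asymptotic}, after verifying joint tightness of $(\sqrt{T}(G_T - G)\circ Q_c, \sqrt{T}(G_{cT} - G_c)\circ Q_c)$ from the individual tightness of each component and their independence.
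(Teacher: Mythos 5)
Your proposal is correct and follows essentially the same route as the paper: the same decomposition into an empirical-process term for $G_T$ (handled by stochastic equicontinuity after replacing $Q_{cT}$ by $Q_c$) plus a delta-method term $-f_p\,\sqrt{T}(G_{cT}-G_c)\circ Q_c$ for the estimated quantile, with the two contributions combined using the independence of $G_T$ and $Q_{cT}$ and superconsistency at the boundaries. No substantive differences to report.
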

Recall that $  H^*\cparens{Q_c\parens{p},Q_c\parens{p^*}}$ can be as simple as $\min\parens{p,p^*}-pp^*$ in case only the maximum competitor bid is used: see \cref{eq:Hstar simple}.

\subsection{Minimum relative entropy}
In a particular application, the true marginal distributions of valuations might not differ substantially, even when the econometrician is unwilling to impose bidder symmetry in the estimation. 
Thus, a minimum relative entropy estimator for the distribution of win--probabilities may be an attractive alternative. Define $\hat f_{pT}$ as the minimizer of
\[
\int_{0}^{1} f_{p}\parens{p} \log\parens[\Big]{ f_{p}\parens{p} p^{\frac{n-2}{n-1}}}\dif p\,
\qquad
\text{subject to} \qquad
\int_{0}^{1} \delta_{T}\parens{p} f_{p}\parens{p} \dif p = \int_0^1 \delta_{T}\parens{p}\dif F_{pT}\parens{p},
\]
where $\Set{\delta_{T}}$ is a user--specified sequence of functions.\footnote{Elsewhere, we use $k$ to denote kernel and $h$ to denote bandwidth.  In view of the similar meaning and limited scope for confusion we duplicate notation here to make better use of other symbols.}  A natural choice would be $\delta_T(p) = \sparens{1,p,p^{2},\dots, p^{\iota_{T}}}'$ for some growing sequence of natural numbers $\iota_{T}$. The solution to this problem is
\(
f_{p}\parens{p} = \exp\cparens{\mu'\delta_{T}\parens{p}}p^{\parens{2-n}/\parens{n-1}}\,,
\)
where $\mu$ solves
\(
\int_{0}^{1} \delta_{T}\parens{p} \exp\cparens{\mu'\iota_{T}\parens{p}}p^{\parens{2-n}/\parens{n-1}}\dif p =  \int_0^1 \delta_{T}\parens{p}\dif F_{pT}\parens{p}\,.
\)
Given our choice of $\delta_{T}$, the estimate $\hat{f}_{pT}$ is the nearest density (in the sense of Kullback--Leibler divergence) to the symmetric case that matches the first $\iota_{T}$ sample moments of $p$. Since this yields something similar to a sieve estimator, we do not provide asymptotic results here and refer to \citet{chen2007large} for details of such estimators.

\section{Derived objects}
\label{sec:objects}

Applied researchers typically are not directly interested in the private values that rationalize a particular sample of bids, but may estimate these so--called pseudo values in order to construct other estimates. For instance, the sample of pseudo values may be used to obtain estimates of the private value distribution.  

The same comment applies to the density of the private value distribution: because the marginal value distributions are the primitives of the model, i.e.~any counterfactual outcomes or other objects of interest may be computed using the private value distribution, estimating the (density of the) pseudo values at an optimal rate is considered a goal itself in a good chunk of the literature. This intermediate step may be unnecessary or undesirable when the ultimate target of estimation can be written in terms of higher level objects or when the distribution of equilibrium win--probabilities is known.  Below are some examples.

In each case, the object of interest may be expressed as $\theta\parens{{\alpha}, F_p}$, where ${\theta}$ is a known function, and we estimate the object by plugging in some combination of estimates of ${\alpha}$ and $F_p$. There are two overarching themes in the following discussion. First, the asymptotic derivations are greatly simplified by the fact that $F_p$ is known in any symmetric equilibrium, and we can expect significant improvements in finite--sample (and often also asymptotic) performance when we plug in the true $F_p$ as opposed to an estimated distribution and pool bids across bidders to more accurately estimate the rival bid distributions. Second, we may expect the plug--in estimator for ${\theta}$ to be $\sqrt{T}$--consistent and asymptotically unbiased when ${\theta}$ takes the form ${\theta}\parens{{\alpha}, F_p} = \int {\theta}_1(\alpha) \dif F_p$ for an appropriately differentiable function ${\theta}_1$, as is often the case when integrating over nonparametrically estimated objects.\footnote{By `asymptotically unbiased' we mean that the limit distribution has mean zero.}

We now turn to a discussion of individual objects to be estimated.  Although not the primary objective in our exercise, we briefly discuss how to estimate the value distribution function, quantiles, and density function in \cref{section:value distribution}.  We then turn to some objects of greater interest to us, namely the bidder surplus, the mean of the value distribution, profit as a function of the number of bidders, and profit as a function of a hypothetical reserve price.

\subsection{Value distribution}
\label{section:value distribution}

There are different attributes of the value distribution that can be estimated.  The easiest object to recover is the quantile function.  Note that
since $v = {\alpha}\parens{p} = {\alpha}\cparens{G_c\parens{b}}$,
\[
Q_v\parens{{\tau}}= {\alpha}\cparens{Q_p\parens{{\tau}}}={\alpha}\sparens{G_c\cparens{Q_b\parens{{\tau}}}}, \qquad {\tau} \in \sparens{0,1},
\]
which simplifies to ${\alpha}\parens{{\tau}^{n-1}}$ in the symmetric case.\footnote{In the symmetric case, a direct estimator of the quantile function like the one proposed in \cite{gimenes2019quantile} may be preferable.}   The functions $G_c,Q_b$ can be estimated $\sqrt{T}$--con\-sis\-tent\-ly, but not so for ${\alpha}$ as our results thus far have shown.  So even though we are estimating quantiles, namely quantiles of the value distribution, these quantiles cannot be estimated at the parametric rate because the values are not observed.  Indeed, the limit distribution of an estimator $\hat Q_v$ of $Q_v$ is simply the limit distribution of whatever estimator of ${\alpha}$ is used evaluated at $p=G_c\cparens{Q_b\parens{{\tau}}}$.
Likewise, the value distribution function is simply
\[
F_v\parens{v} = F_p\cparens{{\alpha}^{-1}\parens{v}}. 
\]
With symmetric bidders, $F_p= p^{1/\parens{n-1}}$.  In the case of asymmetry,
$F_p$ can be estimated $\sqrt{T}$--consistently, such that the limit distribution is by the delta method given by $\parens{f_p/{\alpha}'}\cparens{{\alpha}^{-1}\parens{v}}=f_v\parens{v}$ times the limit distribution of the estimator of ${\alpha}$.  Note that the delta method is only valid for $v \neq 0,\bar v$, which is of little consequence since we already know the values of $F_v\parens{0},F_v\parens{1}$, albeit that uniformity arguments would suggest that the implied asymptotic distribution would not reflect the finite sample performance near 0 and 1, either, although the convergence rate is still $T^{2/5}$ for the same reason that $\beT$ converges at the $\sqrt{T}$ rate on the entire interval $\sparens{0,1}$: see the comments in the paragraph following  \cref{thm:ebreve}.

There are two ways to estimate the value density: one--step and two--step.  With the two--step estimator, one first generates valuation estimates by doing e.g.\ $\hat v_t = \bar {\alpha}_{T{\psi}}\cparens{\hat G_{cT}\parens{b_{t1}}}$ and then plugs those estimates into a nonparametric kernel density estimator.  The two--step estimator is analogous to GPV except that our first step is different.  It can be shown\footnote{Derivation not provided here.} that both the first step in GPV and our smoothed estimates of ${\alpha}$ permit asymptotic linear expansions of estimator minus expectation at $b=Q_c\parens{p}$,
\< \label{eq:linear expansions}
\maligned{
-\cparens[\bigg]{\frac1{Th} \sum_{t=1}^T \frac{G_c\parens{b}}{g_c^2\parens{b}} k\parens[\Big]{\frac{b_{ct}-b}h} -
\text{its expectation}} & \qquad &\text{ (GPV)},
\\
-\cparens[\bigg]{\frac1{Th} \sum_{t=1}^T \frac{G_c\parens{b}}{g_c\parens{b}} k\parens[\Big]{\frac{G_c\parens{b_{ct}}-p}h} -
	\text{its expectation}} & \qquad &\text{ (ours)}, 
\\
-\cparens[\bigg]{\frac1{Th} \sum_{t=1}^T {\psi}'\cparens{G_c\parens{b}}\frac{G_c\parens{b}}{g_c\parens{b}} k\parens[\Big]{\frac{{\psi}\cparens{G_c\parens{b_{ct}}}-{\psi}\parens{p}}h} -
	\text{its expectation}} & \qquad &\text{ (ours with ${\psi}$)}, 
}
\>
The first two formulas in \cref{eq:linear expansions} are similar, but note the different arguments to the kernel and the fact that one denominator has a square on $g_c$ and the other one does not.  The formula with ${\psi}$ simplifies to the one without for ${\psi}\parens{p}=p$ and to the GPV expansion for ${\psi}\parens{p}=Q_c\parens{p}$.  However, the bias of our estimator with ${\psi}=Q_c$ does not coincide with that for the first step GPV bias: either can be greater.

We only provide asymptotics for the one--step estimator.  For the one--step estimator, note that the value density function is
\[
f_v\parens{v} = \parens{f_p/{\alpha}'}\cparens{{\alpha}^{-1}\parens{v}},
\]
and hence requires an estimate of ${\alpha}'$, which we provided in \cref{section:derivatives}   In the symmetric case, $f_p\parens{p}=p^{\parens{2-n}/\parens{n-1}}/\parens{n-1}$ and one would need to use an estimate of $G_c$ (and hence $Q_c$) that fully exploits symmetry.   With asymmetric bidders it also requires an estimate of $f_p$, but density estimates converge faster than do their derivatives so the estimate of ${\alpha}'$ determines the asymptotic distribution of $\hat f_v\parens{v}$, which is $- \parens{f_p/{\alpha}'^2}\cparens{{\alpha}^{-1}\parens{v}}$ times the limit distribution of one's estimate of ${\alpha}'$, again by the delta method.  From \cref{eq:variance of derivative estimator in symmetric case} it follows that the bias and variance of our estimator of the value density in the symmetric case are given by
\[
\sB_{f}^\symm\parens{p}= - \frac{f_p\parens{p}}{{\alpha}'^2\parens{p}}\sB^R\parens{p},
\]
and
\begin{multline*}
\sV_f^\symm\parens{p} = \frac{3\parens{n-1}^5 {\psi}'^3\parens{p}}{2n^5} \frac{p^{\frac{3n-4}{n-1}} Q'^2\parens{p^\frac{1}{n-1}}}
{\cparens[\big]{ Q'\parens{p^{\frac{1}{n-1}}} + p^{1/\parens{n-1}}Q''\parens{p^{\frac{1}{n-1}}}/n }^4}
\\
=
\frac{3\parens{n-1}^5 {\psi}'^3\cparens{G^{n-1}\parens{b}}}{2n^5}
\frac{ G^{3n-4}\parens{b} g^{10}\parens{b}}
{
\cparens{  g^2\parens{b} -G\parens{b}  g'\parens{b}/n }^4
}.
\end{multline*}
For ${\psi}\parens{p}=Q_c\parens{p}$ (or indeed a suitable estimate thereof) our variance coincides with that of \citet[MS]{marmer2012quantile}, theorem 2.  \citet{ma2019inference} note that the variance of the MS estimator exceeds that of GPV for the same choice of kernel and bandwidth if one undersmooths, i.e.\ if one chooses a bandwidth which makes the bias disappear faster than the variance.  We recommend against undersmoothing for the purpose of estimating ${\alpha}$ and note that ${\psi}=Q_c$ is not optimal.\footnote{For the purpose of inference undersmoothing makes sense but for estimation it is better to choose a bandwidth that converges at the optimal rate since it results in a better convergence rate of the estimator than if one undersmooths. 
	 Second, \cref{eq:linear expansions} suggests that the observation in \citet{ma2019inference} is due to the use of a one--step instead of a two--step estimator.  Finally, \citet{ma2019inference} do not employ transformations like ${\psi}$, which can yield a smaller variance.  Indeed, for the \emph{infeasible} choice ${\psi}\parens{p}=c {\alpha}\parens{p}$ for $c>0$ one obtains a bias of zero and a variance equal to
	\[
	\frac{c^3K_1 G^2\parens{b} g\parens{b}}{n^2\parens{n-1}\cparens{ g^2\parens{b} - g'\parens{b}G\parens{b}/n}},
	\]
	which can be made small by choosing $c$ small.  Thus, any gains one obtains from doing a two--step procedure can be obtained by making a different choice of ${\psi}$ and kernel or bandwidth.
}

\drop{
Since \citet{ma2019inference} is about inference rather than estimation, undersmoothing there may be justified.  However, undersmoothing results in a slower convergence rate.  For comparing the estimators themselves it is better to take their respective biases into consideration and compare the mean square errors of estimators using the optimal bandwidth: a comparison on the basis of variances alone is meaningless. Indeed, for the \emph{infeasible} choice ${\psi}\parens{p}=c {\alpha}\parens{p}$ for $c>0$ one obtains a bias of zero and a variance equal to
\[
   \frac{c^3K_1 G^2\parens{b} g\parens{b}}{n^2\parens{n-1}\cparens{ g^2\parens{b} - g'\parens{b}G\parens{b}/n}},
\]
which can be made small by choosing $c$ small.\footnote{Unfortunately, there is no feasible alternative for choosing ${\psi}\parens{p}=c {\alpha}\parens{p}$ unless one assumes the existence of additional derivatives, in which case one can obtain a better convergence rate with a different estimator: these are well--known issues in the nonparametric kernel estimation literature, which apply here also even though our estimator is not a kernel estimator.}  Thus, the comparison is meaningless.\thought{kernels need to have exponentially declining tails: stick assumption in section 3}

A final observation to make here is that the difference in variances observed by \citet{ma2019inference} appears to be due to the use of a two--step estimator instead of a one--step estimator, not of the difference in the method to estimate valuations.  One could simply plug in $\hat v_t = \bar {\alpha}_{T{\psi}}\parens{t/T}$
}

\thoughtinline{The above results should be checked rigorously.}

\drop{
Finally, 

instead of what is described above we could use a two--step procedure and simply estimate the value density $f$ using first--step estimates .  

 We prefer the above formulation because it is easier to analyze the asymptotic behavior of an estimator of ${\alpha}'$ than a density estimator from a pseudo--sample. Since our primary objective is the estimation of other objects and allowing for asymmetry, we do not pursue this comparison further.
}


\subsection{Bid function}

Note that the bid function at $v$ is simply
\(
Q_c\cparens{{\alpha}^{-1}\parens{v}},
\)
and that $Q_c$ can be estimated $\sqrt{T}$--consistently.  Hence the limit distribution of our bid function estimate is simply
$Q_c'/{\alpha}'$ times the limit distribution of the estimate of ${\alpha}$ used.  Since the bid function estimate uses an estimate of the inverse of ${\alpha}$, the estimate of ${\alpha}$ had better be monotonic: this is yet another advantage of imposing monotonicity from the outset.

\subsection{Bidder surplus}

We now turn our attention to estimation of the bidder's surplus. The surplus for bidder one is given by
\begin{equation} \label{eq:BS}
\BS = \Expc{ \parens{V_1-B_1} \one\parens{B_c \leq B_1}}=  \uint{A\parens{p}f_p \parens{p}},
\end{equation}
where $A\parens{p}={\alpha}\parens{p}p - e\parens{p} = Q_c'\parens{p} p^2$.

There are two important and separate cases.  First, in the case of symmetry $F_p$ is known to be $p^{1/\parens{n-1}}$ and does not need to be estimated.  If $F_p$ is unknown then it can be replaced with the empirical distribution function.

Regardless, one would expect $\sqrt{T}$--consistency despite the presence of nonparametric objects in the definition of $\BS$.  This is a common theme in the semiparametric econometrics literature \citep[see e.g.][]{robinson1988root,powell1989semiparametric}.  Even though nonparametric estimators, other than e.g.\ the empirical distribution function, typically converge at a rate slower than $\sqrt{T}$, integrating them often restores the parametric $\sqrt{T}$ rate.  The reason is that integrating is like averaging and hence reduces the variance, which opens up the possibility of undersmoothing to make the bias vanish at a rate faster than $\sqrt{T}$.  Note that if the unsmoothed estimator ${\alpha}_T$ is used, no adjustment of smoothing parameters is needed at all since no smoothing is conducted in the first place.  It does not appear to matter for the asymptotic distribution of our estimator of $\BS$ whether or not a smoothed estimator of $\BS$ is used, as long as it is undersmoothed.  Symmetry matters a lot, however.

In \cref{section:bs symmetry} we discuss the symmetric case and in \cref{section:bs asymmetry} the asymmetric case.



%
%

\subsubsection{Symmetry}
\label{section:bs symmetry}

With symmetry the situation simplifies in that then 
\begin{equation} \label{eq:fpp symmetric}
\fpp = \frac{1}{n-1} p^{\parens{2-n}/\parens{n-1}}.
\end{equation}
This simplifies the asymptotic theory since integration by parts and \cref{eq:fpp symmetric} turns \cref{eq:BS} into
\[
\bs = \frac{e\parens{1}}{n-1} - \uint{
    e\parens{p} \cparens{ f_p'\parens{p} p + 2 f_p\parens{p}} 
} 
= \frac{e\parens{1}}{n-1} - \frac n{n-1} \uint{
    e\parens{p} p^{\frac{2-n}{n-1}}},
\]
which can be estimated by
\[
\bshat^\symm = \frac{\beT\parens{1}}{n-1} - \frac{n}{\parens{n-1}^2} \uint{
    \beT\parens{p} p^{\frac{2-n}{n-1}}}.
\]
The asymptotic theory for $\bshat^\symm$ is trivial in view of \cref{thm:ebreve}.
\begin{thm} \label{thm:bs symmetric not smooth}
    Under the assumptions of \cref{thm:ebreve},
    \[
    \sqrt{T} \parens{\bshat^\symm - \bs} \convd
    N\parens{ 0, \sV_{\bs}^\symm},
    \]
    where
    \[
    \sV_{\bs}^\symm = \frac{n^2}{\parens{n-1}^4}
    \int_0^1 \int_0^1 H\parens{p,p^*} \, \parens{pp^*}^{\frac{2-n}{n-1}} \dif p \dif p^*. \tag*{\qed}
    \]
\end{thm}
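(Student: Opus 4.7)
The plan is to exhibit $\sqrt{T}\parens{\bshat^\symm-\bs}$ as the image of $\sqrt{T}\parens{\beT-e}$ under a continuous linear functional, then invoke the continuous mapping theorem together with the weak convergence established in \cref{thm:ebreve}. Define
\[
L\parens{f} = \frac{f\parens{1}}{n-1} - \frac{n}{\parens{n-1}^2} \uint{f\parens{p}\,p^{\parens{2-n}/\parens{n-1}}},
\]
so that, by the integration--by--parts identity preceding the theorem, $\bs = L\parens{e}$ and $\bshat^\symm = L\parens{\beT}$; hence $\sqrt{T}\parens{\bshat^\symm - \bs} = L\parens[\big]{\sqrt{T}\parens{\beT - e}}$.

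The first step is to verify that $L$ is a bounded linear functional on $C\sparens{0,1}$ in the uniform norm. The point--evaluation at $p=1$ is automatically bounded; the integral part is controlled by $\norm{f}_\infty \uint{p^{\parens{2-n}/\parens{n-1}}}$, which is finite precisely because $\parens{2-n}/\parens{n-1}>-1$ for every $n\geq 2$. Thus $\abs{L\parens{f}} \leq C_n \norm{f}_\infty$ for a constant depending only on $n$, and applying the continuous mapping theorem to \cref{thm:ebreve} yields $\sqrt{T}\parens{\bshat^\symm-\bs} \convd L\parens{\G}$, which is mean--zero Gaussian by linearity of $L$ and Gaussianity of $\G$.

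It remains to compute the variance. A Fubini argument---again legitimate because of the integrability of the weight---gives
\[
\text{Var}\sparens{L\parens{\G}} = \frac{H\parens{1,1}}{\parens{n-1}^2} - \frac{2n}{\parens{n-1}^3} \uint{H\parens{1,p}\,p^{\parens{2-n}/\parens{n-1}}} + \frac{n^2}{\parens{n-1}^4} \int_0^1\int_0^1 H\parens{p,p^*}\parens{pp^*}^{\parens{2-n}/\parens{n-1}}\dif p\dif p^*.
\]
The first two terms vanish because $H\parens{1,p^*}=\zeta\parens{1}\zeta\parens{p^*} H^*\cparens{Q_c\parens{1},Q_c\parens{p^*}}$ and direct inspection of each of \cref{eq:Hstar simple,eq:Hstar symmetric,eq:Hstar asymmetric} shows $H^*\cparens{Q_c\parens{1},\cdot}\equiv 0$, simply because any reasonable estimator of $G_c$ attains the value one at the upper endpoint of the competitor bid support. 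The surviving double integral is exactly $\sV_{\bs}^\symm$. The main subtlety I anticipate is the behavior of the weight at $p=0$; this is handled uniformly by the elementary inequality above, so no truncation or undersmoothing argument is needed, and the result follows.
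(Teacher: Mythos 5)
Your proposal is correct and takes essentially the same route as the paper's own proof: both reduce $\sqrt{T}\parens{\bshat^\symm-\bs}$, via the integration--by--parts identity, to a continuous linear functional of $\sqrt{T}\parens{\beT-e}$ and then apply the weak convergence in \cref{thm:ebreve} together with the covariance kernel $H$. The only cosmetic difference is that you kill the point--evaluation term at $p=1$ through $H\parens{1,\cdot}\equiv 0$ in the variance computation, whereas the paper drops it upfront as $o_p\parens{1}$ using superconsistency of $\beT\parens{1}$ --- two equivalent manifestations of the same boundary fact.
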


It should be pointed out that if symmetry is fully exploited then the function $H$ is different, also.  Indeed, from \cref{eq:Hstar symmetric} it follows that then
\begin{multline*}
 \sV_{\bs}^\symm = \frac{n}{\parens{n-1}^4} \int_0^1 \int_0^1
 Q'\parens[\big]{p^{\frac1{n-1}}} Q'\parens[\big]{p^{*\frac1{n-1}}} \parens{p p^*}^{\frac1{n-1}} \cparens{\min\parens{p,p^*}^{1/\parens{n-1}} - \parens{pp^*}^{1/\parens{n-1}}} \dif p^* \dif p
 \\
 =
 \frac n{\parens{n-1}^2}
 \int_0^1 \int_0^1
 Q'\parens{p} Q'\parens{p^*} \parens{pp^*}^{n-1} \cparens{\min\parens{p,p^*}-pp^*} \dif p^* \dif p,
\end{multline*}
which equals
\< \label{eq:bs symm asvar expressed in terms of G}
\frac{n}{\parens{n-1}^2} \int_0^{\bar b} \int_0^{\bar b}
G^{n-1}\parens{b} G^{n-1}\parens{b^*}
  \sparens{G\cparens{\min\parens{b,b^*}} - G\parens{b} G\parens{b^*}} \dif b \dif b^*,
\>
where we provide \cref{eq:bs symm asvar expressed in terms of G} if readers would like to compare it to a future GPV--based estimator.

\subsubsection{Asymmetry}
\label{section:bs asymmetry}

A natural generic estimator of $\BS$ in the absence of a symmetry assumption is
\< \label{eq:bshat asymmetric}
 \bshat = \uint[F_{pT}\parens{p}]{ \cparens{{\alpha}_T\parens{p}p - \beT\parens{p}}}.
\>
Naturally, ${\alpha}_T$ can be replaced with a smoothed version in which case it would be advisable to replace $\beT$ with the estimator of $e$ corresponding to the smoothed estimate of ${\alpha}$, also.

\comment{
Absent symmetry, $\bs$ can be estimated by some version of
\begin{equation} \label{eq:bs clumsy}
\bshat=\meanT \cparens{ {\alpha}_T\parens{\hat p_t} \hat p_t - \beT\parens{\hat p_t}},
\end{equation}
but there are modifications that improve the performance of the estimator.\karl{haven't verified this in simulations, yet.}  Our current favorite is
\[
\bshat_A= \meanT \hat A_{{\psi}t},
\]
where $\hat A_{{\psi}t}= \hat A_{\psi}\parens{\hat p_t}$ with
\[
\hat A_{\psi}\parens{p}= \frac{1}{h} \uint[s]{
    \breve A_T\parens{s} {\psi}'\parens{s} k_{{\psi}h}\parens[\Big]{ \frac{{\psi}\parens{p}-{\psi}\parens{s}}h\Bigm\lfilet p }
},
\]
and $\breve A_T\parens{s} = \bar {\alpha}_T\parens{s} s - \beT\parens{s}$.  $\bshat_A$ has the attractive property that it reduces to a convenient sum.
\begin{lem} \label{lem:bs psi}
    Let $S_{Tj}= \sum_{i=1}^{j-1} \parens{ {\alpha}_{Tj}-{\alpha}_{Ti}}/T$ and $\hat {\Lambda}_{{\psi}jt}= {\Lambda}_{{\psi}j}\parens{\hat p_t}$ for ${\Lambda}_{{\psi}j}$ defined in \cref{lem:computation {\alpha} bar}.  Let further $\hat R_{Tj}=\sum_{t=1}^T \hat {\Lambda}_{{\psi}jt} /T$.  Then,
    \[
    \bshat_A = \sum_{j=1}^T S_{Tj}  \hat R_{Tj} \tag*{\qed}
    \]   
\end{lem}
Note that $\bshat_A$ can alternatively be expressed as
\[
\bshat_A = \frac{1}{T}\sum_{j=1}^T {\alpha}_{Tj} \parens[\bigg]{ \parens{j-1} \hat R_{Tj} - \sum_{i=j+1}^T \hat R_{Ti}}.
\]

\grey{
    Simulation results suggest that the empirical distribution of $p$ combined with the smoothed isotonic regression estimator perform about as well as the above estimator based on the boundary-corrected GPV machinery. GPV may hold an advantage in small samples over our estimator when bidder one's bid distribution is stronger than its highest competing bid distribution, but the two estimators are indistinguishable in terms of mean squared error in large enough samples.  Moreover, the case in which bidder one's bid stochastically dominates the highest bid among all of its competitors is likely to be irrelevant in auctions with more than a few bidders. 
    
    An additional complication is that \cref{eq:fpp symmetric} shows that even in the symmetric case  $f_p$ is unbounded at zero and its derivatives diverge to infinity even faster as $p$ tends to zero.  That plus the fact that boundary estimation is messy regardless has motivated us to seek a different approach.  So, instead of using \cref{eq:bs clumsy}, we consider versions of
    \begin{equation} \label{eq:bs hat}
    \BShat = \infint{ \sparens[\big]{{\alpha}_T\cparens{ p\parens{z}} p\parens{z}  - \beT\cparens{p\parens{z}} } 
        \hat f_z\parens{z}
    },
    \end{equation}
    where $z_t=z\parens{p_t}$ with $z\parens{p}= \tan\cparens{\parens{p-0.5}\pi}$ and $p\parens{z}= \arctan\parens{z}/\pi + 0.5$.  There is no particular reason to choose ${\alpha}_T,\beT$: other estimators such as nonmonotonic versions of ${\alpha}_T,\beT$ or indeed the Jackknife are other possibilities.}\joris{I'd like to dump these paragraphs, but still write up the arctan transform somewhere.}
}

\begin{thm}\label{thm:bs asymmetric not smooth}
    Under the assumptions of \cref{thm:ebreve}, if $G,G_c$ are estimated using different data and $G_T$ is the empirical distribution function of bids of bidder one then
    \[
    \sqrt{T}\parens{\BShat-\BS} \convd N\parens{0,\sV_\BS^a},
    \]	
 where 
 \< \label{eq:bs asymmetric variance}
 \sV_\BS^a = 
 \uint{
     \uint[p^*]{
         \sparens[\Big]{
             {\Gamma}_1\parens{p} {\Gamma}_1\parens{p^*}  H^*\cparens{Q_c\parens{p},Q_c\parens{p^*}}
             +
             {\Gamma}_2\parens{p} {\Gamma}_2\parens{p^*} H_1^*\cparens{Q_c\parens{p},Q_c\parens{p^*}}}
 }},
 \>
 with
\(
H_1^*\parens{q,q^*}  = G\cparens{\min\parens{q,q^*}} - G\parens{q}G\parens{q^*}
\),
\(
{\Gamma}_2\parens{p}  = {\alpha}'\parens{p}p 
\),
and
\(
{\Gamma}_1\parens{p} = Q_c''\parens{p} p^2 f_p\parens{p} + Q_c'\parens{p} \cparens{ p^2 f_p'\parens{p} + 4pf_p\parens{p}}
\). 
Under \cref{eq:Hstar simple}, the asymptotic variance becomes 
\< \label{eq:bs efficiency bound}
 \Var \frac{G_c^2\parens{b}}{g_c\parens{b}} + \Var\parens[\bigg]{ \frac{G_c^2\parens{b_c}g\parens{b_c}}{g_c^2\parens{b_c}} + 2 \int_0^{b_c} \frac{G_c\parens{t} g\parens{t}}{g_c\parens{t}} \dif t},
\> 
which is the semiparametric efficiency bound for estimators of $\BS$ which only use bids and maximum rival bids for estimation.
\qed
\end{thm}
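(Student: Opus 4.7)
The plan is to write
\[
\sqrt{T}(\BShat - \BS) = \sqrt{T}\int A_T\,d(F_{pT} - F_p) + \sqrt{T}\int (A_T - A)\,dF_p,
\]
with $A(p) = \alpha(p)p - e(p)$, and linearize both pieces in terms of the independent empirical processes $\sqrt{T}(G_T - G)\circ Q_c$ and $\sqrt{T}[G_{cT}\circ Q_c - \mathrm{id}]$. Independence comes from the assumption that $G$ and $G_c$ are estimated from different data, so the resulting asymptotic variances add to give $\sV_\BS^a$.

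For the first piece, uniform consistency $\sup_p |A_T - A| = o_p(1)$ from \cref{thm:ebreve} replaces $A_T$ by $A$ at $o_p(T^{-1/2})$. Integration by parts plus the delta-method expansion $F_{pT}(p) - F_p(p) = [G_T - G](Q_c(p)) - f_p(p)[G_{cT}(Q_c(p)) - p] + o_p(T^{-1/2})$ (using $F_p = G\circ Q_c$ and $f_p = (g/g_c)\circ Q_c$) yields a linear functional with weight $A'(p) = \alpha'(p)p = \Gamma_2(p)$ against both empirical processes. For the second piece, $(\beT - e)' = \alpha_T - \alpha$ and integration by parts give
\[
\int(A_T - A)\,dF_p = f_p(1)\bigl(\beT(1) - e(1)\bigr) - \int(\beT - e)(p)\bigl[2f_p(p) + pf_p'(p)\bigr]\,dp,
\]
where the lower boundary term vanishes since $\beT(0) - e(0) = 0$ and $pf_p(p) \to 0$, and the upper boundary term is $o_p(T^{-1/2})$ because $\beT(1) = e_T(1) = Q_{cT}(1)$ while $Q_{cT}(1) - Q_c(1)$ is an extreme-order-statistic gap of order $O_p(1/T)$. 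Plugging in the uniform linearization $\sqrt{T}(\beT - e)(p) = -\zeta(p)\sqrt{T}[G_{cT}(Q_c(p)) - p] + o_p(1)$ (which falls out of the proof of \cref{thm:ebreve} via Hadamard differentiability of the GCM operator at a strictly convex $e$, composed with the standard quantile-process linearization) converts this piece into $\int \zeta(p)[2f_p(p) + pf_p'(p)]\sqrt{T}[G_{cT}(Q_c(p)) - p]\,dp + o_p(1)$. Adding the two $G_{cT}$-side contributions and expanding with $A(p) = Q_c'(p)p^2$ shows $A'(p)f_p(p) + \zeta(p)[2f_p(p) + pf_p'(p)] = Q_c''(p)p^2f_p(p) + Q_c'(p)[p^2f_p'(p) + 4pf_p(p)] = \Gamma_1(p)$, so weak convergence of both empirical processes delivers $\sV_\BS^a$ as the asymptotic variance.

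For the efficiency claim under \cref{eq:Hstar simple}, I would first exploit $\alpha(G_c(b)) = b + G_c(b)/g_c(b)$ to rewrite $\BS = \int G_c^2(b)/g_c(b)\,dG(b)$. Independence of $B_1$ and $B_c$ splits the tangent space into scores for $G$ and for $g_c$; a standard pathwise-derivative computation returns the two variance components in \cref{eq:bs efficiency bound}. Matching $\sV_\BS^a$ to this bound follows by changing variables $b = Q_c(p)$ in both double integrals and carrying out one more integration by parts inside the $\Gamma_1$ term to absorb the $Q_c''$ factor. The main obstacle is the uniform validity of the GCM linearization strongly enough to integrate against $[2f_p(p) + pf_p'(p)]$, which can be singular at zero (cf.\ the symmetric case $f_p(p) \propto p^{(2-n)/(n-1)}$); I would address this by truncating at $p = \varepsilon_T \downarrow 0$ and controlling the residual using integrability of $\zeta(p)[2f_p(p) + pf_p'(p)]$ together with modulus-of-continuity bounds for $\sqrt{T}(\beT - e)$ that are implicit in the proof of \cref{thm:ebreve}.
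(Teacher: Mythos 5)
Your route is essentially the paper's: the same decomposition of $\sqrt{T}(\BShat-\BS)$ into a piece $\int (A_T-A)\,dF_p$, a piece $\int A\,d(F_{pT}-F_p)$, and cross terms; the same integrations by parts with the same treatment of the boundary terms; the same linearization of $F_{pT}-F_p$ via \cref{thm:Fp}; the same algebra collapsing the two $G_{cT}$--side weights into $\Gamma_1$; the same use of independence of $G_T$ and $Q_{cT}$ to add the two variance components; and the same pathwise--derivative argument for the efficiency bound.

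The one genuine gap is your handling of the cross term $\int (A_T-A)\,d(F_{pT}-F_p)$. You assert that $\sup_p\lvert A_T-A\rvert=o_p(1)$ ``replaces $A_T$ by $A$ at $o_p(T^{-1/2})$.'' This fails twice over. First, $d(F_{pT}-F_p)$ is the difference of two distribution functions, so its total variation is $O_p(1)$, not $O_p(T^{-1/2})$; sup--norm consistency alone therefore only bounds this term by $o_p(1)$, which is an order of magnitude short. The correct tool is asymptotic equicontinuity of the process $\sqrt{T}(F_{pT}-F_p)$ over a Donsker class containing $\alpha_T$, combined with $L_2(F_p)$--consistency of $\alpha_T$ --- this is exactly how the paper proceeds, invoking van der Vaart's lemma 19.24 for the class of monotone step functions. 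Second, the premise itself is false at the right boundary: \cref{thm:ebreve} delivers uniform consistency of $\alpha_T$ only on compact subsets of $(0,1)$, and at $p=1$ one can only show $\alpha_T(1)=O_p(1)$ (\cref{lem:alphaone}); this is precisely why the paper needs the separate \cref{lem:bs asymmetric not smooth} establishing $\int_0^1\{\alpha_T(p)-\alpha(p)\}^2 f_p(p)\,dp=o_p(1)$ before the Donsker argument can be applied. Your $\varepsilon_T$--truncation for the possible singularity of $f_p'$ at zero is sensible (and arguably more explicit than the paper), but it does not address this boundary/equicontinuity issue, which is where the real work in the paper's proof lies.
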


The asymptotic variance in \cref{eq:bs asymmetric variance} is intimidating but it simplifies considerably in an important special case, as \cref{eq:bs efficiency bound} illustrates.  The fact that our estimator achieves the semiparametric efficiency bound should come as no surprise since our estimator is asymptotically linear and imposing shape restrictions is well--known not to help in reducing the asymptotic variance in many cases.\footnote{See \citet[page 106]{newey1990semiparametric} for a discussion and \citet{tripathi2000local} for results on the semiparametric efficiency bound subject to shape restrictions in the partially linear model of \citet{robinson1988root}.}

Note that the semiparametric efficiency bound is defined only relative to the amount of information available.  For instance, if one uses all bids instead of only the maximum rival bid then \cref{eq:bs asymmetric variance} is less than \cref{eq:bs efficiency bound} but still achieves the semiparametric efficiency bound.  We do not show this.
Nevertheless, it is reassuring that no other regular estimator exists with a smaller asymptotic variance under the same assumptions.

It is not immediately obvious that the variance in \cref{thm:bs asymmetric not smooth} is worse than that in \cref{thm:bs symmetric not smooth}, albeit that the fact that a more efficient estimate of $G_c$ can be used should tip the balance.  We provide a comparison in the least favorable case for symmetry, namely that of two bidders.\footnote{With more than two bidders, the gain in efficiency of estimating $G_c$ is greater.}
\begin{ex} \label{ex:bs variance comparison}
    Suppose that $n=2$ bidders are symmetric and $F_v\parens{v}=v^{1/{\gamma}}$ for some ${\gamma}>0$.  Then $\bar b = 1/\parens{1+{\gamma}}$, $G\parens{b}=G_c\parens{b}=\cparens{\parens{1+{\gamma}}b}^{1/{\gamma}}$, $g\cparens{Q_c\parens{p}}=\parens{1+{\gamma}} p^{1-{\gamma}}/{\gamma}$. $g'\cparens{Q_c\parens{p}}=\parens{1+{\gamma}}^2\parens{1-{\gamma}} p^{1-2{\gamma}}/{\gamma}^2$, $Q\parens{p}=Q_c\parens{p}=p^{\gamma}/\parens{1+{\gamma}}$,
    $Q'=Q_c'={\gamma} p^{{\gamma}-1}/\parens{1+{\gamma}}$,
    $Q''=Q_c''={\gamma}\parens{{\gamma}-1}p^{{\gamma}-2}/\parens{1+{\gamma}}$, and $e\parens{p}=p^{{\gamma}+1}/\parens{1+{\gamma}}$.  Thus, from \cref{eq:bs symm asvar expressed in terms of G} it follows using some tedious calculus that
    \[
     \sV_\BS^\symm =  \frac{2{\gamma}^2}{\parens{1+{\gamma}}^2\parens{2+{\gamma}}^2\parens{3+2{\gamma}}},
    \]
    which equals $1/90$ for a uniform value distribution.  To obtain $\sV_\BS^a$ note that ${\Gamma}_2\parens{p}= {\gamma}p^{\gamma}$, ${\Gamma}_1\parens{p}={\gamma}\parens{3+{\gamma}}p^{\gamma}/\parens{1+{\gamma}}$, and $H^*\cparens{Q_c\parens{p},Q_c\parens{p^*}}=H_1\cparens{Q_c\parens{p},Q_c\parens{p^*}}=\min\parens{p,p^*} - pp^*$ which (after some tedious calculus) yields
    \[
    \sV_\BS^a = 
    \frac{2 {\gamma}^2 \parens{5+4{\gamma} + {\gamma}^2}}{\parens{1+{\gamma}}^2 \parens{2+{\gamma}}^2\parens{3+2{\gamma}}},
    \]
    which equals $1/9$ in the uniform $F_v$ case.
    The variance in the asymmetric case is $5+4{\gamma}+{\gamma}^2$ times as large as in the symmetric case.  Since assuming symmetry speeds up convergence of $\beT$ \emph{and} obviates the need to estimate $F_p$, it was clear that the ratio would exceed two.  But in the uniform $F_v$ case the factor is ten!\footnote{That's not a factorial.} \qed
\end{ex}

The conclusion from \cref{ex:bs variance comparison} must be that symmetry should be imposed whenever reasonable. If the researcher is not willing to assume symmetry and pool bids in estimating $e_T$, the minimum relative entropy estimator for $f_p$ may close part of the gap between $\sV_\BS^a$ and $\sV_\BS^\symm$.

As it turns out, smoothing does not improve the asymptotic distribution.  Too much smoothing can introduce an asymptotic bias and slow down convergence.  The most important consideration is that the implied estimator of $e$ converges (after norming and scaling) to the same Gaussian limit process as $\beT$ which for the smoothed estimator simply requires that $h \to 0$ fast enough as $T\to\infty$.  We state the theorem for $\bar {\alpha}_{T{\psi}}$ but the result applies with minor modifications to any estimators which satisfy the aforementioned desiderata.
\begin{thm} \label{thm:bs asymmetric smoothed}  Suppose that the assumptions of \cref{thm:ebarpsi} are satisfied.  Then the same limit distribution obtains if one replaces ${\alpha}_T,\beT$ in \cref{thm:bs asymmetric not smooth} with $\bar {\alpha}_{T{\psi}},\bar e_{T{\psi}}$ and chooses a bandwidth $h$ which tends to zero faster than $T^{-1/4}$.  \qed
\end{thm}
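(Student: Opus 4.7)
My plan is to show that the smoothed plug-in estimator $\BShat$ built from $\bar {\alpha}_{T{\psi}}$ and $\ebarpsi$ has the same asymptotic distribution as the unsmoothed version in \cref{thm:bs asymmetric not smooth}. The strategy is to establish joint weak convergence $\bigl(\sqrt{T}(\ebarpsi - e),\,\sqrt{T}(F_{pT} - F_p)\bigr) \convw (\G, \G_p)$ to the same Gaussian pair as for $(\beT, F_{pT})$, and then invoke the continuous mapping theorem on the functional representation of $\BShat$ used to prove \cref{thm:bs asymmetric not smooth}.

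First I would decompose $\ebarpsi(p) - e(p) = \sparens{\ebarpsi(p) - \Exp \ebarpsi(p)} + \sparens{\Exp \ebarpsi(p) - e(p)}$. The deterministic bias term is $O(h^2)$ uniformly on $[0,1]$: on the interior this is the standard second-order kernel bias, and the boundary kernel from \cref{lem:boundary kernel} preserves this rate up to the endpoints because $e$ is thrice continuously differentiable under \cref{ass:Qc differentiable}. Hence under $h = o(T^{-1/4})$ one has $\sqrt{T}\sparens{\Exp \ebarpsi - e} = o(1)$ uniformly. For the centered stochastic term, Fubini applied to $\ebarpsi(p) = \int_0^p \bar {\alpha}_{T{\psi}}(u)\dif u$ together with the definition \cref{eq:alpha psi prime inside} expresses $\ebarpsi - \Exp \ebarpsi$ as a linear smoothing operator $\sL_{{\psi}h}$ acting on $\beT - \Exp \beT$. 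Since $\sL_{{\psi}h}$ tends to the identity on continuous functions as $h \to 0$ and $\sqrt{T}(\beT - e) \convw \G$ on $C[0,1]$ by \cref{thm:ebreve}, the continuous mapping theorem delivers $\sqrt{T}(\ebarpsi - e) \convw \G$. Joint convergence with $\sqrt{T}(F_{pT} - F_p)$ is then immediate because $\sL_{{\psi}h}$ acts only on the first coordinate and the joint law of the linearizations is unchanged.

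Second, using the representation
\[
\BShat = \int_0^1 \cparens[\big]{\bar {\alpha}_{T{\psi}}(p) p - \ebarpsi(p)}\dif F_{pT}(p),
\]
I would replay the argument in the proof of \cref{thm:bs asymmetric not smooth}: integration by parts (exploiting that $\bar {\alpha}_{T{\psi}}$ is the left-derivative of $\ebarpsi$) rewrites $\BShat$ as a Hadamard-differentiable functional of $\ebarpsi$ and $F_{pT}$ alone, with Hadamard derivative at $(e, F_p)$ identical to the one obtained there for $(\beT, F_{pT})$. The functional delta method then transports the joint process limit to the scalar limit and yields the same asymptotic variance $\sV_\BS^a$.

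The main obstacle is the claim that $\sL_{{\psi}h}$ acts as the identity in the limit on the tight process $\sqrt{T}(\beT - e)$. Near the boundaries this is delicate because $k_{{\psi}h}$ depends on $p$ and is not a translation-invariant convolution kernel, so off-the-shelf results on kernel smoothing of tight processes do not immediately apply. I would handle this by exploiting that \cref{thm:ebreve} gives weak convergence of $\beT$ on all of $[0,1]$ (not just the interior), combined with the uniform boundedness of the $L^1$ norm of $k_{{\psi}h}$ and its concentration near the origin as $h \downarrow 0$, to obtain the uniform equicontinuity required for the continuous mapping step.
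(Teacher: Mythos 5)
Your proposal is correct in substance and rests on the same two pillars as the paper's proof: the $O(h^2)$ smoothing bias is annihilated at the $\sqrt{T}$ scale because $h=o(T^{-1/4})$, and the stochastic part of the smoothed estimator is asymptotically equivalent to the unsmoothed $\beT-e$, after which the argument of \cref{thm:bs asymmetric not smooth} is replayed verbatim (including the separate treatment of the cross term $\uint[\G_{Tp}\parens{p}]{\cparens{\bar{\alpha}_{T\psi}\parens{p}-{\alpha}\parens{p}}}$, which is not covered by Hadamard differentiability alone and still needs the $L^2$--consistency argument of \cref{lem:bs asymmetric not smooth}). Where you diverge is in how the "smoothing operator tends to the identity" step is executed. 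You propose to first establish full process convergence $\sqrt{T}\parens{\bar e_{T\psi}-e}\convw\G$ on all of $\sparens{0,1}$ and then apply the continuous mapping theorem; as you yourself note, the boundary behaviour of the $p$--dependent kernel $k_{\psi h}\parens{\cdot\bkdelim p}$ makes this delicate, and indeed nothing of this strength is proved in the paper (\cref{thm:ehat} gives process convergence only on compact subsets of $\parens{0,1}$, and \cref{thm:ebarpsi} is a pointwise result at the slower $\sqrt{Th}$ rate). The paper sidesteps the issue entirely: after integration by parts it writes the relevant term as a double integral, interchanges the order of integration so that the kernel is integrated against the smooth deterministic weight $pf_p'\parens{p}$ rather than against the process, and thereby reduces everything to the already--established weak convergence of $\sqrt{T}\parens{\beT-e}$ on $\sparens{0,1}$ from \cref{thm:ebreve}. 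Your route, if the boundary--uniform equicontinuity claim is made rigorous, buys a stronger and reusable intermediate result; the paper's route is cheaper because it only ever needs the integrated (averaged) version of the smoothed process, which is exactly why no new boundary analysis is required. I would recommend either adopting the Fubini device or explicitly proving the $\sparens{0,1}$--uniform identity--approximation property of $\sL_{\psi h}$; as written, that step is the one genuine gap in your argument.
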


Note that the bandwidth can tend to zero arbitrarily fast since a bandwidth of zero simply takes us back to the unsmoothed estimator.  This is in sharp contrast to other approaches, e.g.\ one based on the estimator of the inverse bid function in GPV, where taking the bandwidth to zero \emph{before} taking the sample size to infinity would blow up the asymptotic variance: letting $h \downarrow 0$ with GPV does not produce a consistent estimator of $g_c,g,{\alpha}$.  Consequently, it is not clear a priori that using a second order kernel and undersmoothing GPV would produce a consistent estimator of $\BS$, let alone a $\sqrt{T}$--consistent estimator.  We have no theoretical results on this, though our simulation results suggest that letting the bandwidth go to zero in a GPV--based estimator of $\BS$ would break $\sqrt{T}$--consistency.

\subsection{Mean of the value distribution}

The mean of the value distribution of bidder one is
\begin{equation}
\label{eq:MV}
 \MV= \int_0^{\bar v} v f_v\parens{v} \dif v = \uint{ {\alpha}\parens{p} \fpp}
\end{equation}
There are several ways of estimating $\MV$. For instance, one can estimate it using estimates of the bid distributions directly since
\[
\MV = \int_{0}^{\bar b} \parens[\Big]{b + \frac{G_{c}\parens{b}}{g_{c}\parens{b}}} \dif G_{1}\parens{b}\,,
\]
which would be most natural if one used the GPV machinery. However, we will present results for
\[
\MVhat = \int_{0}^{1}\alpha_{T}\parens{p} \dif F_{pT}\parens{p}\,.
\]
The asymptotics for $\MVhat$ are similar to those for $\BShat$ and the proof is therefore mercifully short.

%

\begin{thm} \label{thm:mean v symmetric}
    Under the assumptions of \cref{thm:ebreve},\joris{perhaps we need a condition about the behavior of $H$ near zero.}
\[
\sqrt{T} \uint[F_p\parens{p}]{ \cparens{{\alpha}_T\parens{p}-{\alpha}\parens{p}}} \convd N\parens{0,\sV_\MV^\symm},
\]	
where
\[
 \sV_\MV^\symm = \frac{\parens{n-2}^2}{\parens{n-1}^4} \uint{
\uint[p^*]{
H\parens{p,p^*} \parens{pp^*}^{\frac{3-2n}{n-1}} 
} 
},
\]
which under \cref{eq:Hstar symmetric} simplifies to	
\[
 \frac{\parens{n-2}^2}{\parens{n-1}^2n} \int_0^1 \int_0^1
 Q'\parens{p} Q'\parens{p^*} \cparens{ \min\parens{p,p^*}-pp^*} \dif p^* \dif p,  
\]
which can alternatively be expressed as
\(
\sparens{\parens{n-2}^2 /\cparens{\parens{n-1}^2n}} \int_0^{\bar b} \int_0^{\bar b}
\sparens{G\cparens{\min\parens{b,b^*}} - G\parens{b} G\parens{b^*}} \dif b^* \dif b. 
\)
\qed
\end{thm}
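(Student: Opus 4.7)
The plan is to reduce the claim to an application of \cref{thm:ebreve} via Stieltjes integration by parts. Because $\beT$ is continuous, piecewise linear on $\sparens{0,1}$ with left-derivative ${\alpha}_T$ and $\beT(0)=0=e(0)$, one has
\[
\int_0^1 \cparens[\big]{{\alpha}_T(p) - {\alpha}(p)} f_p(p) \dif p = \cparens[\big]{\beT(1) - e(1)} f_p(1) - \int_0^1 \cparens[\big]{\beT(p) - e(p)} f_p'(p) \dif p,
\]
where the boundary contribution at $p=0$ vanishes because $\beT(p) \leq {\alpha}_T(1)\,p$ and $p f_p(p) = p^{1/(n-1)}/(n-1) \to 0$. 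The boundary term at $p=1$ is $o_p(T^{-1/2})$: when $G_{cT}$ is the empirical cdf of the maximum rival bids, $\beT(1)=Q_{cT}(1)$ is the sample maximum, whose distance from $e(1)=\bar b$ is $O_p(1/T)$ since the density of the rival bid distribution is bounded away from zero near $\bar b$ under \cref{ass:value distributions}.

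Next, I would apply the continuous mapping theorem to the linear functional $g \mapsto -\int_0^1 g(p) f_p'(p)\dif p$ at $\sqrt T(\beT - e) \convw \G$ from \cref{thm:ebreve}, yielding
\[
\sqrt T \int_0^1 \cparens[\big]{{\alpha}_T(p) - {\alpha}(p)} f_p(p) \dif p \convd -\int_0^1 \G(p) f_p'(p) \dif p,
\]
a centered Gaussian with variance $\int_0^1\int_0^1 H(p,p^*) f_p'(p) f_p'(p^*) \dif p \dif p^*$. Substituting $f_p'(p) = -(n-2)/(n-1)^2 \cdot p^{(3-2n)/(n-1)}$ pulls $(n-2)^2/(n-1)^4$ outside the double integral and produces the first displayed form of $\sV_\MV^\symm$. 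The second form then follows by inserting the fully symmetric $H$ of \cref{eq:H symmetric simplification} and applying the change of variables $u = p^{1/(n-1)}$; the third comes from the further substitution $b=Q(u)$.

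The main obstacle is justifying continuity of this linear functional despite the pole of $f_p'$ at zero, i.e.\ establishing the integrability condition $\int_0^1\int_0^1 |H(p,p^*) f_p'(p) f_p'(p^*)| \dif p \dif p^* < \infty$, which is exactly the finiteness of $\sV_\MV^\symm$. With the ``simple'' $H$ of \cref{eq:Hstar simple} one has $H(p,p)= \zeta^2(p)\parens{p-p^2}$, and integrability can fail (for instance with uniform values and $n \geq 3$) unless symmetry is fully exploited; the symmetric $H$ of \cref{eq:H symmetric simplification} carries an extra factor of $pp^*$ that secures integrability. This is precisely the author's flagged concern about the behavior of $H$ near zero. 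Once integrability is granted, truncating the domain to $\sparens{{\epsilon},1}$ together with the fact that $\Var \G(p) = H(p,p) \to 0$ as $p \downarrow 0$ closes the continuity argument via Slutsky.
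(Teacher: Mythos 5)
Your proposal is correct and follows essentially the same route as the paper's proof: integration by parts to transfer the statement from ${\alpha}_T-{\alpha}$ to $\beT-e$, disposal of the boundary term at $p=1$ via superconsistency of $\beT\parens{1}$, and an application of \cref{thm:ebreve} together with the continuous mapping theorem, followed by substitution of $f_p'$. Your added care about the $p=0$ boundary term and the integrability of $H\parens{p,p^*}f_p'\parens{p}f_p'\parens{p^*}$ near zero goes beyond the paper's terse argument and directly addresses the authors' own marginal caveat, but it does not change the structure of the proof.
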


Note that the asymptotic variance in \cref{thm:mean v symmetric} equals zero if $n=2$.  This is intuitive since then the mean of the value distribution is simply $\bar b$, which can be estimated super--consistently.    Naturally, some of these properties evaporate once we examine the asymmetric case.

\begin{thm}\label{thm:mean v}
 Under the assumptions of \cref{thm:ebreve},\joris{do we need more?}
\[
 \sqrt{T} \parens{\MVhat-\MV} \convd N\parens{0,\sV_\MV^a}, 
\]
where $\sV_\MV^a$ is like $\sV_\BS^a$ with ${\Gamma}_1,{\Gamma}_2$ divided by $p$. \qed
\end{thm}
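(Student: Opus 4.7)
The plan is to follow the template of \cref{thm:bs asymmetric not smooth}, exploiting the algebraic fact that replacing the integrand $A(p) = \alpha(p) p - e(p)$ of $\BS$ by $\alpha(p)$ for $\MV$ removes precisely one factor of $p$ at each step of the expansion. I start from the decomposition
\[
\MVhat - \MV = \int_0^1 (\alpha_T - \alpha) \dif F_p + \int_0^1 \alpha \dif (F_{pT} - F_p) + \int_0^1 (\alpha_T - \alpha) \dif (F_{pT} - F_p),
\]
and show that the cross term is $o_p(T^{-1/2})$ by integration by parts, using that $F_{pT} - F_p$ vanishes at both endpoints and that $\sup_p |\beT(p) - p Q_{cT}(p)| = O_p(T^{-2/3})$ by standard convex regression bounds.

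For the second summand, integration by parts (with vanishing boundary) yields $-\int_0^1 \alpha'(p)(F_{pT} - F_p)(p) \dif p$, whose integrand $\alpha'(p)$ is precisely $\Gamma_2(p)/p$. For the first summand, I use $\alpha_T - \alpha = (\beT - e)'$ and integrate by parts to obtain $(\beT(1) - e(1)) f_p(1) - \int_0^1 (\beT - e)(p) f_p'(p) \dif p$. The boundary contribution at $p=1$ can be shown to have asymptotically vanishing variance because one readily checks $H^*\cparens{Q_c(1), Q_c(1)} = 0$ in each of \cref{eq:Hstar simple,eq:Hstar symmetric,eq:Hstar asymmetric}.

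Combining the pieces requires the linearization $F_{pT} - F_p = (G_T - G) \circ Q_c + (g \circ Q_c)\cdot (Q_{cT} - Q_c) + o_p(T^{-1/2})$ together with the asymptotic equivalence $\int \psi(p)(\beT - e)(p) \dif p = \int p\, \psi(p)(Q_{cT} - Q_c)(p)\dif p + o_p(T^{-1/2})$ for smooth $\psi$. Under the hypothesis that $G$ and $G_c$ are estimated from disjoint data, the $G$-driven and $G_c$-driven components of $\sqrt{T}(\MVhat - \MV)$ are asymptotically independent. Straightforward algebra using $\alpha = Q_c'(p)p + Q_c(p)$ and $\alpha' = Q_c''(p)p + 2Q_c'(p)$ identifies the effective kernel against $H^*\cparens{Q_c(p),Q_c(p^*)}$ as $\Gamma_1(p)/p$ and against $H_1^*\cparens{Q_c(p),Q_c(p^*)}$ as $\Gamma_2(p)/p$; the CLT from \cref{ass:Gaussian process} and \cref{thm:Fp} then delivers the stated Gaussian limit.

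The main obstacle is the bookkeeping in the last paragraph: the $(\beT - e)$-piece and the $(Q_{cT} - Q_c)$-piece of $F_{pT} - F_p$ both trace back to $G_{cT}$ and so must be aggregated into a single linear functional of the $G_{cT}$-process \emph{before} taking variances; only after that aggregation do the apparent discrepancies in the coefficients of $f_p$, $f_p'$, $Q_c'$, and $Q_c''$ combine to give the clean form $\Gamma_1/p$. Verifying the vanishing of the boundary term $(\beT(1)-e(1))f_p(1)$, despite $\beT(1) - e(1)$ being $O_p(T^{-1/2})$ pointwise, is the other slightly delicate step and is what makes the cases-by-case check of $H^*$ at the upper endpoint essential.
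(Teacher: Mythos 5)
Your proposal takes essentially the same route as the paper: the paper's proof of \cref{thm:mean v} is a one--line deferral to the proof of \cref{thm:bs asymmetric not smooth}, and your three--way decomposition of $\MVhat-\MV$, the integration by parts converting $\int_0^1(\alpha_T-\alpha)\dif F_p$ into $\{\beT(1)-e(1)\}f_p(1)-\int_0^1(\beT-e)f_p'$, the linearization of $F_{pT}-F_p$ from \cref{thm:Fp}, and the insistence on aggregating the two $G_{cT}$--driven pieces into a single linear functional before computing variances are exactly the ``minor adjustments'' of that proof with $A(p)=\alpha(p)p-e(p)$ replaced by $\alpha(p)$. Your handling of the boundary term at $p=1$ (via the degeneracy of the limit process at the endpoint rather than the rate--$T$ convergence of the top order statistic) is a cosmetic difference.

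Two concrete issues remain. First, your argument for the cross term $\int_0^1(\alpha_T-\alpha)\dif(F_{pT}-F_p)$ does not close as stated: integration by parts leaves $-\int_0^1(F_{pT}-F_p)\dif(\alpha_T-\alpha)$, and bounding this by $\norm{F_{pT}-F_p}_\infty$ times the total variation of $\alpha_T-\alpha$ yields only $O_p(T^{-1/2})$, not the required $o_p(T^{-1/2})$; the sup--norm bound on $\beT-e_T$ does not supply the missing cancellation. The paper obtains negligibility from asymptotic equicontinuity of $\G_{Tp}=\sqrt{T}(F_{pT}-F_p)$ (van der Vaart's Lemma 19.24) combined with $\int_0^1(\alpha_T-\alpha)^2 f_p\dif p=o_p(1)$ from \cref{lem:bs asymmetric not smooth}, and you should invoke that pair of results instead. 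Second, check the final bookkeeping: carrying out your aggregation, the coefficient on the $G_{cT}$--process is $pQ_c''(p)f_p(p)+Q_c'(p)\{pf_p'(p)+2f_p(p)\}$, whereas $\Gamma_1(p)/p=pQ_c''(p)f_p(p)+Q_c'(p)\{pf_p'(p)+4f_p(p)\}$; the two differ by $2Q_c'(p)f_p(p)$ because $\MV=\int_0^1\alpha\dif F_p$ is not obtained from $\BS=\int_0^1(\alpha p-e)\dif F_p$ by dividing the integrand by $p$ --- the $-e$ term in $A$ contributes to $\Gamma_1$ but has no counterpart for $\MV$. So ``straightforward algebra'' delivers the kernel $\Gamma_2/p$ against $H_1^*$ as you say, but not literally $\Gamma_1/p$ against $H^*$; you should either report the kernel you actually derive or note that the ``divided by $p$'' description must be read loosely for $\Gamma_1$.
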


Note that the symmetry assumption is also easy to exploit without using our machinery, because $G_c(b)/g_c(b)= G(b)/\cparens{(n-1)g(b)}$ and 
\[
\MV = \int_0^{\bar b} b \dif G(b) + \int_0^{\bar b} \frac{G(b)}{n-1} \dif b\ =
\frac{\bar b + \parens{n-2} \Exp b}{n-1}. 
\]
Since the upper bound of the bid distribution can be estimated at a rate faster than $\sqrt{T}$, estimation of $\MV$ by replacing $\bar b$ and $\Exp b$ with their sample counterparts would work, also.  So for the purpose of estimating the mean of the value distribution in the symmetric case, our methodology is probably overkill.

%

\subsection{Profit}

Estimating the seller's profit is a trivial exercise (if the seller's valuation is zero as we assume throughout) since profit is simply the sum of the winning bids.  A more interesting object is profit as a function of a hypothetical reserve price $r$, $\PR\parens{r}$, or number of bidders $\PR^*\parens{n}$.

In the asymmetric case, this is a complicated endeavor.  Indeed, using the machinery described earlier in the paper we can recover the value distributions for each bidder.  However, there is generally no analytical solution for the bid function in the asymmetric case like there is in the symmetric case. We must therefore numerically solve for the counterfactual equilibrium bid distributions in order to compute the counterfactual revenue. Because this method does not depart from the existing literature,  we limit our discussion to the symmetric case, where we do have suggestions for how to exploit the symmetry assumption in the counterfactual analysis.  Since the theoretical results here are similar to those obtained earlier in terms of method of proof, we state the results in the text instead of enunciating them.

\subsubsection{Counterfactual number of bidders}
We have repeatedly made use of the fact that $F_{p}$ is a known function of $n$, the number of bidders.  $F_p$ is still a known function of any counterfactual number of bidders $m$, possibly different from $n$. In addition, the counterfactual expected payment function is a known function of the factual expected payment function if the distribution of valuations is held fixed. Specifically, the equilibrium $\alpha$ for a given number of bidders $n$ satisfies $\alpha\parens{\tau^{n-1};n} = Q_{v}(\tau)$ for all $n,{\tau}$,
which implies that for ${\xi}={\xi}_{mn}=\parens{n-1}/\parens{m-1}$, $\alpha(p;m) = \alpha\parens{p^{\xi};n}$ for all $p\in[0,1]$ and $n,m\geq 2$.\footnote{Note that the bid distribution changes with $n$ but the value distribution remains constant.
Indeed,	
$Q_v\parens{{\tau}} = Q\parens{{\tau};n} + {\tau} Q'\parens{{\tau};n}/ \parens{n-1}$ defines $Q\parens{\cdot;n}$ as an implicit function of $Q_v$, indeed
$Q\parens{{\tau};n}= \int_0^1 Q_v\parens{t^{1/\parens{n-1}}{\tau}} \dif t$.
}

The counterfactual expected payment function is then
\(
e(p;m) = \int_{0}^{p}\alpha\parens{t^{\xi}} \dif t
\)
and the expected revenue is given by
\begin{multline*}
\PR^*\parens{m} = \int_{0}^{1}e(p;m) \dif F_{p}(p;m) 
=
\int_0^1 \int_0^p {\alpha}\parens{t^{\xi}} \dif t \dif F_p\parens{p;m}
=
\int_0^1 {\alpha}\parens{p^{\xi}} \cparens{1 - F_p\parens{p;m}} \dif p 
=
\\
\uint{ \parens[\Big]{\frac{{\chi}_2+1}{{\xi}} p^{{\chi}_2} - \frac{{\chi}_1+1}{{\xi}} p^{{\chi}_1}} e\parens{p}},
\end{multline*}
where ${\chi}_j= \parens{m-2n+j}/\parens{n-1}$.

Following the previous examples, a plug--in estimator for $\PR^*\parens{m}$ in which we substitute an estimate of $e$ and the known $F_{p}(\cdot;m)$ converges at a $\sqrt{T}$--rate.  Indeed, the limit distribution is a mean zero normal with variance
\[
 \uint{ \uint[p^*]{
  H\parens{p,p^*}	\parens[\Big]{\frac{{\chi}_2+1}{{\xi}} p^{{\chi}_2} - \frac{{\chi}_1+1}{{\xi}} p^{{\chi}_1}}\parens[\Big]{\frac{{\chi}_2+1}{{\xi}} {p^*}^{{\chi}_2} - \frac{{\chi}_1+1}{{\xi}} {p^*}^{{\chi}_1}}
 }
}.
\]

\subsubsection{Counterfactual reserve prices}

By (1) in \citet{jun2019information}, we have
\[
\PR\parens{r} = \bar v - r F_v^n\parens{r} + \int_r^{\bar v} \cparens[\big]{F_v^n\parens{v} - n F_v^{n-1}\parens{v}} \dif v.
\]
Using the substitution $p = F_v \parens{v}$ the problem then entails finding $p^*=F_v^{n-1}\parens{r}$ for which
\< \label{eq:profit condition}
\PR\cparens{{\alpha}\parens{p^*}} = n\parens[\bigg]{ p^*{\alpha}\parens{p^*}\parens[\big]{1-p^{*1/(n-1)}} + \frac1{n-1} \int_{p^*}^1\int_{p^*}^p {\alpha}(u) \dif u \; p^{\frac{2-n}{n-1}} \dif p}
\>

It should be apparent from our earlier discussion that since ${\alpha}$ is estimated at a slower--than--parametric rate, the first right hand side term in \cref{eq:profit condition} is estimated at a rate less than $\sqrt{T}$ but the second right hand side term in \cref{eq:profit condition} can be estimated at the typical parametric rate.  The asymptotic distribution is hence determined by the estimation of $F_v\parens{r}$, which was already discussed in \cref{section:value distribution}. The choice of bandwidth should therefore be made with an eye toward the precise value or range of values of counterfactual reserve prices under consideration.

\drop{\grey{Importantly, we do not need to estimate the entire distribution of private valuations in order to estimate $\PR\cparens{{\alpha}\parens{p^*}}$ because the estimated counterfactual equilibrium revenue only depends on the primitives of the model through $p^*$ and $\alpha$. Thus, the optimal sequence of bandwidths would minimize the MSE of an estimate of $\alpha(p^*)$ as opposed to an asymptotic integrated criterion. In fact, if we pose the question in terms of the probability of no sale occurring instead of reserve prices, we could estimate the seller's expected revenue in an auction that does not result in a sale, say 5\% of the time, without estimating $F_v$ at any point. To do so, we would simply substitute $r^*=Q_v\parens{0.05^{1/n}}$ in \eqref{eq:profit condition}. Indeed, when auctioneers are directly interested in the trade-off between the probability of a no-sale and their expected revenue, this formulation of the question is especially appropriate.  For example, the US Forest Service's directive is to provide a steady supply of timber at competitive prices, which suggests that precisely this trade-off is central to their mission.}\joris{I did not fully understand this para..... needs attention}}


\drop{
\section{Comparison of the estimators}

\subsection{Asymptotic comparison}
\jorisinline{This bit is not specific to boundary corrections, so should be rewritten accordingly.  Not sure where best to stick this part.}
\karlinline{Took it out of metrics and stuck it here. Makes sense to me to have analytical comparison next to simulation evidence.}
\jorisinline{Should this be in the paper at all?  The main reason to have it in would be to make a comparison of estimators of the private values themselves, which would make sense.  In that case, perhaps we should rewrite it and stick it into section 6.}

The first question to ask is how our estimator of ${\alpha}$ compares to the corresponding (first stage) GPV estimator.  In other words, how well can we recover the private values using either method. The first stage of the GPV estimator for a given value $b$ is
\[
b + \frac{ \hat G_c\parens{b}}{\hat g_c\parens{b}}.
\]
To make the estimators comparable, we compare the performance of our estimator to that of
\[
Q_c\parens{p} + \frac{\hat G_c\cparens{Q_c\parens{p}}}{\hat g_C\cparens[\big]{Q_c\parens{p}}}.
\]
\Cref{ex:extensive comparison of {\alpha}} compares the asymptotic properties of both estimators for the special case of power distributions.  Although there is in fact symmetry in the design of the example, the example does not use such `knowledge.'  We focus on an asymptotic comparison away from the boundaries so as not to confound our comparison with differences in boundary correction schemes, though we do not mean to suggest that differences in behavior at the boundary are irrelevant.

\grey{The fact that a boundary kernel is being used is immaterial in the comparison since it is an asymptotic comparison away from the boundaries.   A limitation of \cref{ex:extensive comparison of {\alpha}}, then, is the fact that it ignores the potentially relevant differences in behavior at the boundaries.}

\begin{ex} \label{ex:extensive comparison of {\alpha}}
	Suppose that bidders are symmetric and $F_v\parens{v}=v^{\delta}$ for some ${\delta}>0$.  Then $G_c\parens{b}= \cparens{\parens{1+{\gamma}}b}^{1/{\gamma}}$, $e\parens{p}=p^{1+{\gamma}}/\parens{1+{\gamma}}$ for ${\gamma}=1/\cparens{\parens{n-1}{\delta}}$, such that $e'\parens{p}=p^{\gamma}$, $e''\parens{p}={\gamma} p^{{\gamma}-1}$, $e'''\parens{p}={\gamma} \parens{{\gamma}-1}p^{{\gamma}-2}$, and $g_c\parens{b}= \parens{1+{\gamma}}^{1/{\gamma}} b^{1/{\gamma}-1}/{\gamma}$, $g_c'\parens{b} = \parens{1+{\gamma}}^{1/{\gamma}}b^{1/{\gamma}-2} \parens{1/{\gamma}-1}/{\gamma}$, 
	$g_c''\parens{b}=\parens{1+{\gamma}}^{1/{\gamma}}b^{1/{\gamma}-3} \parens{1/{\gamma}-1}\parens{1/{\gamma}-2}/{\gamma}$.
	
	Suppose that ${\psi}'\parens{p}=p^{-{\lambda}}/c$ for some $0\leq {\lambda}\leq 1$.\footnote{The multiplicative constant $c$ doesn't matter since it can be absorbed into the bandwidth, but is included here to facilitate the comparison between estimators.}
	At fixed $0<p<1$, ${\zeta}\parens{p}={\gamma} p^{\gamma}/\parens{{\gamma}+1}$ such that the variance expressions for $\hat {\alpha}_{T{\psi}}$ and the GPV estimators are (without multiplying by $\sqrt{Th}$)
	\[
	\sV_{\psi}^*= \frac{1}{Th\sqrt{{\pi}}} \times \frac{1}{c}\parens[\Big]{\frac {\gamma}{1+{\gamma}}}^2 p^{2{\gamma}-{\lambda}}, \qquad
	\sV_\GPV^*=\frac{{\kappa}_2}{Th} \times \parens[\Big]{\frac {\gamma}{1+{\gamma}}}^3 p^{3{\gamma}-1},
	\]
	respectively. 
	The bias is\jorisr{should we compare this to that for $\hat {\alpha}$, also?}\karl{I think this could be helpful as a guide for when the extra computation might be helpful.}
	\[
	\bar\sB_{\psi}^*= \frac{h^2}{2} \times c^2{\gamma}\parens{{\gamma}+{\lambda}-1}p^{{\gamma}+2{\lambda}-2},
	\quad
	\sB_\GPV^* = -\frac{h^2}{2} \times \frac{\parens{1-{\gamma}^2}\parens{1-2{\gamma}}}{{\gamma}p^{\gamma}}.
	\]
	First, note that if a normal kernel is used in GPV then ${\kappa}_2=1/\sqrt{{\pi}}$, so that difference is immaterial.  Further, note that if we make the asymptotic variances equal by choosing ${\lambda}=1-{\gamma}$ and $c=\parens{1+{\gamma}}/{\gamma}$ then the asymptotic bias of our estimator is zero, but the GPV estimator is biased downward.  Unfortunately, we know neither the value distribution nor its parameters so choosing ${\psi}$ to make the bias zero would require a two--step procedure or a higher order kernel.
	
	Next, suppose that no transformation is taken, i.e.\ ${\lambda}=0$.  For $c= \parens{1+{\gamma}} p^{1-{\gamma}}/{\gamma}$, the variances are the same, but the bias of our estimator is larger.
	
	In view of the presence of $n-1$ in the denominator of the definition of ${\gamma}$, ${\lambda}=1$ is a more reasonable choice.  If ${\gamma}$ is close to zero then both the bias and variance of our estimator are close to zero, also.  The variance of the GPV estimator is all in this case, also, but the bias is large.
	\older{
		The two biases are equal if ${\gamma}=1$, e.g.\ if values are uniformly distributed and $n=2$.  However, ${\gamma}$ will typically be less than one, so for small $p$, the GPV bias will be larger.  For ${\gamma}<0.27568$, in this example the bias is less for our estimator than for GPV for all values of $p$, but that would require the indicated choice of ${\psi}$. 
		
		But that choice of ${\psi}$ is suboptimal from a mean square error perspective. Indeed, it can be shown that if ${\psi}\parens{p}=c p^{3/7}$ in the above example --- irrespective of the choice of ${\gamma}$ --- then the mean square error is proportional to $p^{2{\gamma}-3/7}$, which suggests that for ${\gamma}<3/14$ uniformity of the $T^{2/5}$ convergence rate will be absent as $p\to 0$.  The GPV estimator has this undesirable property for all values of $0<{\gamma}<1$ as the above bias formula demonstrates.
		
		For both estimators, we have another lever to use: the bandwidth.  Indeed, if we pick ${\psi}\parens{p}=\log p$ and $h=h\parens{p}\propto p^{-3/5}$ then the mean square error for our estimator will be proportional to $p^{2{\gamma}-2/5}$, which improves the range of bad ${\gamma}$ values a bit to ${\gamma} < 1/5$.  For GPV, choosing $h \propto p^{\parens{4{\gamma}-1}/3}$ will limit the range of `poorly behaved' ${\gamma}$ values to ${\gamma}<4/13$, but even that requires one to know ${\gamma}$. 
	}
	\qed
\end{ex}

The main take--away from \cref{ex:extensive comparison of {\alpha}} is that there is no natural ordering of the two estimators, i.e.\ there can be instances in which either estimator has better properties.

\thoughtinline{Not sure about all the inline todo--notes; we need to think about those.}

\jorisinline{Stick in a numerical example that compares the two estimators.  I'm thinking of something analogous to \texttt{alpha-performance.pdf} but with GPV in there, also.}

\jorisinline{Would it be worthwhile to work out results for estimated ${\psi}$?}

\older{
	\begin{ex}
		Suppose that bidders are symmetric and $F_v\parens{v}=v^{\delta}$ for some ${\delta}>0$.  Then $G_c\parens{b}= \cparens{\parens{1+{\gamma}}b}^{1/{\gamma}}$, $e\parens{p}=p^{1+{\gamma}}/\parens{1+{\gamma}}$ for ${\gamma}=1/\cparens{\parens{n-1}{\delta}}$, such that $e'\parens{p}=p^{\gamma}$, $e''\parens{p}={\gamma} p^{{\gamma}-1}$, $e'''\parens{p}={\gamma} \parens{{\gamma}-1}p^{{\gamma}-2}$, and $g_c\parens{b}= \parens{1+{\gamma}}^{1/{\gamma}} b^{1/{\gamma}-1}/{\gamma}$, $g_c'\parens{b} = \parens{1+{\gamma}}^{1/{\gamma}}b^{1/{\gamma}-2} \parens{1/{\gamma}-1}/{\gamma}$, 
		$g_c''\parens{b}=\parens{1+{\gamma}}^{1/{\gamma}}b^{1/{\gamma}-3} \parens{1/{\gamma}-1}\parens{1/{\gamma}-2}/{\gamma}$.
		
		Suppose that ${\psi}'\parens{p}=p^{-{\lambda}}/c$ for some $0\leq {\lambda}\leq 1$.\joris{multiplicative constant doesn't matter since it can be absorbed into the bandwidth}
		At fixed $p$, ${\zeta}\parens{p}={\gamma} p^{\gamma}/\parens{{\gamma}+1}$ such that the variance expressions for $\hat {\alpha}_{T{\psi}}$ and the GPV estimators are (without multiplying by $\sqrt{Th}$)
		\[
		\sV_{\psi}^*= \frac{{\kappa}_2}{Th} \times \frac{1}{c}\parens[\Big]{\frac {\gamma}{1+{\gamma}}}^2 p^{2{\gamma}-{\lambda}}, \qquad
		\sV_{\mathrm{GPV}}^*=\frac{{\kappa}_2}{Th} \times \parens[\Big]{\frac {\gamma}{1+{\gamma}}}^3 p^{3{\gamma}-1},
		\]
		respectively. 
		The bias is
		\[
		\sB_{\psi}^*= \frac{h^2}{2} \times c^2\parens{{\gamma}^2-{\gamma}+3{\lambda}{\gamma}+2{\lambda}^2-{\lambda}}p^{{\gamma}+2{\lambda}-2},
		\quad
		\sB_{\mathrm{GPV}}^* = -\frac{h^2}{2} \times \frac{\parens{1-{\gamma}^2}\parens{1-2{\gamma}}}{{\gamma}p^{\gamma}}.
		\]
		First suppose that ${\lambda}=1-{\gamma}$.
		%
		%
		\qed
	\end{ex}
}
}

\section{Monte Carlo simulations}
\label{sec:sims}
We provide a simulation study to compare the performance of our estimators. Our goal is not to crown a winner but to highlight systematic ways in which various methodological choices impact the bias and mean squared error of the estimator.



\subsection{Simulation parameters}
We parameterize bidder one's maximum competitor bid distribution as $G_{c}\parens{b} \propto \parens{\theta/b  + \gamma - \theta}^{-1/\theta}$ for $b\in\sparens{0,\bar b}$ with $\bar b = 2 \big/ \parens[\big]{1+\theta + \sqrt{4 \gamma +(\theta-1)^{2}}}$ and $\gamma,\,\theta>0$. Bidder one's inverse bid function is then $\beta^{-1}(b) = (1+\theta)b + (\gamma-\theta)b^{2}$.  Note that $\beta^{-1}$ is strictly increasing on $[0,\bar b]$, which implies convexity of the expected payment function $e(p) = \theta c p^{\theta + 1} \big/ \cparens[\big]{1 + c (\gamma-\theta)p^{\theta}}$, where $c$ is a constant that depends on $\gamma$ and $\theta$. 

The maximum competitor bid distribution is chosen such that the support of bidder one's valuations is $\sparens{0,1}$ regardless of the values of $\gamma$ and $\theta$. We can then fix bidder one's valuation distribution and independently vary the maximum competitor bid distribution to achieve various shapes of the inverse strategy function $\alpha$ and competitor bid density, which are the respective targets of estimators based on our approach or estimators based on the inverse bid function (IBF) like GPV.\joris{Didn't we commit to calling it the inverse strategy function somewhere else?  Also, sentence rewritten.}\karl{we don't mention IBF until next para} \Cref{fig:alphas,fig:competitor dens} plot these functions. If $\gamma=\theta$ then the competitor bid distribution is a power distribution and the inverse bid function is simply linear. As $\theta$ approaches zero, the competitor bid distribution approaches a truncated Fr\'echet distribution and the inverse bid function a convex quadratic.  

\begin{figure}[ht]
	\centering
   \includegraphics[width = .7\textwidth]{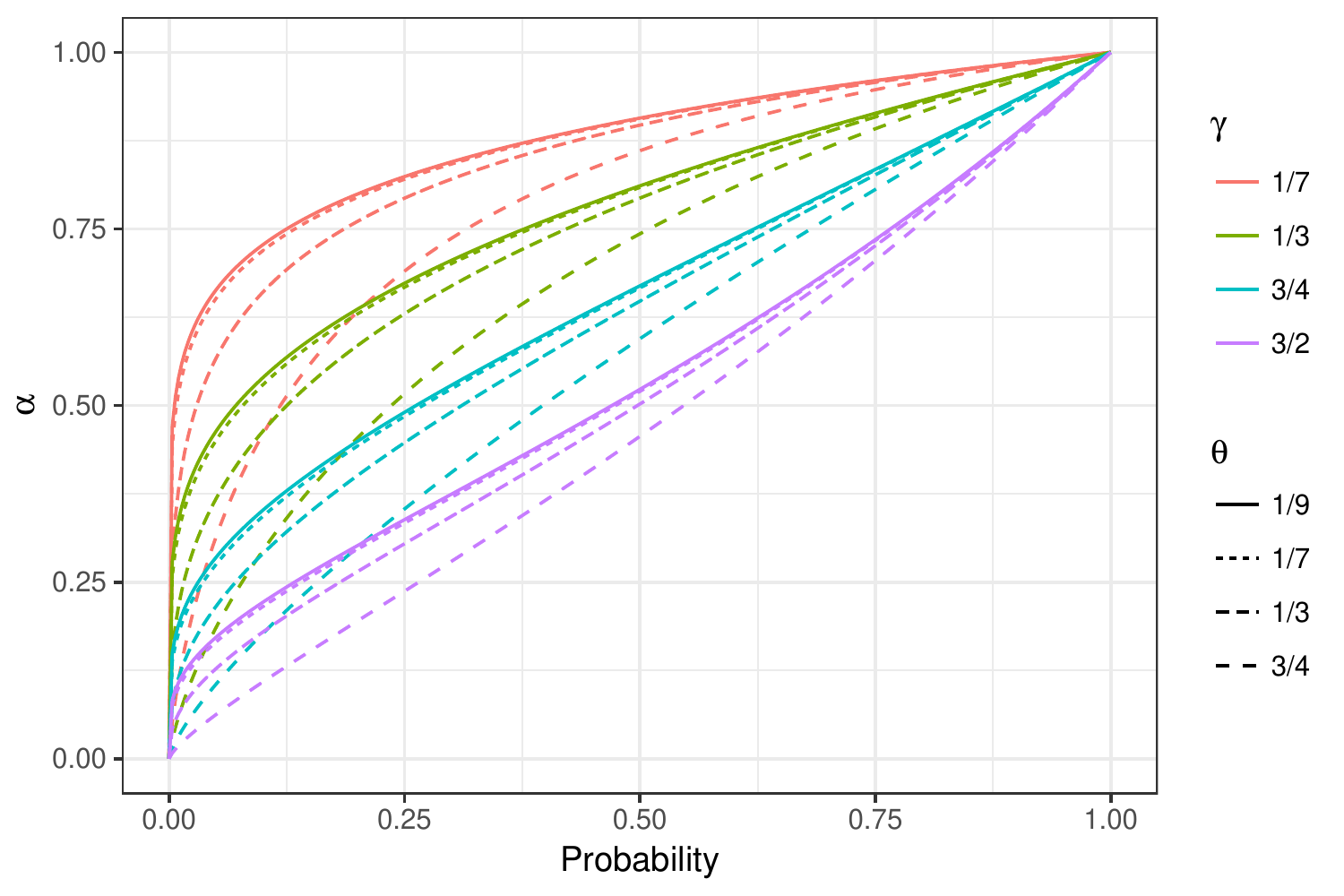}
	\caption{The inverse strategy function for various parameter values.	\label{fig:alphas}
	}
\end{figure}

\begin{figure}[ht]
	\centering
   \includegraphics[width=.7\textwidth]{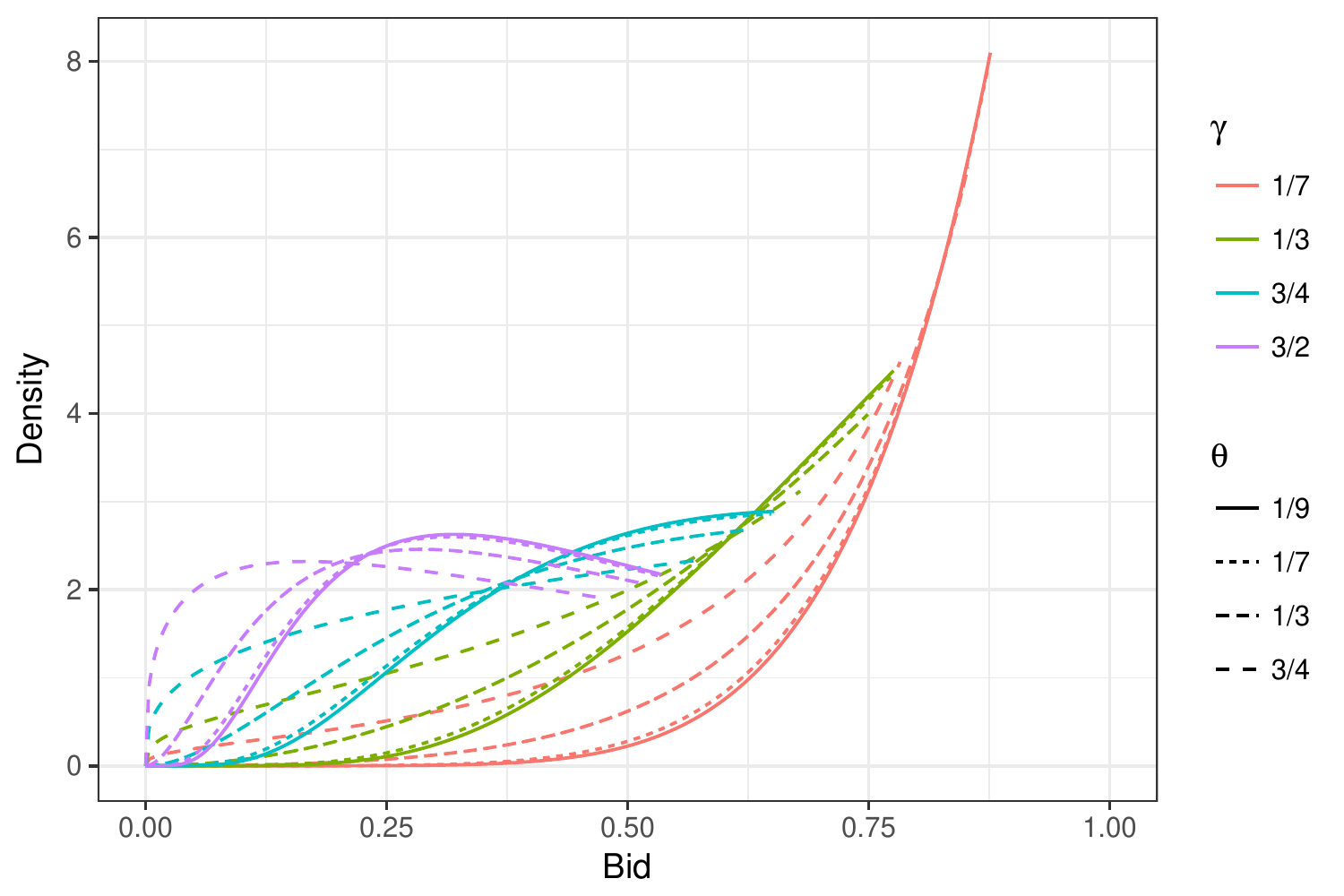}
   \caption{The competitor bid density for various parameter values.\label{fig:competitor dens}}
\end{figure} 

For every combination of $\gamma = 3/2,\, 3/4, \,1/3,\, 1/7$ and $\theta=3/4, 1/3, 1/7, 1/9$, we draw independent and identically distributed samples of $T=100$, 250, and 500 maximum competitor bids. Thus, $T$ represents the number of auctions as well as the number of bids used to estimate bidder one's expected payment function and inverse strategy. We then independently sample $T$ draws of bidder one's valuations according to a power distribution $F_{v_1}(v) = v^{3/2}$, compute her optimal bid, and apply our various methodologies to estimate several objects of interest. We compare our estimators with an estimator based on an approach similar to GPV in which only the independent sample of highest competitor bids are used to estimate the inverse bid function. This estimator is labeled ``IBF'' to indicate the estimates of the various objects were constructed from a nonparametric estimate of the inverse bid function. The IBF estimator does not perform any boundary correction or trimming, hence cannot be expected to perform well near the boundary.  To be clear, this is not a critique of \citet{Guerre2000}: the goal in their paper is to estimate the inverse bid function and valuation density at an optimal rate in the interior of the support of the valuations. Here, we use the estimator for the inverse bid function as an input into other objects of interest. A fairer comparison with our boundary--corrected estimators can be found in ``IBF--BC.'' This estimator uses a boundary correction routine similar to \citet{hickman2015replacing} and is hence better suited for estimating objects that require integration over the entire support of the bids or valuations.

All simulations employ an Epanechnikov kernel and a rule--of--thumb bandwidth sequence, multiplied by an additional scaling factor of $1/5, 1/2, 1,$ or $3/2$ in order to explore sensitivity to the choice of bandwidth. We use a Gaussian reference distribution for choosing bandwidths for methods that use nonparametric kernel density estimators and $\alpha(p) = \bar v p^\gamma$ as our reference function for bandwidths using our procedure. Specifically, we use the sample mean and variance of ${\alpha}_T\cparens{G_{cT}(b_{1})}$ to estimate the parameters $\bar v$ and ${\gamma}$ in the parametric reference model, then choose the bandwidth that would minimize the mean integrated squared error of the estimator for ${\alpha}$ under the reference model. This optimal bandwidth also depends on $\psi$.\footnote{For some choices of ${\psi}$, the squared error is not integrable on $[0,1]$. In these cases, our rule of thumb minimizes the integrated squared error on $[0.05,1]$.}\ 

We consider five different choices of ${\psi}$. The first is the identity transformation ${\psi}_1(p) = p$ and the second is the infeasible zero--bias transformation ${\psi}_2(p) = {\alpha}(p)$. The next transformation ${\psi}_3(p) = \log(p)$ ensures that the asymptotic bias is vanishingly small for $p$ close to zero, though the asymptotic variance can be large. The transformation ${\psi}_4=\sqrt{p}$ minimizes the MISE in ${\alpha}$ if ${\alpha}$ is a power function with exponent greater than $1/2$.\footnote{If ${\alpha}$ is a power function and the exponent is less than $1/2$, the optimal ${\psi}$ would be the infeasible choice ${\psi}_2$.}  Finally, ${\psi}_5(p) = \sqrt[5]{p}$ balances the integrated asymptotic bias and variance of the estimator for ${\alpha}$ when ${\alpha}$ is a power function, regardless of the exponent.

For ${\psi}_2$, the rule of thumb suggests that an infinite bandwidth would minimize the integrated MSE because the first--order bias is always zero. A better rule of thumb would suggest a bandwidth sequence on the order of $T^{-1/7}$, resulting in a faster rate of convergence than the other estimators. For the sake of comparison, we do not take this route and instead use the same bandwidth as we do for ${\psi}_5$. For the undersmoothed estimates, we simply multiply the rule--of--thumb bandwidths by $T^{-2/15}$ so that the sequences are on the order of $T^{-1/3}$.

For the boundary corrected estimates that use reflection---IBF--BC and $\bar {\alpha}_{T}^R$---the auxiliary bandwidth is proportional to the main bandwidth. In \cite{pinkse2019actual}, we show that choosing bandwidths converging at a rate of $T^{-1/7}$ would be optimal for both estimators if ${\alpha}$ (equivalently $g_c$) has three continuous derivatives near the boundary. In this paper, we do not choose a bandwidth sequence to capitalize on this extra smoothness because doing so would put the reflection methods at an advantage relative to the boundary kernel estimators. Unlike the reflection methods, our boundary kernel method does not involve any auxiliary input parameters that could be modified to take advantage of this extra smoothness.  That said, we note that if the researcher is willing to strengthen the smoothness assumption, the reflection method or a boundary kernel method that takes advantage of the extra smoothness could be more attractive in practice precisely for this reason.\footnote{In fact, if the target of the estimation were the valuation density, the researcher might assume three derivatives of $g_c$, anyway, in order to attain the typical $T^{2/7}$ rate of convergence.}

All simulations use a thousand replications.

\subsection{Simulation results: $\sqrt{T}$--consistent estimators}
We first review the simulation results for the integrated objects MV and BS. \Cref{fig:medals BS MV} illustrates the relative root mean squared error (RMSE) of our unsmoothed and smoothed, boundary--kernel--based estimators along with the IBF estimators. The bandwidths are chosen proportional to $T^{-1/3}$ so that the resulting $\sqrt{T}$--consistent estimator is asymptotically unbiased. For lack of a better rule, the constant of proportionality in the bandwidth sequence is simply the rule--of--thumb constant multiplied by our additional scale factor. The various estimators are arranged in columns, and each row represents a different combination of the target object, bandwidth scaling factor, ${\gamma}$, ${\theta}$, and $T$. The value in each cell is colored to reflect the value of the RMSE divided by the minimum RMSE across the columns. The lightest green indicates the best performing estimator, while the darkest purple indicates the RMSE was at least three times as large as that of the best performing estimator. The color scale is top--coded because some estimators performed extremely poorly.

\begin{figure}[htp]
	\includegraphics[angle=90, height = \textheight]{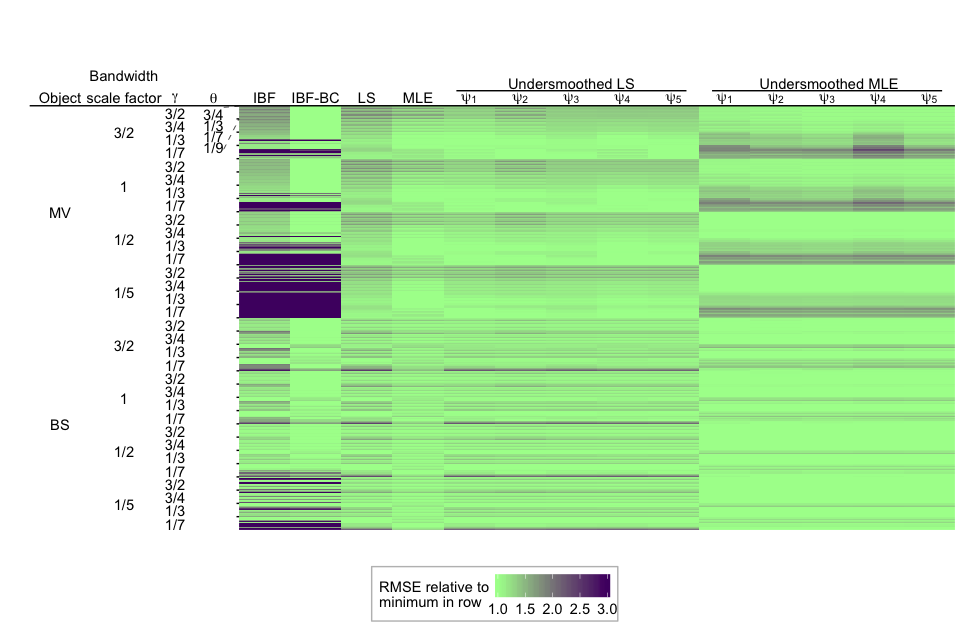}
	\caption{Relative performance of $\sqrt{T}$--consistent estimators. $T$ is the most frequently repeating parameter (in decreasing order) along the vertical axis. The first row reflects estimates of the mean valuation when the rule-of-thumb bandwidths are scaled by 3/2, ${\gamma}=3/2$, ${\theta}= 3/4$, and $T=500$.	\label{fig:medals BS MV}
	}
\end{figure}

The unsmoothed MLE consistently performs well across a variety of parameter values, while the unsmoothed isotonic regression estimator (LS) has difficulty for some parameter values because it suffers from finite--sample bias for values of $p$ close to one. Intuitively, this bias arises from the fact that the GCM, by definition, must lie below the estimate of the true expected payment function, which leads to an upward bias in its slope near $p=1$. This bias is more pronounced when ${\gamma}$ and ${\theta}$ are both relatively large, because the true expected payment is more convex near the right boundary. In contrast, the unsmoothed MLE is not as badly biased when ${\gamma}$ and ${\theta}$ are large, because the graph of the MLE for $e$ does not have to lie below the unconstrained estimator. The finite sample bias in estimates of ${\alpha}$ for large values of $p$ more negatively affects the relative performance in estimating the bidder's expected surplus because the values of $\alpha(p)$ for large $p$ are weighted relatively more in the integral formula for BS than for MV.  We expect these differences in the unsmoothed estimators to vanish as $T$ increases because all our estimators in \cref{fig:medals BS MV} are asymptotically equivalent.

When ${\gamma}$ is small, the undersmoothed, boundary--corrected IBF estimator for BS and MV appears to under--perform in small samples, but otherwise has a relatively small RMSE. As expected, however, the relative performance of the IBF approach is sensitive to the scale of the bandwidth sequence. We cannot conclude from \cref{fig:medals BS MV} that our approach is robust to the choice of bandwidth, however, because it does not compare the relative performance of different bandwidth scaling factors. \Cref{tbl: BS RMSE T = 500} makes this comparison in the estimation of the bidder's surplus using $T=500$ auctions. The results demonstrate that the asymptotic behavior of our undersmoothed estimators is fairly similar across bandwidth sequences, whereas the IBF--BC estimator can be the best performing for some parameter values or worst performing estimator depending on the choice of bandwidth.  We consider this robust performance, and indeed not having to choose an input parameter, a valuable characteristic of our approach.

\begin{table}[!htbp] \centering 
	\caption{RMSE of estimators for the bidder's expected surplus using $\psi_5(p) = p^{1/5}$ and $T =$ 500 auctions relative to the minimum RMSE across all combinations of estimators and bandwidth scaling factors. Relative values are multiplied by 1000.} 
	\label{tbl: BS RMSE T = 500} 
	\footnotesize 
	\begin{tabular}{@{\extracolsep{-5pt}} cccccccccc} 
		\\[-1.8ex]\hline 
		\hline \\[-1.8ex] 
		& & \multicolumn{8}{c}{$(\gamma,\theta)$}\\
		\cline{3-10}
		& Bandwidth scaling factor & $(1/7,1/9)$ & $(1/7,1/7)$ & $(1/7,1/3)$ & $(1/7,3/4)$ & $(1/3,1/9)$ & $(1/3,1/7)$ & $(1/3,1/3)$ & $(1/3,3/4)$ \\ 
		\hline \\[-1.8ex] 
		IBF--BC & $0.2$ & $1671$ & $3596$ & $1741$ & $1000$ & $1153$ & $1103$ & $1848$ & $1008$ \\ 
		IBF--BC & $0.5$ & $1306$ & $1510$ & $1026$ & $1037$ & $1036$ & $1041$ & $1000$ & $1067$ \\ 
		IBF--BC & $1$ & $1244$ & $1050$ & $1041$ & $1050$ & $1025$ & $1026$ & $1008$ & $1084$ \\ 
		IBF--BC & $1.5$ & $1212$ & $1000$ & $1049$ & $1059$ & $1020$ & $1020$ & $1018$ & $1093$ \\ 
		LS & $0$ & $1465$ & $1262$ & $1000$ & $1010$ & $1104$ & $1115$ & $1326$ & $1000$ \\ 
		MLE & $0$ & $1148$ & $1070$ & $1059$ & $1059$ & $1041$ & $1044$ & $1094$ & $1064$ \\ 
		Sm. LS & $0.2$ & $1434$ & $1201$ & $1008$ & $1014$ & $1093$ & $1103$ & $1264$ & $1007$ \\ 
		Sm. LS & $0.5$ & $1399$ & $1177$ & $1016$ & $1018$ & $1085$ & $1095$ & $1208$ & $1014$ \\ 
		Sm. LS & $1$ & $1386$ & $1128$ & $1024$ & $1024$ & $1088$ & $1097$ & $1237$ & $1019$ \\ 
		Sm. LS & $1.5$ & $1369$ & $1107$ & $1029$ & $1032$ & $1088$ & $1097$ & $1229$ & $1022$ \\ 
		Sm. MLE & $0.2$ & $1053$ & $1128$ & $1078$ & $1074$ & $1008$ & $1007$ & $1044$ & $1097$ \\ 
		Sm. MLE & $0.5$ & $1022$ & $1167$ & $1086$ & $1078$ & $1000$ & $1000$ & $1035$ & $1103$ \\ 
		Sm. MLE & $1$ & $1014$ & $1184$ & $1093$ & $1084$ & $1005$ & $1002$ & $1074$ & $1107$ \\ 
		Sm. MLE & $1.5$ & $1000$ & $1209$ & $1097$ & $1091$ & $1005$ & $1003$ & $1071$ & $1109$ \\ 
		\hline \\[-1.8ex] 
		\multicolumn{2}{c}{Min.  value} & $ 5.43\cdot 10^{-3}$ & $2.1\cdot 10^{-3}$ & $4.57\cdot 10^{-2}$ & $9.25\cdot 10^{-2}$ & $4.35\cdot 10^{-2}$ & $4\cdot 10^{-2}$ & $6.12\cdot 10^{-3}$ & $7.44\cdot 10^{-2} $\\
		\hline\\[-1.8ex]
		\cline{3-10}
		& & $(3/4, 1/9)$ & $(3/4, 1/7)$ & $(3/4, 1/3)$ & $(3/4, 3/4)$ & $(3/2, 1/9)$ & $(3/2, 1/7)$ & $(3/2, 1/3)$ & $(3/2, 3/4)$\\
		\cline{3-10}\\[-1.8ex]
		\hline \\[-1.8ex] 
		IBF--BC & $0.2$ & $1075$ & $1062$ & $1083$ & $1316$ & $1074$ & $1074$ & $1080$ & $1140$ \\ 
		IBF--BC & $0.5$ & $1014$ & $1016$ & $1021$ & $1056$ & $1021$ & $1021$ & $1024$ & $1045$ \\ 
		IBF--BC & $1$ & $1003$ & $1003$ & $1004$ & $1006$ & $1005$ & $1005$ & $1005$ & $1009$ \\ 
		IBF--BC & $1.5$ & $1000$ & $1000$ & $1000$ & $1000$ & $1000$ & $1000$ & $1000$ & $1000$ \\ 
		LS & $0$ & $1075$ & $1077$ & $1104$ & $1382$ & $1086$ & $1087$ & $1094$ & $1150$ \\ 
		MLE & $0$ & $1042$ & $1043$ & $1059$ & $1208$ & $1063$ & $1064$ & $1069$ & $1112$ \\ 
		Sm. LS & $0.2$ & $1069$ & $1070$ & $1096$ & $1327$ & $1079$ & $1080$ & $1087$ & $1140$ \\ 
		Sm. LS & $0.5$ & $1064$ & $1066$ & $1091$ & $1308$ & $1077$ & $1076$ & $1084$ & $1137$ \\ 
		Sm. LS & $1$ & $1066$ & $1067$ & $1090$ & $1299$ & $1077$ & $1076$ & $1082$ & $1131$ \\ 
		Sm. LS & $1.5$ & $1067$ & $1069$ & $1092$ & $1311$ & $1077$ & $1077$ & $1083$ & $1132$ \\ 
		Sm. MLE & $0.2$ & $1012$ & $1012$ & $1017$ & $1061$ & $1028$ & $1028$ & $1031$ & $1051$ \\ 
		Sm. MLE & $0.5$ & $1009$ & $1010$ & $1014$ & $1064$ & $1026$ & $1025$ & $1029$ & $1049$ \\ 
		Sm. MLE & $1$ & $1012$ & $1012$ & $1015$ & $1074$ & $1028$ & $1027$ & $1029$ & $1047$ \\ 
		Sm. MLE & $1.5$ & $1014$ & $1014$ & $1017$ & $1086$ & $1029$ & $1028$ & $1031$ & $1049$ \\ 
		\hline \\[-1.8ex] 
		\multicolumn{2}{c}{Min.  value} & $ 1.07\cdot 10^{-1}$ & $1.07\cdot 10^{-1}$ & $8.88\cdot 10^{-2}$ & $1.45\cdot 10^{-2}$ & $1.77\cdot 10^{-1}$ & $1.78\cdot 10^{-1}$ & $1.76\cdot 10^{-1}$ & $1.28\cdot 10^{-1} $\\
		\hline\\[-1.8ex]
	\end{tabular}
\end{table}

The IBF--BC estimator for $\BS$ performs particularly well when $\gamma$ equals $\theta$, in which case the maximum competitor bid is a power distribution and the inverse bid function is linear. Even without undersmoothing, the asymptotic bias in the bidder's expected surplus would be zero when $\gamma = \theta = 1$ or $\gamma= \theta = 1/2$ and is fairly small at intermediate values. This fact helps explain why the RMSE for BS using LS and MLE relative to IBF--BC is larger, for example, in the $(1/3, 1/3)$ column compared to the columns on either side.

\Cref{tbl: BS RMSE T = 500} represents less than 6\% of the information contained in \cref{fig:medals BS MV}. Many more tables (available online \href{http://personal.psu.edu/kes380/files/shape_supp.pdf}{here}) provide further quantitative comparisons of the RMSE, as well as the bias of these and other estimators discussed below.

\subsection{Simulation results: distribution and quantile functions}
The next set of results compare the root mean integrated squared error (RMISE) of the value distribution and quantile function. Based upon the results in the first two columns, the IBF approach without boundary correction is (unsurprisingly) dominated by the boundary--corrected estimator. Next, comparing the second column with the columns to the right, we find that our estimators are again more robust to the choice of bandwidth and tend to outperform the boundary--corrected IBF approach when $\gamma$ and $\theta$ are small, i.e.~the highest competing bid is stronger. Although all bandwidth sequences are proportional to $T^{-1/5}$, the finite--sample behavior of our smoothed estimators is less (negatively) impacted by a small bandwidth. This finding is related to the fact that our estimator for $\alpha$ is consistent even when the bandwidth tends to zero.  Again, robustness is a virtue.

Comparing the rows for which either $\gamma$ or $\theta$ is small, we also find that our transformation method significantly reduces the RMISE in the estimate of the quantile function. In particular, the IBF--BC and the ``no transformation'' ($\psi_{1}$) estimators have greater RMISE compared to the estimators that employ the transformations $\psi_{3}$, $\psi_{4}$, or $\psi_{5}$. Looking across all columns, the smoothed least--squares estimator in conjunction with the transformations $\psi_{3}$ or $\psi_{5}$ appears to consistently perform best or near the best when the highest competing bid is relatively stronger  ($\gamma$ and $\theta$ are small), while the the IBF--BC estimator and the smoothed MLE estimator with $\psi_{5}$ or $\psi_{3}$ perform better in terms of RMISE when the highest competing bid is relatively weak.  We note that in auctions with three or more bidders one would expect the highest competing bid to be relatively strong and hence the smoothed least squares estimator to outperform the alternatives.

The differences between the estimators of the value distribution are less striking. This is due in part to the fact that some of the differences in the estimates of the quantile function are driven by the behavior near the left boundary. Using the delta method, one can show that the asymptotic distribution of the estimators for $F_{v}$ are scaled by the value density at $v$. Under our simulation design, $f_{v}$ approaches zero for small $v$. The relative differences in the estimators are dampened as a result. This also explains the fact that the log--transformation $\psi_{3}$ tends to be the best in terms of RMISE($F_{v}$) for small values of $\gamma$ and $\theta$. Under these parameters, $\alpha''$ diverges as $p$ approaches zero, which produces a large bias for small values of $p$ absent a transformation. The log--transformation ensures the asymptotic bias vanishes at the low end, while the fact that $f_{v}(v)$ is small mitigates the detrimental effects on the asymptotic variance. The transformation $\psi_{5}$ yields similar results, though it does not reduce the bias and increase the variance by as much.


\begin{figure}[htp]
	\includegraphics[angle=90, height = 0.95\textheight]{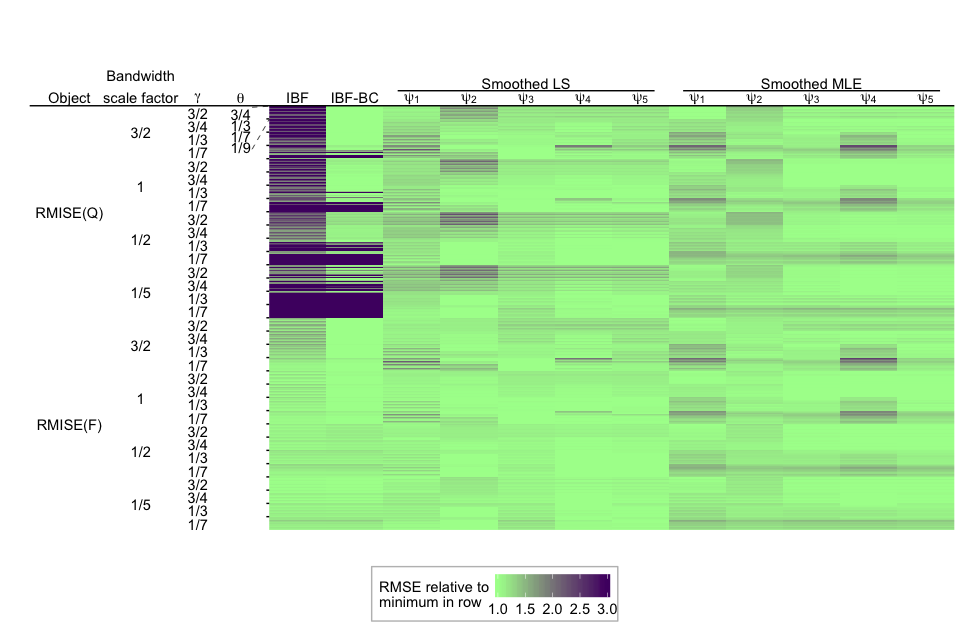}
	
		\caption{Relative performance of estimators of value distribution and quantile function. $T$ is the most frequently repeating parameter (in decreasing order) along the vertical axis. The first row reflects estimates of the mean valuation when the rule-of-thumb bandwidths are scaled by 3/2, ${\gamma}=3/2$, ${\theta}= 3/4$, and $T=500$.	\label{fig:medals F Q}
		}
\end{figure}

\subsection{Simulation results: density function}
We now compare the various methods of estimation of the density of bidder one's valuations. For the indirect methods, a bandwidth proportional to $T^{-1/5}$ is used in both the first and second steps. For the direct method, ${\alpha}'$ is estimated using a bandwidth proportional to $T^{-1/7}$. We compare the direct method using an estimate of $f_p$ as well as the true $f_p$, which the econometrician would know under the assumption the data were generated in a symmetric equilibrium.\footnote{When ${\gamma} = {\theta} = 1/3$, the data could be generated by a three-bidder auction with $F_{v}(v) = v^{3/2}$. These simulated data might also be generated in a two-bidder auction in which bidder one's competitor's valuations are distributed according to $F_{v_2}(v)=(4 v/ 5)^3$. When only the highest competitor bid is used, our approach does not depend on which of these models is correct except when we consider that $f_p$ is known a priori in the former case but not in the latter.}\ The boundary-corrected kernel density estimate $f_p$ is obtained from the sample of $p_t = G_{cT}(b_{1t})$ using a bandwidth proportional to $T^{-1/5}$. Note, however, that the true density $f_p$ is unbounded near $p=0$ in a symmetric equilibrium with more than two bidders, which may result in poor performance of the density estimate for small values of $p>0$. Thus, even though the pointwise rate of convergence of our estimate of $f_p$ is faster than the rate of convergence of our estimate of ${\alpha}'$, we would expect this estimator to perform poorly in finite samples. Indeed, the simulation results indicate that the direct method combined with the true $f_p$ compares favorably with the indirect estimates, but the direct method combined with an estimate of $f_p$ can be relatively poor. In such cases, better results might be achieved by estimating the density of an appropriate transformation of $p$ and using a change of variable formula to recover an estimate of $f_p$. Alternatively, the minimum relative entropy estimator in \cref{sec:Fp} could be used.
\begin{figure}[htp]
	\includegraphics[angle=90, height = \textheight]{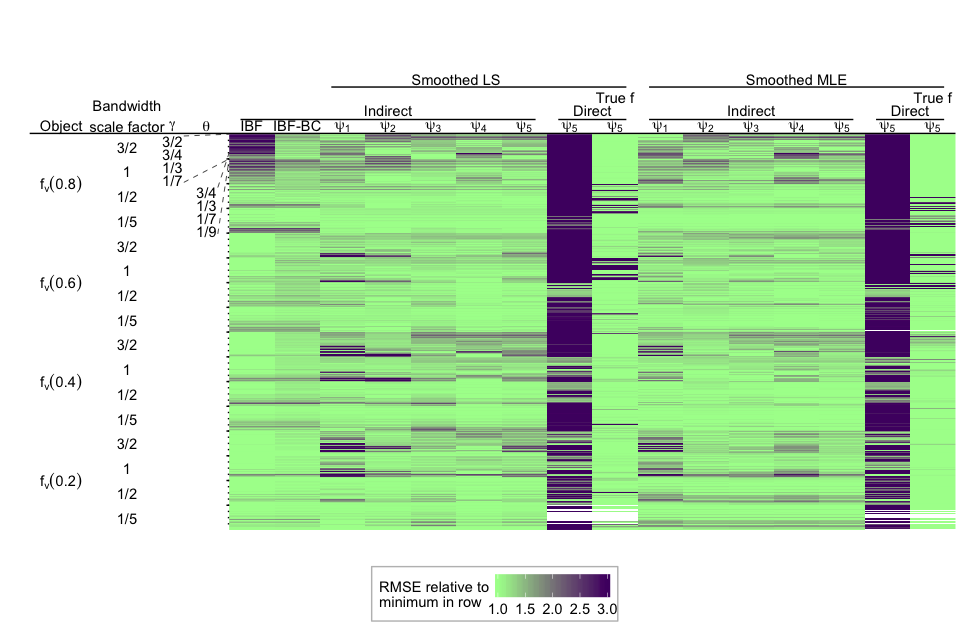}
	\caption{Relative performance of estimators of value density. $T$ is the most frequently repeating parameter (in decreasing order) along the vertical axis. The first row reflects estimates of the mean valuation when the rule-of-thumb bandwidths are scaled by 3/2, ${\gamma}=3/2$, ${\theta}= 3/4$, and $T=500$.	\label{fig:medals f}
	}
\end{figure}

\subsection{Simulation results: boundary correction methods}
\Cref{fig:medals alpha1} illustrates the simulation results for estimates of the inverse strategy function at the right boundary. The reflection--based boundary correction methods tend to perform better when the target of smoothing---$\alpha$ or $g_{c}$---is relatively flat and linear near the boundary. In this case, we would expect the error in the estimate of the auxiliary parameter $\hat d$ to be relatively small. On the other hand, the boundary kernel method tends to perform better when $\alpha'$ is large near the boundary, which would be more likely to happen if the number of bidders is small.

\begin{figure}[htp]
	\includegraphics[angle=90,height=0.95\textheight]{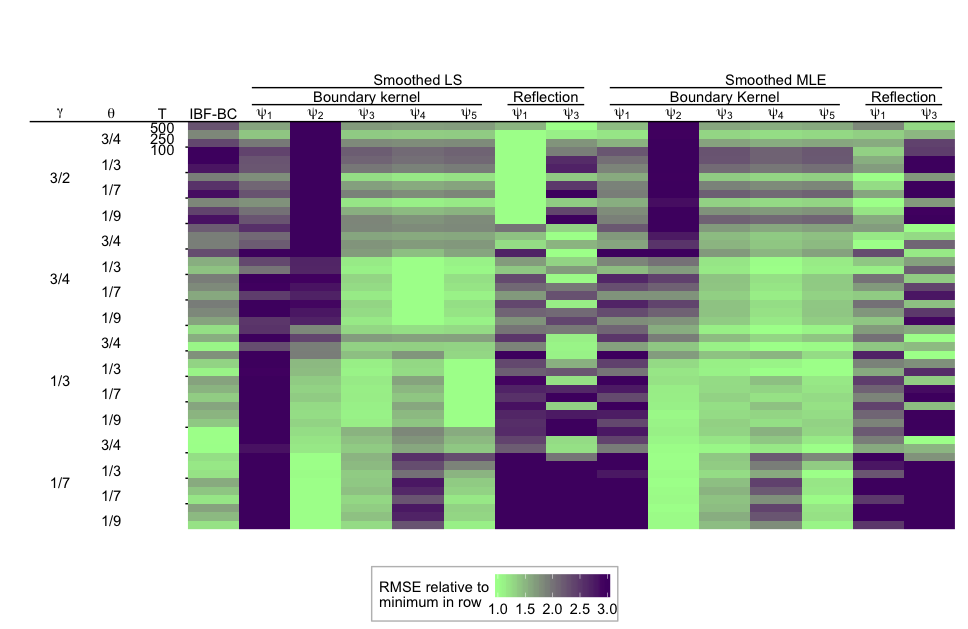}
	\caption{Relative performance of boundary correction methods.	  \label{fig:medals alpha1}}
\end{figure}

\drop{\grey{
Notice that rather than specify the distribution of each bidder's valuation, we structure our simulations similarly to our above analysis in the sense that we focus on the estimation of the parameters of interest for a fixed bidder, taking the distribution of the highest order statistic among the competitors' bids as given. In this way, we can easily vary the strength of bidder one's competition while holding the distribution of bidder one's valuations fixed. On the other hand, this approach obfuscates the sets of primitives that could generate these data. To clarify the issue, we observe that these data could be generated by a two-bidder auction in which bidder 2's valuations are distributed according to $F_{v,2}(v) = \left(\frac{1+\gamma_{c}}{1+\gamma_{1}}v\right)^{1/\gamma_{c}}$. At the same time, these data could also be generated by an $n$-bidder auction in which each of bidder one's competitors' valuations are distributed according to $F_{v,i}(v) = \left(\theta\,v\right)^{\frac{1}{(n-1)\gamma_{c}}}$ for $i = 2,\dots, n$, where $\theta = \frac{(1+\gamma_{c})((n-1)\gamma_{c} +(n-2)\gamma_{1})}{(1+\gamma_{1})(n-1)\gamma_{c}+ (n-2)\gamma_{1}}$. In particular, $\gamma_{c} = \gamma_{1}/(n-1)$ corresponds to a symmetric $n$-bidder auction.}}


\section{Conclusion}
\label{sec:conclusion}
This paper reformulates the empirical analysis of auction models as an isotonic estimation problem by treating the probability of winning as the choice variable in the bidders' decision problem. The nonparametric least--squares and nonparametric maximum likelihood estimators for a bidder's inverse strategy function are shown to converge at the optimal nonparametric rate. As a complementary set of results, we prove the asymptotic behavior of two boundary correction methods that can be combined with transformation to better control the bias--variance tradeoff in the kernel smoothed versions of our estimators. While these smoothing methods are important when estimating some objects of potential interest to the researcher, smoothing is not necessary for others. We prove that using our unsmoothed estimator as an input to a simple plug--in estimator of parameters such as the bidder's expected surplus achieves the semiparametric efficiency bound.

Though the results in this paper can guide several important methodological choices in empirical research on auctions, our theorems are silent regarding several extensions to the baseline model that have become standard in the empirical auction literature. Namely, we do not address the possibility of affiliation among the bidders' valuations, unobserved auction--level heterogeneity, or risk aversion. We leave these considerations for future work.

\bibliographystyle{apa}

\bibliography{shape.bib}

\appendix

\section{Proofs}
\label{app:proofs}

\subsection{GCM}
\label{app:gcm}

The proofs are arranged in the order in which their corresponding results are introduced in the text.

\begin{proof}[Proof of \cref{lem:lscharacterization}]
	Because $\alpha$ has to be nondecreasing, the optimizer must be constant between $(t-1)/T$ and $t/T$. The minimizer is right--continuous with possible discontinuities at $t/T$, because the objective is minimized if the ``jumps'' in $\alpha$ coincide with discontinuities in the derivative of $e_T$. In other words, the value of $\alpha$ should be smaller anywhere to the left of the discontinuity in order to minimize the first integral in \eqref{eq:ls}. Any jumps in $\alpha$ should be ``timed'' to take advantage of the negative contribution to the least--squares criterion that comes from the discontinuities in the derivative of $e_T$.
\end{proof}

\begin{proof}[Proof of \cref{thm:ebreve}]
We first establish the results for $\beT$, then uniform consistency of $\alpha_T$.  Convexity of $\beT$ follows by construction since $\alpha_T$ is restricted to be monotonic.  

Trivially extending the arguments in \citep[lemma 21.4 and the dicussion at the top of p308]{vandervaart2000asymptotic} about the empirical quantile process to $\hat e_T$, $\sqrt{T} \cparens{e_T\parens{\cdot}- e\parens{\cdot}}$ has the asserted limit process on $\parens{0,1}$. We now extend this to $\sparens{0,1}$.  Note that $e_T\parens{0}- e\parens{0}=0$ and 
$\sqrt{T}\parens{e_T\parens{1}-e\parens{1}}=\opone$, so we have convergence of finite marginals and tightness on $\parens{0,1}$.  We thus only need to extend tightness to $\sparens{0,1}$. 

We show the argument at one, where the argument at zero follows analogously.  Let $p_T=1-1/T$ and $\Delta_T\parens{p}=\sqrt{T}\parens{e_T\parens{p}-e\parens{p}}$.  Then for any sequence $\delta_T=o\parens{1}$ by the triangle inequality,
\begin{equation} \label{eq:ebreve1}
 \sup_{1>p>1-\delta_T} 
\abs[\big]{ \Delta_T\parens{1} - \Delta_T\parens{p}}	 
\leq
\sup_{1>p>1-\delta_T} \abs[\big]{\Delta_T\parens{p_T}-\Delta_T\parens{p}}
+
\abs[\big]{\Delta_T\parens{p_T} - \Delta_T\parens{1}}.
\end{equation}
The first right hand side term in \cref{eq:ebreve1} is \opone\ by tightness on $\parens{0,1}$.  The second right hand side term in \cref{eq:ebreve1} is \opone\ since the second highest order statistic converges at rate $T$.

 By \citet{carolan2001marginal}, the limit process is identical for the greatest convex minorant provided that $e$ is \emph{strictly} convex, which is implied by \cref{ass:monotone bid strategies} since $\alpha=e'$ is the inverse bid function composed with $Q_c$.

Finally, uniform convergence of $\alpha_T$.  Let $t_T=1/\sqrt[3]{t}$.  Then, by the monotonicity of $\alpha$ and $\alpha_T$,
\begin{multline}\notag
\max_{t_T \leq p \leq 1-t_T} \cparens{\alpha_T\parens{p}-\alpha\parens{p}} 
\leq\\
\max_{t_T \leq p \leq 1-t_T} \frac{ \beT\parens{p+t_T} -\beT\parens{p} - e\parens{p+t_T}+ e\parens{p}}{t_T}
+
\max_{p\in\sP} \parens[\Big]{\frac{e\parens{p+t_T}-e\parens{p}}{t_T} - \alpha\parens{p}}
\end{multline}
The first right hand side term is $O_p\parens{\sqrt{1/t_T T}}=O_p\parens{t_T}$.  The second right hand side term is bounded above by
$\alpha\parens{p+t_T}-\alpha\parens{p}=O_p\parens{t_T}$, also.  The minimum can be dealt with analogously.


\end{proof}

\newcommand{\tp}{\tilde p}%

\begin{just*}[of \cref{eq:cube root limit result,eq:cube root limit result simple}]\notag
We provide a sketch of the proof and a derivation of the limit distribution.   A full proof would be more careful, especially about issues pertaining to uniformity.  However, there is nothing special about the present scenario and a full rigorous proof would be lengthy but routine.

Our justification follows two steps.  In the first step, we derive a limit result for the inverse problem, i.e.\ the estimation of $\alpha^{-1}$.  In the second step we then apply equations (15) and (16) of \citet{jun2015classical} to obtain the limit distribution of ${\alpha}_T$ itself.

	We first establish asymptotics for the `inverse isotonic regression'--type estimator and then take its inverse to obtain asymptotics for ${\alpha}_T$.
Note that for ${\xi}={\alpha}\parens{p}$,
\[
 \argmin_{\tp} \cparens{ e\parens{\tp}- {\xi}\tp} = {\alpha}^{-1}\parens{{\xi}}=p.
\]
Let $\hat p$ be its sample equivalent, such that
\begin{multline*}
\sqrt[3]{T}\parens{\hat p -p} = \sqrt[3]{T}\cparens[\big]{\argmin_{\tp} \cparens{\beT\parens{\tp} - {\xi}\tp} - p} 
=
\sqrt[3]{T}\cparens[\big]{\argmin_{\tp} \cparens{\beT\parens{\tp} - \beT\parens{p} -  {\xi}\parens{\tp-p}} - p} 
=
\\
\sqrt[3]{T} \cparens[\Big]{\argmin_{\tp} \parens[\Big]{
\cparens{ \beT\parens{\tp} - \beT\parens{p} - e\parens{\tp} + e\parens{p}}
+
\cparens{ e\parens{\tp}- e\parens{p} - {\xi}\parens{\tp-p}}
}
-p}
\simeq \\
\sqrt[3]{T} \cparens[\Big]{\argmin_{\tp} \parens[\Big]{
        \cparens{ \beT\parens{\tp} - \beT\parens{p} - e\parens{\tp} + e\parens{p}}
        +
        {\alpha}'\parens{p} \parens{\tp-p}^2/2
    }
    -p} = 
\\
\argmin_t \parens[\Big]{
    T^{2/3}\cparens{ \beT\parens[\big]{p+t/\sqrt[3]{T}} - \beT\parens{p} - e\parens[\big]{p+t/\sqrt[3]{T}} + e\parens{p}}
    +
    {\alpha}'\parens{p} t^2/2
}\\
\convd
\argmin_t \parens[\big]{ \G^\circ\parens{t} + {\alpha}'\parens{p}t^2/2}
\sim
\argmax_t \parens[\big]{ \G^\circ\parens{t} - {\alpha}'\parens{p}t^2/2}.
\end{multline*}
where $\G^\circ$ is a Gaussian process with covariance kernel\footnote{This fact can be most easily seen by thinking in terms of a Bahadur representation.}
\begin{multline}
\lim_{T\to\infty}
\frac{1}{\sqrt[3]{T}}\cparens[\bigg]{
 H\parens[\Big]{Q_c\parens[\big]{p+t/\sqrt[3]{T}},Q_c\parens[\big]{p+s/\sqrt[3]{T}}}
 -
 H \parens[\Big]{Q_c\parens[\big]{p+t/\sqrt[3]{T}},Q_c\parens{p}}
 \\
 -
H \parens[\Big]{Q_c\parens[\big]{p},Q_c\parens[\big]{p+s/\sqrt[3]{T}}}
+
H\parens[\Big]{Q_c\parens{p},Q_c\parens{p}}
},
\end{multline}
which under \cref{eq:Hstar simple} simplifies to 
\(
{\zeta}^2\parens{p} \abs{  \Med\parens{s,t,0}}:
\)
in other words, $\G^\circ$ is then ${\zeta}\parens{p}$ times a standard two--sided Brownian motion $\G^B$, such that then by a change of variables,
\[
\argmax_t \parens[\big]{ \G^\circ\parens{t} - {\alpha}'\parens{p}t^2/2} \sim
\cparens[\big]{2 \zeta\parens{p}/\alpha'\parens{p}}^{2/3}\,\argmax_{t} \parens[\big]{\G^{B}\parens{t}-t^{2}}
\]
From equations (15) and (16) in \citet{jun2015classical} it then follows that
\[
 \sqrt[3]{T} \cparens{ {\alpha}_T\parens{p}-{\alpha}\parens{p}} \convd 
 \sqrt[3]{4\zeta^{2}\parens{p}\alpha'\parens{p}}\argmax_{t} \parens[\big]{ \G^B\parens{t} - t^2}. \qedhere
\]
\end{just*}

\subsection{NPMLE}
\label{app:npmle}

\begin{lem}\label{lem:pseudoconcave}
	The derivative of $\sL_{j}(\alpha) = \sum_{s=t_{j}}^{t_{j+1}-1} \cparens[\big]{(s-2) \log\parens{\alpha-b_{(s)}} - (s-1) \log\parens{\alpha -b_{\parens{s-1}}}}$ with respect to $\alpha$ is zero exactly once on $(b_{t_{j+1}-1}, \infty)$ and crosses zero from above.
	\proof
	
	Multiplying the stationarity condition $\sL_{j}'(\alpha) = 0$ by $\alpha - b_{\parens{t_{j+1}-1}}$ and collecting terms, we find that $\alpha$ is a stationary point if and only if
	\[
	(t_{j}-1)y_{t_{j}-1}(\alpha) + 2 \sum_{s=t_{j}}^{t_{j+1}-2} y_{s}(\alpha) = t_{j+1}-3,\quad\quad \text{where } 
	y_{s}(\alpha) = \frac{\alpha - b_{\parens{t_{j+1}-1}}}{\alpha-b_{\parens{s}}}.
	\]
	For all $s$, $y_{s}\parens{b_{\parens{t_{j+1}-1}}} = 0$ and $y_{s}(\alpha)$ is continuous and increasing in $\alpha$. 
	The left side of the equation is equal to zero at $b_{\parens{t_{j+1}-1}}$ and approaches $t_{j} - 1+2(t_{j+1}-1-t_{j}) = 2 t_{j+1} - t_{j} - 3 > t_{j+1} - 3$ as $\alpha$ increases. There exists an $\alpha$ that solves the equation above by the intermediate value theorem, and the solution is unique because the left side is strictly monotonic in $\alpha$.
	
	Finally, $\sL_{j}'$ crosses zero from above because $\sL_{j}'$ diverges to positive infinity as $\alpha$ approaches $b_{{\parens{t_{j+1}-1}}}$. \qed
\end{lem}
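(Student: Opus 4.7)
The plan is to analyze the stationarity equation $\sL_j'(\alpha) = 0$ by clearing the apparent singularity of $\sL_j'$ at the left endpoint $\alpha = b_{(t_{j+1}-1)}$. Multiplying $\sL_j'(\alpha)$ by the positive factor $\alpha - b_{(t_{j+1}-1)}$ gives a finite, continuous function on $[b_{(t_{j+1}-1)}, \infty)$, which I would reorganize as a linear combination of the ratios $y_s(\alpha) = (\alpha - b_{(t_{j+1}-1)})/(\alpha - b_{(s)})$. After collecting telescoping contributions from the $(s-2)$ and $(s-1)$ coefficients, the stationarity equation reduces to the clean form
\[
(t_j - 1)\, y_{t_j-1}(\alpha) + 2 \sum_{s=t_j}^{t_{j+1}-2} y_s(\alpha) \;=\; t_{j+1} - 3.
\]

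Next I would establish the three key properties of each $y_s$ on the relevant interval: continuity in $\alpha$; the boundary value $y_s(b_{(t_{j+1}-1)}) = 0$; the limit $\lim_{\alpha\to\infty} y_s(\alpha) = 1$; and strict monotonicity, which follows by direct differentiation, $y_s'(\alpha) = (b_{(t_{j+1}-1)} - b_{(s)})/(\alpha - b_{(s)})^2 > 0$ whenever $b_{(s)} < b_{(t_{j+1}-1)}$. Summing, the left-hand side of the displayed equation is continuous and strictly increasing from $0$ to $(t_j - 1) + 2(t_{j+1}-1-t_j) = 2t_{j+1} - t_j - 3$, and since the block satisfies $t_j < t_{j+1}$ the target value $t_{j+1}-3$ lies strictly between these two limits. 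The intermediate value theorem delivers existence of a zero, while strict monotonicity delivers uniqueness.

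To obtain the crossing direction, I would return to $\sL_j'$ itself and observe that the summand with $s = t_{j+1}-1$ contains the term $(t_{j+1}-3)/(\alpha - b_{(t_{j+1}-1)})$, which diverges to $+\infty$ as $\alpha \downarrow b_{(t_{j+1}-1)}$, whereas every other summand remains bounded at the left endpoint. Hence $\sL_j'(\alpha) \to +\infty$ there; combined with the uniqueness of the zero, this forces $\sL_j'$ to be positive just to the right of $b_{(t_{j+1}-1)}$ and negative immediately past the zero, i.e.\ a crossing from above.

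The main obstacle will be the bookkeeping in rewriting $(\alpha - b_{(t_{j+1}-1)})\,\sL_j'(\alpha)$ as the clean linear combination of $y_s$'s shown above: the reindexing between the $(s-2)$-weighted terms involving $b_{(s)}$ and the $(s-1)$-weighted terms involving $b_{(s-1)}$ produces boundary contributions at both ends of the sum, and these must be tracked with care so that the constants collapse to the simple right-hand side $t_{j+1}-3$ and the coefficients of the interior $y_s$'s collapse to $2$.
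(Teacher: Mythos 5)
Your proposal is correct and follows essentially the same route as the paper's proof: clearing the singularity by multiplying by $\alpha - b_{(t_{j+1}-1)}$, rewriting the stationarity condition as the weighted sum of the ratios $y_s$, invoking the intermediate value theorem plus strict monotonicity for existence and uniqueness, and using the divergence of $\sL_j'$ at the left endpoint to pin down the crossing direction. The extra details you supply (the explicit $y_s'$ and the identification of the diverging summand) are consistent with, and slightly more explicit than, the paper's argument.
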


\begin{lem}\label{lem:mle is an average}
	If $\tilde {\alpha}_{(t_{j})}^{(k-1)}$ and $\tilde {\alpha}_{(t_{j+1})}^{(k-1)}$ are the values of $\tilde {\alpha}$ in the two blocks that are pooled together in the $k$-th step, then the new value is $\tilde {\alpha}_{(t_{j})}^{(k)}$ between $\tilde {\alpha}_{(t_{j})}^{(k-1)}$ and $\tilde {\alpha}_{(t_{j+1})}^{(k-1)}$.
	\proof
	Without loss of generality, assume $\tilde {\alpha}_{(t_{j+1})}^{(k-1)}<\tilde {\alpha}_{(t_{j})}^{(k-1)}$.The zero of $\sL_{j}' + \sL_{j+1}'$ must be greater than $\tilde {\alpha}_{(t_{j+1})}^{(k-1)}$, because both $\sL_{j}'$ and  $\sL_{j+1}'$ are positive to the left of $\tilde {\alpha}_{(t_{j+1})}^{(k-1)}$ by \cref{lem:pseudoconcave}. On the other hand, the zero of $\sL_{j}'+\sL_{j+1}'$ must be less than $\tilde {\alpha}_{(t_{j})}^{(k-1)}$ because both derivatives are negative to the right of $\tilde {\alpha}_{(t_{j})}^{(k-1)}$ by \cref{lem:pseudoconcave}. \qed
\end{lem}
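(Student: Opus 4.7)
The plan is to exploit \cref{lem:pseudoconcave}, which tells us that on the relevant domain each block objective $\sL_j$ is \emph{pseudoconcave} in the sense that $\sL_j'$ has a unique zero and crosses it from above. I would combine this with the intermediate value theorem applied to the sum $\sL_j' + \sL_{j+1}'$, noting that blocks only get pooled in PAVA precisely when a monotonicity violation is detected, so without loss of generality $\tilde{\alpha}_{(t_{j+1})}^{(k-1)} < \tilde{\alpha}_{(t_j)}^{(k-1)}$.

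First, I would observe that the pooled block's new value $\tilde{\alpha}_{(t_j)}^{(k)}$ is by construction the unique stationary point of $\sL_j + \sL_{j+1}$ in the admissible interval $(b_{(t_{j+2}-1)},\infty)$ (uniqueness follows because \cref{lem:pseudoconcave} applies to the merged index set, not just to individual blocks---the proof of that lemma only used the index range $[t_j, t_{j+1}-1]$ and carries over verbatim to $[t_j, t_{j+2}-1]$). Next, I would evaluate the combined derivative at each of the two prior block values. At $\alpha = \tilde{\alpha}_{(t_{j+1})}^{(k-1)}$, we have $\sL_{j+1}'(\alpha)=0$ by definition of the previous iterate, while $\sL_j'(\alpha)>0$ because $\tilde{\alpha}_{(t_{j+1})}^{(k-1)}$ lies strictly to the left of $\tilde{\alpha}_{(t_j)}^{(k-1)}$ and $\sL_j'$ crosses its unique zero from above. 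Hence $(\sL_j' + \sL_{j+1}')(\tilde{\alpha}_{(t_{j+1})}^{(k-1)})>0$. A symmetric calculation at $\alpha = \tilde{\alpha}_{(t_j)}^{(k-1)}$ gives $\sL_j'(\alpha)=0$ and $\sL_{j+1}'(\alpha)<0$ (again by \cref{lem:pseudoconcave} applied to block $j+1$), so $(\sL_j' + \sL_{j+1}')(\tilde{\alpha}_{(t_j)}^{(k-1)})<0$.

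Since $\sL_j' + \sL_{j+1}'$ is continuous on $[\tilde{\alpha}_{(t_{j+1})}^{(k-1)},\tilde{\alpha}_{(t_j)}^{(k-1)}]$, changes sign across this interval, and has only one zero on $(b_{(t_{j+2}-1)},\infty)$, the intermediate value theorem forces that zero---which is exactly $\tilde{\alpha}_{(t_j)}^{(k)}$---to lie strictly between the two prior block values. This gives the conclusion.

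The main obstacle, to the extent there is one, is justifying that \cref{lem:pseudoconcave} transfers to the merged block so that the zero of $\sL_j' + \sL_{j+1}'$ is unique; without uniqueness one could in principle only conclude \emph{some} zero lies between the two values, not that the chosen PAVA update does. The transfer is straightforward because the proof of \cref{lem:pseudoconcave} is purely combinatorial in the block's index set, but it should be stated explicitly. Verifying the signs at the two endpoints is routine once \cref{lem:pseudoconcave}'s crossing-from-above property is in hand.
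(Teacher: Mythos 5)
Your proof is correct and follows essentially the same route as the paper's: both rest on \cref{lem:pseudoconcave}'s crossing--from--above property to sign $\sL_j'+\sL_{j+1}'$ at (or on either side of) the two prior block values, forcing its zero to lie between them. The only difference is presentational --- you evaluate the signs at the two endpoints and invoke the intermediate value theorem together with uniqueness of the merged--block stationary point (which, as you note, follows by applying \cref{lem:pseudoconcave} to the merged index range), whereas the paper signs the derivative on the entire half--lines to the left and right, which sidesteps the need to cite uniqueness explicitly.
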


\begin{lem}\label{lem:dual active set}
	PAVA for the NPMLE is a dual active set method, i.e.~PAVA satisfies stationarity, complementary slackness, and dual feasibility at every step of the algorithm, but does not satisfy primal feasibility until the final iterate.
	\proof
	The PAVA algorithm clearly satisfies the stationarity and complementary slackness conditions at every step, and satisfies primal feasibility at the last step (primal feasibility is the stopping criterion). It remains to show that the Lagrange multipliers are nonnegative at every step. 
	
	Let 
	\[
	\tilde {\lambda}_t^{(k)} = \sum_{s=3}^{t-1} \pd{\sL(\tilde {\alpha})}{\tilde {\alpha}_{\parens{s}}}= \sum_{s=3}^{t-1} \parens[\bigg]{\frac{s-2}{\tilde {\alpha}_{\parens{s}}^{(k)}-b_{\parens{s}}}-\frac{s-1}{\tilde {\alpha}_{\parens{s}}^{(k)}-b_{\parens{s-1}}}}
	\]
	denote the Lagrange multipliers implied by the stationarity conditions, where the superscripts $(k)$ indicate the value of the variable after the $k$-th step of the algorithm. Initially, $\tilde {\lambda}_t^{(0)}=0$ for all $t$. 
	
	We will proceed by induction on $k$. Let $t_{j}$ and $t_{j+1}$ be the starting points of the adjacent blocks pooled together in the $k$-th step for some $k>0$. Assume $\tilde {\lambda}_t^{(j)}\geq0$ for all $t$ and $j < k$.
	
	Suppose by way of contradiction that a negative Lagrange multiplier is introduced in the $k$-th step. The negative multiplier must apply to one of the active constraints in the two most recently merged blocks of constraints, because the Lagrange multipliers on constraints outside these two blocks are unaffected: the multipliers are all zero for the slack constraints on the singleton blocks to the right, and clearly $\tilde \lambda_{t}^{(k)}$ is unaffected for all $t\leq t_{j}$.
	
	A negative multiplier on one of the constraints in the most recently merged blocks implies that there exists a constraint within this chain of equalities that can be slackened and increase the loglikelihood. We will show that this leads to a contradiction because slackening any one of the constraints and moving in the direction that would increase the loglikelihood will necessarily violate primal feasibility.
	
	Suppose we slacken the constraint $\tilde \alpha_{(s)} \geq \tilde\alpha_{(s-1)}$. There are two cases to consider. First, suppose the slackened constraint belongs to the left pre-merged block, i.e.~$s$ is such that $t_{j}\leq s < t_{j+1}$, and let $\tilde {\alpha}_{\parens{t_j}}^{\prime(k)}$ denote the new solution to the stationarity condition in the sub-block to the left of $s$. Let $\tilde {\alpha}_{\parens{s}}^{\prime(k-1)}$ denote the solution for $\tilde {\alpha}$ in the block beginning with $s$ and ending $t_{j+1}-1$. Then $\tilde \alpha_{\parens{t_j}}^{\prime(k)}$ must be greater than the value of $\tilde {\alpha}_{\parens{s}}^{\prime (k-1)}$ in the right sub-block, otherwise relaxing this constraint would have been feasible and improved the loglikelihood in an earlier iterate, thereby contradicting our assumption that the $k$-th step is the first that introduces a negative Lagrange multiplier. In addition, $\tilde {\alpha}_{\parens{t_j}}^{\prime(k)}> \tilde {\alpha}_{(t_j)}^{(k-1)}> \tilde {\alpha}_{\parens{s}}^{\prime(k-1)}$ by \cref{lem:mle is an average}.  Finally, we invoke \cref{lem:mle is an average} again to conclude that the value of $\tilde {\alpha}_{\parens{t_{j+1}}}^{\prime(k)}$ is less than $\tilde {\alpha}_{\parens{t_j}}^{\prime (k)}$. Hence, none of the constraints in the left block can be removed while maintaining primal feasibility.
	
	On the other hand, we may suppose the objective would by improved by making one of the constraints in the right block slack. By a similar argument, this too would violate primal feasibility. Therefore, none of the constraints in the merged block can be removed without violating primal feasibility or decreasing the objective. Therefore, none of the Lagrange multipliers are negative after the $k$-th iterate. By induction, there are no negative Lagrange multipliers in any step of the algorithm. \qed
\end{lem}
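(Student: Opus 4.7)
The plan is to prove the lemma by induction on the iteration count $k$ of the PAVA. The base case $k=0$ is immediate: each $\tilde \alpha_{(t)}$ occupies its own block and equals the unconstrained stationary point described just before the statement, so stationarity holds by construction; all monotonicity constraints are slack, so every Lagrange multiplier is zero, giving complementary slackness and dual feasibility trivially. Primal feasibility can (and typically will) fail, as expected.

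For the induction step, suppose stationarity, complementary slackness, and dual feasibility hold after iteration $k-1$. In iteration $k$, PAVA locates the smallest $t$ with $\tilde \alpha_{(t)} < \tilde \alpha_{(t-1)}$ and pools the block containing $t$ with its left neighbor, re-solving the stationarity condition on the pooled block using the one-dimensional equation derived from \cref{eq:npmle Lagrange foc}. Stationarity on the merged block holds by this very construction, and stationarity on all untouched blocks carries over from iteration $k-1$. Complementary slackness is also preserved, because the pooled constraints are active by assignment and all between-block constraints keep zero multipliers.

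The only nontrivial item is dual feasibility on the merged block. Here I plan to argue by contradiction using \cref{lem:pseudoconcave,lem:mle is an average}. Suppose after merging blocks $B_1=[t_j,t_{j+1}-1]$ and $B_2=[t_{j+1},t_{j+2}-1]$ some multiplier on an equality $\tilde \alpha_{(s)} = \tilde \alpha_{(s-1)}$ within the merged block is negative. Then the likelihood can be strictly increased by releasing that constraint, equivalently by re-splitting the merged block at position $s$ and moving the two new sub-blocks' common stationary points apart in the direction dictated by the signs of the partial derivatives. I would then examine two cases according to whether $s$ lies inside $B_1$ or inside $B_2$, and in each case show that the re-solved sub-block values violate $\tilde \alpha_{(s)} \geq \tilde \alpha_{(s-1)}$, which is the very constraint being released, contradicting the assumption that the release was both loglikelihood-improving and primal feasible.

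The main obstacle is making this splitting/re-solving argument airtight. Concretely, in the case where $s$ lies inside $B_1$, I would let $\tilde \alpha^{\prime}_{(t_j)}$ denote the re-solved value on the left sub-block $[t_j,s-1]$ and $\tilde \alpha^{\prime}_{(s)}$ the re-solved value on the right sub-block $[s,t_{j+1}-1]$. By the inductive hypothesis that no earlier iterate admits a profitable release, one must have $\tilde \alpha^{\prime}_{(t_j)} > \tilde \alpha^{\prime}_{(s)}$, since otherwise PAVA would have accepted a finer partition before it ever reached iteration $k$. Applying \cref{lem:mle is an average} to the original formation of $B_1$ then yields $\tilde \alpha^{\prime}_{(t_j)} > \tilde \alpha_{(t_j)}^{(k-1)} > \tilde \alpha^{\prime}_{(s)}$, and applying it again to the $B_1$–$B_2$ merge shows that the post-merge common value on $B_2$ is strictly less than $\tilde \alpha^{\prime}_{(t_j)}$, contradicting monotonicity across the purported split. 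The case $s\in B_2$ is symmetric, with the roles of left and right blocks interchanged. This closes the induction and establishes the three dual-side KKT conditions at every step, while primal feasibility is the stopping criterion and is only attained at termination.
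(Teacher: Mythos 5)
Your proposal is correct and follows essentially the same route as the paper's proof: induction on the iteration count, a contradiction argument in which a negative multiplier implies a loglikelihood-improving release of a pooled constraint, a case split on whether the released constraint lies in the left or right pre-merged block, and two applications of \cref{lem:mle is an average} combined with the inductive hypothesis that no earlier release was profitable. No substantive differences to report.
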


\begin{defn}[Invex function]\label{def:invex}
	A function $f: S \subseteq \Re^n \to \Re$ is invex at $u\in S$ if there exists a $\Re^n$-valued function ${\eta}$ such that $f(x) - f(u) \geq {\eta}(x) \cdot \nabla f(u)$ for all $x\in S$, where $\nabla f$ denotes the gradient vector of $f$. Such a function ${\eta}$ is known as an invexity kernel.
	
	A function $g: S \subseteq \Re^n \to \Re$ is type I invex at $u\in S$ if there exists a $\Re^n$-valued function ${\eta}$ such that $ - g(u) \geq {\eta}(x) \cdot \nabla g(u)$ for all $x\in S$
\end{defn}

\begin{thm}[\cite{Hanson1999} theorem 2.1]
	Consider the problem $\min_{x\in S} f(x)$ subject to $g(x) \leq 0$ for some functions $f: \Re^n\to \Re$ and $g:\Re^n \to \Re^m$ that are differentiable on $S$.  For $u\in S$ to be optimal, it is sufficient that the KKT conditions are satisfied at $u$ and $f$ and $g$ satisfy
	\begin{align*}
		f(x) - f(u) &\geq {\eta}(x) \cdot \nabla f(u)\\
		-g_t(u) & \geq {\eta}(x) \cdot \nabla g_t(u)
	\end{align*}
	for every active component $g_t$ of $g$ for some common invexity kernel ${\eta}:\Re^n \to \Re^n$.
\end{thm}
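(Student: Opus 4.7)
The plan is to verify optimality of $u$ directly by showing that $f(x) \geq f(u)$ for every feasible point $x \in S$ (i.e., every $x$ with $g(x) \leq 0$). Since the KKT conditions are assumed to hold at $u$, there exist multipliers $\lambda \in \Re^m$ with $\lambda \geq 0$ such that $\nabla f(u) + \sum_{t=1}^m \lambda_t \nabla g_t(u) = 0$ and $\lambda_t g_t(u) = 0$ for every $t$.

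First, I would apply invexity of $f$ at $u$ with kernel $\eta$ to obtain $f(x) - f(u) \geq \eta(x) \cdot \nabla f(u)$. Then I would substitute the stationarity condition from KKT to rewrite the right-hand side as $-\sum_{t=1}^m \lambda_t\, \eta(x) \cdot \nabla g_t(u)$. The next step is to bound each term in this sum. For inactive constraints (those with $g_t(u)<0$), complementary slackness forces $\lambda_t=0$, so they contribute nothing. For active constraints (those with $g_t(u)=0$), the type I invexity hypothesis gives $\eta(x) \cdot \nabla g_t(u) \leq -g_t(u) = 0$, and multiplying by $\lambda_t \geq 0$ preserves the inequality.

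Putting the pieces together yields
\[
f(x) - f(u) \geq \eta(x) \cdot \nabla f(u) = -\sum_{t=1}^m \lambda_t\, \eta(x) \cdot \nabla g_t(u) \geq 0,
\]
which is exactly what needs to be shown. The feasibility of $x$ plays no direct role beyond ensuring that the comparison is meaningful; the whole weight of the argument rests on the invexity-based linearization and KKT conditions.

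There is no real obstacle here, since the result reduces to a short chain of inequalities. The only subtlety is being careful that the same kernel $\eta$ is used for $f$ and each active $g_t$; otherwise the substitution via the stationarity condition would fail to deliver a single tractable expression. This common-kernel requirement is precisely why the statement asks for the invexity conditions to hold with one shared $\eta$, and any proof has to flag this explicitly before combining terms.
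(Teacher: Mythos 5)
Your proof is correct: the chain $f(x)-f(u) \geq \eta(x)\cdot\nabla f(u) = -\sum_{t}\lambda_t\,\eta(x)\cdot\nabla g_t(u) \geq 0$, using complementary slackness to drop the inactive constraints and the type I invexity of the active constraints (with the \emph{common} kernel $\eta$) to sign the remaining terms, is exactly the standard Hanson sufficiency argument. Note that the paper does not prove this statement at all --- it is quoted as Theorem 2.1 of \cite{Hanson1999} and used as an external input to \cref{lem:invexity} --- so there is nothing to compare against beyond confirming that your self-contained derivation is sound, which it is.
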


\begin{lem} \label{lem:invexity}
	The KKT conditions \eqref{eq:npmle Lagrange foc} are necessary and sufficient for the isotonic maximum likelihood problem \eqref{eq:isonpmle}.
	\proof
	The proof proceeds as follows. First, we establish that $-\sL$ and the isotonicity constraints are invex functions, a generalization of convex functions (see \cref{def:invex}) which may be equivalently characterized as the collection of differentiable functions for which every stationary point is a global minimum. Second, we show that these functions are invex with respect to the same invexity kernel using Gale's theorem of the alternative, as suggested by \cite{Hanson1981}. Finally, we conclude that the KKT conditions for the constrained minimization of $-\sL$ are sufficient by a direct application of theorem 2.1 of \cite{Hanson1999}.
	
	First, $-\sL$ is invex because every stationary point is a global minimum. To see this, we note that $-\sL$ is convex at its (unique) stationary point because
	\begin{align*}
		-\pdd{\sL}{{\alpha}_{\parens{t}}} &= \frac{t-2}{({\alpha}_{\parens{t}}-b_{\parens{t}})^2} - \frac{t-1}{({\alpha}_{\parens{t}} - b_{\parens{t-1}})^2}\\
		&=  \frac{t-1}{({\alpha}_{\parens{t}}-b_{\parens{t-1}})({\alpha}_{\parens{t}}-b_{\parens{t}})} -  \frac{t-1}{({\alpha}_{\parens{t}} - b_{\parens{t-1}})^2} > 0 \,,
	\end{align*}
	where the second equality follows by substitution using the first-order condition, and the inequality follows from $b_{\parens{t}}>b_{\parens{t-1}}$. Thus, there is a unique local minimum. Finally, there are no minima at the boundaries because $-\pd{\sL}{{\alpha}_{\parens{t}}}$ is eventually positive as ${\alpha}_{\parens{t}}$ tends to infinity and $-\sL$ diverges to infinity as $\alpha_{\parens{t}}$ approaches $b_{\parens{t}}$. Thus, every stationary point is a global minimum. Hence, $-\sL$ is invex. Then, by definition of invexity, there exists a vector-valued invexity kernel ${\eta}$ such that $L(\balTml) - L(\tilde{{\alpha}}) \geq {\eta}\parens{\tilde {\alpha}}\cdot\nabla L(\tilde {\alpha})$ for all $\tilde{{\alpha}}$. The constraints on $\tilde{{\alpha}}_{\parens{t}}$ are linear and therefore invex, as well.
	
	Second, we must further show that there exists a \emph{common} invexity kernel with respect to which $-\sL$ and the (active) constraint functions $c_t(\tilde {\alpha}) = -\tilde {\alpha}_{\parens{t}} + \tilde {\alpha}_{\parens{t-1}}$ are (type I) invex at the solution. The existence of such an invexity kernel is implied by the existence of a solution to the linear system $A {\eta}(\tilde {\alpha}) \leq C(\tilde {\alpha})$ for all $\tilde {\alpha}$, where $A$ is the $\Re^{T-2}\times\Re^{T-2}$ Jacobian matrix $(-\nabla \sL; \nabla c_4; \dots; \nabla c_T)$ evaluated at $\balTml$ and $C(\tilde {\alpha})=(\sL(\balTml)-\sL(\tilde {\alpha}),-c_4(\tilde{{\alpha}}), \dots, -c_T(\tilde {\alpha}))$. By Gale's theorem of the alternative, a solution to this system of inequalities exists if and only if there does not exist a vector $y\geq 0$ such that $y'A = 0$ and $C(\tilde \alpha)'y = -1$. Because $A$ is of the form
	\[
	\begin{pmatrix}
	-\pd{\sL}{\breve{\alpha}_{\parens{3}}^{\text{MLE}}} & -\pd{\sL}{\breve{\alpha}_{\parens{4}}^{\text{MLE}}} & -\pd{\sL}{\breve{\alpha}_{\parens{5}}^{\text{MLE}}} & -\pd{\sL}{\breve{\alpha}_{\parens{6}}^{\text{MLE}}} & \cdots\\
	1 & -1 & 0 & 0 & \cdots\\
	0 & 1 & -1 & 0 & \\
	0 & 0 & 1 & -1 & \\
	\vdots & & &\ddots &  \ddots
	\end{pmatrix}
	\]
	and because the stationarity condition implies $\lambda_{t}=\sum_{s=3}^{t-1}-\pd{\sL}{\breve{\alpha}_{\parens{s}}^{\text{MLE}}}$, we can show that $y'A=0$ has a solution only if $y$ is a scalar multiple of $(1,\lambda_4,\dots, \lambda_T)$. But then $C(\tilde \alpha)'y = -1$ does not have a solution for any $\tilde \alpha$ because y and $C(\tilde {\alpha})$ are nonnegative vectors for all $\tilde {\alpha}$. Thus, the objective and active constraints are (type I) invex with respect to some shared invexity kernel.

	Finally, any solution to the KKT conditions is a global minimum by theorem 2.1 in \cite{Hanson1999}. \qed
\end{lem}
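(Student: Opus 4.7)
The claim splits cleanly into necessity and sufficiency. Necessity is routine: the monotonicity constraints defining the feasible set are affine, so a linear-independence constraint qualification holds at every feasible point, and any (in particular, global) maximizer of $\sL$ subject to them must satisfy the KKT conditions in \cref{eq:npmle Lagrange foc}. The substantive content is sufficiency, which is \emph{not} immediate because $\sL$ is not globally concave in $\tilde\alpha$, so a KKT point need not be a global optimum on general grounds. My plan is to prove sufficiency by invoking the theory of invex programs: if the objective and active constraint functions are invex with a common invexity kernel, then every KKT point is a global optimum (Hanson, 1999, Theorem~2.1).

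The first step is to verify that $-\sL$ is invex, i.e.\ that every stationary point of $-\sL$ on the feasible region is a global minimum. By \cref{lem:pseudoconcave}, the derivative of the block-restricted loglikelihood crosses zero from above exactly once on $(b_{(t_{j+1}-1)},\infty)$, so each block admits a unique stationary value. Combining this with the facts that $-\sL$ blows up as any $\tilde\alpha_{(t)}\downarrow b_{(t)}$ and eventually grows without bound as $\tilde\alpha_{(t)}\to\infty$, any stationary point must be a (unique) interior minimizer and hence global. A standard characterization then gives invexity of $-\sL$: a differentiable function whose only critical points are global minima is invex.

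The monotonicity constraints $c_t(\tilde\alpha)=-\tilde\alpha_{(t)}+\tilde\alpha_{(t-1)}$ are affine and hence invex for any choice of kernel. The subtle step, and the main obstacle, is showing that $-\sL$ and the active constraints admit a \emph{common} invexity kernel $\eta$. Following the standard Gale-type reduction, this reduces to solvability of the linear system $A\,\eta(\tilde\alpha)\le C(\tilde\alpha)$, where the rows of $A$ are the gradients of $-\sL$ and of the active $c_t$'s evaluated at the KKT point, and $C(\tilde\alpha)$ collects $\sL(\balTml)-\sL(\tilde\alpha)$ together with the constraint slacks $-c_t(\tilde\alpha)$. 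By Gale's theorem of the alternative the system is solvable iff there is no $y\ge 0$ with $y'A=0$ and $C(\tilde\alpha)'y=-1$. I would now use the stationarity part of the KKT conditions to argue that $y'A=0$ forces $y$ to be a nonnegative scalar multiple of $(1,\lambda_4,\dots,\lambda_T)$, which is itself nonnegative by dual feasibility. Since $C(\tilde\alpha)\ge 0$ on the feasible region, this yields $C(\tilde\alpha)'y\ge 0$ for any such $y$, contradicting $C(\tilde\alpha)'y=-1$. Hence a common invexity kernel exists.

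With the three ingredients in hand---invexity of $-\sL$, invexity of each active constraint, and a common kernel---a direct appeal to the Hanson (1999) sufficiency theorem for invex programs finishes sufficiency, and hence the lemma. The main obstacle is the common-kernel step: establishing invexity of $-\sL$ by itself is comparatively clean, but the nonconcavity of $\sL$ makes it nontrivial to certify that a single $\eta$ serves simultaneously for the objective and for all active isotonicity constraints, and this is where the KKT-implied proportionality of feasible $y$ plays the decisive role.
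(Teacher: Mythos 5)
Your proposal is correct and follows essentially the same route as the paper's proof: invexity of $-\sL$ via the stationary-points-are-global-minima characterization, affine (hence invex) constraints, a common invexity kernel obtained from Gale's theorem of the alternative together with the observation that $y'A=0$ forces $y$ proportional to $(1,\lambda_4,\dots,\lambda_T)$, and a final appeal to Hanson's (1999) Theorem 2.1. The only cosmetic differences are that you make the (routine) necessity direction explicit and lean on \cref{lem:pseudoconcave} for uniqueness of the stationary point where the paper instead does the coordinatewise second-derivative computation directly.
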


\begin{proof}[Proof of \cref{thm:PAVA converges to MLE}]
	The final iterate of the PAVA algorithm satisfies the KKT conditions by \cref{lem:dual active set}, which are sufficient for the global maximum by \cref{lem:invexity}.
\end{proof}

\begin{just*}[of \cref{eq:npmle limit distribution}]
We provide a sketch of the proof and a derivation of the limit distribution.   A full proof would be more careful, especially about issues pertaining to uniformity.  However, there is nothing special about the present scenario and a full rigorous proof would be lengthy but routine.

We remind the reader that ${\alpha}_T^\mle\parens{p}= {\alpha}_T^\mle\cparens{G_c\parens{b}}$ can be characterized in terms of the inverse of the solution to an `inverse regression' problem, namely to find the solution ${\beta}_T\parens{v}$ that minimizes
\[
\S_T\parens{b,v} = \frac1T\sum_{t=2}^T \parens[\bigg]{ \frac{t-2}{v-b_{(t)}} - \frac{t-1}{v-b_{(t-1)}}}
\one\parens{b_{(t)} \leq b},
\]
over a region of $b$'s for which $v-b$ is bounded away from zero.  Note that ${\beta}_T$ is an estimate of the bid function ${\beta}\parens{v}$. We first obtain the limit distribution of $\sqrt[3]{T}\cparens{ {\beta}_T\parens{v}- {\beta}\parens{v}}$.  We then invoke the results of \citet{jun2015classical} to obtain properties of the inverse.

We first obtain an approximation of the form
\(
\S_T\parens{b,v} \simeq  \S_T^*\parens{b,v} + \S_T^\circ\parens{b} + \S\parens{b,v},
\)
for functions $\S_T^*,\S_T^\circ,\S$ introduced below, where $\simeq$ means that the omitted terms are negligible.  Taking $\sqrt[3]{T}$--consistency as given, we then argue that
\(
T^{2/3} \sparens[\big]{\S_T\cparens{{\beta}\parens{v}+x/\cbT,v} - \S_T\cparens{{\beta}\parens{v},v}}
\)
converges to a limiting Gaussian process plus a quadratic in $x$, whose minimizer as a function of $x$ has a (scaled) Chernoff distribution.  Applying equations (15) and (16) in \citet{jun2015classical} then yield the stated limit distribution.

Noting that $\max_t \abs{b_{(t)}-Q_{ct}} = O_p\parens{T^{-1/2}}$ for $Q_{ct}=Q_c\parens{t/T}$, we have (uniformly in $b,v$),
\begin{multline*}
\S_T\parens{b,v} = \frac1T\sum_{t=2}^T \parens[\bigg]{ \frac{t-2}{v-b_{(t)}} - \frac{t-1}{v-b_{(t-1)}}}
\one\parens{b_{(t)} \leq b}
=\\
\frac1T \sum_{t=2}^T \frac{t}{v-b_{(t)}} \one\parens{b_{(t)}\leq b< b_{(t+1)}}
- \frac1T \sum_{t=2}^T \frac{2}{v-b_{(t)}} \one\parens{b_{(t)}\leq b}
\\
=
\underbrace{\frac1T \sum_{t=2}^T \frac{t}{v-Q_{ct}} \one\parens{b_{(t)}\leq b<b_{(t+1)}}}_{\mathrm{I}}
- 
\underbrace{\frac1T \sum_{t=2}^T \frac{2}{v-Q_{ct}} \one\parens{b_{(t)}\leq b }}_{\mathrm{II}}
\\
+
\underbrace{\sum_{t=2}^T \frac tT \frac{b_{(t)}-Q_{ct}}{\parens{v-b_{(t)}}^2} \one\parens{b_{(t)}\leq b< b_{(t+1)}}}_{\mathrm{III}}
-
\underbrace{\frac2T \sum_{t=2}^T \frac{b_{(t)}-Q_{ct}}{\parens{v-b_{(t)}}^2}\one\parens{b_{(t)}\leq b}}_{\mathrm{IV}} +O_p\parens{T^{-1}}.
\end{multline*}
Now, term I is
\<
\frac1T \sum_{t=2}^T \frac{t}{v-Q_{ct}} \one\parens[\Big]{ \frac tT \leq G_{cT}\parens{b} < \frac{t+1}T }
\simeq
\frac{G_{cT}\parens{b}}{v-Q_c\cparens{G_T\parens{b}}} 
\simeq
2\frac{G_{cT}\parens{b}-G_c\parens{b}}{v-b} + \frac{G_c\parens{b}}{v-b}, \label{eq:npmle I}
\>
where $\simeq$ means that asymptotically negligible terms were omitted.  Further, term II is
\begin{multline} \label{eq:npmle II}
\simeq \int_0^{G_{cT}\parens{b}} \frac2{v-Q_c\parens{p}} \dif p
\simeq
2\int_0^b \frac{g_c\parens{\tilde b}}{v-\tilde b} \dif \tilde b
+
2 \frac{G_{cT}\parens{b}-G_c\parens{b}}{v-b}
\simeq\\
2\frac{G_c\parens{b}}{v-b} - 2\int_0^b \frac{G_c\parens{\tilde b}}{\parens{v-\tilde b}^2} \dif \tilde b
+
2 \frac{G_{cT}\parens{b}-G_c\parens{b}}{v-b}
\end{multline}
Term III is
\< \label{eq:npmple III}
\simeq \sum_{t=2}^T \frac tT   \frac{b_{(t)}-Q_{ct}}{\parens{v - Q_{ct}}^2} \one\parens{Q_{ct}\leq b< Q_{c,t+1}}
\simeq
G_c\parens{b} \frac{Q_{cT}\cparens{G_c\parens{b}} - b}{\parens{v- b}^2}
\simeq
-  \frac{G_{cT}\parens{b} - G_c\parens{b}}{\parens{v- b}}.
\>
Finally, term IV is
\< \label{eq:npmple IV}
\simeq \frac2T \sum_{t=2}^T c_t\parens{v}\parens{b_{(t)}-Q_{ct}} \one\parens{Q_{ct}\leq b}
\simeq 2 \int_0^{G_c\parens{b}} \frac{Q_{cT}\parens{p}-Q_c\parens{p}}{\cparens{v-Q_c\parens{p}}^2} \dif p
\simeq
-2 \int_0^b \frac{G_{cT}\parens{\tilde b}-G_c\parens{\tilde b}}{\parens{v-\tilde b}^2} \dif \tilde b.
\>
Adding the right hand sides in \cref{eq:npmle I,eq:npmple III} and subtracting the right hand sides in \cref{eq:npmle II,eq:npmple IV} from the sum yields after integration by parts on one of the nonstochastic terms
\<
\S_T\parens{b,v}\simeq - \underbrace{\frac{G_{cT}\parens{b}-G_c\parens{b}}{v-b}}_{\S_T^*\parens{b,v}} + \underbrace{2 \int_0^b \frac{G_{cT}\parens{s}-G_c\parens{s}}{\parens{v -s}^2} \dif s}_{\S_T^\circ\parens{b,v}}
+
\underbrace{\int_0^b \frac{G_c\parens{s}-g_c\parens{s} \parens{v-s}  }{\parens{v-s}^2} \dif s}_{\S\parens{b,v}}.
\>
Now, by \cref{ass:Gaussian process},
\[
\sqrt{T}\parens{\S_T^*+\S_T^\circ} \convw \underbrace{2\int_0^b \frac{\G^*\parens{s}}{\parens{v-s}^2} \dif s}_{\S_1^R} - \underbrace{\frac{\G^*\parens{b}}{v-b}}_{\S_2^R},
\]
as a process indexed by $\parens{b,v}$.  Thus,
\[
\maligned{
T^{2/3}\sparens[\big]{\S_1^R\cparens[\big]{{\beta}\parens{v}+t/\cbT,v}-\S_1^R\cparens{{\beta}\parens{v},v}} &= \opone, \\
T^{2/3} \cparens[\big]{\S_2^R\parens[\big]{{\beta}\parens{v}+t/\cbT,v}-\S_2^R\cparens{{\beta}\parens{v},v}} &\convw \sqrt{g_c\parens{b}} \cparens{v -{\beta}\parens{v}}^{-1} \G^B,
}
\]
where $\G^B$ is a standard two--sided Brownian motion.  

Further,
\(
T^{2/3} \cparens[\big]{\S\parens{b+t/\cbT,v}-\S\parens{b,v}} = \S''\parens{b,v} t^2/2 + o\parens{1},
\)
where the derivatives are taken with respect to $b$.
Putting everything together suggests that under \cref{eq:Hstar simple} for $b={\beta}\parens{v}$,
\[
\cbT\cparens{ {\beta}_T\parens{v} - {\beta}\parens{v}} \convw \argmin_x \parens[\bigg]{
	\frac{\sqrt{g_c\parens{b}}}{v -b} \G^B\parens{x}
	+ 
	\frac{S''\parens{b}}2 x^2}
=
\parens[\bigg]{\frac{4g_c\parens{b}}{\parens{v-b}^2 \cparens{\S''\parens{b}}^2}}^{1/3} \C,
\]
where $\C$ is a standard Chernoff.   Note that when $\S''$ is evaluated at ${\beta}\parens{v}$, we get
\(
\S''\parens{b} = \cparens{ 2 g_c^2\parens{b} -g_c'\parens{b} G_c\parens{b} } g_c\parens{b} / G_c^2\parens{b}.
\)
By equations (15) and (16) of \citet{jun2015classical} we have that for $b=Q_c\parens{p}$,
\begin{multline*}
\cbT\cparens{{\alpha}_T^\mle\parens{p}-{\alpha}\parens{p}} 
\convd
\parens[\bigg]{2- \frac{G_c\parens{b}g_c'\parens{b}}{g_c^2\parens{b}}}
\parens[\bigg]{\frac{4g_c\parens{b}}{\parens{{\alpha}\parens{p}-b}^2 \cparens{S''\parens{b}}^2}}^{1/3} \C
=
\\
\cparens[\bigg]{ \frac{4G_c^2\parens{b}}{g_c^3\parens{b}} \parens[\bigg]{2- \frac{G_c\parens{b}g_c'\parens{b}}{g_c^2\parens{b}}}}^{1/3} \C
=
\sqrt[3]{4 {\zeta}^2\parens{p} \cparens[\big]{2Q_c'\parens{p}+Q_c''\parens{p}p}} \C,
\end{multline*}
as claimed. \qed
\end{just*}

\subsection{Smoothing}

\begin{proof}[Proof of \cref{thm:ehat}]
First convexity. Substitution of $t=\parens{s-p}/h$ yields
\[
  \ehat\parens{p} = \int_{-\infty}^\infty \beT\parens{p+sh} k\parens{s} \dif s.
\]
Thus, for any $0<\lambda<1$ and any $p_\ell < p_h$,
\begin{multline*}
 \ehat\cparens[\big]{\lambda p_\ell +\parens{1-\lambda} p_h} =
\int_{-\infty}^\infty 
\beT\cparens[\big]{\lambda p_\ell+\parens{1-\lambda} p_h + sh} 
k\parens{s} \dif s \leq
\\
 \int_{-\infty}^\infty \cparens[\big]{
\lambda \beT\parens{p_\ell+ sh}
+
\parens{1-\lambda} \beT\parens{p_h+sh}
} 
k\parens{s} \dif s 
=
\lambda \ehat\parens{p_\ell} + \parens{1-\lambda} \ehat\parens{p_h}.
\end{multline*}

Now convergence.  We have
\begin{multline*}
\sqrt{T} \cparens[\big]{ \ehat\parens{p} - e\parens{p} }
=
 \sqrt{T} \parens[\bigg]{ \lazyint{\beT\parens{s} k_h\parens{p-s}} - e\parens{p}} 
=\\
\sqrt{T} \parens[\bigg]{
	 \lazyint{ \ehat\parens{p+sh} k\parens{s}} - e\parens{p}}
=\\
\underbrace{\sqrt{T} \lazyint{
		\cparens[\big]{
			\beT\parens{p+sh} - \beT\parens{p}-e\parens{p+sh}+e\parens{p}	
		} k\parens{s}} }_{\mathrm{I}}
\\
+\underbrace{\sqrt{T} \cparens[\big]{ \beT\parens{p} -e\parens{p}}}_{\mathrm{II}} 
+\underbrace{ \sqrt{T} \lazyint{  \cparens[\big]{e\parens{p+sh} - e\parens{p}} k\parens{s}}}_{\mathrm{III}}
\end{multline*}
Term II is what we want left over.  Term III is $\oo$ by a standard kernel bias expansion and \cref{ass:Qc differentiable}.  Finally, term I is $\opone$ by \cref{thm:ebreve}.

We end by establishing convergence of $\ahT$.  Note that
\begin{multline*}
\sqrt{Th} \cparens[\big]{ \ahT\parens{p} - \alpha\parens{p} }
=
- \sqrt{Th} \parens[\bigg]{ \lazyint{\beT\parens{s} k_h'\parens{s-p}} + \alpha\parens{p}} 
=\\
-\sqrt{Th} \parens[\bigg]{
\frac{1}{h} \lazyint{ \beT\parens{p+sh} k'\parens{s}} + \alpha\parens{p}}
=\\
\underbrace{-\sqrt{\frac{T}{h}} \lazyint{
\cparens[\big]{
\beT\parens{p+sh} - \beT\parens{p}-e\parens{p+sh}+e\parens{p}	
} k'\parens{s}} }_{\mathrm{I}}
\\
\underbrace{-\sqrt{\frac{T}{h}} \cparens[\big]{ \beT\parens{p} -e\parens{p}}
\lazyint{k'\parens{s}}}_{\mathrm{II}} 
\\
\underbrace{- \sqrt{\frac{T}{h}} \lazyint{  \cparens[\big]{e\parens{p+sh} k'\parens{s} \dif s - \sqrt{Th} \alpha\parens{p}}}}_{\mathrm{III}}
\end{multline*}
Term II is zero by the assumptions on the kernel.  Further, using a standard kernel derivative bias expansion, term III is $e'''\parens{p}\Xi/2 + o\parens{1}$.	
Finally, term I has by the continuous mapping theorem the same asymptotic distribution as
\[
 -\frac{1}{\sqrt{h}} \lazyint{ \cparens[\big]{\G\parens{p+sh} - \G\parens{p}} k'\parens{s} },
\]
which converges to a normal distribution with variance
\[
\lim_{h\to 0} \int_{-\infty}^\infty \int_{-\infty}^\infty 
 \frac{1}{h}\cparens[\big]{
 H\parens{p+sh,p+\tilde s h}
 - H\parens{p+sh,p}
 - H\parens{p,p+\tilde s h}
 + H\parens{p,p} 
} k'\parens{s} k'\parens{\tilde s}\dif \tilde s \dif s,	 
\]
which equals $\sV\parens{p}$, as asserted.
\end{proof}

\begin{proof}[Proof of \cref{lem:V}]
Suppose first that $\tilde s \geq s \geq 0$. Then,
\[
\maligned{
H\parens{p+sh,p+\tilde sh} & = {\zeta}\parens{p+sh}{\zeta}\parens{p+\tilde sh} \cparens{ p\parens{1-p} + \parens{1-p}sh - p \tilde sh - s\tilde s h^2}, 
\\
H\parens{p+sh,p} & = {\zeta}\parens{p+sh}{\zeta}\parens{p} \cparens{ p\parens{1-p}-psh}, 
\\
H\parens{p,p+\tilde sh} & = {\zeta}\parens{p+\tilde sh}{\zeta}\parens{p} \cparens{ p\parens{1-p}-p\tilde sh}, 
\\
H\parens{p,p} & = {\zeta}^2\parens{p} p\parens{1-p}.
}
\]
Thus,
\(
\lim_{h\to 0} \sH_h\parens{p,s,\tilde s} = {\zeta}^2\parens{p} s.
\)
Repeating the argument for the other cases yields 
\(
\lim_{h\to 0}  \sH_h\parens{p,s,\tilde s}
=
\zeta^2\parens{p}\abs{\Med\parens{0,s,\tilde s}}.
\)
Now,
\begin{multline*}
\int_{-\infty}^\infty \int_{-\infty}^\infty \abs{\Med\parens{0,s,\tilde s}}
k'\parens{s} k'\parens{\tilde s} \dif \tilde s \dif s =
\\
 -\int_{-\infty}^0 \int_{-\infty}^0 \max\parens{\tilde s,s} k'\parens{s} k'\parens{\tilde s} \dif \tilde s \dif s
 +
 \int_0^\infty \int_0^\infty \min\parens{\tilde s,s} k'\parens{s} k'\parens{\tilde s} \dif \tilde s \dif s
 =
 \\
 -2  \int_{-\infty}^0 k'\parens{s} s \int_{-\infty}^s  k'\parens{\tilde s} \dif \tilde s \dif s
 +
2 \int_0^\infty k'\parens{s} s\int_s^\infty k'\parens{\tilde s} \dif \tilde s \dif s
=
\\
-2 \int_{-\infty}^\infty k'\parens{s} s k\parens{s} \dif s =
 \kappa_2,
\end{multline*}
where the last equality follows using integration by parts. 
\end{proof}

\begin{proof}[Proof of \cref{lem:V symmetric}]
    Let $p>0$ and $z= p^{1/\parens{n-1}}$.   To obtain \cref{eq:H symmetric simplification}, simply note that ${\zeta}\parens{p} = Q_c'\parens{p} p = z Q'\parens{z}/\parens{n-1}$.
    Suppose first that $\tilde s \geq s \geq 0$.  Then,\footnote{The expansions below are informal to reduce length, but we have verified that they obtain if they are conducted more rigorously.} 
\[
\maligned{
    H\parens{p+sh,p+\tilde sh} & \simeq 
    \frac{
\parens{n-1} z\parens{1-z} + z^{2-n} sh - z^{3-n} sh - z^{3-n} \tilde s h}{n\parens{n-1}}
\cparens{p^2 +p\parens{s+\tilde s}h} \times \\
& \quad
\cparens{ Q'^2\parens{z}+ Q''\parens{z} Q'\parens{z}z^{2-n}\parens{s+\tilde s}h/\parens{n-1}},
    \\
    H\parens{p+sh,p} & \simeq \frac{
        \parens{n-1} z\parens{1-z}  - z^{3-n} sh }{n\parens{n-1}}
    \parens{ p^2+p sh} 
     \cparens{ Q'^2\parens{z}+ Q''\parens{z} Q'\parens{z}z^{2-n} sh/\parens{n-1}}, 
    \\
    H\parens{p,p+\tilde sh} &   \simeq \frac{
        \parens{n-1} z\parens{1-z}  - z^{3-n} \tilde sh }{n\parens{n-1}}
    \parens{ p^2+p \tilde sh}  \cparens{ Q'^2\parens{z}+ Q''\parens{z} Q'\parens{z}z^{2-n} \tilde sh/\parens{n-1}}, 
    \\
    H\parens{p,p} & =\frac{
        \parens{n-1} z\parens{1-z}  }{n\parens{n-1}}
    p^2 Q'^2\parens{z},
}
\]
where $\simeq$ means that $o\parens{h}$ terms are omitted.  Thus,
\(
\lim_{h\to 0} \sH_h\parens{p,s,\tilde s} = z^n  Q'^2\parens{z} s/ \cparens{{n\parens{n-1}} }.
\)
Repeating the same argument for other $s,\tilde s,0$ orderings we get
\[
\lim_{h\to 0} \sH_h\parens{p,s,\tilde s} = z^n Q'^2\parens{z} |\Med\parens{s,\tilde s,0}| / \cparens{{n\parens{n-1}} }
\]
Now use the integration argument from the proof of \cref{lem:V} to obtain the claimed result. 
\end{proof}

\begin{proof}[Proof of \cref{lem:V asymmetric}]
The proof follows the same path as that of \cref{lem:V symmetric} and is hence omitted.   
\end{proof}

\subsection{Boundary kernels}

\begin{proof}[Proof of \cref{lem:boundary kernel}]
	Let $\sI_j = \busyint{s^j {\phi}\parens{s}}$.  Note that
	$\sI_0= {\Omega}_{{\psi}0}$,  $\sI_1= - {\Omega}_{{\psi}1}$, $\sI_2= {\Omega}_{{\psi}2}+ {\Omega}_{{\psi}0}$.  Thus, we need 
	\[
	\left\{
	\begin{array}{rcll}
	{\omega}_{{\psi}1}{\Omega}_{{\psi}0} &- & {\omega}_{{\psi}2} {\Omega}_{{\psi}1} & = 1, \\
	-{\omega}_{{\psi}1}{\Omega}_{{\psi}1}&+ &{\omega}_{{\psi}2}\parens{{\Omega}_{{\psi}0}+{\Omega}_{{\psi}2}} & = 0.
	\end{array}
	\right.
	\]
	Solve for ${\omega}_{{\psi}1},{\omega}_{{\psi}2}$.
\end{proof}

\begin{lem} \label{lem:complicated boundary kernel}
Let $\sI_j$ be defined as in the proof of \cref{lem:boundary kernel} and let
$
 k_{{\psi}h}\parens{s \bkdelim p} = \sum_{j=1}^3 s^{j-1} {\omega}_{{\psi}j} {\phi}\parens{s},
$    
where
\[
\begin{bmatrix}
{\omega}_{{\psi}1}\\
{\omega}_{{\psi}2}\\
{\omega}_{{\psi}3}
\end{bmatrix}
=
\begin{bmatrix}
 \sI_0 & \sI_1 & \sI_2 \\
 \sI_1 & \sI_2 & \sI_3 \\
 \sI_2 & \sI_3 & \sI_4
\end{bmatrix}^{-1}
\begin{bmatrix}
1\\
0\\
1
\end{bmatrix}
\]
Then $k_{{\psi}h}$ satisfies the requirements of \cref{eq:boundary kernel requirements} everywhere, including at the boundaries.
\proof
Trivial. \qed
\end{lem}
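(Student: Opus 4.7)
The plan is to verify directly by substitution that the proposed kernel satisfies each of the three moment conditions in \cref{eq:boundary kernel requirements}. Because $k_{{\psi}h}\parens{s\bkdelim p}$ is linear in $(\omega_{\psi 1},\omega_{\psi 2},\omega_{\psi 3})$, each condition reduces to a single linear equation in these three unknowns whose coefficients are moments of the standard normal density truncated to $(\bl\upsilon_\psi,\bar\upsilon_\psi)$.

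First I would expand
\[
\busyint{ s^j\, k_{\psi h}(-s\bkdelim p)}
= \sum_{\ell=1}^3 (-1)^{\ell-1} \omega_{\psi \ell} \busyint{s^{j+\ell-1}\phi(s)}
\]
for $j=0,1,2$, by expanding the definition of $k_{\psi h}$ and swapping sum and integral. Each integral on the right is $\sI_{j+\ell-1}$. Setting the left-hand side equal to $|1-j|$ produces a $3\times 3$ linear system in the $\omega_{\psi \ell}$'s. After absorbing the alternating signs $(-1)^{\ell-1}$ (equivalently, rewriting $\sI_1=-\Omega_{\psi 1}$, $\sI_2=\Omega_{\psi 0}+\Omega_{\psi 2}$, and similar identities obtained by integration by parts), the coefficient matrix is precisely the Hankel matrix displayed in the statement of the lemma. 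The choice of $\omega_{\psi \ell}$'s given by the stated matrix inverse therefore satisfies all three moment conditions simultaneously, by construction.

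The only genuine content beyond the algebra is that the claim must hold \emph{everywhere, including at the boundaries} $p\in\Set{0,1}$, so that the $j=2$ requirement is an equality and not merely a boundedness condition as in \cref{lem:boundary kernel}. This is immediate: the coefficient matrix is the Gram matrix of $\cparens{1,s,s^2}$ with respect to the positive measure $\phi(s)\one\parens{s\in(\bl\upsilon_\psi,\bar\upsilon_\psi)}\dif s$, which is positive definite whenever the truncation interval has positive length. Under \cref{ass:psi}, ${\psi}$ is strictly increasing on $\sparens{0,1}$, so $\bl\upsilon_\psi<\bar\upsilon_\psi$ for every $h>0$ and every $p\in\sparens{0,1}$. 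The matrix is thus invertible uniformly in $p$, and the $\omega_{\psi \ell}$'s are well defined at the boundary points as well.

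The main (and really only) obstacle is the sign bookkeeping: the requirement \cref{eq:boundary kernel requirements} evaluates the kernel at $-s$, and the three polynomial pieces of $k_{\psi h}(s\bkdelim p)$ flip sign differently under $s\mapsto -s$. Once these signs are tracked, the assertion collapses to the definition of matrix inversion, which is why the author labels the proof as trivial.
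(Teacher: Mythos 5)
Your verification is correct and is exactly the ``trivial'' computation the paper has in mind: the $\omega_{\psi j}$'s are by definition the solution of the three moment equations, so the only substantive content is invertibility of the coefficient matrix, which you correctly identify as the Gram matrix of $\cparens{1,s,s^2}$ under the truncated normal measure and hence positive definite whenever the truncation interval is nondegenerate (including the limiting half-line and full-line cases). One bookkeeping caveat: with the all-plus parametrization $k_{\psi h}\parens{s\bkdelim p}=\parens{\omega_{\psi 1}+\omega_{\psi 2}s+\omega_{\psi 3}s^2}\phi\parens{s}$ exactly as written, evaluating at $-s$ negates the middle column of the system, so the displayed Hankel inverse delivers $\parens{\omega_{\psi 1},-\omega_{\psi 2},\omega_{\psi 3}}$ rather than $\parens{\omega_{\psi 1},\omega_{\psi 2},\omega_{\psi 3}}$; your ``absorb the alternating signs'' step is precisely where this (harmless) sign convention --- consistent with the $\omega_{\psi 1}-\omega_{\psi 2}s$ form used in \cref{lem:boundary kernel} --- must be made explicit.
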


\begin{lem} \label{lem:tedious kernel integrals}
    For any $\bl {\upsilon}< \bar {\upsilon}$,
    \[
    \maligned{
        \int_{\bl {\upsilon}}^{\bar {\upsilon}} sk'\parens{-s} \dif s &=
        \bl {\upsilon} k\parens{-\bl {\upsilon}} - \bar {\upsilon} k\parens{-\bar {\upsilon}} + \int_{\bl {\upsilon}}^{\bar {\upsilon}} k\parens{-s} \dif s, 
        \\
        \int_{\bl {\upsilon}}^{\bar {\upsilon}} sk'\parens{-s} k\parens{-s} \dif s &=
        \frac{1}{2}
        \parens[\Big]{
            \bl {\upsilon} k^2\parens{-\bl {\upsilon}} - \bar {\upsilon} k^2\parens{-\bar {\upsilon}} + \int_{\bl {\upsilon}}^{\bar {\upsilon}} k^2\parens{-s} \dif s}.
    }
    \]    
    \proof
    Follows using integration by parts. \qed
\end{lem}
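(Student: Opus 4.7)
The plan is to verify both identities by integration by parts, as the lemma statement itself hints. Both are elementary once one identifies the right antiderivative; no deep machinery is needed.

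For the first identity, I would take $u = s$ and $\dif v = k'\parens{-s}\dif s$. Since $\frac{\dif}{\dif s}\sparens{-k\parens{-s}} = k'\parens{-s}$, the antiderivative is $v = -k\parens{-s}$, and $\dif u = \dif s$. Applying integration by parts then yields
\[
\int_{\bl {\upsilon}}^{\bar {\upsilon}} s\,k'\parens{-s}\dif s = \sparens{-s\,k\parens{-s}}_{\bl {\upsilon}}^{\bar {\upsilon}} + \int_{\bl {\upsilon}}^{\bar {\upsilon}} k\parens{-s}\dif s = \bl {\upsilon}\,k\parens{-\bl {\upsilon}} - \bar {\upsilon}\,k\parens{-\bar {\upsilon}} + \int_{\bl {\upsilon}}^{\bar {\upsilon}} k\parens{-s}\dif s,
\]
which is the desired formula.

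For the second identity, the key observation is that $k'\parens{-s}\,k\parens{-s} = -\tfrac{1}{2}\,\frac{\dif}{\dif s}\sparens{k^2\parens{-s}}$, since $\frac{\dif}{\dif s}\sparens{k^2\parens{-s}} = 2k\parens{-s}\cdot\parens{-k'\parens{-s}}$. Taking $u = s$ and $\dif v = k'\parens{-s}\,k\parens{-s}\dif s$ gives $v = -\tfrac{1}{2}k^2\parens{-s}$ and $\dif u = \dif s$, so
\[
\int_{\bl {\upsilon}}^{\bar {\upsilon}} s\,k'\parens{-s}\,k\parens{-s}\dif s = \sparens[\Big]{-\tfrac{1}{2}s\,k^2\parens{-s}}_{\bl {\upsilon}}^{\bar {\upsilon}} + \tfrac{1}{2}\int_{\bl {\upsilon}}^{\bar {\upsilon}} k^2\parens{-s}\dif s,
\]
which upon collecting boundary terms gives the stated expression.

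Neither step poses a genuine obstacle: there is no need to invoke any properties of $k$ beyond differentiability on $\sparens{\bl {\upsilon}, \bar {\upsilon}}$, and the antiderivatives can be written down by inspection once one notices the sign flip produced by the chain rule when differentiating $k\parens{-s}$. The only trap to avoid is sign errors in tracking the $-1$ from the inner derivative; I would double check by verifying that the right--hand sides reduce correctly in the special case $k\parens{-s} \equiv 1$ (first identity: both sides are $\bar {\upsilon} - \bl {\upsilon} - \parens{\bar {\upsilon}-\bl {\upsilon}} = 0$ after cancellation, consistent with $k'\equiv 0$) and by a similar sanity check for the second.
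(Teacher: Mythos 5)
Your proof is correct and is exactly the argument the paper intends: the paper's proof consists solely of the remark ``Follows using integration by parts,'' and you have simply carried out that computation explicitly, with the signs from the inner derivative of $k\parens{-s}$ tracked correctly in both identities. Nothing further is needed.
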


\begin{proof}[Proof of \cref{thm:ehatpsi}]
%
Consider $\ahTpsi$.	  We have
\begin{multline} \label{eq:psitheorem1}
 \sqrt{Th}  \cparens{ \ahTpsi\parens{p} - {\alpha}\parens{p}} =
 \sqrt{Th} \parens[\bigg]{\psi'\parens{p} \uint[s]{{\alpha}_T\parens{s} k_{{\psi}h}\parens[\Big]{\frac{{\psi}\parens{p}- {\psi}\parens{s}}h\Bigm\bkdelim p }} - \alpha\parens{p}}
 \\
 =
\underbrace{\sqrt\frac Th \psi'\parens{p} \uint[s]{\cparens{{\alpha}_T\parens{s} - {\alpha}\parens{s}} k_{{\psi}h}\parens[\Big]{\frac{{\psi}\parens{p}- {\psi}\parens{s}}h\Bigm\bkdelim p }} }_{\mathrm{I}}
\\
+ 
\underbrace{\sqrt\frac Th \parens[\bigg]{
\psi'\parens{p} \uint[s]{{\alpha}\parens{s}k_{{\psi}h}\parens[\Big]{\frac{{\psi}\parens{p}- {\psi}\parens{s}}h\Bigm\bkdelim p }} - {\alpha}\parens{p}}}_{\mathrm{II}}.	
\end{multline}
Let ${\Psi}\parens{s} = {\alpha}\cparens{{\psi}^{-1}\parens{s}} / {\psi}'\cparens{{\psi}^{-1}\parens{s}}$.  Then term II in \cref{eq:psitheorem1} becomes by substitution of $s \leftarrow \cparens{{\psi}\parens{p}- {\psi}\parens{s} }/h$,
\[
 \sqrt{Th} \parens[\bigg]{
 	{\psi}'\parens{p} \busyint{{\Psi}\cparens{{\psi}\parens{p}+sh} k_{{\psi}h}\parens{-s \bkdelim p}} - {\alpha}\parens{p}}=
 \\
 \frac{\Xi {\psi}'\parens{p} {\Psi}''\cparens{{\psi}\parens{p}}}{2}
 +\oo
\]
for all $0< p < 1$
by a standard kernel bias expansion.  For $p=0,1$ the asymptotic bias differs by a multiplicative constant.  Note that 
$
 {\psi}'{\Psi}''
 $
 equals \cref{eq:fugly2},
which produces the asserted asymptotic bias.

Now term I in \cref{eq:psitheorem1}.  Integration by parts produces
\begin{multline}\label{eq:psitheorem2}
 \underbrace{\sqrt\frac{T}{h} {\psi}'\parens{p}\cparens{ \beT\parens{1}-e\parens{1}} k_{{\psi}h}\parens[\Big]{\frac{{\psi}\parens{p}-{\psi}\parens{1}}h \Bigm\bkdelim p}}_{\mathrm{I}}
 \\
 +
 \underbrace{\sqrt\frac{T}{h^3} {\psi}'\parens{p} \uint[s]{ {\psi}'\parens{s}
\cparens{\beT\parens{s}-e\parens{s} - \beT\parens{p}+e\parens{p}} k_{{\psi}h}'\parens[\Big]{\frac{{\psi}\parens{p}-{\psi}\parens{s}}h \Bigm\bkdelim p}}
}_{\mathrm{II}}
\\
+
 \underbrace{\sqrt\frac{T}{h^3} {\psi}'\parens{p} \cparens{\beT\parens{p}-e\parens{p}} \uint[s]{{\psi}'\parens{s}k_{{\psi}h}'\parens[\Big]{\frac{{\psi}\parens{p}-{\psi}\parens{s}}h \Bigm\bkdelim p}}}_{\mathrm{III}}.
\end{multline}
Term I in \cref{eq:psitheorem2} vanishes because $\beT\parens{1}$ converges at rate $T$.  Term III equals
\begin{multline} \label{eq:psitheorem3}
 \sqrt{\frac Th} {\psi}'\parens{p} \cparens{\beT\parens{p}-e\parens{p}} k_{{\psi}h}\parens[\Big]{\frac{{\psi}\parens{p}-{\psi}\parens{0}}h \Bigm\bkdelim p} \\
 - \sqrt{\frac Th} {\psi}'\parens{p} \cparens{\beT\parens{p}-e\parens{p}} k_{{\psi}h}\parens[\Big]{\frac{{\psi}\parens{p}-{\psi}\parens{1}}h \Bigm\bkdelim p}.
\end{multline}
For fixed $0\leq p\leq 1$, \cref{eq:psitheorem3} is $\opone$ by the conditions on $k_{{\psi}h}$.

Finally, term II in \cref{eq:psitheorem2}.  Consider fixed $0\leq p\leq 1$. Substitute $s \ot \cparens{{\psi}\parens{s}-{\psi}\parens{p}}/h$ to obtain
\begin{multline*}
\sqrt\frac Th {\psi}'\parens{p} \busyint{ 
	\parens[\Big]{\beT\sparens[\big]{{\psi}^{-1}\cparens{{\psi}\parens{p}+sh}}-e\sparens[\big]{{\psi}^{-1}\cparens{{\psi}\parens{p}+sh}} 
		- \beT\parens{p}+e\parens{p}} k_{{\psi}h}'\parens{-s \bkdelim p}}
\\
\simeq
\frac{{\psi}'\parens{p}}{\sqrt{h}} \busyint{ 
	\parens[\Big]{\G\parens[\Big]{p+ \frac{sh}{{\psi}'\parens{p}}}
		- \G\parens{p}} k_{{\psi}h}'\parens{-s \bkdelim p}},
\end{multline*}
which by a tedious repetition of the arguments of \cref{thm:ehat} has a limiting mean zero normal distribution with variance
\[
\lim_{h\to 0 }{\psi}'^2\parens{p}\busyint{ \busyint[\tilde s]{ 
\sH_h\parens{p,s/{\psi}'\parens{p},\tilde s/{\psi}'\parens{p}}
        k_{{\psi}h}'\parens{-\tilde s \bkdelim p}
    }\;
    k_{{\psi}h}'\parens{-s \bkdelim p}
},
\]
which under \cref{eq:Hstar simple} simplifies to
\begin{multline}\label{eq:psitheorem4}
{\zeta}^2\parens{p} {\psi}'\parens{p} \lim_{h\to 0} \parens[\bigg]{
    \bar {\upsilon}_{{\psi}h} k_{{\psi}h}^2\parens{\bar {\upsilon}_{{\psi}h} \bkdelim p} 
    -
    \bl {\upsilon}_{{\psi}h} k_{{\psi}h}^2\parens{\bl {\upsilon}_{{\psi}h} \bkdelim p} 
    \\
    -
    k_{{\psi}h}\parens{-\bar {\upsilon}_{{\psi}h} \bkdelim p} \int_0^{\bar {\upsilon}_{{\psi}h}} k_{{\psi}h}\parens{-s \bkdelim p} \dif s
    -
    k_{{\psi}h}\parens{-\bl {\upsilon}_{{\psi}h} \bkdelim p} \int_{\bl {\upsilon}_{{\psi}h}}^0 k_{{\psi}h}\parens{-s \bkdelim p} \dif s
    +
    \int_{\bl {\upsilon}_{{\psi}h}}^{\bar {\upsilon}_{{\psi}h}} k_{{\psi}h}^2\parens{-s \bkdelim p} \dif s
}
\\
=
{\zeta}^2\parens{p} {\psi}'\parens{p}\lim_{h\to 0} \int_{\bl {\upsilon}_{{\psi}h}}^{\bar {\upsilon}_{{\psi}h}} k_{{\psi}h}^2\parens{-s \bkdelim p} \dif s,
\end{multline}
as promised.  For $0<p<1$, the right hand side in \cref{eq:psitheorem4} reduces to
${\zeta}^2\parens{p} {\psi}'\parens{p} / \sqrt{\pi}$.\joris{The variance may be discontinuous at $p=1$.}  
\end{proof}

\begin{proof}[Proof of \cref{thm:ebarpsi}]
Consider $\bar {\alpha}_{T{\psi}}$.	  We have
\begin{multline} \label{eq:psitheoremalt1}
	\sqrt{Th}  \cparens{ \bar {\alpha}_{T{\psi}}\parens{p} - {\alpha}\parens{p}} =
	\sqrt{Th} \parens[\bigg]{\uint[s]{{\psi}'\parens{s} {\alpha}_T\parens{s} k_{{\psi}h}\parens[\Big]{\frac{{\psi}\parens{p}- {\psi}\parens{s}}h\Bigm\bkdelim p }} - \alpha\parens{p}}
	\\
	=
	\underbrace{\sqrt\frac Th  \uint[s]{\psi'\parens{s}\cparens{{\alpha}_T\parens{s} - {\alpha}\parens{s}} k_{{\psi}h}\parens[\Big]{\frac{{\psi}\parens{p}- {\psi}\parens{s}}h\Bigm\bkdelim p }} }_{\mathrm{I}}
	\\
	+ 
	\underbrace{\sqrt\frac Th \parens[\bigg]{
	 \uint[s]{{\psi}'\parens{s}{\alpha}\parens{s}k_{{\psi}h}\parens[\Big]{\frac{{\psi}\parens{p}- {\psi}\parens{s}}h\Bigm\bkdelim p }} - {\alpha}\parens{p}}}_{\mathrm{II}}.	
\end{multline}
Let ${\Psi}\parens{s} = {\alpha}\cparens{{\psi}^{-1}\parens{s}}$. Then term II in \cref{eq:psitheoremalt1} becomes by substitution of $s \leftarrow \cparens{{\psi}\parens{p}- {\psi}\parens{s} }/h$ and a standard kernel bias expansion,
\begin{multline} \label{eq:psitheoremalt1.5}
\sqrt{Th} \parens[\bigg]{
 \busyint{{\Psi}\cparens{{\psi}\parens{p}+sh} k_{{\psi}h}\parens{-s\bkdelim p}} - {\alpha}\parens{p}}=
\frac{\Xi  {\Psi}''\cparens{{\psi}\parens{p}}}{2}
\busyint{ s^2 k_{{\psi}h}\parens{-s\bkdelim p}} + \oo
\\
=
\frac{\Xi  {\Psi}''\cparens{{\psi}\parens{p}}}{2} \lim_{h\to 0} \frac{ \parens{{\Omega}_{{\psi}0}+{\Omega}_{{\psi}2}}^2 - {\Omega}_{{\psi}1} \parens{3{\Omega}_{{\psi}1}+{\Omega}_{{\psi}3}}}{{\Omega}_{{\psi}0}^2 + {\Omega}_{{\psi}0} {\Omega}_{{\psi}2}- {\Omega}_{{\psi}1}^2} +\oo.
\end{multline}
The limit in the right hand side in \cref{eq:psitheoremalt1.5} equals one for all $0<p<1$ and equals $\parens{\pi-4}/\parens{\pi-2}$ for $p=0,1$.  Since
\[
{\Psi}''= \frac{{\alpha}''}{{\psi}'^2} -  \frac{{\alpha}' {\psi}''}{{\psi}'^3},
\]
we get the asserted asymptotic bias.

Now term I in \cref{eq:psitheoremalt1}.  Integration by parts produces
\begin{multline}\label{eq:psitheoremalt2}
	\underbrace{\sqrt\frac{T}{h} {\psi}'\parens{1}\cparens{ \beT\parens{1}-e\parens{1}} k_{{\psi}h}\parens[\Big]{\frac{{\psi}\parens{p}-{\psi}\parens{1}}h \Bigm\bkdelim p}}_{\mathrm{I}}
	\\
	+
	\underbrace{\sqrt\frac{T}{h^3}  \uint[s]{ {\psi}'^2\parens{s}
			\cparens{\beT\parens{s}-e\parens{s} - \beT\parens{p}+e\parens{p}} k_{{\psi}h}'\parens[\Big]{\frac{{\psi}\parens{p}-{\psi}\parens{s}}h \Bigm\bkdelim p}}
	}_{\mathrm{II}}
	\\
	+
	\underbrace{\sqrt\frac{T}{h^3}  \cparens{\beT\parens{p}-e\parens{p}} \uint[s]{{\psi}'^2\parens{s}k_{{\psi}h}'\parens[\Big]{\frac{{\psi}\parens{p}-{\psi}\parens{s}}h \Bigm\bkdelim p}}}_{\mathrm{III}}
\\
-
	\underbrace{\sqrt\frac{T}{h}  \uint[s]{ {\psi}''\parens{s}
		\cparens{\beT\parens{s}-e\parens{s} - \beT\parens{p}+e\parens{p}} k_{{\psi}h}\parens[\Big]{\frac{{\psi}\parens{p}-{\psi}\parens{s}}h \Bigm\bkdelim p}}
}_{\mathrm{IV}}
\\	
-\underbrace{\sqrt\frac{T}{h}  \cparens{\beT\parens{p}-e\parens{p}} \uint[s]{{\psi}''\parens{s}k_{{\psi}h}\parens[\Big]{\frac{{\psi}\parens{p}-{\psi}\parens{s}}h \Bigm\bkdelim p}}}_{\mathrm{V}}
\end{multline}
Term I in \cref{eq:psitheoremalt2} vanishes because $\beT\parens{1}$ converges faster than $\sqrt{T/h}$.  Term III equals
\begin{multline} \label{eq:psitheoremalt3}
	\sqrt{\frac Th} {\psi}'\parens{0} \cparens{\beT\parens{p}-e\parens{p}} k_{{\psi}h}\parens[\Big]{\frac{{\psi}\parens{p}-{\psi}\parens{0}}h \Bigm\bkdelim p} \\
	- \sqrt{\frac Th} {\psi}'\parens{1} \cparens{\beT\parens{p}-e\parens{p}} k_{{\psi}h}\parens[\Big]{\frac{{\psi}\parens{p}-{\psi}\parens{1}}h \Bigm\bkdelim p}
	\\
+ 	\sqrt{\frac Th}\cparens{\beT\parens{p}-e\parens{p}} \int_0^1 {\psi}''\parens{s} k_{{\psi}h}\parens[\Big]{\frac{{\psi}\parens{p}-{\psi}\parens{s}}h \Bigm\bkdelim p} \dif s.
\end{multline}
For fixed $0<p<1$, \cref{eq:psitheoremalt3} is $\opone$ by the conditions on $k_{{\psi}h}$ and at $p=0,1$, the superconsistency of $\beT\parens{p}$ takes care of the problem.

By a change of variables, terms IV and V in \cref{eq:psitheoremalt2} are $\opone$.

Finally, term II in \cref{eq:psitheoremalt2}. Consider fixed $0< p < 1$.  Let ${\Psi}\parens{p;sh}={\psi}^{-1}\cparens{{\psi}\parens{p}+sh}$. Substitute $s \ot \cparens{{\psi}\parens{s}-{\psi}\parens{p}}/h$ to obtain
\begin{multline*}
	\sqrt\frac Th  \busyint{ {\psi}'\cparens{{\Psi}\parens{p;sh}}
		\parens[\Big]{\beT\cparens[\big]{{\Psi}\parens{p;sh}}-e\cparens[\big]{{\Psi}\parens{p;sh}} 
			- \beT\parens{p}+e\parens{p}} k_{{\psi}h}'\parens{-s \bkdelim p}}
	\\
	\simeq
	\frac{{\psi}'\parens{p}}{\sqrt{h}} \busyint{ 
		\cparens[\Big]{\G\parens[\Big]{p+ \frac{sh}{{\psi}'\parens{p}}}
			- \G\parens{p}} k_{{\psi}h}'\parens{-s \bkdelim p}},
\end{multline*}
which has a limiting mean zero normal distribution with covariance kernel
\[
{\psi}'^2\parens{p}
\lim_{h\to 0}
\busyint{ 
\busyint[\tilde s]{
	 k_{{\psi}h}'\parens{-s \bkdelim p}k_{{\psi}h}'\parens{-\tilde s \bkdelim p} \sH_h\cparens{p,s/{\psi}'\parens{p},s'/{\psi}'\parens{p}}
	}
}.
\]
Note that $\bl {\upsilon}_{\psi} \to -\infty$, $\bar {\upsilon}_{\psi} \to\infty$, $k'_{{\psi}h} \to {\phi}'$ as $h \to 0$.
\end{proof}

\begin{proof}[Proof of \cref{lem:computation alpha bar}]
   We have
\begin{multline*}
  \bar {\alpha}_{T{\psi}}\parens{p} = \frac1h \uint[s]{ {\psi}'\parens{s} {\alpha}_T\parens{s} k_{{\psi}h}\parens[\Big]{ \frac{ {\psi}\parens{p}-{\psi}\parens{s}}h \Bigm \bkdelim p}}  
  \\
  =
  \frac1h \sum_{j=1}^T {\alpha}_{Tj} \int_{\frac{j-1}T}^\frac jT  {\psi}'\parens{s}  k_{{\psi}h}\parens[\Big]{ \frac{{\psi}\parens{p}-{\psi}\parens{s}}h \Bigm \bkdelim  p } \dif s 
 = 
  \sum_{j=1}^T {\alpha}_{Tj} \int_{{\upsilon}_{j-t}\parens{p}}^{{\upsilon}_j\parens{p}} k_{{\psi}h}\parens{-s \bkdelim p} \dif s
  \\
  = \sum_{j=1}^T {\alpha}_{Tj}  {\Lambda}_{{\psi}j}\parens{p}. 
\end{multline*} 
The second statement in \cref{lem:computation alpha bar} is a natural property of the normal distribution.
\end{proof}

\drop{
\begin{proof}[Proof of \cref{lem:bs psi}]
Note that\joris{lemma no longer in text, so this proof should be removed}
\begin{multline*}
 \hat A\parens{p} = \frac{1}{h} \uint[s]{ \breve A_T\parens{s} {\psi}'\parens{s} k_{{\psi}h}\parens[\Big]{\frac{{\psi}\parens{p}-{\psi}\parens{s}}h \Bigm\bkdelim p }}
 =\\
 \frac{1}{h} \sum_{j=1}^T \int_{\frac{j-1}T}^{\frac jT} \sparens[\bigg]{
{\alpha}_{Tj} s - \cparens[\Big]{ \breve e_{T,j-1}+ \parens[\Big]{s- \frac{j-1}T} {\alpha}_{Tj}}}  {\psi}'\parens{s} k_{{\psi}h}\parens[\Big]{\frac{{\psi}\parens{p}-{\psi}\parens{s}}h \Bigm\bkdelim p } \dif s
= \\
\sum_{j=1}^T S_{Tj} {\Lambda}_{{\psi}j}\parens{p},
\end{multline*}   
which yields the desired result. 
\end{proof}
}

\subsection{Another boundary correction}

\newcommand{\rated}{\epsilon_{dT}}

Below, let $\rated$ denote the convergence rate of $\hat d$.

\begin{lem} \label{lem:ahat minus a}
    Suppose that for $0\leq \hat a,a\leq 1$, $\hat a-a = O_p\parens{{\epsilon}_T}$.  Then
 \[
 \beT\parens{\hat a}- e\parens{a}
 =
 \beT\parens{a}-e\parens{a} + e\parens{\hat a}-e\parens{a} + O_p\parens[\big]{\sqrt{\parens{-{\epsilon}_T\log {\epsilon}_T}/T}}.
 \]
 \proof
This is just a rearrangement of
$
 \beT\parens{\hat a}- e\parens{\hat a} - \beT\parens{a}+ e\parens{a} = O_p\parens[\big]{\sqrt{\parens{-{\epsilon}_T\log {\epsilon}_T}/T}},
 $  which is Levy's modulus of continuity theorem. \qed
\end{lem}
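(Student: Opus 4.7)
The identity is purely algebraic: writing
\[
\beT(\hat a) - e(a) = \bigl[\beT(a) - e(a)\bigr] + \bigl[e(\hat a) - e(a)\bigr] + R_T,\qquad R_T \defeq \beT(\hat a) - e(\hat a) - \beT(a) + e(a),
\]
reduces the claim to showing that $R_T = O_p\bigl(\sqrt{(-{\epsilon}_T \log {\epsilon}_T)/T}\bigr)$. So the entire content of the lemma is a modulus--of--continuity estimate for the centered process $\sqrt{T}\bigl(\beT - e\bigr)$ at two points separated by $O_p({\epsilon}_T)$.

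The plan is to invoke \cref{thm:ebreve}, which gives $\sqrt{T}(\beT - e) \convw \G$ on $\sparens{0,1}$, where $\G$ is a Gaussian process with covariance kernel $H$ satisfying $H(p,p) - 2H(p,p^*) + H(p^*,p^*) = O(\abs{p-p^*})$ (e.g.\ in the simple case $H(p,p^*) = \zeta(p)\zeta(p^*)\cparens{\min(p,p^*) - pp^*}$, so $\G$ is a time--scaled Brownian--bridge--like process; the analogous variance bound also holds for the asymmetric and symmetric versions of $H$ in \cref{lem:V asymmetric,lem:V symmetric}). By L\'evy's classical modulus of continuity for Brownian motion (and the straightforward extension to Gaussian processes with $\Var\sparens{\G(p)-\G(p^*)} = O(\abs{p-p^*})$),
\[
\sup_{\abs{p-p^*} \leq \delta} \abs{\G(p) - \G(p^*)} = O_p\bigl(\sqrt{-\delta \log \delta}\,\bigr),
\]
which, when specialized to the random pair $(\hat a, a)$ with $\abs{\hat a - a} = O_p({\epsilon}_T)$, gives $\abs{\G(\hat a) - \G(a)} = O_p\bigl(\sqrt{-{\epsilon}_T \log {\epsilon}_T}\,\bigr)$.

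The transfer from the limit process back to the finite--sample process $\sqrt{T}(\beT - e)$ is the main technical obstacle. Weak convergence alone is not quite enough: one needs a ``uniform'' modulus bound that holds at the rate $\sqrt{(-{\epsilon}_T \log {\epsilon}_T)/T}$ for the pre--limit process, not just in the limit. The cleanest route is to observe that, by the construction of $\beT$ from $G_{cT}$ and the empirical quantile representation in \cref{eq:unconstrmenu}, $\sqrt{T}(\beT - e)$ admits a Bahadur--type linear expansion in the empirical process, for which classical oscillation inequalities (e.g.\ \citet[Ch.~14]{vandervaart2000asymptotic}) give precisely the required $O_p\bigl(\sqrt{-{\epsilon}_T \log {\epsilon}_T}\,\bigr)$ modulus bound on intervals of length ${\epsilon}_T$. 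Combined with the tightness argument already carried out in the proof of \cref{thm:ebreve} (which extended weak convergence to the closed interval $\sparens{0,1}$), this yields $R_T = O_p\bigl(\sqrt{(-{\epsilon}_T \log {\epsilon}_T)/T}\bigr)$ and closes the argument.
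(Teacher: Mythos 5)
Your proof is correct and takes essentially the same route as the paper: both reduce the claim by algebraic rearrangement to the increment bound $\beT\parens{\hat a}-e\parens{\hat a}-\beT\parens{a}+e\parens{a}=O_p\bigl(\sqrt{\parens{-\epsilon_T\log\epsilon_T}/T}\bigr)$ and then invoke L\'evy's modulus of continuity. Your additional care about transferring the modulus from the Gaussian limit to the pre--limit process $\sqrt{T}\parens{\beT-e}$ (via an oscillation bound for the underlying empirical process) addresses a step that the paper's one--line proof leaves implicit.
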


\begin{lem} \label{lem:phihat minus phi}
     Let ${\psi}$ satisfy condition (iii) in \Cref{thm:boundary correction reflection}. For any $0\leq C<\infty$,
\(
 \sup_{0\leq s \leq C} \abs{ \hat {\rho}\cparens{{\psi}\parens{1+sh}} - {\rho}\cparens{{\psi}\parens{1+sh}}} = O_p\parens{h^2 {\epsilon}_{dT}}.
\)    
\proof
Follows immediately from writing out noting that $\hat {\rho}, {\rho}$ are third degree polynomials and noting that ${\psi}\parens{1}=0$. \qed
\end{lem}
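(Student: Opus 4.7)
The plan is to verify the claimed bound by direct computation, leveraging the polynomial form of $\rho$ and the vanishing of $\psi$ at $1$. The key observation is that the difference $\hat{\rho}-\rho$ factors through $\hat d - d$, while the argument $\psi(1+sh)$ is of order $h$ uniformly on any bounded set of $s$.

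First, I would write out the difference explicitly. Since
\(
\rho(t) = t + d\,t^2 + \bigl(d^2 - \psi''(1) d /6\bigr)t^3
\)
and $\hat{\rho}$ is defined analogously with $\hat d$ in place of $d$, a short algebraic manipulation yields
\[
\hat{\rho}(t) - \rho(t) = (\hat d - d)\,t^2 + \Bigl[(\hat d + d)(\hat d - d) - \psi''(1)(\hat d - d)/6\Bigr] t^3 = (\hat d - d)\cdot R(t;\hat d, d),
\]
where $R(t;\hat d,d) = t^2 + \bigl[\hat d + d - \psi''(1)/6\bigr] t^3$.

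Second, I would control the size of $t = \psi(1+sh)$ uniformly on $0\le s\le C$. Because $\psi$ is twice continuously differentiable at $1$ with $\psi(1)=0$ and $\psi'(1)=1$ (condition (iii)), Taylor's theorem gives
\[
\psi(1+sh) = sh + \tfrac{1}{2}\psi''(1)(sh)^2 + o\bigl((sh)^2\bigr),
\]
so $\sup_{0\le s\le C}|\psi(1+sh)| = O(h)$ as $h\downarrow 0$.

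Finally, I would combine the two ingredients. Since $\hat d - d = O_p(\epsilon_{dT})$ and $\epsilon_{dT}=o(1)$, we have $\hat d = O_p(1)$, hence $R\bigl(\psi(1+sh);\hat d,d\bigr) = O_p(h^2) + O_p(h^3) = O_p(h^2)$ uniformly in $s\in[0,C]$. Multiplying by $\hat d - d = O_p(\epsilon_{dT})$ gives the claimed uniform bound $O_p(h^2 \epsilon_{dT})$. There is no real obstacle here — the only thing to watch is that the cubic term is genuinely of lower order (it is $O_p(h^3\epsilon_{dT})$) and thus absorbed into the stated rate.
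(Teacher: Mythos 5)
Your proposal is correct and is exactly the computation the paper's one-line proof alludes to: factor $\hat\rho - \rho$ through $\hat d - d$ using the polynomial form, note $\psi(1+sh)=O(h)$ uniformly on $[0,C]$ since $\psi(1)=0$, and multiply the bounds. No difference in approach, just the details written out.
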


\begin{lem} \label{lem:phi minus sh}
     Let ${\psi}$ satisfy condition (iii) in \Cref{thm:boundary correction reflection} and ${\rho}$ defined as in \Cref{section:kz}. For any $0\leq C<\infty$,
\(
 \sup_{0\leq s \leq C} \abs{{\rho}\cparens{{\psi}\parens{1+sh}} -sh} + \sup_{0\leq s \leq C} \abs{\hat {\rho}\cparens{{\psi}\parens{1+sh}} -sh} = O_p\parens{h^2}.
\)    
\proof
Follows from the mean value theorem and the fact that $\hat d$ is bounded in probability. \qed
\end{lem}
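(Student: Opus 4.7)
The plan is to Taylor expand both ${\psi}$ (around $1$) and ${\rho}$ (around $0$), then combine the expansions. The normalizations ${\psi}\parens{1}=0$ and ${\psi}'\parens{1}=1$, together with twice continuous differentiability of ${\psi}$ at $1$, yield
\[
{\psi}\parens{1+sh} = sh + \tfrac12 {\psi}''\parens{1}\parens{sh}^2 + o\parens{h^2}
\]
uniformly in $s\in\sparens{0,C}$. In particular ${\psi}\parens{1+sh}=sh+O\parens{h^2}$ and ${\psi}\parens{1+sh}=O\parens{h}$ on this set.

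Next, use the explicit cubic form ${\rho}\parens{u}=u+du^2+cu^3$, with $c=d^2-{\psi}''\parens{1}d/6$, to substitute $u={\psi}\parens{1+sh}$ and subtract $sh$:
\[
{\rho}\cparens{{\psi}\parens{1+sh}} - sh = \cparens{{\psi}\parens{1+sh}-sh} + d\cparens{{\psi}\parens{1+sh}}^2 + c\cparens{{\psi}\parens{1+sh}}^3.
\]
Each summand is $O\parens{h^2}$ uniformly on $\sparens{0,C}$: the first by the Taylor expansion, the second because ${\psi}\parens{1+sh}=O\parens{h}$ and $d$ is a fixed constant, and the third because it is $O\parens{h^3}$. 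Taking the supremum gives the bound for ${\rho}$.

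For $\hat {\rho}$, write the same identity with $\hat d$ in place of $d$ and the corresponding cubic coefficient $\hat c$. Since $\hat d - d = O_p\parens{\rated}$ and $\rated \to 0$, we have $\hat d = O_p\parens{1}$ and therefore $\hat c = O_p\parens{1}$. The three-term decomposition then yields $\sup_{s\in\sparens{0,C}} \abs{\hat {\rho}\cparens{{\psi}\parens{1+sh}}-sh} = O_p\parens{h^2}$ by the same bookkeeping. Combining the two suprema via the triangle inequality completes the proof. The only point requiring a little care is the uniformity of the Taylor remainder for ${\psi}$ over $s\in\sparens{0,C}$, which is automatic on any compact set on which ${\psi}''$ is continuous; there is no substantive obstacle.
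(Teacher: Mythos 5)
Your proposal is correct and follows essentially the same route as the paper's (very terse) proof: the paper invokes the mean value theorem where you use a second--order Taylor expansion of ${\psi}$ at $1$, but both reduce to the observations that ${\psi}\parens{1+sh}-sh=O\parens{h^2}$ and ${\psi}\parens{1+sh}=O\parens{h}$ uniformly on $\sparens{0,C}$, and that the quadratic and cubic terms of the (estimated) polynomial contribute $O_p\parens{h^2}$ because $\hat d$, and hence the coefficients of $\hat{\rho}$, are bounded in probability. Nothing further is needed.
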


\begin{lem} \label{lem:excruciating}
 \begin{multline*}
   \beT\sparens{1-\hat {\rho}\cparens{{\psi}\parens{1+sh}}} - e\parens{1-sh}
   =\\
   \cparens{\beT\parens{1-sh}-e\parens{1-sh} } + \cparens{e\sparens{1-\hat {\rho}\cparens{{\psi}\parens{1+sh}}}- e\parens{1-sh}}
   +
   o_p\parens{\sqrt{h/T}}.
\end{multline*}
\proof
Follows directly from \cref{lem:ahat minus a,lem:phihat minus phi,lem:phi minus sh}.  \qed
\end{lem}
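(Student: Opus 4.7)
The plan is to recognize that the statement is a direct application of Lemma \ref{lem:ahat minus a} with a particular choice of $\hat a$ and $a$, followed by a small calculation verifying that the resulting remainder is sharp enough.

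First I would set $\hat a = 1 - \hat {\rho}\cparens{{\psi}\parens{1+sh}}$ and $a = 1 - sh$. To invoke Lemma \ref{lem:ahat minus a} I must establish a rate ${\epsilon}_T$ such that $\hat a - a = O_p\parens{{\epsilon}_T}$. Writing
\[
\hat a - a = -\cparens[\big]{\hat {\rho}\cparens{{\psi}\parens{1+sh}}-{\rho}\cparens{{\psi}\parens{1+sh}}} - \cparens[\big]{{\rho}\cparens{{\psi}\parens{1+sh}}-sh},
\]
the first difference is $O_p\parens{h^2{\epsilon}_{dT}}$ by \cref{lem:phihat minus phi} and the second is $O_p\parens{h^2}$ by \cref{lem:phi minus sh}; since ${\epsilon}_{dT}=o\parens{1}$, the triangle inequality yields $\hat a - a = O_p\parens{h^2}$, so I take ${\epsilon}_T=h^2$.

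Next I would apply \cref{lem:ahat minus a} directly to obtain
\[
\beT\parens{\hat a}-e\parens{a} = \cparens{\beT\parens{a}-e\parens{a}} + \cparens{e\parens{\hat a}-e\parens{a}} + O_p\parens[\big]{\sqrt{-h^2\log h^2 / T}}.
\]
It remains only to check that the remainder is $o_p\parens{\sqrt{h/T}}$. This is a one-line calculation: $\sqrt{-h^2\log h^2/T} = \sqrt{h/T}\cdot\sqrt{-2h\log h}$, and $\sqrt{-2h\log h}\to 0$ as $h\downarrow 0$ under the maintained bandwidth conditions. Substituting back the definitions of $\hat a$ and $a$ gives the statement of the lemma.

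There is no real obstacle here since the three preceding lemmas already do the hard work. The only thing to be slightly careful about is uniformity in $s$ over the compact range $\sparens{0,C}$ used implicitly elsewhere in the boundary-correction argument: \cref{lem:phihat minus phi,lem:phi minus sh} are stated as sup bounds, and Lévy's modulus of continuity (invoked within \cref{lem:ahat minus a}) likewise delivers a uniform rate, so the $o_p$ remainder inherits the required uniformity without modification.
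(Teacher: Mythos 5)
Your proposal is correct and is exactly the argument the paper intends: the paper's proof consists of the single line ``Follows directly from \cref{lem:ahat minus a,lem:phihat minus phi,lem:phi minus sh},'' and you have simply filled in the same chain — identify $\hat a$ and $a$, get $\hat a - a = O_p\parens{h^2}$ from the two auxiliary lemmas, invoke \cref{lem:ahat minus a} with ${\epsilon}_T = h^2$, and check that $\sqrt{-h^2\log h^2/T} = o\parens{\sqrt{h/T}}$. Nothing to add.
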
 
    
\begin{lem} \label{lem:hideous}
     For any $0\leq C<\infty$,
\[
\sup_{0\leq t \leq C} \abs[\Big]{
\beT\parens{1+th} - 2 \beT\parens{1} + \beT\parens{1-th}
- e\parens{1+th} + 2 e\parens{1} - e\parens{1-th}
-
\diam\sB^R_T\parens{t}
}
= o_p\parens[\big]{ \sqrt{h/T}},
\]
where $\diam\sB_T^R\parens{t}=0$ for $t\leq 0$ and for $t>0$ it is
\begin{multline*}
\diam\sB^R_T\parens{t} = \frac{ e\sparens{1-\hat {\rho}\cparens{{\psi}\parens{1+th}}} - e\sparens{1-{\rho}\cparens{{\psi}\parens{1+th}}}}
{{\psi}'\parens{1+th}} - \\
h \int_0^t \parens[\Big]{
e\sparens{1-\hat {\rho}\cparens{{\psi}\parens{1+sh}}}
-
e\sparens{1-{\rho}\cparens{{\psi}\parens{1+sh}}}} \frac{{\psi}''\parens{1+sh}}{{\psi}'^2\parens{1+sh}} \dif s.
\end{multline*}
\proof
Using integration by parts we get
\begin{multline} \label{eq:eT1+th}
  \beT\parens{1+th}  
  =
  \beT\parens{1}+
\int_0^{th} {\alpha}_T\parens{1+s} \dif s 
=\\
\beT\parens{1}+
\int_0^{th}
{\alpha}_T\sparens[\big]{ 1 - \hat {\rho}\cparens{{\psi}\parens{1+s}}} \hat {\rho}'\cparens{1+{\psi}\parens{1+s}} \dif s =
\\
2\beT\parens{1} - \frac{ \beT\sparens{1-\hat {\rho}\cparens{{\psi}\parens{1+th}}} }{{\psi}'\parens{1+th}}
-
h\int_0^t  \beT\sparens{1-\hat {\rho}\cparens{{\psi}\parens{1+sh}}}\frac{{\psi}''\parens{1+sh}}{{\psi}'^2\parens{1+sh}} \dif s.
\end{multline}
Now, by \cref{lem:excruciating} we have
\[
 \beT\sparens{1-\hat {\rho}\cparens{{\psi}\parens{1+sh}}} 
 =
 \beT\parens{1-sh} - e\parens{1-sh} + e\sparens{1-\hat {\rho}\cparens{{\psi}\parens{1+sh}}} + o_p\parens[\big]{\sqrt{h/T}},
\]
uniformly in $0\leq s\leq C$.  Thus, \cref{eq:eT1+th} is
\begin{multline} \label{eq:horrible}
2\beT\parens{1} - \cparens{\beT\parens{1-th} - e\parens{1-th}}
-
\frac{e\sparens{1-\hat {\rho}\cparens{{\psi}\parens{1+th}}}}{{\psi}'\parens{1+th}}
-\\
h\int_0^t e\sparens{1-\hat {\rho}\cparens{{\psi}\parens{1+sh}}} \frac{{\psi}''\parens{1+sh}}{{\psi}'^2\parens{1+sh}} \dif s
+ o_p\parens[\big]{\sqrt{h/T}}.
\end{multline}
Repeat \cref{eq:eT1+th} for $e$ in lieu of $\beT$ and subtract from \cref{eq:horrible}. \qed

%
%
\end{lem}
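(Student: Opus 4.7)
The plan is to derive matching integration-by-parts identities for $\beT(1+th)$ and $e(1+th)$ using the reflection extension rule \cref{eq:alpha extension}, subtract them, and then apply \cref{lem:excruciating} to re-express the in-sample evaluations of $\beT$ at arguments of the form $1-\hat{\rho}\cparens{{\psi}\parens{1+sh}}$ in terms of $\beT$ at the clean points $1-sh$ for $s\in[0,t]$, so that everything collapses to the quantity displayed in the statement plus an $o_p\parens{\sqrt{h/T}}$ remainder.

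First I would use the chain-rule identity $-\partial_s \beT\sparens{1-\hat{\rho}\cparens{{\psi}\parens{1+s}}} = {\alpha}_T(1+s)\,{\psi}'(1+s)$ and integrate by parts against $1/{\psi}'(1+s)$ to obtain, exploiting the normalizations ${\psi}(1)=0$ and ${\psi}'(1)=1$,
\[
\beT(1+th) = 2\,\beT(1) - \frac{\beT\sparens{1-\hat{\rho}\cparens{{\psi}\parens{1+th}}}}{{\psi}'(1+th)} - h\int_0^t \beT\sparens{1-\hat{\rho}\cparens{{\psi}\parens{1+sh}}}\,\frac{{\psi}''(1+sh)}{{\psi}'^{2}(1+sh)}\dif s.
\]
Repeating the identical manipulation for the true function $e$, whose extension beyond $p=1$ is governed by the true ${\rho}$ rather than $\hat{\rho}$, yields
\[
e(1+th) = 2\,e(1) - \frac{e\sparens{1-{\rho}\cparens{{\psi}\parens{1+th}}}}{{\psi}'(1+th)} - h\int_0^t e\sparens{1-{\rho}\cparens{{\psi}\parens{1+sh}}}\,\frac{{\psi}''(1+sh)}{{\psi}'^{2}(1+sh)}\dif s.
\]

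Subtracting these two identities and applying \cref{lem:excruciating} uniformly in $s\in[0,C]$ to the combined evaluation $\beT\sparens{1-\hat{\rho}\cparens{{\psi}\parens{1+sh}}} - e\sparens{1-{\rho}\cparens{{\psi}\parens{1+sh}}}$ would split the difference into a ``stochastic process'' piece $\beT(1-sh)-e(1-sh)$ and a ``deterministic'' piece $e\sparens{1-\hat{\rho}\cparens{{\psi}\parens{1+sh}}} - e\sparens{1-{\rho}\cparens{{\psi}\parens{1+sh}}}$, modulo $o_p\parens{\sqrt{h/T}}$. Collecting the deterministic pieces across the boundary term and the integral gives exactly $\diam\sB^R_T(t)$. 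The stochastic pieces should collapse to $\beT(1-th)-e(1-th)$: since $1/{\psi}'(1+th) = 1+O(h)$ and $\beT(\cdot)-e(\cdot) = O_p\parens{1/\sqrt{T}}$ uniformly on $[0,1]$, the boundary contribution differs from $\beT(1-th)-e(1-th)$ by $O_p\parens{h/\sqrt{T}} = o_p\parens{\sqrt{h/T}}$, while the $h$-weighted integral is of size $h\,C\,O_p\parens{1/\sqrt{T}} = o_p\parens{\sqrt{h/T}}$ thanks to its explicit factor of $h$.

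The main obstacle is maintaining uniformity in $t\in[0,C]$. This reduces to three ingredients: (i)~a uniform Lévy modulus-of-continuity bound for $\sqrt{T}\parens{\beT-e}$, which is available from the weak convergence established in \cref{thm:ebreve} and which is the content of \cref{lem:ahat minus a}; (ii)~the uniform estimates for $\hat{\rho}\cparens{{\psi}(1+sh)}$ and ${\rho}\cparens{{\psi}(1+sh)}$ given by \cref{lem:phihat minus phi,lem:phi minus sh}; and (iii)~boundedness of ${\psi}''/{\psi}'^{2}$ on a neighborhood of $1$ by the smoothness hypothesis on ${\psi}$. Once \cref{lem:excruciating} is available in its stated uniform form, the chain of approximations above is mechanical and the $o_p\parens{\sqrt{h/T}}$ bound survives the supremum over the compact set $[0,C]$.
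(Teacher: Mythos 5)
Your proposal is correct and follows essentially the same route as the paper: the same integration--by--parts identity for $\beT\parens{1+th}$ (and its analogue for $e$), followed by \cref{lem:excruciating} to split each $\beT\sparens{1-\hat{\rho}\cparens{{\psi}\parens{1+sh}}}$ evaluation into the process piece $\beT\parens{1-sh}-e\parens{1-sh}$ and the $\hat{\rho}$--versus--${\rho}$ piece that assembles into $\diam\sB^R_T\parens{t}$. The only difference is cosmetic --- you subtract the two identities before invoking \cref{lem:excruciating}, whereas the paper substitutes first and subtracts last --- and your accounting of the $O_p\parens{h/\sqrt{T}}=o_p\parens{\sqrt{h/T}}$ remainders matches the paper's.
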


\begin{lem}\label{lem:Bdiamond basics}
	For $\diam \sB_T^R$ defined in \cref{lem:hideous} and any $0\leq C<\infty$,
	$\sup_{0\leq s \leq C} \abs{\diam \sB_T^R\parens{s} + {\alpha}\parens{1}\parens{\hat d -d}s^2h^2} = o_p\parens{h^2 {\epsilon}_{dT}+h^3}$.
	\proof    
	Note that by the mean value theorem and the definitions of ${\rho},\hat {\rho},{\psi}$,
	\[
	e\sparens{1-\hat {\rho}\cparens{{\psi}\parens{1+sh}}} - e\sparens{1-{\rho}\cparens{{\psi}\parens{1+sh}}}
	=
	- {\alpha}\parens{1} \cparens{ \hat {\rho}''\parens{0} - {\rho}''\parens{0}} \frac{s^2h^2}2 + o_p\parens{h^3}.
	\]
	Further,
	${\rho}''\parens{0} = 2d$ and  	$\hat {\rho}''\parens{0} = 2\hat d$.  The stated result then follows from the fact that $\hat d-d = O_p\parens{{\epsilon}_{dT}}$.
	 \qed
\end{lem}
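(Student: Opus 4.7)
My plan is to isolate the single combination that drives both terms of $\diam\sB_T^R\parens{s}$, namely $e\sparens{1-\hat {\rho}\cparens{{\psi}\parens{1+sh}}} - e\sparens{1-{\rho}\cparens{{\psi}\parens{1+sh}}}$, and show it has the claimed leading term uniformly in $s \in \sparens{0,C}$. The two prefactors appearing in the definition from \cref{lem:hideous}, $1/{\psi}'\parens{1+th}$ and $h{\psi}''\parens{1+sh}/{\psi}'^2\parens{1+sh}$, are both $O\parens{1}$ uniformly, because ${\psi}'\parens{1}=1$ and ${\psi}$ is twice continuously differentiable at $1$; the integral piece moreover carries an extra factor of $h$, which will automatically push it into the error budget once the integrand is controlled at order $h^2{\epsilon}_{dT}$.

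The key step will be to Taylor-expand both $e$ (via the mean value theorem in its argument) and the polynomial difference $\hat {\rho}-{\rho}$. Since $\hat {\rho}$ and ${\rho}$ are the same cubic with $d$ replaced by $\hat d$, direct algebra gives $\hat {\rho}\parens{u}-{\rho}\parens{u} = \parens{\hat d-d}u^2 + O_p\parens{{\epsilon}_{dT}}u^3$ using boundedness of $d$ and of $\hat d$. Substituting $u={\psi}\parens{1+sh}$ and using ${\psi}\parens{1}=0$, ${\psi}'\parens{1}=1$ to write $u=sh+O\parens{h^2}$ uniformly on $\sparens{0,C}$, this becomes $\parens{\hat d-d}s^2h^2 + O_p\parens{{\epsilon}_{dT}h^3}$. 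The mean value theorem applied to $e$ then contributes a leading factor $-{\alpha}\parens{1}$ plus a multiplicative error of size $O_p\parens{h}$, coming from ${\alpha}\parens{\xi_{sh}}-{\alpha}\parens{1}$, where $\xi_{sh}$ lies between the two arguments $1-\hat{\rho}\cparens{{\psi}\parens{1+sh}}$ and $1-{\rho}\cparens{{\psi}\parens{1+sh}}$, both of which equal $1+O_p\parens{h}$.

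Combining these expansions, the integrand equals $-{\alpha}\parens{1}\parens{\hat d-d}s^2h^2 + O_p\parens{{\epsilon}_{dT}h^3}$ uniformly in $s$. Division by ${\psi}'\parens{1+th}=1+O\parens{h}$ in the first piece of $\diam\sB_T^R$ preserves the leading term and introduces only an $O_p\parens{h^3{\epsilon}_{dT}}$ perturbation, while the integral piece becomes $O_p\parens{h \cdot h^2{\epsilon}_{dT}}$ after the extra $h$; both are absorbed into $o_p\parens{h^2{\epsilon}_{dT}+h^3}$ since $h\to 0$. The most delicate bookkeeping will be verifying that every cross term from these three nested expansions (MVT remainder on $e$, Taylor remainder on ${\psi}$, cubic term in $\hat{\rho}-{\rho}$) lands at or below order $h^3 {\epsilon}_{dT}$ uniformly in $s$ rather than drifting into order $h^2 {\epsilon}_{dT}$, since that is what secures the $o_p$ rather than a merely $O_p$ bound.
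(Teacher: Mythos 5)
Your proposal is correct and follows essentially the same route as the paper: isolate the difference $e\sparens{1-\hat{\rho}\cparens{{\psi}\parens{1+sh}}}-e\sparens{1-{\rho}\cparens{{\psi}\parens{1+sh}}}$, extract the leading term $-{\alpha}\parens{1}\parens{\hat d-d}s^2h^2$ via the mean value theorem and the explicit cubic forms of ${\rho},\hat{\rho}$, and absorb the ${\psi}'$ normalization and the integral term into the $o_p\parens{h^2{\epsilon}_{dT}+h^3}$ budget. You are in fact more explicit than the paper's one-line justification about where each remainder lands, which is welcome.
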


\begin{lem}\label{lem:Bdiamond}
    For $\diam \sB_T^R$ defined in \cref{lem:hideous} and  any $0\leq C<\infty$,
$\sup_{0\leq s \leq C} \abs{\diam \sB_T^R\parens{s}} = O_p\parens{h^2 {\epsilon}_{dT}} + o_p\parens{h^3}$.
\proof    
This is an immediate consequence of \cref{lem:Bdiamond basics}. \qed
\end{lem}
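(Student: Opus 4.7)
The plan is to obtain this bound as a direct corollary of \cref{lem:Bdiamond basics} via the triangle inequality. Specifically, I would first write
\[
\sup_{0\leq s\leq C} \abs{\diam \sB_T^R\parens{s}}
\leq
\sup_{0\leq s\leq C}\abs{\diam \sB_T^R\parens{s} + {\alpha}\parens{1}\parens{\hat d-d}s^2h^2}
+
\sup_{0\leq s\leq C}\abs{{\alpha}\parens{1}\parens{\hat d-d}s^2h^2}.
\]
By \cref{lem:Bdiamond basics}, the first right--hand--side term is $o_p\parens{h^2{\epsilon}_{dT}+h^3}$. For the second term, uniformly in $0\leq s\leq C$ the factor $s^2$ is bounded by $C^2$, and ${\alpha}\parens{1}$ is a fixed finite constant under \cref{ass:value distributions,ass:monotone bid strategies}, so invoking the assumption $\hat d - d = O_p\parens{{\epsilon}_{dT}}$ yields a bound of $O_p\parens{h^2 {\epsilon}_{dT}}$.

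Combining the two, I would use $o_p\parens{h^2{\epsilon}_{dT}+h^3}=o_p\parens{h^2{\epsilon}_{dT}}+o_p\parens{h^3}$ and absorb $o_p\parens{h^2{\epsilon}_{dT}}$ into $O_p\parens{h^2{\epsilon}_{dT}}$ to obtain the claimed rate $O_p\parens{h^2{\epsilon}_{dT}}+o_p\parens{h^3}$. There is no real obstacle here: all of the substantive work (the uniform Taylor expansion of $e\sparens{1-\hat{\rho}\cparens{{\psi}\parens{1+sh}}}-e\sparens{1-{\rho}\cparens{{\psi}\parens{1+sh}}}$ near the boundary, the identification of the leading ${\alpha}\parens{1}\parens{\hat d-d}s^2h^2$ term, and the uniformity over $0\leq s\leq C$) has already been carried out in \cref{lem:Bdiamond basics}, so the only step that remains is the bookkeeping of orders above.
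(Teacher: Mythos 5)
Your argument is correct and is exactly what the paper means by calling the lemma ``an immediate consequence'' of \cref{lem:Bdiamond basics}: add and subtract the leading term ${\alpha}\parens{1}\parens{\hat d-d}s^2h^2$, bound it by $O_p\parens{h^2{\epsilon}_{dT}}$ using $\hat d-d=O_p\parens{{\epsilon}_{dT}}$ and $s^2\leq C^2$, and absorb the $o_p\parens{h^2{\epsilon}_{dT}}$ remainder. Nothing further is needed.
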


    
\begin{lem} \label{lem:nbh}
    Uniformly in $0\leq t\leq 1$,\footnote{For $t>1$, we get standard asymptotics.}
    \begin{multline} \label{eq:nbh1}
 \frac1h \int_{-\infty}^\infty k\parens[\Big]{ \frac{{\psi}\parens{1-th}-{\psi}\parens{s}}h} {\psi}'\parens{s}
 \cparens{ {\alpha}_T\parens{s} - {\alpha}\parens{s}} \dif s  
\simeq\\
-\frac h8 {\alpha}\parens{1} \parens{1-t}^3 \parens{t+3}\parens{\hat d - d}	+ 
\frac1h \int_t^1 k'\parens{s}\sparens{ \beT\cparens{1-\parens{s-t}h} - e\cparens{1-\parens{s-t}h}}
\dif s
\\-
\frac1h \int_{-1}^t k'\parens{s}\sparens{ \beT\cparens{1+\parens{s-t}h} - e\cparens{1+\parens{s-t}h}}
\dif s,
 \end{multline}
where $\simeq$ means that any omitted terms are asymptotically negligible.
\proof
The left hand side in \cref{eq:nbh1} is by integration by parts equal to
\begin{multline} \label{eq:nbh2}
 \frac1{h^2} \int_{-\infty}^\infty
 k'\parens[\bigg]{ \frac{{\psi}\parens{1-th}-{\psi}\parens{s}}h} {\psi}'^2\parens{s} \cparens{ \beT\parens{s}-e\parens{s}} \dif s
 \\	-
  \frac1h \int_{-\infty}^\infty
 k\parens[\bigg]{ \frac{{\psi}\parens{1-th}-{\psi}\parens{s}}h} {\psi}''\parens{s} \cparens{ \beT\parens{s}-e\parens{s}} \dif s.	
\end{multline}
The first term in \cref{eq:nbh2} dominates the second term, so we deal with the first term only.  The first term in \cref{eq:nbh2} is for
$\varsigma_{ts}\parens{h}={\psi}^{-1}\cparens{{\psi}\parens{1-th}+sh}$ equal to 
\begin{multline} \label{eq:nbh3}
-\frac1h \int_{-1}^1
k'\parens{s} {\psi}'\cparens{\varsigma_{ts}\parens{h}}  
\parens[\big]{ \beT\cparens{\varsigma_{ts}\parens{h}}-e\cparens{\varsigma_{ts}\parens{h} }} \dif s
\\
\simeq 
-\frac1h \int_{-1}^1
k'\parens{s} {\psi}'\cparens{\varsigma_{ts}\parens{0}+\varsigma_{ts}'\parens{0}h}  
\parens[\big]{ \beT\cparens{\varsigma_{ts}\parens{0}+\varsigma_{ts}'\parens{0}h} -e\cparens{\varsigma_{ts}\parens{0}+\varsigma_{ts}'\parens{0}h}}  \dif s
\\
\simeq 
-\frac1h \int_{-1}^1
k'\parens{s}
 \sparens{\beT\cparens{1+\parens{s-t}h}  - e\cparens{1+\parens{s-t}h}} \dif s.
\end{multline}
Since $\beT\parens{1}$ is a super--consistent estimator of $e\parens{1}$, we get by \cref{lem:hideous} that 
\begin{multline} \label{eq:nbh4}
-\frac1h \int_t^1
k'\parens{s}
\sparens{\beT\cparens{1+\parens{s-t}h}  - e\cparens{1+\parens{s-t}h}} \dif s
\simeq \\
-\frac1h \int_t^1 k'\parens{s}\parens[\Big]{
	\diam \sB_T^R\parens{s-t}
	- 
	\sparens{ \beT\cparens{1-\parens{s-t}h} - e\cparens{1-\parens{s-t}h}}
} \dif s
\overset{\text{\cref{lem:Bdiamond basics}}}{\simeq}
\\
h {\alpha}\parens{1} \parens{\hat d-d}\int_t^1 k'\parens{s} \parens{s-t}^2 \dif s
	+ 
	\frac1h \int_t^1 k'\parens{s}\sparens{ \beT\cparens{1-\parens{s-t}h} - e\cparens{1-\parens{s-t}h}}
 \dif s.
\end{multline}
Since $k$ is the Epanechnikov kernel, the right hand side in \cref{eq:nbh4} simplifies to
\[
-\frac h8 {\alpha}\parens{1} \parens{1-t}^3 \parens{t+3}\parens{\hat d - d}		+ 
\frac1h \int_t^1 k'\parens{s}\sparens{ \beT\cparens{1-\parens{s-t}h} - e\cparens{1-\parens{s-t}h}}
\dif s. \qedhere
\]
\end{lem}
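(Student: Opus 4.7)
The plan is to convert the left hand side of \cref{eq:nbh1}, which involves the rough object ${\alpha}_T-{\alpha}$, into an expression in the smoother object $\beT-e$ via integration by parts, then to reduce the resulting integrals to the canonical interval $\sparens{-1,1}$ in $k$--space and split them at the boundary $p=1$ to expose the reflection correction explicitly.

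First, integrate by parts in the variable $s$. Writing $k$ for the Epanechnikov kernel, the weight ${\psi}'\parens{s}$ and the argument ${\psi}\parens{1-th}-{\psi}\parens{s}$ combine via the chain rule so that the derivative of the kernel factor with respect to $s$ is ${\psi}'\parens{s}k'\parens{\cdot}/h$; after integration by parts the antiderivative of ${\alpha}_T-{\alpha}$ is $\beT-e$, producing the two terms that appear in \cref{eq:nbh2}. The second of these, with a $k$ rather than a $k'$, carries an extra factor of $h$ and is therefore $o_p\parens{\sqrt{h/T}}$ relative to the first by \cref{thm:ebreve} applied uniformly on a shrinking interval around $p=1$; I would discard it at this point.

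Second, change variables via $\varsigma_{ts}\parens{h}={\psi}^{-1}\cparens{{\psi}\parens{1-th}+sh}$, which reduces the surviving integral to one over $s\in\sparens{-1,1}$ (by compactness of the Epanechnikov kernel's support) weighted by $k'\parens{s}$ and ${\psi}'\cparens{\varsigma_{ts}\parens{h}}$. Taylor expanding $\varsigma_{ts}\parens{h}= 1+\parens{s-t}h+O\parens{h^2}$ using ${\psi}\parens{1}=0$ and ${\psi}'\parens{1}=1$, and similarly expanding ${\psi}'\cparens{\varsigma_{ts}\parens{h}}=1+O\parens{h}$, yields an integrand of the form $k'\parens{s}\sparens{\beT\cparens{1+\parens{s-t}h}-e\cparens{1+\parens{s-t}h}}/h$ plus asymptotically negligible remainders, matching the ``raw'' $\sparens{-1,1}$--integral form in \cref{eq:nbh3}.

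Third, split this integral at $s=t$ to separate the part $s<t$, where $\varsigma<1$ is in the domain of $\beT$, from the reflected part $s>t$, where we must use the extension via $\hat{\rho}$. On $\parens{t,1}$ apply \cref{lem:hideous}, which exchanges $\beT\cparens{1+\parens{s-t}h}-e\cparens{1+\parens{s-t}h}$ for minus $\beT\cparens{1-\parens{s-t}h}-e\cparens{1-\parens{s-t}h}$ plus the pointwise reflection--bias $\diam\sB_T^R\parens{s-t}$. The piece that is an integral of $k'\parens{s}$ against $\beT-e$ evaluated at $1-\parens{s-t}h$ is precisely the first ``backward--reflected'' term in the conclusion. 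The $\diam\sB_T^R$ piece is the source of the $\parens{\hat d-d}$ correction; invoking \cref{lem:Bdiamond basics}, it equals $-{\alpha}\parens{1}\parens{\hat d-d}\parens{s-t}^2h^2+o_p\parens{h^2{\epsilon}_{dT}+h^3}$ uniformly, so that the relevant integral becomes $h\,{\alpha}\parens{1}\parens{\hat d-d}\int_t^1 k'\parens{s}\parens{s-t}^2\dif s$ up to negligible remainder. Plugging in $k'\parens{s}=-3s/2$ for the Epanechnikov kernel, the elementary integral $\int_t^1 -\tfrac{3s}{2}\parens{s-t}^2\dif s=-\tfrac{1}{8}\parens{1-t}^3\parens{t+3}$ produces the stated $-\tfrac{h}{8}{\alpha}\parens{1}\parens{1-t}^3\parens{t+3}\parens{\hat d-d}$ coefficient. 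Combining this with the unreflected $s<t$ integral, which directly supplies the second boundary term in the conclusion, finishes the expansion.

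The main obstacle I anticipate is bookkeeping the remainders: one must verify that every step (dropping the $k\psi''$ term, Taylor expansions of ${\psi}$ and $\varsigma_{ts}$, replacing $\hat\rho$ by its quadratic Taylor approximant, and the $\diam\sB_T^R$ linearization) introduces errors of order strictly smaller than the terms kept on the right hand side, \emph{uniformly} in $t\in\sparens{0,1}$. The uniformity in $t$ near the boundary is delicate because the integration region shifts; however the compact support of the Epanechnikov kernel and the already--established modulus--of--continuity estimate in \cref{lem:ahat minus a} together with \cref{lem:Bdiamond,lem:Bdiamond basics} should handle this, since the relevant suprema are taken over $s$ in a bounded set and the pointwise rates are uniform there.
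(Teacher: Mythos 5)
Your proposal is correct and follows essentially the same route as the paper's proof: integration by parts to pass from ${\alpha}_T-{\alpha}$ to $\beT-e$, discarding the lower--order $k{\psi}''$ term, the change of variables via $\varsigma_{ts}$ with Taylor expansion under the normalizations ${\psi}\parens{1}=0$, ${\psi}'\parens{1}=1$, the split at $s=t$ with \cref{lem:hideous} and \cref{lem:Bdiamond basics} handling the reflected part, and the explicit Epanechnikov computation of $\int_t^1 k'\parens{s}\parens{s-t}^2\dif s=-\parens{1-t}^3\parens{t+3}/8$. The only detail worth adding is that the application of \cref{lem:hideous} also produces a $2\cparens{\beT\parens{1}-e\parens{1}}$ term, which is where the super--consistency of $\beT\parens{1}$ is actually invoked.
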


\begin{lem} \label{lem:boundary correction bias}
Uniformly in $0\leq t\leq C$ for given $0<C<\infty$,
\[
\frac1h \int_{-\infty}^\infty
k\parens[\bigg]{ \frac{{\psi}\parens{1-th}-{\psi}\parens{s}}h} {\psi}'\parens{s} \cparens{{\alpha}\parens{s}-{\alpha}\parens{1-th}} \dif s
=
\cparens{{\alpha}''\parens{1} - {\alpha}'\parens{1}{\psi}''\parens{1}} \frac{h^2}{10} + o\parens{h^2}.
\]
\proof
Follows directly from a standard kernel bias expansion followed by an application of the mean value theorem, noting that
$\int_{-1}^1 k\parens{s} s^2 \dif s =1/5$. \qed
\end{lem}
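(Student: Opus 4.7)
The plan is to reduce the integral to a standard second-order kernel bias expansion by a change of variable that sends the kernel argument to the integration variable, then Taylor-expand in $h$ and exploit the symmetry of the Epanechnikov kernel.

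Concretely, I would substitute $u = \cparens{{\psi}\parens{1-th}-{\psi}\parens{s}}/h$. Because ${\psi}$ is strictly increasing, this is a valid change of variable, and $-h\,du = {\psi}'\parens{s}\,ds$, which conveniently absorbs the factor ${\psi}'\parens{s}$ in the integrand and cancels the $1/h$ prefactor up to sign. After accounting for the reversal of limits induced by the fact that ${\psi}$ is increasing (so $u$ decreases in $s$), the integral becomes
\[
\int_{-\infty}^\infty k\parens{u}\, \cparens{{\alpha}\parens{s\parens{u}} - {\alpha}\parens{1-th}}\, \dif u,
\]
where $s\parens{u} = {\psi}^{-1}\cparens{{\psi}\parens{1-th} - hu}$. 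Since $k$ is the Epanechnikov kernel, the range of integration is effectively $u\in\sparens{-1,1}$, which keeps $s\parens{u}$ inside a shrinking neighborhood of $1-th$, so the extended ${\alpha}$ from \cref{section:kz} is twice continuously differentiable there.

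Next I would Taylor expand $s\parens{u}-\parens{1-th}$ in powers of $h$ by implicit differentiation of ${\psi}\parens{s\parens{u}}={\psi}\parens{1-th}-hu$: differentiating in $u$ yields $s'\parens{0}=-h/{\psi}'\parens{1-th}$ and $s''\parens{0}=-{\psi}''\parens{1-th} h^2/{\psi}'^3\parens{1-th}$. Combining with a second-order Taylor expansion of ${\alpha}$ around $1-th$ gives
\[
{\alpha}\parens{s\parens{u}}-{\alpha}\parens{1-th} = -\frac{{\alpha}'\parens{1-th}}{{\psi}'\parens{1-th}}\, h u + \frac{{\alpha}''\parens{1-th}{\psi}'\parens{1-th}-{\alpha}'\parens{1-th}{\psi}''\parens{1-th}}{2{\psi}'^3\parens{1-th}}\, h^2 u^2 + O\parens{h^3 u^3}.
\]

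Integrating against $k$, the $O\parens{h}$ term vanishes by symmetry ($\int u\,k\parens{u}\,du=0$), and the $O\parens{h^2}$ term produces, after using $\int_{-1}^1 u^2 k\parens{u}\,du = 1/5$,
\[
\frac{h^2}{10}\cdot\frac{{\alpha}''\parens{1-th}{\psi}'\parens{1-th}-{\alpha}'\parens{1-th}{\psi}''\parens{1-th}}{{\psi}'^3\parens{1-th}} + o\parens{h^2}.
\]
The normalizations ${\psi}\parens{1}=0$ and ${\psi}'\parens{1}=1$ together with continuity of ${\alpha}',{\alpha}'',{\psi}''$ at $1$ collapse this expression to $\cparens{{\alpha}''\parens{1}-{\alpha}'\parens{1}{\psi}''\parens{1}}h^2/10$. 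The only mildly delicate point is the uniformity in $t\in\sparens{0,C}$: since $1-th$ stays in a compact subset of a neighborhood of $1$ on which $\alpha,\,\psi$ and their relevant derivatives are bounded and continuous, the remainder terms in the Taylor expansions, as well as in the kernel bias expansion, are uniform in $t$, which is the only thing one really needs to check carefully.
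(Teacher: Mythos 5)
Your proposal is correct and is essentially the argument the paper compresses into one line: a change of variables that absorbs $\psi'(s)\,ds$, a second-order Taylor (mean-value) expansion of $\alpha\circ\psi^{-1}$ around $\psi(1-th)$, the vanishing of the first-order term by symmetry of the Epanechnikov kernel, and the factor $\int_{-1}^1 u^2k(u)\,du=1/5$ combined with the normalizations $\psi(1)=0$, $\psi'(1)=1$ and continuity to pass from $1-th$ to $1$ uniformly in $t\in[0,C]$. You also correctly flag the two points the terse published proof leaves implicit, namely that the extended $\alpha$ of \cref{section:kz} is twice continuously differentiable at $1$ and that uniformity in $t$ follows from boundedness of the relevant derivatives on a compact neighborhood of $1$.
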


\begin{proof}[Proof of \cref{thm:boundary correction reflection}]
The asymptotic bias is derived in \cref{lem:boundary correction bias}.  The second term in \cref{eq:boundary correction statement} corresponds to the first right hand side term in \cref{eq:nbh1}.  Thus, the only issue remaining is to show that $\sqrt{Th}$ times the sum of the second and third terms in \cref{eq:nbh1} have a zero mean limiting normal distribution with variance equal to $\sV^R\parens{t}$.

Now, using the shorthand ${\upsilon}_T=\beT - e$ and noting that $k$ is the Epanechnikov kernel, the second right hand side term in \cref{eq:nbh1} can be written as
\< \label{eq:bcr1}
 -\frac3{2h} \int_t^1 s {\upsilon}_T\cparens{1-\parens{s-t}h} \dif s 
 =
 -\frac3{2h} \int_0^{1-t} \parens{s+t} {\upsilon}_T\parens{1-sh} \dif s.
\>
The last right hand side term in \cref{eq:nbh1} is then
\<\label{eq:bcr2}
 \frac3{2h} \int_{-1}^t s {\upsilon}_T\cparens{1+\parens{s-t}h} \dif s
 =
  \frac3{2h} \int_0^{1+t} \parens{t-s} {\upsilon}_T\parens{1-sh} \dif s.
\>
Summing \cref{eq:bcr1,eq:bcr2} yields
\<\label{eq:bcr3}
-\frac3{2h} \int_0^{1-t} 2s {\upsilon}_T\parens{1-sh} \dif s + \frac3{2h} \int_{1-t}^{1+t} \parens{t-s} {\upsilon}_T\parens{1-sh} \dif s.
\>
Recall that $\diam k_t\parens{s} = \parens{3/2}\cparens{\parens{t-s} \one\parens{1-t \leq s\leq 1+t}-2s \one\parens{0\leq s\leq 1-t}}$.  Then \cref{eq:bcr3} becomes
\[
 \frac1h \int_0^{1+t} \diam k_t\parens{s} {\upsilon}_T\parens{1-sh} \dif s,
\]
which leads to the asserted limit distribution using techniques similar to the ones employed in proofs of e.g.\ \cref{thm:ebarpsi} above.

Now the simplification of $\sV^R\parens{t}$ if $H^*$ has the indicated form.  Note first that
\(
\sH_h\parens{1,-s,-\tilde s}
=
{\zeta}^2\parens{1}\min\parens{s,\tilde s}.
\)
Hence
\begin{multline}
\sV^R\parens{t} = {\zeta}^2\parens{1}\int_0^{1+t}\int_0^{1+t} \diam k_t\parens{s} \diam k_t\parens{\tilde s} \min\parens{s,\tilde s}\dif \tilde s \dif s
=
\\
2 {\zeta}^2\parens{1}\parens[\bigg]{ \int_0^{1-t}\!\!\!\!\!\! \diam k_t\parens{s} s \int_s^{1-t} \!\!\!\!\!\!\diam k_t\parens{\tilde s} \dif \tilde s \dif s
    +
   \int_0^{1-t} \!\!\!\!\!\! \diam k_t\parens{s} s \int_{1-t}^{1+t} \!\!\!\!\!\!\diam k_t\parens{\tilde s} \dif \tilde s \dif s 
   +
   \int_{1-t}^{1+t} \!\!\!\!\!\!\diam k_t\parens{s} s \int_s^{1+t} \!\!\!\!\!\!\diam k_t\parens{\tilde s} \dif \tilde s \dif s}. \label{eq:bcr4}
\end{multline}
Now,
\[
\maligned{
 \int_s^{1+t} \diam k_t\parens{\tilde s} \dif \tilde s & = 3\cparens{\parens{t-s}^2-1}/4, && \text{ if } 1-t\leq s\leq 1+t, \\
 \int_s^{1-t} \diam k_t\parens{\tilde s} \dif \tilde s & = 3\cparens{s^2-\parens{1-t}^2}/2, && \text{ if } 0 \leq s \leq 1-t,
}
\]
which implies that \cref{eq:bcr4} equals
\[
2{\zeta}^2\parens{1} \parens[\Big]{
	\frac35 \parens{1-t}^5
	+
	3t\parens{1-t}^4
	+
	\frac3{10} t^2\parens{-9t^3+30t^2-35t+15}
}
=
\frac35 {\zeta}^2\parens{1}
\cparens[\big]{ 2-t^2 \parens{t^3-5t+5}},
\]
as claimed.
\end{proof}

\subsection{Derivatives}
\label{app:derivatives}

\begin{lem} \label{lem:alpha symmetric}
In the symmetric case,
\[
 F_p\parens{p} = p^{1/\parens{n-1}}, \quad f_p\parens{p} = \frac{F_p^{2-n}\parens{p}}{n-1} , \quad f_p'\parens{p} = \frac{2-n}{\parens{n-1}^2} F_p^{3-2n}\parens{p}, \quad
 f_p''\parens{p} = \frac{\parens{2-n}\parens{3-2n}}{\parens{n-1}^2} F_p^{4-3n}\parens{p}.
\]
Further,
\<
\maligned{
{\alpha}\parens{p} & = \frac1{n-1} \sparens{ F_p\parens{p} Q'\cparens{F_p\parens{p}} + \parens{n-1} Q\cparens{F_p\parens{p}}}, \\
{\alpha}'\parens{p} & = \frac{F_p^{2-n}}{\parens{n-1}^2} \parens{nQ' + F_p Q''}, \\
{\alpha}''\parens{p} & = \frac{F_p^{3-2n}}{\parens{n-1}^3} \cparens[\big]{ \parens{2-n}nQ' + 3 F_p Q'' + F_p^2 Q'''}, \\
{\alpha}'''\parens{p} & =
\frac{F_p^{4-3n}}{\parens{n-1}^4} \cparens[\big]{
\parens{3-2n}\parens{2-n}n Q' + \parens{12-4n-n^2} F_p Q'' + \parens{8-2n} F_p^2 Q''' + F_p^3 Q''''
}.
}	
\>
\proof
Trivial, but messy, calculus. \qed
\end{lem}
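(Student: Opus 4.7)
The plan is to establish the formulas by direct computation, exploiting two key structural facts available in the symmetric case: first, that $F_p$ admits the closed form $p^{1/(n-1)}$; and second, that the competitors' quantile function satisfies $Q_c(p) = Q\bigl(p^{1/(n-1)}\bigr) = Q\bigl(F_p(p)\bigr)$, since $G_c = G^{n-1}$ under symmetry.

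First I would derive $F_p$. In a symmetric equilibrium, a bidder with valuation $V$ wins with probability $F_v(V)^{n-1}$, so the win-probability $p = F_v(V)^{n-1}$ has distribution $F_p(p) = \Pr\bigl(F_v(V) \leq p^{1/(n-1)}\bigr) = p^{1/(n-1)}$, using that $F_v(V) \sim U[0,1]$. Differentiating once gives $f_p(p) = \frac{1}{n-1}\, p^{(2-n)/(n-1)} = F_p^{2-n}(p)/(n-1)$, and two more differentiations yield the formulas for $f_p'$ and $f_p''$ by routine application of the power rule, noting that $p^{j/(n-1)} = F_p^j(p)$ for any integer $j$.

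Next I would turn to $\alpha$. Because $\alpha(p) = e'(p) = [Q_c(p)p]' = Q_c'(p)\,p + Q_c(p)$ and $Q_c(p) = Q\bigl(F_p(p)\bigr)$, the chain rule gives $Q_c'(p) = Q'(F_p(p))\, f_p(p)$, so $Q_c'(p)\,p = Q'(F_p(p))\, f_p(p)\, F_p^{n-1}(p) = F_p(p)\, Q'(F_p(p))/(n-1)$ after cancellation. Combining terms produces the stated formula for $\alpha$. The derivatives $\alpha'$, $\alpha''$, $\alpha'''$ then follow from repeated applications of the chain rule with respect to $F = F_p(p)$, using $F' = F^{2-n}/(n-1)$ at each step and collecting powers of $F$ together with the appropriate binomial-like coefficients in $n$.

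The computation is entirely mechanical; no conceptual difficulty arises. The only modest obstacle is bookkeeping: at each differentiation one picks up both a contribution from differentiating the existing factor $F^{2-n}$ (or a higher negative power) and a contribution from differentiating the quantile-function factor $Q^{(j)}(F)$, and these must be summed and re-expressed as a single polynomial in $Q', Q'', Q''', Q''''$ with coefficients in $(n-1)^{-k} F^{k(2-n)+1}$. Checking the factor $(12 - 4n - n^2)$ in $\alpha'''$, for instance, amounts to verifying $(3-2n)n + 2 \cdot 3 + (2-n)\cdot 3 = 12 - 4n - n^2$, and the remaining coefficients follow by the same bookkeeping.
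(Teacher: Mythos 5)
Your proposal is correct and is exactly the paper's (unstated) argument: the paper's proof is literally ``trivial, but messy, calculus,'' and you supply that calculus --- deriving $F_p(p)=p^{1/(n-1)}$ from $p=F_v(V)^{n-1}$ with $F_v(V)\sim U[0,1]$, writing $\alpha(p)=Q_c'(p)p+Q_c(p)$ with $Q_c(p)=Q(F_p(p))$, and differentiating repeatedly using $F_p'=F_p^{2-n}/(n-1)$. One bookkeeping slip: your sample check of the $FQ''$ coefficient in $\alpha'''$ is wrong --- $(3-2n)n+2\cdot 3+(2-n)\cdot 3=12-2n^2$, not $12-4n-n^2$; the correct decomposition is $3(3-2n)+(2-n)n+3=12-4n-n^2$, coming from the product rule applied to $F_p^{3-2n}\cdot 3F_pQ''$ and to $F_p^{3-2n}(2-n)nQ'$. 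Incidentally, carrying out the $f_p''$ step you describe gives denominator $(n-1)^3$, not the $(n-1)^2$ printed in the lemma, so your method also flags a typo in the statement.
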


\begin{proof}[Proof of \cref{thm:derivatives away from the boundary}]
Note that
\<
{\alpha}_{T{\psi}}'\parens{p} = \frac{{\psi}'\parens{p}}{h^2} \int {\psi}'\parens{s} {\alpha}_T\parens{s} k'\parens[\bigg]{\frac{{\psi}\parens{p}-{\psi}\parens{s}}h} \dif s.
\>
First, note that by integration by parts and substitution we have
\begin{multline*}
 \frac{{\psi}'\parens{p}}{h^2} \int {\psi}'\parens{s} {\alpha}\parens{s} k'\parens[\bigg]{\frac{{\psi}\parens{p}-{\psi}\parens{s}}h} \dif s
 \simeq
 \frac{{\psi}'\parens{p}}h \int {\alpha}'\parens{s} k\parens[\bigg]{\frac{{\psi}\parens{p}-{\psi}\parens{s}}h} \dif s =
 \\
 {\psi}'\parens{p} \int \frac{{\alpha}'}{{\psi}'}\sparens[\big]{{\psi}^{-1}\cparens{{\psi}\parens{p}+sh}} k\parens{s} \dif s
 \simeq
 {\alpha}'\parens{p} + 
 \frac{h^2}2 
 {\psi}'\parens{p} \sparens[\Big]{\frac{{\alpha}'}{{\psi}'}\cparens{{\psi}^{-1}}}''\cparens{{\psi}\parens{p}} \int k\parens{s} s^2 \dif s,
\end{multline*}
which yields the asserted bias after noting that the Epanechnikov kernel is a density with variance $1/5$.  Finally, skipping some steps that repeat steps taken in the proofs of earlier theorems and using integration by parts and substitution plus the properties of the Epanechnikov kernel,
\begin{multline*}
\frac{{\psi}'\parens{p}}{h^2} \int {\psi}'\parens{s} \cparens{{\alpha}_T\parens{s}-{\alpha}\parens{s}} k'\parens[\bigg]{\frac{{\psi}\parens{p}-{\psi}\parens{s}}h} \dif s
\simeq\\
\frac{{\psi}'\parens{p}}{h^3} \int \cparens{{\psi}'\parens{s}}^2 \cparens{\beT\parens{s}-e\parens{s} - \beT\parens{p} +e\parens{p}} k''\parens[\bigg]{\frac{{\psi}\parens{p}-{\psi}\parens{s}}h} \dif s
\\
\simeq
\frac{{\psi}'^3\parens{p}}{h^2} \int_{-1}^1 \cparens[\Big]{\beT\parens[\Big]{p + \frac{sh}{{\psi}'\parens{p}}}-e\parens[\Big]{p + \frac{sh}{{\psi}'\parens{p}}} - \beT\parens{p}+e\parens{p}} k''\parens{-s} \dif s
\\
= 
-\frac{3{\psi}'^3\parens{p}}{2h^2} \int_{-1}^1 \cparens[\Big]{\beT\parens[\Big]{p + \frac{sh}{{\psi}'\parens{p}}}-e\parens[\Big]{p + \frac{sh}{{\psi}'\parens{p}}} -\beT\parens{p}+e\parens{p}} \dif s
\\
\sim
-\frac{3{\psi}'^3\parens{p}}{2h^2\sqrt{T}} \int_{-1}^1 \cparens[\Big]{\G\parens[\Big]{p + \frac{sh}{{\psi}'\parens{p}}}-\G\parens{p}} \dif s,
\end{multline*}
which has the asserted limit distribution. 
\end{proof}

\subsection{Derivatives near the boundary}

\begin{lem} \label{lem:nbhd}
    Uniformly in $0\leq t\leq 1$,
    \begin{multline} \label{eq:nbhd1}
        \frac1{h^2} \int_{-\infty}^\infty k'\parens[\Big]{ \frac{{\psi}\parens{1-th}-{\psi}\parens{s}}h} {\psi}'\parens{s}
        \cparens{ {\alpha}_T\parens{s} - {\alpha}\parens{s}} \dif s  
        \simeq\\
        \frac{{\alpha}\parens{1}}2 \parens{1-t}^3 \parens{\hat d - d}	+ 
        \frac3{2h^2} \int_t^1 \sparens{ \beT\cparens{1-\parens{s-t}h} - e\cparens{1-\parens{s-t}h}}
        \dif s
        \\-
        \frac3{2h^2} \int_{-1}^t \sparens{ \beT\cparens{1+\parens{s-t}h} - e\cparens{1+\parens{s-t}h}}
        \dif s,
    \end{multline}
    \proof
    The line of proof is the same as \cref{lem:nbh} but with $k'$ instead of $k$, noting that $k''$ is constant whereas $k'$ is odd. \qed
\end{lem}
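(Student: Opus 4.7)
The plan is to mimic the proof of \cref{lem:nbh} step for step, with the key new ingredient being that one must integrate by parts \emph{twice} (to move from $k'$ to $k''$), after which the Epanechnikov structure kicks in: $k''$ is constant equal to $-3/2$ on $[-1,1]$, which will produce the $3/(2h^2)$ factors in the statement. Concretely, writing $d[\beT(s)-e(s)]=[{\alpha}_T(s)-{\alpha}(s)]\,ds$ and integrating by parts once yields
\[
\frac1{h^2}\int k'\parens[\Big]{\frac{{\psi}\parens{1-th}-{\psi}\parens s}h}{\psi}'\parens s\cparens{{\alpha}_T\parens s-{\alpha}\parens s}\dif s
= \frac1{h^3}\int k''\parens[\Big]{\tfrac{{\psi}\parens{1-th}-{\psi}\parens s}h}{\psi}'^2\parens s\cparens{\beT\parens s-e\parens s}\dif s + \text{l.o.t.},
\]
where the second term involving $k'(\cdot){\psi}''(s)$ is of lower order, in exactly the same way the second right hand side term in \cref{eq:nbh2} was dominated in the proof of \cref{lem:nbh}.

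Next I would perform the change of variables $s\mapsto {\psi}^{-1}\cparens{{\psi}\parens{1-th}+\tau h}$ (using the normalizations ${\psi}\parens 1=0$, ${\psi}'\parens 1=1$), so that $\tau\in\sparens{-1,1}$ on the support of $k''$ and the argument of $\beT-e$ becomes $1+\parens{\tau-t}h$ to leading order. Since $k''\parens{-\tau}=-3/2$ on $\sparens{-1,1}$, the integral collapses to
\[
-\frac3{2h^2}\int_{-1}^{1}\sparens{\beT\cparens{1+\parens{\tau-t}h}-e\cparens{1+\parens{\tau-t}h}}\dif\tau + o_p\parens[\big]{\text{remainder}}.
\]
I would then split this at $\tau=t$: on $\tau\in\sparens{-1,t}$ the argument $1+\parens{\tau-t}h$ is at or below $1$, so that term appears directly in \cref{eq:nbhd1}; on $\tau\in\sparens{t,1}$ the argument exceeds $1$ and I invoke \cref{lem:hideous} to replace $\beT\cparens{1+\parens{\tau-t}h}-e\cparens{1+\parens{\tau-t}h}$ by $\diam\sB^R_T\parens{\tau-t}-\sparens{\beT\cparens{1-\parens{\tau-t}h}-e\cparens{1-\parens{\tau-t}h}}$ modulo $o_p\parens{\sqrt{h/T}}$.

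The remaining piece is the $\diam\sB^R_T$ contribution. By \cref{lem:Bdiamond basics}, $\diam\sB^R_T\parens{\tau-t}=-{\alpha}\parens 1\parens{\hat d-d}\parens{\tau-t}^2h^2+o_p\parens{h^2{\epsilon}_{dT}+h^3}$, so that
\[
-\frac3{2h^2}\int_t^1\diam\sB^R_T\parens{\tau-t}\dif\tau
= \frac32{\alpha}\parens 1\parens{\hat d-d}\int_t^1\parens{\tau-t}^2\dif\tau + o_p\parens{1}
= \frac{{\alpha}\parens 1\parens{1-t}^3}{2}\parens{\hat d-d} + o_p\parens{1},
\]
which is precisely the lead term in \cref{eq:nbhd1}. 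Combining this with the two ``straight'' integrals over $\sparens{-1,t}$ and $\sparens{t,1}$ and rearranging gives the stated representation. The only real obstacle is bookkeeping — tracking boundary terms in the two integrations by parts and ensuring that the endpoint contributions and the $k'\parens{\cdot}{\psi}''\parens s$ piece really are absorbed into the $\simeq$ remainder at the rates allowed by the theorem. This is routine given the uniform bounds already assembled in \cref{lem:hideous,lem:Bdiamond basics,lem:phihat minus phi,lem:phi minus sh}.
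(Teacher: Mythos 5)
Your proposal is correct and is essentially the paper's own argument: the paper's proof simply says to repeat the proof of \cref{lem:nbh} with $k'$ in place of $k$, exploiting that $k''\equiv -3/2$ on the support, which is exactly the integration by parts, change of variables, split at $\tau=t$, and appeal to \cref{lem:hideous,lem:Bdiamond basics} that you carry out. (Minor slip: moving from $k'$ acting on ${\alpha}_T-{\alpha}$ to $k''$ acting on $\beT-e$ is \emph{one} integration by parts, not two, as your own display correctly shows.)
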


\begin{lem} \label{lem:alpha' bias near one}
   Uniformly in $0\leq t\leq C$ for given $0<C<\infty$,
   \begin{multline} \label{eq:alpha' bias near one2}
   \frac{{\psi}'\parens{1-th}}{h^2}\int_{-\infty}^\infty
   k'\parens[\bigg]{ \frac{{\psi}\parens{1-th}-{\psi}\parens{s}}h} {\psi}'\parens{s} {\alpha}\parens{s} \dif s
    =  o\parens{h^2} + {\alpha}'\parens{1-th} +
   \\
\frac{h^2}{80}\parens[\Big]{
8 \cparens[\big]{{\alpha}'''_\uparrow\parens{1}+3{\alpha}'\parens{1}{\psi}''^2\parens{1} - {\alpha}'\parens{1}{\psi}'''\parens{1} - 3{\alpha}''\parens{1}{\psi}''\parens{1}  }
+
\\
\parens{4+t}\parens{1-t}^4 \cparens{{\alpha}'''_\uparrow\parens{1}-{\alpha}'''_\downarrow\parens{1}}},
   \end{multline}
where ${\alpha}_\uparrow''',{\alpha}_\downarrow'''$ denote the third left and right derivatives, respectively.
   \proof
   Let $z_{th}\parens{s} = {\psi}^{-1}\cparens{{\psi}\parens{1-th} +sh}$.  Then $z_{th}\parens{0}=1-th$, $z_{th}'\parens{0}=h / {\psi}'\parens{1-th}$,
   $z_{th}''\parens{0}= -h^2 {\psi}''\parens{1-th} / {\psi}'^3\parens{1-th}$.
  The left hand side in \cref{eq:alpha' bias near one2} is
   \<\label{eq:alpha' bias near one2}
     \frac{{\psi}'\parens{1-th}}h \int_{-\infty}^\infty k\parens[\bigg]{ \frac{{\psi}\parens{1-th}-{\psi}\parens{s}}h} {\alpha}'\parens{s} \dif s 
     =
     {\psi}'\cparens{z_{th}\parens{0}}\int_{-1}^1 k\parens{s} \frac{ {\alpha}'\cparens{z_{th}\parens{s}}}{{\psi}'\cparens{z_{th}\parens{s}}} \dif s.
     \>
    Now, for $|s|\leq 1$ we have, uniformly in $s$,
   \begin{multline*}
      {\psi}'\cparens{z_{th}\parens{0}} \frac{ {\alpha}'\cparens{z_{th}\parens{s}}}{{\psi}'\cparens{z_{th}\parens{s}}}
       -
        {\alpha}'\cparens{z_{th}\parens{0}}
       =
       o\parens{h^2} +
h s \parens[\Big]{\frac{ {\alpha}''}{{\psi}'} - \frac{{\alpha}'{\psi}''}{{\psi}'^2}}
\\
+
\frac{h^2s^2}2
\cparens[\big]{
{\alpha}'''_\uparrow\parens{1} 
-
3 {\alpha}'' {\psi}'' - {\alpha}' {\psi}''' + 3 {\alpha}' {\psi}''^2
}
+
\one\parens{s>t}
\frac{h^2\parens{s-t}^2}2  \cparens{ {\alpha}'''_\downarrow\parens{1}-{\alpha}'''_\uparrow\parens{1}}
,
   \end{multline*}
where all omitted arguments of the ${\alpha},{\psi}$ functions are $1-th$.  
Hence the right hand side in \cref{eq:alpha' bias near one2} is
\[
o\parens{h^2} + {\alpha}'\parens{1-th}
+
\frac{h^2}2{\kappa}_2^*  \parens{{\alpha}'''_\uparrow+ 3{\alpha}'{\psi}''^2-{\alpha}'{\psi}'''-3{\alpha}''{\psi}''}
+
\frac{h^2}2 \parens{{\alpha}'''_\downarrow-{\alpha}'''_\uparrow} \int_t^1 k\parens{s} \parens{s-t}^2 \dif s,
\]
where the ${\alpha}$'s and ${\psi}$'s are evaluated at 1.  
%
Finally, observe that for the Epanechnikov kernel, ${\kappa}_2^*=1/5$ and, 
\[
\int_t^1 k\parens{s} \parens{s-t}^2 \dif s = 
\frac{\parens{4+t}\parens{1-t}^4}{40}. \qedhere
\]
%
%

 \qed
\end{lem}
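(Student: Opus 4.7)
The plan is first to apply integration by parts to transfer the derivative off the kernel. Since $\frac{d}{ds}\,k\parens{(\psi(1-th)-\psi(s))/h} = -\frac{\psi'(s)}{h}\,k'\parens{(\psi(1-th)-\psi(s))/h}$, and the boundary contributions vanish because $k$ has compact support and the kernel argument exceeds $1$ in magnitude outside a small interval around $s=1-th$, the left-hand side becomes
\[
\frac{\psi'(1-th)}{h}\int_{-\infty}^{\infty} k\parens[\Big]{\frac{\psi(1-th)-\psi(s)}{h}}\,\alpha'(s)\,\dif s.
\]
Then I would substitute $s = z_{th}(u) := \psi^{-1}(\psi(1-th)+uh)$, using $\dif s = h/\psi'(z_{th}(u))\,\dif u$ and the symmetry of the Epanechnikov kernel, to rewrite the expression as
\[
\psi'(z_{th}(0))\int_{-1}^{1} k(u)\,\frac{\alpha'(z_{th}(u))}{\psi'(z_{th}(u))}\,\dif u.
\]

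Next I would Taylor-expand the integrand $F(u):=\psi'(z_{th}(0))\,\alpha'(z_{th}(u))/\psi'(z_{th}(u))$ about $u=0$ to third order in $h$, using the chain-rule values $z_{th}'(0)=h/\psi'(1-th)$, $z_{th}''(0)=-h^2\psi''(1-th)/\psi'^3(1-th)$, and the analogous $z_{th}'''(0)$. The zeroth-order term yields $\alpha'(1-th)$; the linear-in-$u$ term integrates to zero by symmetry of the kernel; the $u^2$ term, after applying $\int_{-1}^1 k(u)u^2\,\dif u = 1/5$ and simplifying the quotient derivatives $(\alpha'/\psi')''$, produces the symmetric piece $\tfrac{h^2}{10}\sparens{\alpha'''_\uparrow - 3\alpha''\psi'' - \alpha'\psi''' + 3\alpha'\psi''^2}$, with all quantities evaluated at $1$ after a further expansion in $th$.

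The crucial subtlety is that the reflection extension of $\alpha$ in \cref{section:kz} makes $\alpha'''$ jump at $p=1$, corresponding in the $u$-variable to $u=t$ (since $\psi(1-th)+th=\psi(1)=0$ under the normalization $\psi(1)=0$, $\psi'(1)=1$). Thus the third-order Taylor term must be written as $\tfrac{u^2}{2}\alpha'''_\uparrow(1)$ plus an indicator correction $\one(u>t)\,\tfrac{(u-t)^2}{2}\cparens{\alpha'''_\downarrow(1)-\alpha'''_\uparrow(1)}$. The symmetric piece in the expansion is absorbed into the first bracket, while the jump contributes an extra term equal to $\tfrac{h^2}{2}\sparens{\alpha'''_\downarrow(1)-\alpha'''_\uparrow(1)}\int_t^1 k(u)(u-t)^2\,\dif u$. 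A direct calculation with the Epanechnikov kernel (as already performed in the text) gives $\int_t^1 k(u)(u-t)^2\,\dif u = (4+t)(1-t)^4/40$, producing the displayed jump term $\tfrac{h^2}{80}(4+t)(1-t)^4\cparens{\alpha'''_\uparrow(1)-\alpha'''_\downarrow(1)}$ once I absorb the overall minus sign by swapping the two third derivatives.

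The main obstacle will be the bookkeeping in the third-order Taylor expansion of $F$, which requires careful assembly of cross-terms arising from differentiating the composition $u\mapsto \alpha'(z_{th}(u))/\psi'(z_{th}(u))$ three times and then expanding $\alpha^{(j)}(1-th)$ and $\psi^{(j)}(1-th)$ in $th$ to preserve $o(h^2)$ precision. Uniformity in $t\in[0,C]$ follows because all derivatives of $\alpha$ and $\psi$ appearing in the remainders are bounded on a fixed neighborhood of $1$ (using \cref{ass:Qc differentiable} together with the assumed smoothness of $\psi$), so the Taylor remainders are uniformly $o(h^2)$.
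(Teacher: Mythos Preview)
Your proposal is correct and follows essentially the same route as the paper: integration by parts to pass from $k'$ to $k$ and from $\alpha$ to $\alpha'$, the substitution $s\mapsto z_{th}(u)=\psi^{-1}\{\psi(1-th)+uh\}$, a second-order Taylor expansion of $\psi'(z_{th}(0))\,\alpha'(z_{th}(u))/\psi'(z_{th}(u))$ in $u$, and explicit treatment of the jump in $\alpha'''$ at the reflection point via the indicator $\one(u>t)$ together with the Epanechnikov integral $\int_t^1 k(u)(u-t)^2\,du=(4+t)(1-t)^4/40$. Two minor remarks: you do not need $z_{th}'''(0)$, since the $u^3$ term already contributes at order $h^3=o(h^2)$; and your identification of the jump location via ``$\psi(1-th)+th=\psi(1)$'' is only approximate (the exact crossing is at $u=-\psi(1-th)/h=t+O(th)$), but the resulting discrepancy in the integral is $O(h^3)$ and hence harmless, which is the same approximation the paper makes.
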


\begin{proof}[Proof of \cref{thm:derivatives near the boundary}]
	\Cref{lem:alpha' bias near one} provides the formula for the asymptotic bias.  For the asymptotic distribution, we start from \cref{lem:nbhd}.  Take ${\upsilon}_T$ to have the same meaning as in the proof of \cref{thm:boundary correction reflection}.  Note that the sum of the last two terms in \cref{eq:nbhd1} equals
	\[
	\frac3{2h^2} \int_0^{1-t} {\upsilon}_T\parens{1-sh} \dif s
	- 
	\frac3{2h^2} \int_0^{1+t} {\upsilon}_T\parens{1-sh} \dif s
	=
	\frac3{2h^2} \int_{1-t}^{1+t} {\upsilon}_T\parens{1-sh} \dif s,
	\]
	which produces the promised asymptotic distribution. 
	
Under \cref{eq:Hstar simple} the asymptotic variance simplifies to
\begin{multline*}
\frac94 {\zeta}^2\parens{1} 
\int_{1-t}^{1+t} \int_{1-t}^{1+t} \min\parens{s,\tilde s} \dif \tilde s \dif s 
=
\frac92 {\zeta}^2\parens{1}  \int_{1-t}^{1+t} s\int_s^{1+t} \dif \tilde s \dif s
=
\frac92 {\zeta}^2\parens{1} \int_{1-t}^{1+t}
s\parens{1+t-s} \dif s\\
=
\frac34 {\zeta}^2\parens{1} \sparens[\big]{
3\parens{1+t} \cparens{\parens{1+t}^2-\parens{1-t}^2}
-
2\cparens{ \parens{1+t}^3 - \parens{1-t}^3}
}
=
3 {\zeta}^2\parens{1} t^2 \parens{3-t},
\end{multline*}
as asserted. 
\end{proof}

\subsection{Distribution of win--probabilities}
\begin{proof}[Proof of \cref{thm:Fp}]
%
Note that
\begin{multline} \label{eq:Fp1}
	\sqrt{T} \sparens[\big]{
G_T\cparens{\hat Q_c\parens{p}} - G\cparens{Q_c\parens{p}}	
}
=
\sqrt{T} \sparens[\big]{
G_T\cparens{Q_{cT}\parens{p}} - G_T\cparens{Q_c\parens{p}}
-G\cparens{Q_{cT}\parens{p}} + G\cparens{Q_c\parens{p}}
}
\\
+
\sqrt{T} \sparens[\big]{
	G_T\cparens{Q_c\parens{p}} - G\cparens{Q_c\parens{p}}
}
+
\sqrt{T} \sparens[\big]{
	G\cparens{Q_{cT}\parens{p}} - G\cparens{Q_c\parens{p}}
}.
\end{multline}
The first right hand side term in \cref{eq:Fp1} is \opone since $\sqrt{T}\parens{G_T-G}$ converges weakly to a Gaussian process and $Q_{cT}$ is consistent for $Q_c$.  The third right hand side term expands as
\(
 g\cparens{Q_c\parens{p}} \cparens{ Q_{cT}\parens{p}-Q_c\parens{p}}
\)
plus terms of (uniformly) lesser order.  Noting that $Q_{cT}$ converges superconsistently at the boundaries, the stated result then follows from the independence of $G_T$ and $Q_{cT}$. 
	\end{proof}

\subsection{Derived objects}
\label{sec:object proofs}

\begin{proof}[Proof of \cref{thm:bs symmetric not smooth}]
 Using integration by parts we get
 \begin{multline*}
 \sqrt{T} \parens{ \bshat-\bs} = \sqrt{T} \uint{ \sparens[\big]{\cparens{{\alpha}_T\parens{p}-{\alpha}\parens{p}} p - \cparens{\beT\parens{p}-e\parens{p}}} \fpp}
 =
 \\
 \sqrt{T} \cparens{ \beT\parens{1}-e\parens{1}} f_p\parens{1}
 -
 \sqrt{T} \uint{ \cparens{ \beT\parens{p}-e\parens{p}} \cparens{ f_p'\parens{p}p + 2 f_p\parens{p}} }
 =
 \\
\sqrt{T} \uint{ \cparens{ \beT\parens{p}-e\parens{p}} \frac{n}{\parens{n-1}^2} p^{\parens{2-n}/\parens{n-1}} } + \opone.
 \end{multline*}
    Apply \cref{thm:ebreve,lem:V symmetric}.
\end{proof}

\begin{lem} \label{lem:alphaone}
${\alpha}_T\parens{1} = O_p\parens{1}$.\joris{requires $T H^*\parens{1-1/T,1-1/T}=O_p\parens{1}$.  Not sure what we could/should say about this.}
\proof
Since $\G\parens{1}=0$ a.s., we have that for any $C> {\alpha}\parens{1}$,
\begin{multline*}
\lim_{T\to\infty} \Pr\cparens{ {\alpha}_T\parens{1} > 3C} = \lim_{T\to\infty} \Pr\cparens{ {\alpha}_T\parens{1}-\alpha\parens{1}>2C}
=
\\
\lim_{T\to\infty} \Pr\sparens[\big]{
T \cparens{ \beT\parens{1}-\beT\parens{1-1/T} - e\parens{1}+e\parens{1-1/T}} 
+ T\cparens{e\parens{1}-e\parens{1-1/T}} - {\alpha}\parens{1} > 2C
}
\leq
\\
\lim_{T\to\infty} \Pr\cparens[\big]{
-\sqrt{T}\G\parens{1-1/T} > C
}=
\lim_{T\to\infty}
\Phi\parens[\bigg]{ - \frac{C}{\sqrt{TH\parens{1-1/T,1-1/T}}}} 
=\Phi\parens{-cC},
\end{multline*}	
for some $c<\infty$ independent of $C$.  Take $C\to\infty$ to make the right hand side zero. \qed	
\end{lem}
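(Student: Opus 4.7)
The plan is to exploit the explicit identity ${\alpha}_T\parens{1} = T\,\cparens{\beT\parens{1} - \beT\parens{1-1/T}}$, which follows because ${\alpha}_T$ is constant on the last dyadic interval $\sparens{(T-1)/T,1}$ by the right--continuous step--function characterization in \cref{lem:lscharacterization}, so integrating it over that interval of length $1/T$ recovers $\beT\parens{1}-\beT\parens{1-1/T}$. This reduces the lemma to a statement about an increment of $\beT-e$ of length $1/T$ plus a deterministic term.

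First I would write
\[
{\alpha}_T\parens{1} = \underbrace{T\cparens{\beT\parens{1}-e\parens{1}}}_{\mathrm{I}}
- \underbrace{T\cparens{\beT\parens{1-1/T}-e\parens{1-1/T}}}_{\mathrm{II}}
+ \underbrace{T\cparens{e\parens{1}-e\parens{1-1/T}}}_{\mathrm{III}}.
\]
Term III converges to ${\alpha}\parens{1}$ by the mean value theorem together with continuity of ${\alpha}=e'$ at $1$ (\cref{ass:Qc differentiable} or \cref{ass:value distributions,ass:monotone bid strategies} suffice), so it is bounded. Term I is $\opone$ because $\beT\parens{1}$ is super--consistent for $e\parens{1}=\bar b$: the top order statistic of the maximum rival bids converges at rate $T$, which was already used in the proof of \cref{thm:ebreve}. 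Hence it only remains to control term II.

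For term II, write $\mathrm{II}=\sqrt{T}\cdot\sqrt{T}\cparens{\beT\parens{1-1/T}-e\parens{1-1/T}}$ and invoke \cref{thm:ebreve}: the process $\sqrt{T}\cparens{\beT\parens{\cdot}-e\parens{\cdot}}$ converges weakly to the mean--zero Gaussian process $\G$ on $\sparens{0,1}$, with $\G\parens{1}=0$ a.s.\ and covariance $H$. Therefore $\sqrt{T}\cparens{\beT\parens{1-1/T}-e\parens{1-1/T}}$ is, in distribution, an evaluation of (a tight version of) $\G$ at a point tending to one, whose variance is $H\parens{1-1/T,1-1/T}$, so $\mathrm{II}$ has asymptotic variance at most of order $T\,H\parens{1-1/T,1-1/T}$. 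The key input needed is therefore
\[
\limsup_{T\to\infty} T\,H\parens{1-1/T,1-1/T}<\infty,
\]
which under the simple kernel \cref{eq:Hstar simple} reduces to $\zeta^2\parens{1-1/T}(1-1/T)\to\zeta^2\parens{1}<\infty$ (by continuity of $Q_c'$ at $1$ implied by \cref{ass:Qc differentiable}, or \cref{ass:value distributions}); the symmetric and asymmetric kernels in \cref{eq:Hstar symmetric,eq:Hstar asymmetric} behave identically at the right boundary.

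Granting that bound, I would conclude by Markov/Chebyshev: for any $C>{\alpha}\parens{1}$, combining I--III gives
\[
\limsup_{T\to\infty}\Pr\cparens{{\alpha}_T\parens{1}>3C}\leq \limsup_{T\to\infty}\Pr\cparens{\abs{\mathrm{II}}>C}
\leq \limsup_{T\to\infty}\frac{T\,H\parens{1-1/T,1-1/T}}{C^2},
\]
which tends to $0$ as $C\to\infty$, establishing tightness of ${\alpha}_T\parens{1}$. The main obstacle is precisely the high--level moment condition $T\,H\parens{1-1/T,1-1/T}=O\parens{1}$: it holds for all three concrete forms of $H^*$ discussed after \cref{ass:Gaussian process}, but it should either be stated explicitly as an additional assumption on $G_{cT}$ or verified case by case, because \cref{ass:Gaussian process} alone (weak convergence on compact subsets of $\parens{0,1}$) does not automatically control the behavior of the covariance at the boundary.
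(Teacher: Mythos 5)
Your proof is correct and follows essentially the same route as the paper's: the identity ${\alpha}_T\parens{1}=T\cparens{\beT\parens{1}-\beT\parens{1-1/T}}$, super--consistency of $\beT\parens{1}$, the mean--value argument for $T\cparens{e\parens{1}-e\parens{1-1/T}}$, and control of the remaining increment via $T\,H\parens{1-1/T,1-1/T}=O\parens{1}$ (you use Chebyshev where the paper uses the Gaussian tail directly, which is immaterial). Your explicit observation that this last condition is not implied by \cref{ass:Gaussian process} alone but does hold for all three concrete forms of $H^*$ is exactly the caveat the authors themselves flag in their margin note, so you have identified and partially resolved the same gap rather than introduced a new one.
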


\begin{lem} \label{lem:bs asymmetric not smooth}
    $\uint{\cparens{ {\alpha}_T\parens{p}-{\alpha}\parens{p}}^2 f_p\parens{p}}=\opone$.
   \proof
  We have uniform convergence of ${\alpha}_T$ by \cref{thm:ebreve} except at the boundaries.  Since ${\alpha}_T,{\alpha}$ are nondecreasing and nonnegative, we only have to worry about ${\alpha}_T$ near one.
Now, let $I_m=\int_0^{\bar p} \cparens{ {\alpha}_T\parens{p}-{\alpha}\parens{p}}^2 f_p\parens{p} \dif p$ and
$I_r = \int_{\bar p}^1 \cparens{ {\alpha}_T\parens{p}-{\alpha}\parens{p}}^2 f_p\parens{p} \dif p$ for a $0<\bar p<1$ to be manipulated later.
Now, for any ${\epsilon}>0$ and $0<C<\infty$,
\< \label{eq:bs asns0}
 \Pr\parens{I_m+I_r>2{\epsilon}} 
 \leq
 \Pr\parens{I_m>{\epsilon}} + \Pr\cparens{I_r>{\epsilon}, {\alpha}_T\parens{1}\leq C} + \Pr\cparens{ {\alpha}_T\parens{1}>C}
\>
Take $C= {\epsilon}/\parens{1-\bar p}$.  Then the second right hand side probability in \cref{eq:bs asns0} equals zero. Then take $\limsup_{T\to\infty}$ in \cref{eq:bs asns0}, followed by $C\to\infty$ to obtain the stated result. \qed
\end{lem}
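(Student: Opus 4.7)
The plan is to split the integration domain into an interior piece and two boundary pieces, treating the interior via the uniform convergence of $\alpha_T$ on compact subsets of $(0,1)$ from \cref{thm:ebreve}, and the two boundary pieces via monotonicity, \cref{lem:alphaone}, and the integrability of $f_p$. Concretely, I would decompose
\[
\uint{(\alpha_T(p)-\alpha(p))^2 f_p(p)} = I_\ell + I_m + I_r,
\]
where $I_\ell$, $I_m$, $I_r$ are the integrals over $[0,\delta]$, $[\delta,1-\delta]$, and $[1-\delta,1]$ respectively, for some $0<\delta<1/2$ to be sent to zero only after $T\to\infty$.

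For the middle piece, \cref{thm:ebreve} yields $\max_{p\in[\delta,1-\delta]}|\alpha_T(p)-\alpha(p)|=O_p(T^{-1/3})$, so bounding $I_m$ by the square of this maximum times $\int_\delta^{1-\delta} f_p\le 1$ shows $I_m=\opone$ for each fixed $\delta$.

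For the right tail, the fact that $\alpha_T$ and $\alpha$ are both nondecreasing and nonnegative gives $|\alpha_T(p)-\alpha(p)|^2\le \max(\alpha_T(1),\alpha(1))^2$ throughout $[1-\delta,1]$. By \cref{lem:alphaone}, $\alpha_T(1)=O_p(1)$, so for any $\eta>0$ I can choose $C<\infty$ with $\Pr(\alpha_T(1)>C)<\eta$ eventually; on the complementary event, $I_r$ is at most $\max(C,\alpha(1))^2\int_{1-\delta}^1 f_p$, which vanishes as $\delta\downarrow 0$ because $f_p$ is integrable. The left tail is controlled symmetrically, using $\alpha_T(0)=0$, $\alpha(0)=\mathb{v}<\infty$, and the monotone bound $\alpha_T(p)\le\alpha_T(\delta)$ on $[0,\delta]$, with $\alpha_T(\delta)$ close to $\alpha(\delta)$ on the same high-probability event already identified in the middle piece.

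The main obstacle will be the possibly unbounded $f_p$ near zero---e.g.\ in the symmetric case with $n\ge 3$, $f_p(p)\propto p^{(2-n)/(n-1)}$---which rules out any pointwise-uniform bound on the integrand. However, since $f_p$ is a density, $\int_0^\delta f_p\downarrow 0$ as $\delta\downarrow 0$, and monotonicity keeps $\sup_{p\in[0,\delta]}\alpha_T(p)$ finite with probability approaching one. The argument is then finished by ordering the limits carefully: send $T\to\infty$ first to kill $I_m$ and push $\alpha_T(1)$ below $C$ with probability approaching one, then $\delta\downarrow 0$ (with $C$ fixed) to shrink $\int_{1-\delta}^1 f_p$ and $\int_0^\delta f_p$, and finally, if needed, $\eta\downarrow 0$ to absorb the residual probability from \cref{lem:alphaone}.
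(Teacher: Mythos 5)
Your proof is correct and follows essentially the same route as the paper's: an interior/boundary split, uniform convergence of ${\alpha}_T$ from \cref{thm:ebreve} on the interior, and monotonicity together with ${\alpha}_T\parens{1}=O_p\parens{1}$ from \cref{lem:alphaone} to control the piece near one, with the limits taken in the same order ($T\to\infty$ first, then the cutoff, then the residual probability). The only difference is that you treat the left tail as a separate explicit piece and correctly flag that the integrability of $f_p$ is what saves the argument when $f_p$ is unbounded at zero, whereas the paper dismisses the left boundary in one sentence via nonnegativity and monotonicity.
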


\begin{proof}[Proof of \cref{thm:bs asymmetric not smooth}]
Note that
\begin{multline} \label{eq:bs asns1}
\sqrt{T}\parens{\bshat - \bs} 
=
\underbrace{\sqrt{T} \uint{ \sparens{\cparens{{\alpha}_T\parens{p}-{\alpha}\parens{p}}  p - \cparens{\beT\parens{p}-e\parens{p}}}f_p\parens{p} }}_{\mathrm{I}}
+
\\
\underbrace{\uint[\G_{Tp}\parens{p}]{A\parens{p} }}_{\mathrm{II}}
+
\underbrace{\uint[\G_{Tp}\parens{p}]{\cparens{{\alpha}_T\parens{p}-{\alpha}\parens{p}}  }}_{\mathrm{III}}
-
\underbrace{\uint[\G_{Tp}\parens{p}]{\cparens{\beT\parens{p}-e\parens{p}}  }}_{\mathrm{IV}},
\end{multline}
where $\G_{Tp} = \sqrt{T} \parens{F_{pT}-F_p}$.  First, note that $\G_{Tp} \convw \G_p$ by \cref{thm:Fp}.  Thus,
since the class of right--continuous step functions is Donsker and by \cref{lem:bs asymmetric not smooth}, \citet[lemma 19.24]{vandervaart2000asymptotic} implies that term III in \cref{eq:bs asns1} is \opone.\footnote{Lemma 19.24 in \citet{vandervaart2000asymptotic} is stated specifically for empirical processes, but its proof relies merely on continuity properties and the fact that $F_{pT}$ is not an empirical distribution function is hence immaterial ($F_{pT}$ is the empirical distribution function of estimated $p$'s, not of the $p$'s themselves).}  Further, term IV is $o_p\parens{1}$ by \cref{thm:Fp}.

Now, term I in \cref{eq:bs asns1} is using integration by parts equal to
\< \label{eq:bs asns2}
\sqrt{T} \cparens{ \beT\parens{1}-e\parens{1}} f_p\parens{1}
-
\sqrt{T} \uint{ \cparens{ \beT\parens{p}-e\parens{p}} \cparens{ p f_p'\parens{p} +2 \fpp }}.
\>
Note that the first term in \cref{eq:bs asns2} is \opone.  Term II in \cref{eq:bs asns1} can likewise be written as
\< \label{eq:bs asns3}
-\uint{{\alpha}'\parens{p}p \G_{Tp}\parens{p}}.
\>
Now, combining the above results with the proof of \cref{thm:Fp}, it follows that
\begin{multline*}
 \sqrt{T}\parens{\bshat-\bs} 
 =
 - \uint{ \cparens[\big]{ p^2 f_p'\parens{p} + 2p\fpp + {\alpha}'\parens{p}p g\cparens{ Q_c\parens{p}} }
\sqrt{T}\cparens{Q_{cT}\parens{p}-Q_c\parens{p} }
}\\
-
\uint{ {\alpha}'\parens{p}p \sqrt{T}\sparens{G_T\cparens{Q_c\parens{p}} - G\cparens{Q_c\parens{p}}}}
+
\opone,
\end{multline*}
which has a mean zero normal limit with variance $\sV_{BS}^a$,
where we have used the fact that $G,G_c$ are estimated using different data such that $G_T$ and $Q_{cT}$ are independent. 

To establish \cref{eq:bs efficiency bound}, consider $\int_0^1\int_0^1 {\Gamma}_2\parens{p} {\Gamma}_2\parens{p^*} H_1\cparens{Q_c\parens{p},Q_c\parens{p^*}} \dif p^* \dif p$, which we now show to equal the first term in \cref{eq:bs efficiency bound}: showing that the remainder of \cref{eq:bs asymmetric variance} is equal to the second term in \cref{eq:bs efficiency bound} follows the same path, but is messier.  

Let $I_j = \int_0^{\bar b} \cparens{ G_c^2\parens{b}/g_c\parens{b}}^j g\parens{b} \dif b$.
First, use integration by parts to obtain
\< \label{eq:bs effing bound1}
 \uint{ {\Gamma}_2\parens{p} G\cparens{Q_c\parens{p}}}
 =
  Q_c'\parens{1} - I_1.
\>
Then,
\[
 \int_p^1 {\Gamma}_2\parens{t} \dif t = Q_c'\parens{1} - Q_c'\parens{p} p^2,
\]
whence
\begin{multline} \label{eq:bs effing bound2}
 \uint{ \uint[p^*]{ {\Gamma}_2\parens{p} {\Gamma}_2\parens{p^*} G\sparens{Q_c\cparens{\min\parens{p,p^*} }}}}
 =
 2 \uint{ {\Gamma}_2\parens{p}G\cparens{Q_c\parens{p}}\int_p^1 {\Gamma}_2\parens{t} \dif t}=
 \\
 = Q_c'^2\parens{1} - 2 Q_c'\parens{1} I_1 + I_2.
\end{multline}
Subtract the square of \cref{eq:bs effing bound1} from \cref{eq:bs effing bound2} to obtain
\(
 I_2 -I_1^2,
\)
as promised.  To see that \cref{eq:bs efficiency bound} is in fact the semiparametric efficiency bound note that for any hypothetical parameter vector ${\theta}$ indexing $g,g_c$,
\begin{multline*}
 \partial_{\theta} \BS = \partial_{\theta} \int_0^{\bar b} \frac{G_c^2\parens{b}g\parens{b}}{g_c\parens{b}} \dif b
 =
 \\
 \int_0^{\bar b} \frac{G_c^2\parens{b}}{g_c\parens{b}} \partial_{\theta} g\parens{b} \dif b
 -
 \int_0^{\bar b} \frac{G_c^2\parens{b} g\parens{b}}{g_c^2\parens{b}} \partial_{\theta} g_c\parens{b} \dif b
 -
 2 \int_0^{\bar b} \int_0^b \frac{G_c\parens{t} g\parens{t}}{g_c\parens{t}} \dif t \partial_{\theta} g_c\parens{b} \dif b
 =
 \\
 \Expr[\Big]{ \frac{G_c^2\parens{b}}{g_c\parens{b}} \partial_{\theta} \log g\parens{b} }
 -
 \Expc[\bigg]{ \parens[\Big]{\frac{G_c^2\parens{b_c} g\parens{b_c}}{g_c^2\parens{b_c}} 
+ 2 \int_0^{b_c} \frac{G_c\parens{t} g\parens{t}}{g_c\parens{t}}\dif t} 
\partial_{\theta} \log g_c\parens{b_c} }
\\
=
\Expc[\bigg]{ \parens[\bigg]{ \frac{G_c^2\parens{b}}{g_c\parens{b}} -
 \frac{G_c^2\parens{b_c} g\parens{b_c}}{g_c^2\parens{b_c}} 
		- 2 \int_0^{b_c} \frac{G_c\parens{t} g\parens{t}}{g_c\parens{t}}\dif t}
	\partial_{\theta} \log \cparens{g\parens{b}g_c\parens{b_c}}},
\end{multline*}
which yields the stated bound by the arguments in \citet[page 106]{newey1990semiparametric}.  We have ignored the possibility that the upper bound can depend on ${\theta}$ but that is irrelevant since the upper bound can be estimated at a rate faster than $\sqrt{T}$.
\end{proof}

\begin{proof}[Proof of \cref{thm:bs asymmetric smoothed}]
The proof is largely a repeat of that of \cref{thm:bs asymmetric not smooth}.  The main difference concerns
\begin{multline*}
\sqrt{T}\int_0^1 \sparens[\big]{ \cparens{\hat {\alpha}_{T{\psi}}\parens{p}-{\alpha}\parens{p}}p - \cparens{\hat e_{T{\psi}}\parens{p}-e\parens{p}}} f_p\parens{p}\dif p	
\\
=
\sqrt{T}\cparens{ \hat e_{T{\psi}}\parens{1}-e\parens{1}} f_p\parens{1}
-
\frac{\sqrt{T}}h
\uint{
	\int_{-\infty}^\infty \cparens{ e\parens{s}-e\parens{p} } {\psi}'\parens{s}
	k\parens[\bigg]{ \frac{ {\psi}\parens{p}-{\psi}\parens{s}}h} \dif s p f_p'\parens{p}
}
\\
-
\frac{\sqrt{T}}h
\uint{
\int_{-\infty}^\infty \cparens{ \beT\parens{s}-e\parens{s} } {\psi}'\parens{s}
k\parens[\bigg]{ \frac{ {\psi}\parens{p}-{\psi}\parens{s}}h} \dif s p f_p'\parens{p}
}
\\
=
o_p\parens{1}
-
\uint{ \sqrt{T} \cparens{ \beT\parens{p} - e\parens{p}} p f_p'\parens{p}
},
\end{multline*}
where we have omitted a few steps entailing nothing more than substitution and simple expansions, including a nonparametric kernel bias expansion. 
\end{proof}

\begin{proof}[Proof of \cref{thm:mean v symmetric}]
Note that
\begin{multline*}
\sqrt{T} \uint[F_p\parens{p}]{ \cparens{{\alpha}_T\parens{p}-{\alpha}\parens{p}}} =
\sqrt{T} \cparens{ \beT\parens{1}-e\parens{1}} f_p\parens{1}
-
\sqrt{T} \uint{ \cparens{ \beT\parens{p}-e\parens{p}} f_p'\parens{p} }\\
\convd
\uint{ \G\parens{p} f_p'\parens{p}},
\end{multline*}
by \cref{thm:ebreve}.  Note that $f_p'\parens{p} = \parens{2-n} p^{\parens{3-2n}/\parens{n-1}} / \parens{n-1}^2$.
\end{proof}

\begin{proof}[Proof of \cref{thm:mean v}]
The proof follows with minor adjustments by repeating the steps in the proof of \cref{thm:bs asymmetric not smooth}.
\end{proof}

\end{document}